\documentclass[preprint,12pt]{elsarticle}

\usepackage{amssymb}
\usepackage{amsmath}
\usepackage{amsthm}
\usepackage{graphicx}
\usepackage{tikz}
\usepackage{algorithm}
\usepackage{algpseudocode}
\usepackage{booktabs}
\usepackage[margin=1in]{geometry}
\usepackage[table]{xcolor} 
\usepackage{placeins}
\usepackage{booktabs}      
\usepackage{tabularx}     
\usepackage{caption}      
\usepackage{url}
\usepackage{multirow} 
\usepackage{enumitem}
\usetikzlibrary{arrows.meta, positioning}
\usepackage[most]{tcolorbox}
\usetikzlibrary{calc}
\usepackage{algcompatible} 
\usetikzlibrary{positioning,calc}
\journal{Graphical Models}
\usetikzlibrary{positioning,fit,shapes.geometric}
\newtheorem{theorem}{Theorem}[section]     
\newtheorem{lemma}[theorem]{Lemma}         
\newtheorem{corollary}[theorem]{Corollary} 
 
\newtheorem{proposition}[theorem]{Proposition}
\newtheorem{theoreticalproperty}{Theoretical Property}
\newtheorem{remark}{Remark}
\newtheorem{assumption}{Assumption}

\definecolor{myGreen}{HTML}{33FF00}
\definecolor{myRed}{HTML}{FF3030}
\definecolor{myGrey}{HTML}{AA5555}
\definecolor{myWhite}{HTML}{FFFFFF}
\definecolor{josef}{HTML}{FF3030}
\definecolor{indigo}{RGB}{75, 0, 130}  

\definecolor{maroon}{RGB}{128, 0, 0} 
\definecolor{petr}{RGB}{200, 30, 30}

\begin{document}

\begin{frontmatter}

\title{Structured Bitmap-to-Mesh Triangulation for Geometry-Aware Discretization of Image-Derived Domains}

\author[ouc]{Wei Feng\fnref{fn1}}
\author[ouc]{Haiyong Zheng\corref{cor1}\fnref{fn2}}

\cortext[cor1]{Corresponding author}
\fntext[fn1]{Email: weifeng@stu.ouc.edu.cn}
\fntext[fn2]{Email: zhenghaiyong@ouc.edu.cn}

\address[ouc]{College of Electronic Engineering, Ocean University of China, Qingdao 266404, Shandong, China}

%% Abstract
\begin{abstract}
%% Text of abstract
We introduce a template-driven triangulation framework for embedding discrete boundaries into a regular triangular grid, enabling raster- or segmentation-derived domains to support structure-preserving, numerically stable PDE discretization. Unlike constrained Delaunay triangulation (CDT), which requires global connectivity updates, our method retriangulates only boundary-intersecting triangles, preserving the base mesh and enabling synchronization-free parallel execution. To ensure determinism and scalability, all local intersection patterns are classified under discrete equivalence and triangle symmetry, forming a finite symbolic lookup table mapping each case to a conflict-free retriangulation template. The resulting mesh is provably closed, angle-bounded, and compatible with cotangent-based discretizations and finite element methods. Numerical experiments—including elliptic and parabolic PDEs, signal interpolation, and structural evaluation—demonstrate fewer slivers, more equilateral elements, and greater geometric fidelity near complex boundaries. These properties make the framework well suited for real-time geometric analysis and physically grounded simulation over image-derived domains.

\end{abstract}

%%Graphical abstract
\begin{graphicalabstract}
\centering
\includegraphics[width=\textwidth]{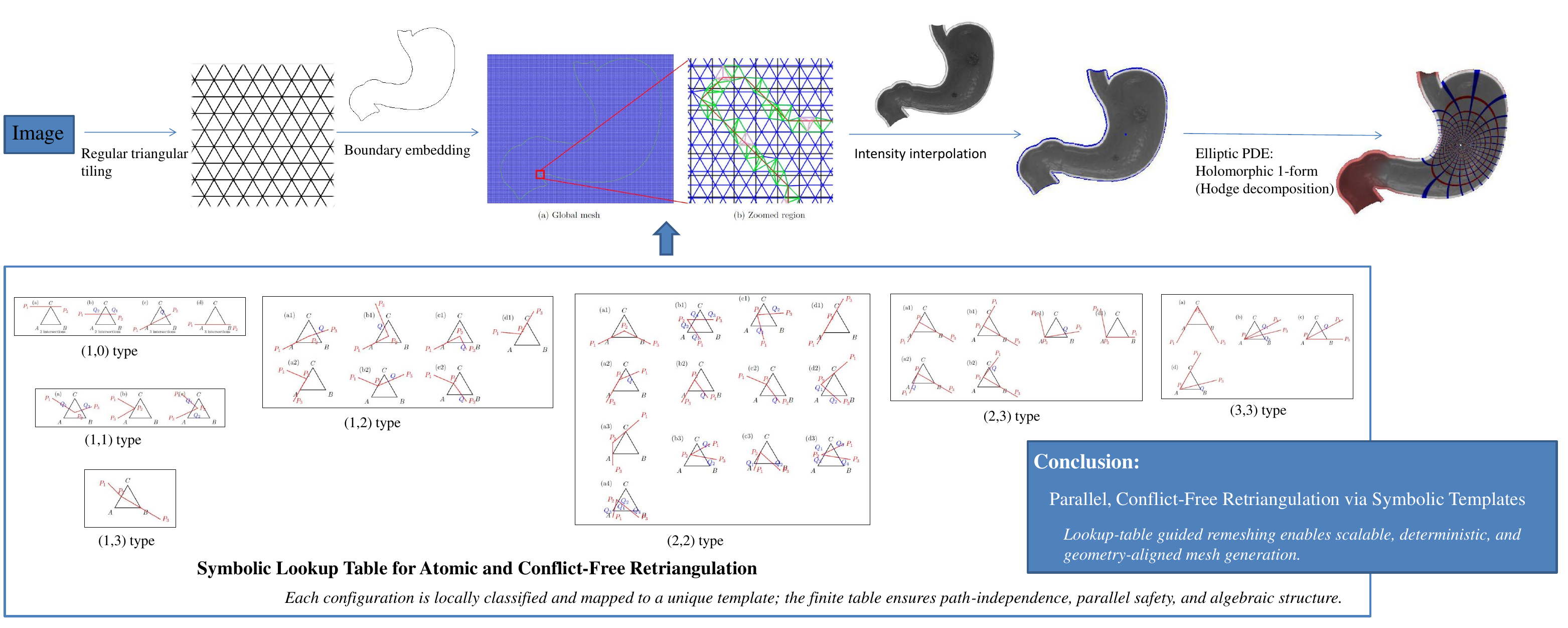}
\end{graphicalabstract}

%%Research highlights

\begin{highlights}
\item A template-driven remeshing method embeds digital boundaries exactly into a regular equilateral grid.
\item All admissible segment--triangle configurations are resolved by a finite, orientation-aware lookup table.
\item The pipeline provides provable local non-degeneracy (angle/area lower bounds) and guarantees topological closure.
\item Stateless local retriangulation enables deterministic, massively parallel execution on raster-derived domains.
\item Experiments against Triangle (CDT) and Gmsh demonstrate sliver-free meshes with highly uniform, near-equilateral interiors and stable PDE behavior.
\end{highlights}

%% Keywords
\begin{keyword}
Structured remeshing, Raster-conforming triangulation, Lookup table methods, Metric consistency, Parallel mesh generation, PDE discretization
\end{keyword}

\end{frontmatter}

%% Add \usepackage{lineno} before \begin{document} and uncomment 
%% following line to enable line numbers
%% \linenumbers

%% main text
%%

%% Use \section commands to start a section
\section{Introduction}

Modern triangulation frameworks often prioritize geometric fidelity or numerical quality~\cite{hoppe1993mesh, ruppert1995delaunay}, yet overlook a more foundational property: \emph{structure}. Most pipelines—whether based on constrained Delaunay triangulation (CDT)~\cite{chew1987constrained, Shewchuk1996Triangle, de1992line}, refinement~\cite{ruppert1995delaunay, Ortiz1991}, or local optimization~\cite{hoppe1993mesh, nealen2006laplacian}—lack composable local rules, symbolic closure, or deterministic retriangulation behavior. These deficiencies lead to global coupling, history dependence, and synchronization barriers that hinder theoretical analysis and scalable numerical simulation~\cite{chazelle1991triangulating, bern1995mesh,idelsohn2006mesh}.

These challenges are amplified in raster-based domains, which are widely used in medical modeling, GIS, and PDE-based simulations~\cite{gonzales1987digital, li2004digital, bankman2008handbook, leveque2007finite}. While raster data is storage-efficient and hardware-friendly, it lacks explicit geometry—limiting mesh quality, introducing boundary artifacts, and obstructing high-fidelity numerical solvers~\cite{shewchuk2002good, goodchild1992geographical}. A reliable raster-to-mesh conversion method is thus critical to bridge discrete representations with structure-sensitive PDE discretizations~\cite{lawson1977software}.

However, existing triangulation methods face structural and numerical limitations. CDT-based algorithms require global topological updates~\cite{ruppert1995delaunay}; grid-based schemes suffer from angular degeneracy and mesh anisotropy~\cite{meyer2003discrete}; and parallel methods~\cite{chernikov2006parallel, DDM, kadow2004parallel, KOHOUT2005491} often incur refinement complexity and inter-thread coupling due to their non-symbolic global logic.

We argue that what is missing is not another geometric heuristic, but a \emph{structure-preserving meshing framework}—one that guarantees local consistency, symbolic determinism, and compatibility with scalable numerical solvers.

\emph{Structured Bitmap-to-Mesh Triangulation (SBMT)} is our response. Rather than globally reconstructing mesh connectivity, SBMT applies symbolic, rule-driven retriangulation to only those triangles intersected by domain boundaries. Each intersection is resolved via a finite, conflict-free template from a symbolic lookup table, ensuring reproducibility and compositional closure.

SBMT is founded on three structural principles:

\begin{itemize}
  \item \textbf{Algebraic structure:} Segment–triangle intersections are classified by discrete equivalence and triangle symmetry (group \( C_3 \)), forming a finite symbolic system with free-group composition. This guarantees determinism, composability, and local closure.

  \item \textbf{Geometric structure:} The mesh is derived from a regular triangular grid with provable angle and area bounds. The near-equilateral quality supports stable interpolation and accurate numerical stencils for PDE solvers~\cite{shewchuk2002good}.

  \item \textbf{Computational structure:} All retriangulation operations are local and stateless, ensuring O(1) updates per triangle and eliminating the need for global topology updates. This localized approach minimizes computational overhead, supports efficient parallelization, and is well-suited for real-time applications and large-scale simulations.
\end{itemize}

Together, these principles yield a template-driven and reproducible meshing system tailored for numerical simulation on image-derived or pixel-aligned domains. Unlike global meshing heuristics, SBMT enforces local structure and predictability—supporting both theoretical correctness and practical scalability.

\paragraph{Main contributions} This work introduces:

\begin{enumerate}
  \item A template-driven meshing framework—\emph{Structured Bitmap-to-Mesh Triangulation \\(SBMT)}—that embeds raster-derived boundaries into a regular triangular scaffold using a finite-state template system. The framework unifies boundary conformity, structural regularity, and computational scalability.

  \item A rigorous classification of all segment–triangle intersection types, with symmetry-based encoding and template composition. The resulting system guarantees local determinism, bounded triangle quality, and stateless retriangulation.

  \item A prototype implementation with quantitative comparisons against Shewchuk’s Triangle~\cite{Shewchuk1996Triangle} and Gmsh. Across raster-derived domains, SBMT yields essentially sliver-free meshes with provable local quality bounds and a highly regular near-equilateral interior, often at markedly lower triangle counts than these CDT baselines, which typically trade larger minimum angles for heavier boundary refinement and higher element counts. We further validate SBMT as a PDE-ready discretization by demonstrating stable elliptic/parabolic solves, including heat diffusion and holomorphic 1-form reconstruction via Poisson equations.

\end{enumerate}

As summarized in Figure~\ref{fig:sbmt-flowchart}, SBMT proceeds in five deterministic stages, from a regular equilateral background grid to a boundary-conforming triangulation. The geometric preprocessing and lookup-based retriangulation boxes are
made precise in Algorithms~\ref{alg:sbmt-preprocess} and
\ref{alg:sbmt-template}, which operate per triangle cell using the
finite template table in Appendix~G.

% Define a style for flowchart nodes
\tikzset{
  flowchart node/.style={
    draw,
    thick,
    minimum height=1.3em,
    minimum width=2.5cm,
    align=center,
    font=\small,
    rounded corners=2pt,
    fill=gray!5
  },
  flowchart arrow/.style={
    thick,
    -{Latex[length=2mm]}
  }
}

\begin{figure}[htbp]
\centering
\begin{tikzpicture}[node distance=1.8cm and 0.7cm, every node/.style={flowchart node}]
\node (init) {Regular Triangular \\ Grid Initialization};
\node (embed) [right=of init] {Boundary \\ Embedding};
\node (preprocess) [right=of embed] {Geometric \\ Preprocessing \\ (thresholds $a,b,c$)};
\node (classify) [right=of preprocess] {Intersection \\ Classification};
\node (lookup) [right=of classify] {Lookup-Based \\ Retriangulation};

% Draw arrows
\draw[flowchart arrow] (init) -- (embed);
\draw[flowchart arrow] (embed) -- (preprocess);
\draw[flowchart arrow] (preprocess) -- (classify);
\draw[flowchart arrow] (classify) -- (lookup);
\end{tikzpicture}

\vspace{0.5em}
\caption{Overview of the SBMT execution pipeline. The process begins with a regular triangular grid, followed by boundary embedding, threshold-based preprocessing, discrete classification of boundary–triangle intersections, and local retriangulation via symbolic lookup templates. Each stage is deterministic, parallelizable, and free of global mesh coupling.}
\label{fig:sbmt-flowchart}
\end{figure}
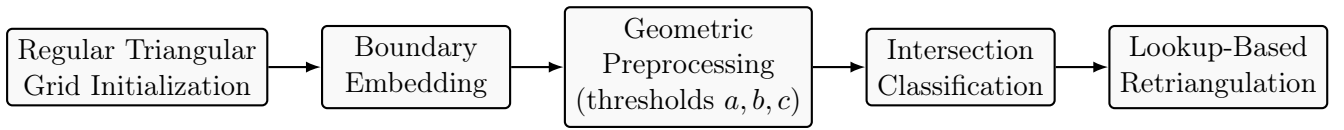

\section{Related Work}

The widespread use of raster data in visual computing, scientific imaging, and numerical simulations demands triangulation methods that are \emph{geometry-aware and structure-preserving}~\cite{gonzales1987digital, demmel1997applied, leyva2023satellite}. However, most existing methods—whether discrete~\cite{lorensen1998marching, treece1999regularised, hilton1996marching} or continuous~\cite{chew1987constrained, held2001fist, Shewchuk1996Triangle}—struggle to balance fidelity, structure, efficiency, and regularity in raster-based domains.

We categorize existing constrained triangulation methods into four families:
(1) lookup-table–based schemes,
(2) continuous geometry-based triangulations with or without Steiner insertion,
(3) curved and structure-aware image triangulations, and
(4) parallel refinement strategies.

\paragraph{Lookup-Table-Based Methods}
Classical lookup-table methods~\cite{lorensen1998marching, treece1999regularised, hilton1996marching} reconstruct surfaces by matching local vertex patterns to pre-defined triangulation rules. These methods are efficient but rely on grid resolution, limiting precision for accurate boundary representation. Our approach builds on this paradigm but enhances it with symbolic intersection taxonomy and $C_3$-symmetric templates, enabling composable, closed-form reconstruction for PDE solvers.

\paragraph{Continuous Geometry-Based Methods}
Constrained Delaunay Triangulation (CDT)~\cite{chew1987constrained} and its variants achieve $O(n \log n)$ complexity but are limited by the need for careful handling of strip interfaces, which limits parallel scalability. The FIST algorithm~\cite{held2001fist} offers fast triangulation but struggles with parallelism and may generate low-quality triangles, especially for large-scale polygons. Point-insertion methods like Delaunay Refinement~\cite{ruppert1995delaunay} offer advantages in controlling triangle quality but face scalability and efficiency issues. \texttt{Triangle}~\cite{Shewchuk1996Triangle, shewchuk2002delaunay, shewchuk2002good} is widely used but does not exploit the regularity of raster inputs, limiting its efficiency in dense bitmap-derived domains.

\paragraph{Curved and structure-aware triangulation}
Structure-aware image triangulation methods optimize an image-approximation or geometric energy over vertex placement/connectivity, often using curved or higher-order elements (e.g., TriWild and related schemes~\cite{hu2019triwild, xiao2022image},
data-dependent triangulations~\cite{Rippa92, li2013tuned, peyre2011review}, and stylized triangulations~\cite{lawonn2019stylized}).
While expressive, they typically require iterative global/semi-global optimization and complex refinement data structures, which can be expensive on high-resolution bitmaps.
SBMT instead assumes a given boundary network and performs a one-pass, template-driven remeshing on a fixed equilateral scaffold to obtain a boundary-exact, PDE-ready straight-edge discretization; hence these methods are complementary rather than direct baselines.

\paragraph{Parallel Triangulation}
Parallel methods~\cite{parallelrefinement, chernikov2006parallel, algo872} improve scalability by partitioning the domain into subregions and applying local refinements. However, they face challenges in GPU or embedded systems due to complex refinement logic and inter-thread communication, which hinder real-time performance. In contrast, SBMT uses a stateless, rule-driven architecture, where each triangle is processed independently, ensuring high compatibility with GPU and embedded systems.

We now describe the proposed method in detail.

\section{Method}\label{sec:main}

We present the \emph{Structured Bitmap-to-Mesh Triangulation} (SBMT) framework—a symbolic, template-driven remeshing method that embeds discrete polygonal boundaries into a regular triangular grid with guaranteed geometric fidelity and structural consistency. Unlike heuristic or recursive pipelines, SBMT resolves all segment–triangle intersections through a finite, deterministic rule set, enabling reproducible and synchronization-free execution.

SBMT is built upon three structural components:
\begin{enumerate}
  \item \textbf{Intersection Classification:} A complete taxonomy of all topologically and geometrically distinct segment–triangle intersection types;
  \item \textbf{Template Lookup:} A finite retriangulation table mapping each type to a unique, conflict-free triangle replacement, ensuring local closure and global consistency;
  \item \textbf{Symbolic Clipping Scheme:} A deterministic rule set that embeds boundaries via local template application along intersection paths, guaranteeing topological validity and alignment.
\end{enumerate}

Together, these components define a rule-based, stateless system that produces angle-bounded, interpolation-stable, and highly parallelizable meshes over raster domains.

\subsection{Preliminaries: Limitations of Raster Representations and the Need for Structured Reconstruction}

Raster data, typically in bitmap form, is the default format for image acquisition, storage, and early-stage processing due to its grid regularity, memory compactness, and hardware compatibility. Consequently, pixel-based signals dominate real-time imaging, embedded vision, and GPU-accelerated simulations.

However, despite its practical ubiquity, raster representation suffers from severe limitations in geometric and topological expressiveness:
\begin{itemize}
  \item \textbf{Differential operators are not naturally defined:} Core analytical tools such as gradients, divergence, or Laplacians are most naturally formulated on smooth domains, whereas pixel grids only provide coarse, grid-aligned finite-difference approximations.
  \item \textbf{Geometric structures are severely degraded:} Features such as normals, curvature, and tangent-aligned contours become jagged and aliased during rasterization, making high-quality reconstruction nontrivial.
  \item \textbf{Topological coherence is fragile:} Pixel boundaries form jagged, axis-aligned patterns with only implicit connectivity, which complicates robust segment tracing, mesh generation, and downstream PDE solvers.
\end{itemize}

These drawbacks pose significant challenges for geometry-aware numerical computation. In many pipelines, signals are rasterized early—either through sensor sampling or intermediate discretization—necessitating a reconstruction step that restores analytic continuity and structural integrity prior to mesh-based simulation.

To address this, we propose a structured remeshing framework that converts discrete raster boundaries into a Planar Straight Line Graph (PSLG), followed by conflict-free embedding into a regular triangular grid. Unlike global meshing strategies, SBMT performs fully local retriangulation via a symbolic template system, preserving the underlying scaffold and supporting deterministic, stateless execution.

The resulting mesh is topologically consistent, faithfully aligned to the original raster boundary, and supports well-defined discrete differential operators, allowing PDE solvers to achieve geometric accuracy and numerical robustness—particularly for elliptic problems where solution quality is sensitive to mesh regularity near boundaries.

\paragraph{Input boundary as prescribed prior}
SBMT assumes that the input boundary is provided as a polygonal network extracted from the raster (e.g., via segmentation or contour tracing) and treats it as prescribed prior information.
This is analogous to specifying boundary data in PDE discretization: the mesh is constructed \emph{conditioned on} this boundary so that subsequent elliptic/parabolic solves are carried out on a boundary-conforming domain, rather than using meshing itself to infer or optimize the boundary.

In the sections that follow, we detail the four structural pillars of SBMT:
\begin{itemize}
  \item a geometric protocol for classifying segment–triangle intersections;
  \item a template-based retriangulation scheme for local updates;
  \item a parallel execution architecture for scalable deployment;
  \item and a time complexity analysis to characterize performance bounds.
\end{itemize}

\subsection{Geometric Protocol for Safe Retriangulation}
\label{sec:geometric-protocol}

To ensure deterministic, conflict-free retriangulation, we impose two mild geometric constraints on how the input digital boundary intersects the equilateral base mesh. Their role is to eliminate degeneracies and to bound the space of admissible segment--triangle interaction patterns, so that all nontrivial cases can be resolved by a finite lookup table.

\paragraph{(1) Angular constraint (digital boundary)}
The constraint is imposed on the \emph{polygonal boundary extracted from the bitmap}, not on the underlying continuous contour: the interior angle between two consecutive boundary segments is required to be at least $90^\circ$.
This is a mild regularity condition for raster-derived boundaries. Standard 4-/8-connected contour tracing produces chain-code turns in multiples of $45^\circ$, hence polygonal interior angles are typically $\ge 90^\circ$ (e.g., a $45^\circ$ direction change corresponds to a $135^\circ$ interior
angle). If an extracted chain contains isolated nonconforming discrete corners, we enforce the condition by a local simplification/splitting step without changing the boundary topology. Sharp \emph{continuous} corners (e.g., $36^\circ$ star tips) are represented by short staircase patterns; increasing
image resolution refines this approximation while preserving the discrete $\ge 90^\circ$ condition.

\paragraph{(2) Length constraint (no fragment clustering)}
Each boundary segment is required to be longer than the local maximum edge length of the base mesh. This rules out overly short fragments that would otherwise create clustered intersection points inside a single cell and complicate local case handling.

\medskip
\noindent
\textbf{Consequences used by SBMT.}
Under the above constraints (together with the intersection-counting rule in Algorithm~\ref{alg:intersection-type}), SBMT relies on the following three properties:
\begin{enumerate}[label=(P\arabic*)]
  \item \textbf{Bounded locality:} each base triangle is intersected by at most
        two boundary segments from a chain (Lemma~\ref{lem:segment-triangle-bound}).
  \item \textbf{Bounded boundary events:} after merging coincident incidences
        (vertex-on-segment, endpoint-on-edge), a triangle boundary contains at most
        three distinct intersection points relevant to retriangulation.
  \item \textbf{Finite taxonomy:} therefore the set of nontrivial admissible
        segment--triangle patterns is finite and can be pre-enumerated
        (Appendix~F), with each canonical type assigned one or more static
        retriangulation templates (Appendix~G).
\end{enumerate}
The resulting symbolic \emph{coverage} of the template system is formalized by Theorem~\ref{thm:lookup_table_completeness} (proof in Appendix~B), which states that every admissible nontrivial configuration falls into one of the canonical types and is handled by at least one lookup-table entry.

\begin{figure}[htbp]
\centering
\includegraphics[width=0.40\linewidth]{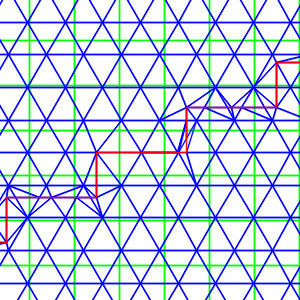}
\caption{
Structured equilateral mesh over a bitmap-derived domain.
Green squares indicate unit pixels; blue lines denote the triangular grid;
red polylines represent boundary segments intersecting the mesh.
This configuration satisfies the geometric protocol of
Section~\ref{sec:geometric-protocol} and supports lookup-based local retriangulation while preserving topological and numerical consistency. The grid shown uses parameters \( a = 0.2 \), \( b = 0.08 \), \( c = 0 \) (as defined in Section~\ref{subsubsec:theoretical-properties}), with triangle edge length \( \sqrt{0.7} \).
}
\label{fig:meshshow}
\end{figure}

\paragraph{Remark on fixed resolution and possible multi-resolution extensions}
In this work, SBMT operates on a single-resolution equilateral scaffold with edge length $e<1$ pixel. For bitmap-derived domains, resolution is typically chosen upstream: higher image resolution increases geometric fidelity at the cost of more elements, while lower resolution trades small-scale detail for
speed and memory efficiency. Extending SBMT to a conforming multi-resolution scaffold is conceptually possible, but would require additional machinery (e.g., refinement/coarsening policies and conformity enforcement) while preserving the same finite case taxonomy; we leave this direction for future work.

\subsection{Structured Triangular Filling of the Bitmap Domain}

To enable geometry-consistent numerical computation over raster data, we overlay a globally aligned triangular grid with irrational edge length \( h < 1 \) onto the bitmap domain. This structured scaffold serves as the geometric backbone of our remeshing pipeline, ensuring topological regularity and bounded triangle quality.

Pixels are modeled as unit squares centered at integer coordinates. The irrational value of \( h \) prevents periodic aliasing and grid-locking, leading to statistically uniform, scale-consistent intersection behavior. All triangle–pixel interactions remain local and geometry-driven.

Boundary segments are embedded into this mesh using a finite set of symbolic retriangulation templates, introduced below. Scalar fields from the bitmap are projected onto each triangle to support field-aware interpolation.

As shown in Figure~\ref{fig:meshshow}, segment–triangle intersections include edge crossings and vertex piercings. The interpolation accuracy is formally bounded in Theorem~\ref{thm:spectral_interp}, which guarantees spectral convergence for bandlimited signals over equilateral meshes—ensuring that SBMT retains numerical consistency and avoids distortion artifacts.

We now classify all segment–triangle intersection types and introduce the corresponding retriangulation strategies.

\begin{figure}[ht]
\begin{center}
\renewcommand{\arraystretch}{1.5}
\begin{tabular}{cccc}
% Row 1
\begin{tikzpicture}[scale=0.7]
\coordinate (A) at (1,0);
\coordinate (B) at (3,0);
\coordinate (C) at (2,1.732);
\draw[thick] (A) -- (B) -- (C) -- cycle;
\coordinate (P1) at (0,1);
\coordinate (P2) at (2,0.5);
\draw[red, thick] (P1) -- (P2);
\node at (2.2, -0.3) {\scriptsize 1 intersection};
\end{tikzpicture}
&
\begin{tikzpicture}[scale=0.7]
\coordinate (A) at (1,0);
\coordinate (B) at (3,0);
\coordinate (C) at (2,1.732);
\draw[thick] (A) -- (B) -- (C) -- cycle;
\coordinate (P1) at (0,1);
\coordinate (P2) at (1.5,0.866);
\draw[red, thick] (P1) -- (P2);
\node at (2.2, -0.3) {\scriptsize 1 intersection};
\end{tikzpicture}
&
\begin{tikzpicture}[scale=0.7]
\coordinate (A) at (1,0);
\coordinate (B) at (3,0);
\coordinate (C) at (2,1.732);
\draw[thick] (A) -- (B) -- (C) -- cycle;
\coordinate (P1) at (0,-0.5); % x为0，穿过A点
\coordinate (P2) at (2,0.5);
\draw[red, thick] (P1) -- (P2);
\node at (2.2, -0.3) {\scriptsize 2 intersections};
\end{tikzpicture}
&
\begin{tikzpicture}[scale=0.7]
\coordinate (A) at (1,0);
\coordinate (B) at (3,0);
\coordinate (C) at (2,1.732);
\draw[thick] (A) -- (B) -- (C) -- cycle;
\coordinate (P1) at (0.5,1); % x为0，穿过A点
\coordinate (P2) at (3.5,1);
\draw[red, thick] (P1) -- (P2);
\node at (2.2, -0.3) {\scriptsize 2 intersections};
\end{tikzpicture}
\\
% Row 2
\begin{tikzpicture}[scale=0.7]
\coordinate (A) at (1,0);
\coordinate (B) at (3,0);
\coordinate (C) at (2,1.732);
\draw[thick] (A) -- (B) -- (C) -- cycle;
\coordinate (P1) at (1.5,0.866); % x为0，穿过A点
\coordinate (P2) at (3.5,1);
\draw[red, thick] (P1) -- (P2);
\node at (2.2, -0.3) {\scriptsize 2 intersections};
\end{tikzpicture}
&
\begin{tikzpicture}[scale=0.7]
\coordinate (A) at (1,0);
\coordinate (B) at (3,0);
\coordinate (C) at (2,1.732);
\draw[thick] (A) -- (B) -- (C) -- cycle;
\coordinate (P1) at (1.5,1.732); % x为0，穿过A点
\coordinate (P2) at (3.5,1.732);
\draw[red, thick] (P1) -- (P2);
\node at (2.2, -0.3) {\scriptsize 2 intersections};
\end{tikzpicture}
&
\begin{tikzpicture}[scale=0.7]
\coordinate (A) at (1,0);
\coordinate (B) at (3,0);
\coordinate (C) at (2,1.732);
\draw[thick] (A) -- (B) -- (C) -- cycle;
\coordinate (P1) at (1.5,0); % x为0，穿过A点
\coordinate (P2) at (4,0);
\draw[red, thick] (P1) -- (P2);
\node at (2.2, -0.3) {\scriptsize 2 intersections};
\end{tikzpicture}
&
\begin{tikzpicture}[scale=0.7]
\coordinate (A) at (1,0);
\coordinate (B) at (3,0);
\coordinate (C) at (2,1.732);
\draw[thick] (A) -- (B) -- (C) -- cycle;
\coordinate (P1) at (3,0); % x为0，穿过A点
\coordinate (P2) at (3.5,2.7);
\draw[red, thick] (P1) -- (P2);
\node at (2.2, -0.3) {\scriptsize 2 intersections};
\end{tikzpicture}
\\
% Row 3
\begin{tikzpicture}[scale=0.7]
\coordinate (A) at (1,0);
\coordinate (B) at (3,0);
\coordinate (C) at (2,1.732);
\draw[thick] (A) -- (B) -- (C) -- cycle;
\coordinate (P1) at (2,0); % x为0，穿过A点
\coordinate (P2) at (2,2);
\draw[red, thick] (P1) -- (P2);
\node at (2.2, -0.3) {\scriptsize 3 intersections};
\end{tikzpicture}
&
\begin{tikzpicture}[scale=0.7]
\coordinate (A) at (1,0);
\coordinate (B) at (3,0);
\coordinate (C) at (2,1.732);
\draw[thick] (A) -- (B) -- (C) -- cycle;
\coordinate (P1) at (2,-1); % x为0，穿过A点
\coordinate (P2) at (2,1.732);
\draw[red, thick] (P1) -- (P2);
\node at (2.2, -0.3) {\scriptsize 3 intersections};
\end{tikzpicture}
&
\begin{tikzpicture}[scale=0.7]
\coordinate (A) at (1,0);
\coordinate (B) at (3,0);
\coordinate (C) at (2,1.732);
\draw[thick] (A) -- (B) -- (C) -- cycle;
\coordinate (P1) at (2,-1); % x为0，穿过A点
\coordinate (P2) at (2,2);
\draw[red, thick] (P1) -- (P2);
\node at (2.2, -0.3) {\scriptsize 3 intersections};
\end{tikzpicture}
&
\begin{tikzpicture}[scale=0.7]
\coordinate (A) at (1,0);
\coordinate (B) at (3,0);
\coordinate (C) at (2,1.732);
\draw[thick] (A) -- (B) -- (C) -- cycle;
\coordinate (P1) at (0,0); % x为0，穿过A点
\coordinate (P2) at (3.5,0);
\draw[red, thick] (P1) -- (P2);
\node at (2.2, -0.3) {\scriptsize 3 intersections};
\end{tikzpicture}

\\
% Row 4
\begin{tikzpicture}[scale=0.7]
\coordinate (A) at (1,0);
\coordinate (B) at (3,0);
\coordinate (C) at (2,1.732);
\draw[thick] (A) -- (B) -- (C) -- cycle;
\coordinate (P1) at (1,0); % x为0，穿过A点
\coordinate (P2) at (4,0);
\draw[red, thick] (P1) -- (P2);
\node at (2.2, -0.3) {\scriptsize 3 intersections};
\end{tikzpicture}
&

&

&

\end{tabular}
\end{center}

\caption{
Exhaustive enumeration of canonical intersection patterns induced by a single boundary segment intersecting a triangular cell. Each configuration is treated as an atomic unit, abstracted independently of any additional segments that may also intersect the same cell. Subfigures are labeled by the number of intersection points (e.g., \texttt{1 intersection}, \texttt{2 intersections}, etc.), omitting edge identities. These localized, template-based configurations form the basis of our lookup-driven remeshing framework, enabling robust and scalable subdivision of structured triangular domains with embedded polygonal boundaries.
}
\label{fig:lookup_table}
\end{figure}
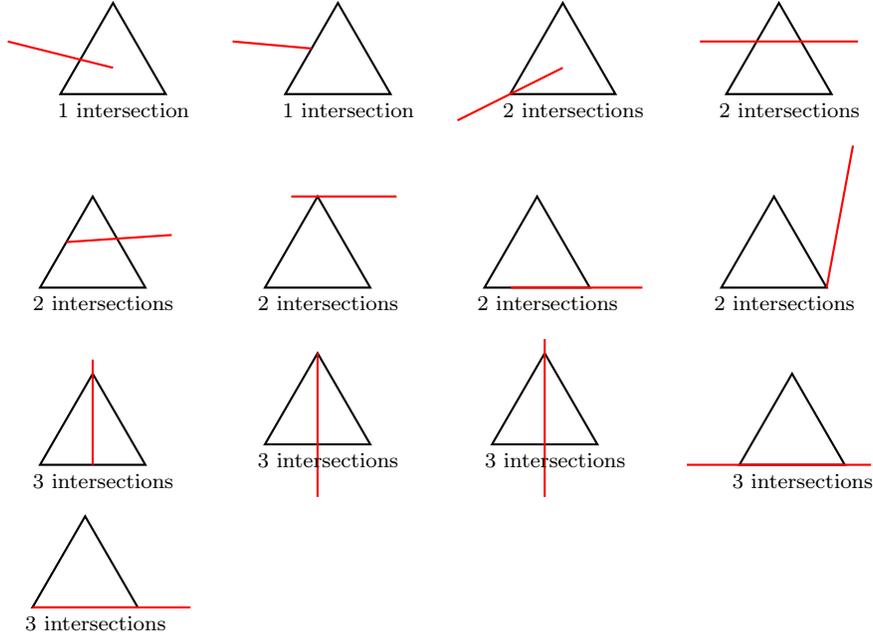

\subsubsection{Intersection Classification}
\label{subsubsec:intersection_classification}
A key step in embedding polygonal boundaries is to classify how each segment interacts with a structured triangular cell, providing symbolic keys for deterministic template lookup.

Let $T = \triangle ABC$ be an equilateral triangle and $L = \overline{PQ}$ a boundary segment. For each triangle edge $e$, the intersection with $L$ is encoded as follows:

\begin{algorithm}[htb]
\caption{Classify Segment–Edge Intersection}
\label{alg:intersection-type}
\begin{algorithmic}[1]
\REQUIRE Triangle edge $e$; segment $L = \overline{PQ}$
\ENSURE Intersection count and attribution
\IF{$L$ crosses $e$ (interior or endpoint-on-edge)}
    \STATE Count as one intersection
\ELSIF{vertex of $e$ lies on $L$}
    \STATE Count as two intersections (for both incident edges)
\ELSIF{$L$ overlaps $e$ (colinear)}
    \STATE Count as one intersection;
    \STATE Attribute by fixed priority: vertex $A$, $B$, endpoint $P$, then $Q$
\ENDIF
\end{algorithmic}
\end{algorithm}

Each segment produces at most three intersections per triangle, and every intersection pattern is uniquely encoded by a bitmask (AB~$=2$, BC~$=4$, CA~$=8$). Under the geometric constraints (Section~\ref{sec:geometric-protocol}), this ensures a finite and deterministic configuration space, capturing all atomic interaction types: edge crossings, vertex contacts, and colinear overlaps.

These symbolic encodings, as listed in Figure~\ref{fig:lookup_table}, form the basis of our lookup-driven retriangulation method.

\subsubsection{Local Retriangulation Strategy}
\label{subsubsec:local_retriangulation_strategy}

SBMT updates the mesh \emph{locally}: each base triangle intersected by the boundary is replaced by a small set of sub-triangles that (i) embed the boundary segment(s) exactly and (ii) preserve conformity with neighboring cells. To make this update deterministic and parallelizable, SBMT uses a \emph{table-driven} strategy: once the local boundary--triangle interaction is classified, a fixed retriangulation template is selected from a finite lookup table (Appendix~G).

\paragraph{Intersection notation and classification}
For triangles intersected by two consecutive boundary segments along the same boundary chain, we encode the local configuration in a base triangle $K$ by an ordered pair $(m,n)$, where
$m$ (resp.\ $n$) is the number of \emph{distinct} intersection points contributed by the first (resp.\ second) segment on $\partial K$, counted by the deterministic convention in
Algorithm~\ref{alg:intersection-type}. Here the ``first'' and ``second'' segments follow the oriented order along the boundary chain (e.g., increasing polyline index), so $(m,n)$ is
well-defined across triangles. We count both edge crossings and vertex incidences; intersections that coincide after snapping are merged and counted once. The resulting label $(m,n)$ (together
with its rotation/mirror variant) serves as the index into the finite template table.

We group admissible configurations by (a) the number of intersecting segments (single vs.\ two-segment cases)
and (b) the topology of the induced intersection points on $\partial K$. Single-segment cases (proper edge crossing, vertex incidence, and colinear overlap handled by a fixed priority rule)
are summarized in Figure~\ref{fig:single_segment_cases}. Two-segment cases include type $(1,1)$ (Figure~\ref{fig:1_1}) as well as mixed types such as $(1,2)$, $(1,3)$, and $(2,2)$ enumerated in
Section~\ref{subsubsec:intersection_classification} and Appendix~F.
For each admissible type, Appendix~G provides one or more static retriangulation templates, which embed the boundary exactly, respect the quality safeguards in Section~\ref{subsubsec:theoretical-properties}, and enable independent per-triangle updates.

\begin{lemma}[Bound on segment--triangle intersections]
\label{lem:segment-triangle-bound}
Let the polygonal boundary be a union of grid-aligned segments satisfying
the angular and length constraints in
Section~\ref{sec:geometric-protocol}.
Then for every base triangle $T$ in the equilateral mesh the following
properties hold:
\begin{enumerate}[label=(\roman*)]
  \item $T$ is intersected by at most two consecutive boundary segments
        along any given boundary chain;
  \item for each such segment, Algorithm~\ref{alg:intersection-type}
        produces at most one edge-level intersection event per edge of
        $T$;
  \item consequently, the pattern of edge-level intersection events on
        $T$ belongs to a finite family of cases, and every admissible
        configuration is one of those enumerated in
        Figures~\ref{fig:single_segment_cases}--\ref{fig:1_1}
        and Appendix~F, each with an associated static retriangulation
        template.
\end{enumerate}
\end{lemma}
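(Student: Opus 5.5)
The plan is to prove the three parts in order. Parts (i) and (ii) use only elementary planar geometry of an equilateral cell of edge length $h<1$. Part (iii) is a finite bookkeeping argument built on (i) and (ii).

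\textbf{Part (i).} The tool is the diameter of the cell: any two points of $T$ are at distance at most $h$. Suppose three segments $s_{k-1},s_k,s_{k+1}$ of one chain all meet $T$. Then the middle segment $s_k$ has both endpoints shared with segments that meet $T$. We show this forces $s_k$ to lie essentially within $T$.
\begin{enumerate}[label=(\alph*)]
\item Pick points $x\in s_{k-1}\cap T$ and $y\in s_{k+1}\cap T$, so $|xy|\le h$.
\item The chain from $x$ runs to the vertex $v_k=s_{k-1}\cap s_k$, along all of $s_k$ to $v_{k+1}$, and then to $y$. At both $v_k$ and $v_{k+1}$ the interior angle is at least $90^\circ$.
\item Consider the triangles $x v_k v_{k+1}$ and $v_k v_{k+1} y$. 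Since the turning at each corner is at most $90^\circ$, the projection of the path onto the direction of $s_k$ is monotone on the middle piece. Hence the endpoints $x$ and $y$ are separated, in that direction, by at least $|s_k|$.
\item This gives $|xy|\ge|s_k|$ unless one of the side pieces doubles back. The $90^\circ$ bound forbids doubling back past the perpendicular.
\item Combined with $|s_k|>h$ from the length constraint, this yields $|xy|>h$, a contradiction.
\end{enumerate}
The main obstacle is step (c)–(d): making the monotone-projection claim precise when $x$ and $y$ sit on the prior and next segments rather than at their endpoints. I would first try projecting onto the unit vector of $s_k$. Both adjacent segments make an angle of at most $90^\circ$ with the outward continuation, so their projections are nonnegative extensions. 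This is exactly why the bound is $\ge 90^\circ$ and not smaller. The same argument also shows that the two segments meeting $T$ must be consecutive.

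\textbf{Part (ii).} Fix one segment $L$ and one edge $e$. Two distinct line segments that are not colinear meet in at most one point. Algorithm~\ref{alg:intersection-type} then produces one event from the crossing case or the endpoint-on-edge case. A vertex incidence is charged once to each of the two incident edges, so each edge still receives one event. In the colinear case the algorithm explicitly records a single intersection, with attribution by the fixed priority rule. So each edge receives at most one event per segment.

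\textbf{Part (iii).} By (i) and (ii), the data on $T$ is an ordered pair (or singleton) of per-segment edge-subsets of $\{AB,BC,CA\}$. Each subset is encoded by a bitmask with at most $2^3$ values. After merging coincident points the count is $m,n\le 3$, giving $(m,n)\in\{0,\dots,3\}^2$. This is a finite set, at most $8\times 8$ before the $C_3$/mirror quotient. It remains to check that each realizable class occurs in Figures~\ref{fig:single_segment_cases}--\ref{fig:1_1} or Appendix~F. I would do this by orbit enumeration under $C_3$ and reflection, discarding classes made unrealizable by (i) and by convexity of $T$: a straight segment enters and exits a convex cell at most once. The remaining classes are matched against the listed figures, and the template assignment is read off from Appendix~G.
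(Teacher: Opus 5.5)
Your proof of (i) is correct and takes a genuinely different, sharper route than the paper's. Appendix~J argues only qualitatively. Convexity makes each segment's trace on $T$ connected, and three consecutive segments meeting $T$ would need a ``double turn'' near $T$, forcing an oscillation or several entry/exit episodes; none of this is quantified. You replace it with a diameter-versus-projection inequality. The step you flag as the main obstacle is in fact already closed by your own projection remark. Let $u$ be the unit direction of $s_k$ from $v_k$ to $v_{k+1}$. The angle of at least $90^\circ$ at $v_k$ gives $(x-v_k)\cdot u\le 0$ for every $x\in s_{k-1}$. The angle at $v_{k+1}$ gives $(y-v_{k+1})\cdot u\ge 0$ for every $y\in s_{k+1}$. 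Hence $|xy|\ge (y-x)\cdot u\ge |s_k|>h=\mathrm{diam}\,T$, with no case analysis on where $x$ and $y$ sit. This shows exactly where the $90^\circ$ threshold and the strict length bound enter; the angle must be the non-reflex angle between consecutive segments, which is what the paper intends. It also proves a little more than required: $s_{k-1}$ and $s_{k+1}$ never both meet $T$, whether or not $s_k$ does. Your part (ii) is the same as the paper's.

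The gap is your aside that ``the same argument also shows that the two segments meeting $T$ must be consecutive.'' Your (iii) relies on it silently when it reduces the data on $T$ to an ordered pair (or singleton) of per-segment edge sets. The projection trick works only across a single intermediate segment. With two or more segments in between, the accumulated turning can exceed $90^\circ$, and the intermediate projections can become negative. Consider, using only the angular and length constraints (which is all your argument uses), the simple chain $(0,0)\to(3,0)\to(3,2)\to(1.5,2)\to(1.5,\delta)\to(-2,\delta)$. It has only right angles and every segment is longer than $1>h$. Yet for small $\delta$ and generic placement, a base cell containing the short gap $\{1.5\}\times[0,\delta]$ meets $s_1$, $s_4$ and $s_5$. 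A rectilinear spiral of small pitch even makes the number of segments meeting one cell unbounded, which would also break the finiteness claimed in (iii).

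So the angle and length hypotheses give (i) only in the weak form ``no three consecutive segments.'' To get ``at most two segments, necessarily consecutive,'' you need an extra separation hypothesis. For example, require that any two non-adjacent boundary segments, of the same or different chains, lie at distance greater than $h$; your diameter argument then finishes at once. Appendix~J makes exactly the same jump (``By (i), at most two segments contribute''), so this hole is shared with the paper rather than a divergence from it, but it is real.

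Likewise, the final matching of your orbit enumeration against the listed figures cannot be treated as a mere check. The paper itself describes several of those figures (e.g.\ the $(2,2)$ and $(2,3)$ ones) as representative samples. Your enumeration has to generate the complete list, together with templates for any orbit it finds missing.
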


\begin{proof}[Proof sketch]
By the geometric protocol in Section~\ref{sec:geometric-protocol}, a boundary chain can interact with a base triangle $T$ only through a single local entry--exit across its neighborhood; hence at most two consecutive segments along the chain can intersect $T$ (see Appendix~J in the Supplementary Material
for a detailed geometric argument). For each segment--edge pair, a straight segment intersects a straight edge in at most one point (or yields a single colinear-overlap event), so Algorithm~\ref{alg:intersection-type} records at most one edge-level event per edge. Since $T$ has three edges and at most two segments are relevant, only finitely many edge-event patterns can occur (up to triangle symmetry); these are exactly the cases enumerated in Figures~\ref{fig:single_segment_cases}--\ref{fig:1_1} and Appendix~F, each mapped to a static retriangulation template.
\end{proof}

With Lemma~\ref{lem:segment-triangle-bound} in place, we restrict our
attention to the finite set of intersection types shown in
Figures~\ref{fig:single_segment_cases}--\ref{fig:1_1} and
Appendix~F. For each type, SBMT associates a unique retriangulation
template that embeds the corresponding boundary segments and
intersection events exactly.

Most configurations involve 1 to 4 intersection points; a rare fifth occurs in $(2,3)$ cases where a segment passes directly through a triangle vertex (internal vertex contact) in addition to edge crossings.

\begin{figure}[htbp]
\begin{center}
\renewcommand{\arraystretch}{1.5}
\begin{tabular}{cccc}
% 图 (a)
\begin{tikzpicture}[scale=0.7]
\node at (0.5, 2.1) {(a)};  % 编号
\coordinate (A) at (1,0);
\coordinate (B) at (3,0);
\coordinate (C) at (2,1.732);
\draw[thick] (A) -- (B) -- (C) -- cycle;
\node[below left] at (A) {$A$};
\node[below right] at (B) {$B$};
\node[above] at (C) {$C$};
\coordinate (P1) at (0,1.732);
\coordinate (P2) at (3,1.732);
\draw[red, thick] (P1) -- (P2);
\filldraw[red] (P1) circle (0.03) node[left] {$P_1$};
\filldraw[red] (P2) circle (0.03) node[below right] {$P_2$};
\node at (2.2, -0.7) {\scriptsize 2 intersections};
\end{tikzpicture}
&
% 图 (b)
\begin{tikzpicture}[scale=0.7]
\node at (0.5, 2.1) {(b)};  % 编号
\coordinate (A) at (1,0);
\coordinate (B) at (3,0);
\coordinate (C) at (2,1.732);
\draw[thick] (A) -- (B) -- (C) -- cycle;
\node[below left] at (A) {$A$};
\node[below right] at (B) {$B$};
\node[above] at (C) {$C$};
\coordinate (P1) at (0,1.0);
\coordinate (P2) at (3,1.0);
\draw[red, thick] (P1) -- (P2);
\filldraw[red] (P1) circle (0.03) node[left] {$P_1$};
\filldraw[red] (P2) circle (0.03) node[below right] {$P_2$};
\coordinate (Q1) at (2.423, 1.0);
\coordinate (Q2) at (1.578, 1.0);
\filldraw[blue] (Q1) circle (0.03) node[above right] {$Q_1$};
\filldraw[blue] (Q2) circle (0.03) node[above left] {$Q_2$};
\node at (2.2, -0.7) {\scriptsize 2 intersections};
\end{tikzpicture}
&
% 图 (c)
\begin{tikzpicture}[scale=0.7]
\node at (0.5, 2.1) {(c)};  % 编号
\coordinate (A) at (1,0);
\coordinate (B) at (3,0);
\coordinate (C) at (2,1.732);
\draw[thick] (A) -- (B) -- (C) -- cycle;
\node[below left] at (A) {$A$};
\node[below right] at (B) {$B$};
\node[above] at (C) {$C$};
\coordinate (P1) at (0,-0.5);
\coordinate (P2) at (3,1.0);
\draw[red, thick] (P1) -- (P2);
\filldraw[red] (P1) circle (0.03) node[left] {$P_1$};
\filldraw[red] (P2) circle (0.03) node[above right] {$P_2$};
\coordinate (Q) at (2.552, 0.776);
\filldraw[blue] (Q) circle (0.03) node[above left] {$Q$};
\node at (2.2, -0.7) {\scriptsize 3 intersections};
\end{tikzpicture}
&
% 图 (d)
\begin{tikzpicture}[scale=0.7]
\node at (0.5, 2.1) {(d)};  % 编号
\coordinate (A) at (1,0);
\coordinate (B) at (3,0);
\coordinate (C) at (2,1.732);
\draw[thick] (A) -- (B) -- (C) -- cycle;
\node[below left] at (A) {$A$};
\node[below right] at (B) {$B$};
\node[above] at (C) {$C$};
\coordinate (P1) at (0,0);
\coordinate (P2) at (3.5,0);
\draw[red, thick] (P1) -- (P2);
\filldraw[red] (P1) circle (0.03) node[left] {$P_1$};
\filldraw[red] (P2) circle (0.03) node[below right] {$P_2$};
\node at (2.2, -0.7) {\scriptsize 3 intersections};
\end{tikzpicture}
\end{tabular}
\end{center}

\caption{
Geometric classification of all local configurations in which a single boundary segment intersects a triangle cell. The four cases include vertex intersection (a), edge piercing (b), single-edge contact (c), and edge-aligned overlap (d). This classification forms the base of the local retriangulation lookup table.
}
\label{fig:single_segment_cases}
\end{figure}
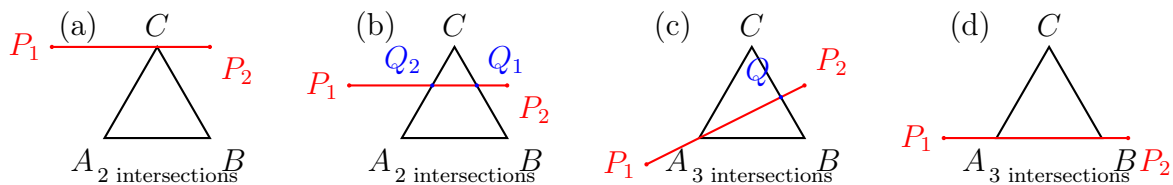

The retriangulation table is thus defined over this compact and exhaustive configuration space, with each entry corresponding to a unique and deterministic retriangulation rule.
Figure~\ref{fig:1_1} illustrates a representative $(1,1)$ retriangulation case, annotated with geometric configurations and embedded intersection points. Additional multi-segment configurations are provided in Supplementary Appendix~F.
The complete set of retriangulation templates and their exact specifications are tabulated and described in detail in Appendix G of the Supplementary Material.

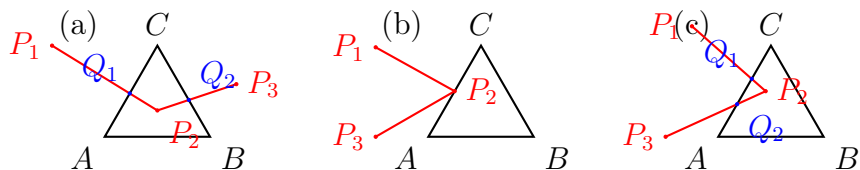
\begin{figure}[htbp]
\begin{center}
\renewcommand{\arraystretch}{1.5}
\begin{tabular}{cccc}
% 图 (a)
\begin{tikzpicture}[scale=0.7]
\node at (0.5, 2.1) {(a)};  % 编号
\coordinate (A) at (1,0);
\coordinate (B) at (3,0);
\coordinate (C) at (2,1.732);
\draw[thick] (A) -- (B) -- (C) -- cycle;
\node[below left] at (A) {$A$};
\node[below right] at (B) {$B$};
\node[above] at (C) {$C$};

% 红折线段
\coordinate (P1) at (0,1.732);
\coordinate (P2) at (2,0.5);
\coordinate (P3) at (3.5,1);
\draw[red, thick] (P1) -- (P2);
\draw[red, thick] (P2) -- (P3);
\filldraw[red] (P1) circle (0.03) node[left] {$P_1$};
\filldraw[red] (P2) circle (0.03) node[below right] {$P_2$};
\filldraw[red] (P3) circle (0.03) node[right] {$P_3$};

% 交点
\coordinate (Q1) at (1.476, 0.823);
\coordinate (Q2) at (2.597, 0.699);
\filldraw[blue] (Q1) circle (0.03) node[above left] {$Q_1$};
\filldraw[blue] (Q2) circle (0.03) node[above right] {$Q_2$};
\end{tikzpicture}
&
% 图 (b)
\begin{tikzpicture}[scale=0.7]
\node at (0.5, 2.1) {(b)};  % 编号
\coordinate (A) at (1,0);
\coordinate (B) at (3,0);
\coordinate (C) at (2,1.732);
\draw[thick] (A) -- (B) -- (C) -- cycle;
\node[below left] at (A) {$A$};
\node[below right] at (B) {$B$};
\node[above] at (C) {$C$};
\coordinate (P1) at (0,1.7);
\coordinate (P2) at (1.5,0.866);
\coordinate (P3) at (0,0);
\draw[red, thick] (P1) -- (P2);
\draw[red, thick] (P2) -- (P3);
\filldraw[red] (P1) circle (0.03) node[left] {$P_1$};
\filldraw[red] (P2) circle (0.03) node[right] {$P_2$};
\filldraw[red] (P3) circle (0.03) node[left] {$P_3$};
\end{tikzpicture}
&
% 图 (c)
\begin{tikzpicture}[scale=0.7]
\node at (0.5, 2.1) {(c)};  % 编号
\coordinate (A) at (1,0);
\coordinate (B) at (3,0);
\coordinate (C) at (2,1.732);
\draw[thick] (A) -- (B) -- (C) -- cycle;
\node[below left] at (A) {$A$};
\node[below right] at (B) {$B$};
\node[above] at (C) {$C$};

% 红线段
\coordinate (P1) at (0.5,2.1);
\coordinate (P2) at (1.9,0.866);
\coordinate (P3) at (0,0);
\draw[red, thick] (P1) -- (P2);
\draw[red, thick] (P3) -- (P2);
\filldraw[red] (P1) circle (0.03) node[left] {$P_1$};
\filldraw[red] (P2) circle (0.03) node[right] {$P_2$};
\filldraw[red] (P3) circle (0.03) node[left] {$P_3$};

% 交点标注
\coordinate (Q1) at (1.636, 1.101);
\coordinate (Q2) at (1.358, 0.619);
\filldraw[blue] (Q1) circle (0.03) node[above left] {$Q_1$};
\filldraw[blue] (Q2) circle (0.03) node[below right] {$Q_2$};
\end{tikzpicture}
&

\end{tabular}
\end{center}

\caption{
Geometric configurations of type \textbf{(1,1)}, where two boundary segments intersect a triangle at one point each. These cases are grouped by intersection count rather than edge location, forming a unified retriangulation pattern applicable to all edge combinations.
}
\label{fig:1_1}
\end{figure}

This structured and exhaustive classification enables a unified retriangulation pipeline that simultaneously preserves boundary fidelity, accommodates arbitrary domain topologies (including those of higher genus), and guarantees parallelizability. 
In the present work, we primarily illustrate the SBMT framework using simply connected domains to facilitate the exposition of differential forms and the demonstration of numerical computations.

Although the SBMT angle protocol enforces that adjacent boundary segments form angles of at least $90^\circ$—thereby excluding complex intersection patterns such as type~\((3,3)\)—we emphasize that the classification framework remains complete even when this constraint is relaxed.

In particular, type~\((3,3)\), illustrated in Appendix F of the Supplementary Material, represents the most complex feasible intersection, where two segments each contribute three entry or exit points within a triangle. As long as no more than two segments intersect a given triangle, even sharp junctions with angles as small as $45^\circ$ can be robustly handled using the existing retriangulation table. This ensures the theoretical completeness and closure of the framework across all local geometric configurations.

\subsubsection{Template-driven local retriangulation}
\label{subsubsec:template-retriangulation}

\paragraph{Single-valued lookup (no runtime template selection)}
A coarse intersection \emph{class} such as $(1,1)$ or $(2,2)$ may contain multiple geometric subcases (cf.\ Appendix~G). SBMT, however, performs \emph{no} runtime choice among alternative triangulations. Instead, each base triangle $K$ is assigned an
\emph{atomic, canonical configuration key} that refines the coarse class using edge-attributed, ordered intersection records and a fixed symmetry-breaking rule.

Concretely, for each $K$, we form the deterministic key
\[
\kappa(K)\;:=\;\Big(\mathrm{type}(K),\;\mathbf{q}_{\partial K},\;\sigma(K)\Big),
\]
where $\mathrm{type}(K)$ is the coarse counting class,
$\mathbf{q}_{\partial K}$ is the ordered list of intersection records on $\partial K$ (including incident edge identities and parametric order along each edge) read from the global registry $\mathcal{R}$, and $\sigma(K)$ is the canonicalization tag induced by the fixed $D_3$ symmetry-breaking convention.

\begin{assumption}[Frozen registry and single-valued lookup]
\label{assump:frozen-lookup}
The registry $\mathcal{R}$ is constructed once, then frozen and read-only. For each base triangle $K$, the lookup table is a \emph{single-valued} map
\[
\mathcal{L}:\mathsf{Key}\to \mathsf{Patch},\qquad P_K=\mathcal{L}\big(\kappa(K)\big),
\]
so the update of $K$ is defined out-of-place by the unique patch $P_K$ determined by its canonical key $\kappa(K)$.
\end{assumption}

Equivalently, any ambiguity that exists at the coarse-class level is eliminated by the refined key $\kappa(K)$ together with the canonicalization convention; SBMT therefore never faces a runtime decision among multiple valid templates.

\paragraph{Template selection policy}
For configurations with non-unique retriangulations (e.g., representative $(2,2)$ and $(1,2)$ cases illustrated in Appendix~F), SBMT selects a single predefined template from the precomputed set and does not invoke local Delaunay refinement. This design choice keeps the procedure simple and reproducible, while preserving robust boundary embedding and parallelizable local updates.

\begin{algorithm}[htb]
\caption{Template-driven retriangulation of a single triangle}
\label{alg:sbmt-template}
\begin{algorithmic}[1]
\REQUIRE Triangle $K=\triangle ABC$; boundary-induced point set
$\mathcal{S}_K$ (ordered, with edge attribution); segment incidence info
\ENSURE Replace $K$ by sub-triangles

\STATE \textbf{Symbolic classification.}
Compute the type $\tau$ and a symmetry tag $s\in D_3$ (rotation/mirror)
using Algorithm~\ref{alg:intersection-type} and
Section~\ref{subsubsec:local_retriangulation_strategy}.

\STATE \textbf{Fetch canonical template.}
Load $\mathcal{P}_\tau=(\bar V_\tau,\bar F_\tau)$ from Appendix~G,
where $\bar V_\tau=\{\bar A,\bar B,\bar C\}\cup\{\bar q_j\}\cup\{\bar p_i\}$
are canonical vertex \emph{labels} and $\bar F_\tau$ is a list of oriented faces.

\STATE \textbf{Instantiate by a single map.}
Define $\pi_{K,\tau}:\bar V_\tau\rightarrow\mathbb{R}^2$ by
\[
\pi_{K,\tau}(\bar A,\bar B,\bar C)=(A,B,C),\qquad
\pi_{K,\tau}(\bar q_j)=q_j,\qquad
\pi_{K,\tau}(\bar p_i)=p_i,
\]
where $q_j,p_i\in\mathcal{S}_K$ are the corresponding geometric points
(ordered and edge-attributed as required by $\tau$), and pre-compose
$\pi_{K,\tau}$ with $s$ if $K$ is in a rotated/mirrored configuration.

\STATE \textbf{Emit sub-triangles.}
For each $(u,v,w)\in \bar F_\tau$, insert $\triangle\,\pi_{K,\tau}(u)\,\pi_{K,\tau}(v)\,\pi_{K,\tau}(w)$
with CCW orientation; remove $K$.

\end{algorithmic}
\end{algorithm}

Although local Delaunay flips could marginally improve element quality, we do not pursue them here because they introduce additional algorithmic complexity and synchronization/iteration, which would compromise SBMT’s purely table-driven and embarrassingly parallel design; we leave such hybrid variants for future work.

\paragraph{Topological Generality}
SBMT operates in a topology-agnostic fashion: each triangle is processed independently based on its local intersection pattern. Under mild geometric assumptions—e.g., each triangle intersected by at most two segments—the same lookup table supports multiply connected domains, interior holes, and open boundaries without modification.

High-level topology (e.g., genus, homology) is orthogonal to the retriangulation logic and may be handled separately if needed, but is not required for SBMT execution. The method imposes no global constraints or preprocessing.

\subsubsection{Theoretical Properties}
\label{subsubsec:theoretical-properties}

Beyond empirical design, the SBMT framework is based on structural guarantees that ensure correctness, determinism, and numerical applicability. We formally establish the following properties: (1) the finiteness and completeness of segment–triangle intersection classification; (2) the determinism of retriangulation outcomes; (3) the local-to-global composability of template application; and (4) the topological and geometric fidelity of the final mesh.

\paragraph{Finiteness and Completeness of Interaction Types}  
Under SBMT's geometric protocol—where each triangle is intersected by at
most two polygonal segments and adjacent segments form angles no sharper
than $90^\circ$—the number of distinct segment--triangle interaction
types is strictly bounded.
Lemma~\ref{lem:segment-triangle-bound} guarantees that each segment
produces at most one edge-level intersection event per edge of a base
triangle, so a segment can meet the triangle boundary in at most three
distinct points.
In the $(k,\ell)$ notation introduced above, this implies that
$k,\ell \in \{0,1,2,3\}$, with the trivial case $(0,0)$ corresponding to
no intersection and requiring no remeshing.

\begin{theorem}[Completeness of retriangulation lookup table under SBMT constraints]
\label{thm:lookup_table_completeness}
Let $T$ be a base triangle in the equilateral mesh, and let its
segment--triangle intersections satisfy the geometric protocol of
Section~\ref{sec:geometric-protocol} and the intersection rules of
Algorithm~\ref{alg:intersection-type}.
Then every admissible configuration that induces embedded intersection
points on $T$ belongs to one of the following canonical types:
\[
(0,1), (0,2), (0,3),\;
(1,1), (1,2), (2,2), (1,3),\;
\text{or } (2,3),
\]
in the $(k,\ell)$ counting described above.
Each of these types is explicitly enumerated in
Figures~\ref{fig:single_segment_cases}--\ref{fig:1_1} and Appendix~F and
is associated with at least one retriangulation pattern in the SBMT
lookup table (Appendix~G), ensuring complete coverage of all nontrivial,
remeshing-relevant cases.
\end{theorem}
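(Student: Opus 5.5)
The proof reduces the statement to a finite enumeration of the admissible pairs $(k,\ell)$, combined with a check that each surviving pair is covered by a template. The first step invokes Lemma~\ref{lem:segment-triangle-bound}. Part (i) says $T$ meets at most two consecutive segments of any chain, so the configuration is described by an ordered pair $(k,\ell)$, where $k$ and $\ell$ are the distinct-point counts of the first and second segment as fixed by Algorithm~\ref{alg:intersection-type}. Part (ii) says each segment produces at most one event per edge, so $k,\ell\in\{0,1,2,3\}$. This leaves at most sixteen ordered pairs. I would then reduce these modulo the reversal of chain orientation. Exchanging the two segments swaps $(k,\ell)\leftrightarrow(\ell,k)$, and this swap is absorbed by the canonicalization tag $\sigma(K)$ and the $D_3$ convention. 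With that identification, the pairs can be taken with $k\le\ell$. Dropping the trivial $(0,0)$ leaves $(0,1),(0,2),(0,3),(1,1),(1,2),(1,3),(2,2),(2,3),(3,3)$.

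The second step excludes $(3,3)$ using the angular constraint; this is the one genuinely geometric point. Under Algorithm~\ref{alg:intersection-type}, a count of $3$ arises only in two ways: a segment passing through a vertex of $T$ (which counts for both incident edges) and exiting through the opposite edge, or a colinear overlap with an edge. Suppose both segments have count $3$. Consider first the case where both are colinear with edges of $T$. Two distinct edges of an equilateral triangle meet at $60^\circ$, which violates the $\ge 90^\circ$ interior-angle condition. In the remaining cases, each segment passes through a vertex and crosses the opposite side. I would show that two such segments sharing an endpoint inside or near $T$ must form an angle below $90^\circ$. The reason is that both lines are confined to the $60^\circ$ cones at their respective vertices. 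The length constraint prevents the shared endpoint from lying far outside $T$, since each segment is longer than the local edge length and so must traverse $T$. Making this argument precise across the mixed sub-cases (vertex-through versus overlap) is the step I expect to be the main obstacle. I would first try to handle it by bounding the direction of any count-$3$ segment to lie within a $60^\circ$ sector determined by the vertex or edge it uses. If that bound fails in some sub-case, I would fall back on the detailed argument referenced in Appendix~J.

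The third step concerns coverage. For each of the eight remaining types, I would point to its explicit instances. The single-segment types $(0,k)$ correspond to Figure~\ref{fig:single_segment_cases}(a)--(d), matched to their counts of $2$ or $3$ intersections. Type $(0,1)$ is the endpoint-inside case of Figure~\ref{fig:lookup_table}. Type $(1,1)$ is Figure~\ref{fig:1_1}, and the mixed types are in Appendix~F. Since each of these is listed with at least one patch in Appendix~G, and Assumption~\ref{assump:frozen-lookup} makes the lookup single-valued on refined keys, every admissible nontrivial configuration is covered. The subtyping within a coarse class, by edge identities and by $D_3$ images, is finite because each record involves only three edges and a bounded number of ordered points, per (P2). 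Hence the $D_3$ orbit of any configuration contains a tabulated representative, and the template transported by Algorithm~\ref{alg:sbmt-template} handles it. This completes the argument.
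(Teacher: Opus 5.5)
Your architecture is the same as the paper's proof in Appendix~B. Lemma~\ref{lem:segment-triangle-bound} gives $(k,\ell)\in\{0,1,2,3\}^2$, $(0,0)$ is trivial, $(3,3)$ is excluded, and coverage is deferred to Appendices~F--G. Two points differ.

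First, you make explicit the identification $(k,\ell)\sim(\ell,k)$, which the paper uses silently (its list has $k\le\ell$ throughout). However, this identification reverses the chain orientation, and the $D_3$ tag $\sigma(K)$ cannot absorb that: $D_3$ acts on the triangle, not on the order of $S_1,S_2$. State it as a labeling convention instead.

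Second, and more substantially, you exclude $(3,3)$ with the angular constraint, whereas the paper argues by counting: two count-$3$ segments would give either at least four distinct points on $\partial T$ or three shared points. That dichotomy overlooks the double counting of vertex events in Algorithm~\ref{alg:intersection-type}. In Figure~\ref{fig:3_3}(b) each segment has count $3$, yet there are only three distinct points $A,Q_1,Q_2$ and the segments share only $A$. Your route is the one the main text and the caption of Figure~\ref{fig:3_3} actually invoke, and it does not have this hole.

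As written, though, your Step~2 does not close. Confining each segment to the $60^\circ$ sector at its own vertex bounds nothing when the two vertices differ. Take a line through $A$ at inclination $50^\circ$ and one through $B$ at $130^\circ$: both cross the interior and meet at an interior point where the four rays form angles of $80^\circ$ and $100^\circ$. Segments ending there could therefore satisfy the $\ge 90^\circ$ rule. The length constraint plays no role, and Appendix~J only treats Lemma~\ref{lem:segment-triangle-bound}, so it will not rescue this step.

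The missing idea is that $P_2$ lies on both supporting lines, which pins it down. Split into four cases:
\begin{enumerate}
\item[(a)] If $S_1,S_2$ are colinear, they overlap only in $P_2$. Yet each count-$3$ segment must contain the whole chord of that line through $T$ (a vertex plus the opposite crossing, or a full edge). This is impossible.
\item[(b)] Suppose both segments pass through distinct vertices, each crossing the relative interior of the opposite side. They lie on cevians meeting at an interior point $O$, strictly between each vertex and its crossing. A segment ending at $P_2=O$ cannot contain both, a contradiction.
\item[(c)] Suppose $S_1\supseteq[A,B]$ and $S_2$ passes through $C$ and crosses $AB$ at $Q$. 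Then $P_2=Q$ is an endpoint of a segment containing $[A,B]$, so $Q\in\{A,B\}$. Hence $S_2$ is colinear with an edge, which falls under (d).
\item[(d)] In every remaining case the two lines meet at a vertex $V$. These cases are: two segments through the same vertex; full overlap of an edge plus a vertex-through at one of its endpoints; and overlaps of two distinct edges. So $P_2=V$, and both segments leave $V$ into the closed $60^\circ$ angle of $T$ at $V$, giving $\angle P_1P_2P_3\le 60^\circ<90^\circ$.
\end{enumerate}
With this case split, your exclusion of $(3,3)$ is complete, and your proof then stands on firmer ground than the paper's counting argument.
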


\begin{proof}[Sketch]
By Lemma~\ref{lem:segment-triangle-bound}, each base triangle $T$ is intersected
by at most two consecutive boundary segments $S_1,S_2$. Under the counting
convention of Algorithm~\ref{alg:intersection-type}, each segment contributes
$k,\ell\in\{0,1,2,3\}$ (after merging coincident events by preprocessing), hence
every configuration has a label $(k,\ell)\in\{0,1,2,3\}^2$.
The trivial case $(0,0)$ needs no remeshing, while $(3,3)$ is excluded by the
SBMT protocol and Lemma~\ref{lem:segment-triangle-bound} (it would force an
inadmissible number of distinct embedded intersection events on $\partial T$).
Therefore, every nontrivial admissible configuration must be one of
\[
(0,1),(0,2),(0,3),(1,1),(1,2),(2,2),(1,3),(2,3).
\]
A complete case enumeration (up to symmetry) and the association to templates
are given in Supplementary Material, Appendix~B.
\end{proof}

\theoreticalproperty{Local-to-Global Consistency}

\begin{theorem}[Edge Conformity under Finite Precision]
\label{thm:precision_conformity}
Let \( T_i \) and \( T_j \) be two adjacent triangles sharing edge \( e \), intersected by a boundary segment at a point \( p_e \). Suppose all geometric predicates in SBMT employ a global absolute tolerance \( \epsilon > 0 \). Then, the retriangulated subdivisions of \( e \) by \( T_i \) and \( T_j \) are consistent up to an error bounded by \( \mathcal{O}(\epsilon) \).
\end{theorem}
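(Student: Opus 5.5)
The plan is to show that both triangles compute the subdivision of the shared edge $e$ from the same canonical data. They then differ only through the $\epsilon$-tolerant predicates, and each such difference moves a point by $\mathcal{O}(\epsilon)$.

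\textbf{Step 1: canonical edge records.} Write $e=\overline{UV}$, with $U,V$ ordered by a fixed global convention such as lexicographic vertex index, independent of whether $e$ is read from $T_i$ or $T_j$. Let $L$ be the boundary segment through $p_e$. By Assumption~\ref{assump:frozen-lookup}, the intersection record for the pair $(e,L)$ is looked up in the frozen registry $\mathcal{R}$ under the key $(e,L)$. It is not recomputed separately inside each triangle. I would model the record as the parameter $t_e\in[0,1]$ with $p_e=U+t_e(V-U)$, computed once by a floating-point routine. Its absolute error relative to the exact intersection is $|t_e-t_e^\ast|\,|V-U|\le C\epsilon$, where $C$ depends only on $h$ and on the transversality of $L$ and $e$. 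A uniform lower bound on that transversality should follow from the geometric protocol and the Algorithm~\ref{alg:intersection-type} convention, since near-colinear cases are reclassified as overlaps. If the registry is shared, both triangles read the identical $t_e$ and the discrepancy is zero, which is trivially $\mathcal{O}(\epsilon)$.

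\textbf{Step 2: classification agreement up to the tolerance band.} The main obstacle is that $T_i$ and $T_j$ also evaluate $\epsilon$-tolerant predicates for their own type classification. Examples are ``does $p_e$ coincide with an endpoint $U$ or $V$'', where the vertex-on-segment rule counts two intersections, and ``is $L$ colinear with $e$''. The two triangles might snap differently. My approach is to show that every snapping decision concerning points on $e$ is a function of $(t_e,\epsilon)$ alone. For example, $p_e$ snaps to $U$ iff $t_e|V-U|<\epsilon$. Such a decision is then shared by both sides. Where a side uses a differently parametrized test, for instance an orientation determinant evaluated from its opposite vertex, the two tests can disagree only when the true configuration lies within distance $\mathcal{O}(\epsilon)$ of the snapping threshold. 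In that case the snapped point and the unsnapped point are themselves within $\mathcal{O}(\epsilon)$ of each other. This gives the bound in the disagreement case as well.

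\textbf{Step 3: template instantiation preserves edge vertices.} Algorithm~\ref{alg:sbmt-template} instantiates vertices through the map $\pi_{K,\tau}$. This map sends $\bar A,\bar B,\bar C$ to the original grid vertices and sends each $\bar q_j$ to the registry point on its attributed edge. The symmetry tag $s$ only relabels these points and does not move them. Hence the vertices that $T_i$ places on $e$ form the ordered set $\{U,V\}\cup\{p_e\}$, up to the snapping in Step 2, and the same holds for $T_j$. By Lemma~\ref{lem:segment-triangle-bound}(ii), each segment contributes at most one event to $e$, and at most two segments are relevant. So there are at most two such points per edge. A Hausdorff distance bound of $\mathcal{O}(\epsilon)$ between the two subdivisions of $e$ then follows by taking the maximum over at most two points.
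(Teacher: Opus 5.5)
Your core mechanism is the paper's own. The paper proves this theorem by invoking Lemma~\ref{lemma:precision}: a deterministic edge-traversal order plus a global registry of segment--edge intersections indexed by edge identity, so that $T_i$ and $T_j$ interpret the events on $e$ identically. Your canonical orientation $(U,V)$, the single record for $(e,L)$, and the remark that a shared record gives zero discrepancy are exactly this.

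The step that fails is the \emph{disagreement case} in Step~2, together with the transversality bound from Step~1 that it silently relies on. The protocol of Section~\ref{sec:geometric-protocol} constrains only angles between consecutive boundary segments and segment lengths. It says nothing about the angle $\theta$ between $L$ and a mesh edge. Reclassifying $\epsilon$-colinear pairs as overlaps only gives $h\sin\theta\gtrsim\epsilon$. Without a lower bound on $\sin\theta$, differently parametrized tests are not interchangeable up to $\mathcal{O}(\epsilon)$:
\begin{itemize}
\item When $L$ crosses $e$ at $p_e$, the vertex-on-segment test of Algorithm~\ref{alg:intersection-type}, $\mathrm{dist}(U,L)<\epsilon$, is equivalent to $t_e|V-U|<\epsilon/\sin\theta$. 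Where it disagrees with your test $t_e|V-U|<\epsilon$, the snapped vertex $U$ and the unsnapped crossing $p_e$ can be up to $\epsilon/\sin\theta$ apart, and the colinearity cutoff only bounds this by $\mathcal{O}(h)$.
\item The overlap branch is worse, since it is not a function of $t_e$ at all. Near the colinearity threshold, the crossing point of a nearly parallel pair is ill-conditioned and can lie anywhere on $e$. The overlap branch instead attributes the event by fixed priority to a vertex or an endpoint of $L$. A flip between the two branches can therefore leave a T-junction whose extra vertex is at distance of order $h$, not $\mathcal{O}(\epsilon)$, from every vertex on the other side.
\end{itemize}

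The remedy is to delete the disagreement case rather than patch it. Let the registry record for $(e,L)$ carry the complete event of Algorithm~\ref{alg:intersection-type}, computed once from the canonically oriented data $(U,V,L)$. The record should hold the event type (crossing, endpoint-on-edge, vertex incidence, or overlap with its attribution) and, for crossings, the parameter $t_e$. Record vertex--segment incidences once per vertex in the same way. This is what the paper's argument presupposes when it says both triangles ``identify and interpret the same logical intersection configuration'', and it is what the key $\kappa(K)$ reads from $\mathcal{R}$ through $\mathbf{q}_{\partial K}$.

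With this, no per-triangle predicate decides anything about $e$, and your Step~3 completes the proof. The only $\mathcal{O}(\epsilon)$ slack then comes from the $\epsilon$-equality used inside $\mathcal{R}$. The accuracy estimate against the exact intersection in Step~1 is not needed for a statement about agreement between $T_i$ and $T_j$, so drop it. If you do want a transversality bound, it must come from the repulsion margin $c$ of Algorithm~\ref{alg:sbmt-preprocess}, after which a vertex off the boundary sits at distance about $c$ from it, not from the protocol. That bound would need its own proof, including near polyline corners.
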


\begin{proof}
This follows directly from Lemma~\ref{lemma:precision} in \ref{appendix:category}, which formalizes floating-point bounded consistency under the SBMT lookup functor \( F: \mathsf{Config} \to \mathsf{Template} \). Specifically, edge traversal order and tolerance-aware indexing ensure that adjacent triangles share a logically consistent interpretation of intersection events, up to bounded perturbation.
\end{proof}

\begin{remark}
SBMT employs a half-edge data structure with global intersection hashing and \(\epsilon\)-snapping, ensuring that shared edges remain topologically watertight even in the presence of finite-precision discrepancies. This mechanism supports mesh validity and avoids artifacts such as T-junctions or floating-point cracks.
\end{remark}

\theoreticalproperty{Deterministic Retriangulation}

\begin{lemma}[Uniqueness of Retriangulation per Atomic Type]
\label{lemma:unique_template}
For each atomic configuration \( c \in \mathcal{C} \), SBMT assigns a unique retriangulation template via a deterministic table lookup.
\end{lemma}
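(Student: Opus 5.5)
The plan is to reduce the lemma to two facts already in place: the set of atomic configurations $\mathcal{C}$ is finite, and the lookup is a single-valued map. The claim is then just that the composite ``configuration $\mapsto$ key $\mapsto$ patch'' is a well-defined function. The work is in checking that each stage of that composite depends only on $c$ and involves no runtime choice.

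\emph{Step 1: finiteness and coverage.} By Lemma~\ref{lem:segment-triangle-bound} and Theorem~\ref{thm:lookup_table_completeness}, every admissible nontrivial configuration falls into one of the eight coarse types $(0,1),\dots,(2,3)$. Within each type, only finitely many atomic refinements remain. Each is fixed by two data: which of the three edges or vertices carry events, and the parametric order of the events along each edge. So $\mathcal{C}$ is finite, and every $c\in\mathcal{C}$ is covered by at least one entry of Appendix~G.

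\emph{Step 2: the key is a function of $c$.} I would show that $\kappa(K)=(\mathrm{type}(K),\mathbf{q}_{\partial K},\sigma(K))$ depends only on the configuration $c$ realized in $K$.
\begin{itemize}
\item $\mathrm{type}(K)$ is produced by Algorithm~\ref{alg:intersection-type}. That algorithm is a fixed decision procedure: its branch conditions are mutually ordered, and colinear overlaps are resolved by the fixed priority $A,B,P,Q$. It therefore returns a single $(m,n)$, with segment order taken from the chain orientation.
\item $\mathbf{q}_{\partial K}$ is read from the frozen registry $\mathcal{R}$ of Assumption~\ref{assump:frozen-lookup}. Because $\mathcal{R}$ is read-only, the records, together with their edge attribution and ordering, do not depend on processing order or on other threads.
\item $\sigma(K)$ is chosen by a fixed $D_3$ canonicalization rule, for instance the lexicographically minimal bitmask over the six group images. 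Such a rule selects a unique representative of each orbit, even when $c$ has nontrivial stabilizer.
\end{itemize}

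\emph{Step 3: the patch is unique.} Uniqueness now follows from Assumption~\ref{assump:frozen-lookup}: $\mathcal{L}$ is single-valued, so $P_K=\mathcal{L}(\kappa(K))$ is the unique template. Where Appendix~G lists several templates for one coarse class, the template selection policy fixes one per refined key. Algorithm~\ref{alg:sbmt-template} then instantiates the patch through the single map $\pi_{K,\tau}$ composed with $s$, which contains no branching.

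\emph{Main obstacle.} The hardest step is the symmetry canonicalization in Step~2 for configurations with nontrivial $D_3$ stabilizer, for example a segment through a vertex along the triangle's axis of symmetry. Here several group elements map $c$ to the same canonical representative, so I need the instantiated patch to be independent of which of them is used. I would first try to prove that the stored template is invariant under the stabilizer of its key. Failing that, I would refine $\sigma(K)$ to a deterministic choice of group element, such as the minimal one in a fixed enumeration of $D_3$, which restores single-valuedness.
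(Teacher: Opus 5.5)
Your proposal is correct, but it reaches the conclusion by a different route than the paper. The paper's proof is a one-line ``by construction'' argument. The lookup is a functor $F:\mathsf{Config}\to\mathsf{Template}$, and the proof cites the injectivity of $F$ on $\mathcal{C}/\sim$ (Lemma~\ref{lemma:injectivity}) together with table-driven assignment. Strictly speaking, injectivity is the converse property: it says distinct classes receive distinct templates. The uniqueness claimed in the lemma actually comes from $F$ being a well-defined map, which the paper takes for granted.

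Your reduction makes exactly that property explicit. You take single-valuedness of $\mathcal{L}$ from Assumption~\ref{assump:frozen-lookup} and then check that each component of $\kappa(K)$ is produced deterministically. This gives a more accurate account of where determinism comes from. It also surfaces the stabilizer issue, which the paper absorbs into its ``fixed $D_3$ symmetry-breaking convention.''

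Two small remarks. First, your Step~1 (finiteness and coverage) establishes that a template exists, not that it is unique, so it is not needed for this lemma. Second, for your stated obstacle, take the second option. Nothing in Appendix~G makes templates invariant under a configuration's stabilizer. For instance, single-segment case~(b) splits the trapezoid $ABQ_1Q_2$ along the diagonal $Q_2B$, and this split is not mirror-invariant even when the chord is parallel to $AB$. Recording a definite group element in $\sigma(K)$, by contrast, is exactly what the paper's convention supplies. It is also consistent with the paper's treatment of mirror-image configurations as distinct atomic lookup keys.
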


\begin{proof}
SBMT defines a functor \( F : \mathsf{Config} \to \mathsf{Template} \) (cf. \ref{appendix:category}), mapping each atomic configuration to a symbolic template.

By construction, \( F \) is injective on geometric equivalence classes \( \mathcal{C}/\sim \), ensuring that each configuration type selects a single, unambiguous template. Table-driven assignment eliminates runtime ambiguity.
\end{proof}

\begin{remark}
This determinism arises from the categorical structure of SBMT: functorial mapping guarantees consistency, while injectivity rules out conflicting retriangulations. These properties enable symbolic reasoning and formal extension of the framework.
\end{remark}

\theoreticalproperty{Boundary Preservation}

\begin{theorem}[Exact Embedding of Boundary Segments]
Let $\mathcal{S}$ be a set of input boundary segments, each intersecting only finitely many triangles. Then SBMT retriangulation embeds every $s \in \mathcal{S}$ as a contiguous, watertight chain of mesh edges that exactly interpolates its endpoints and intersection points.
\end{theorem}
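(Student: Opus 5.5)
The plan is to reduce the global statement to a per-triangle claim and then glue the pieces along shared edges. Fix a segment $s\in\mathcal{S}$. By hypothesis only finitely many base triangles $T_1,\dots,T_m$ meet $s$. Order the points where $s$ meets the base-mesh edges (edge crossings, vertex incidences, overlap endpoints as recorded by Algorithm~\ref{alg:intersection-type}), together with the endpoints of $s$, by their parameter $t\in[0,1]$ along $s$. Call these $p_0<p_1<\dots<p_r$. Consecutive points $p_{i-1},p_i$ bound a subsegment $\sigma_i$ whose relative interior lies in the closure of exactly one base triangle $T_{j(i)}$. In the colinear-overlap case the subsegment lies on a shared base edge, and we attribute it to one triangle by the fixed priority rule. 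Showing that $s$ is embedded therefore reduces to showing that each $\sigma_i$ is an edge of the patch $P_{T_{j(i)}}$, and that the patches of adjacent triangles use the same vertex at each shared $p_i$.

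\textbf{Per-triangle step.} By Lemma~\ref{lem:segment-triangle-bound} and Theorem~\ref{thm:lookup_table_completeness}, the configuration of $T_{j(i)}$ is one of the canonical types. By Lemma~\ref{lemma:unique_template} and Assumption~\ref{assump:frozen-lookup}, it is mapped to a unique patch $P_K$. I would then check, case by case over the finite table of Appendix~G, that each template contains as an edge every chord joining two consecutive boundary-induced points of the same segment. Here "boundary-induced points" means the $\bar q_j$ on $\partial K$ and any interior endpoint $\bar p_i$.

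This check is finite because the table is finite. The $D_3$ symmetry tag reduces it to canonical representatives. Algorithm~\ref{alg:sbmt-template} instantiates the template through a single map $\pi_{K,\tau}$ that sends labels to exactly the geometric points $q_j,p_i$. Consequently the emitted chords coincide with $\sigma_i$ exactly, with no approximation beyond that of the point coordinates themselves. The interpolation claim follows from the same observation: $s$'s endpoints and its intersection points are vertices of the emitted sub-triangles.

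\textbf{Gluing step.} For a point $p_i$ on a shared edge $e=T_{j(i)}\cap T_{j(i+1)}$, both triangles read the same record from the frozen global registry $\mathcal{R}$. Both therefore instantiate the same vertex, so the chain $\sigma_1,\dots,\sigma_r$ is connected in the mesh. Theorem~\ref{thm:precision_conformity} and the $\epsilon$-snapping remark upgrade this to finite precision: the two instantiations of $p_i$ are merged into a single hashed vertex. The same argument makes the subdivision of $e$ identical on both sides, which rules out T-junctions; this is what watertightness requires. Contiguity then holds because the $\sigma_i$ cover $s$ with no gaps in parameter order.

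\textbf{Main obstacle.} The hardest step is the case-by-case verification that every template realises each segment chord as an actual edge. This matters most in the non-unique $(1,2)$, $(2,2)$ and $(2,3)$ classes and in the overlap and vertex-incidence degeneracies, where merged points could make a chord collapse or be split. My first approach would be to identify, for each canonical key, the polygon $K$ cut by the chords into convex pieces, and to argue that each template is a triangulation of those pieces. If so, the chords are edges by construction. For the degenerate vertex cases, $\sigma_i$ is either a base edge or passes through a base vertex, and those I would handle directly.
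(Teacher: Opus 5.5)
Your proposal is correct and follows the paper's own route: its sketch likewise combines the unique per-triangle template of Lemma~\ref{lemma:unique_template}, which places vertices at all intersection points, with the registry-based edge consistency of Lemma~\ref{lemma:precision}, which aligns the fragments across shared edges and yields a continuous, watertight polyline up to floating-point tolerance. The only differences are these: you make explicit a step the paper takes for granted from the ``Embedding Consistency'' clause of Appendix~G, namely that each template has the chords between consecutive boundary-induced points as edges (Appendix~G lists only selected templates, so a literal case check must ultimately rest on that clause); and in your fallback argument, the pieces cut off by a chord bent at an interior $P_2$ need not be convex, which is harmless because any triangulation of each piece still has the chords as edges.
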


\begin{proof}[Sketch]
Each segment–triangle intersection is resolved via a unique local template \\(Lemma~\ref{lemma:unique_template}), placing vertices at all intersection points. The resulting edge fragments are aligned across adjacent triangles (Lemma~\ref{lemma:precision}), ensuring global continuity and watertightness.

Thus, each boundary segment is exactly realized in the mesh as a connected polyline, up to floating-point tolerance.
\end{proof}

\theoreticalproperty{Triangle Quality Safeguards}\label{prop:triangle-quality}

To ensure well-shaped elements and numerical stability, SBMT enforces explicit geometric thresholds that bound triangle deformations during retriangulation.

\paragraph{Threshold Definitions}
Three control parameters are used:

\begin{itemize}
  \item \textbf{Vertex snapping threshold} $a$:
If a boundary point $p$ lies within distance $a$ of a mesh vertex $v$, we snap $v$ to $p$.
If multiple boundary points fall within the radius-$a$ neighborhood of $v$, we select the closest one; ties are broken deterministically (e.g., by a fixed ordering along the polyline), so the snapping map is single-valued and reproducible.
  
  \item \textbf{Edge elimination threshold} $b$: If a boundary point lies within distance $b$ of a triangle edge, the edge is collapsed and the resulting quadrilateral is retriangulated using the boundary point.
  
  \item \textbf{Foot projection threshold} $c$: If a triangle vertex lies within distance $c$ of a boundary segment, it is displaced perpendicularly to achieve exactly distance $c$, preserving its sidedness with respect to the segment.
\end{itemize}

\paragraph{Admissible Range Constraints.}
To prevent quality degradation, we require
\begin{equation}
  b < \frac{a}{2}, \qquad
  c < \frac{a}{\sqrt{2}}, \qquad
  e < \ell_{\min}^{\text{boundary}},
  \label{eq:admissible-range}
\end{equation}
where $e>0$ is the edge length of the equilateral triangles in the base mesh,
and $\ell_{\min}^{\text{boundary}}$ is the minimal spacing between adjacent
boundary segment endpoints (typically one pixel).

\begin{remark}[Sufficient condition to avoid snapping collisions]
\label{rem:no-snapping-collision}
In the regular equilateral background mesh with edge length~$e$, any two distinct vertices are at least distance~$e$ apart.
If the snapping radius satisfies $a<e/2$, then the radius-$a$ vertex neighborhoods are disjoint. Consequently, each boundary point lies in the snapping neighborhood of at most one vertex, and Step~1 cannot map two distinct vertices to the same boundary sample.

All parameter settings used in our experiments satisfy~ $a<e/2$, so this degeneracy is excluded in practice. The edge-elimination rule (threshold $b$) is retained to handle other near-boundary degeneracies and to enforce the admissible boundary--mesh interaction patterns required by the lookup table.

\end{remark}

\paragraph{Quality Guarantees}
These constraints collectively enforce:

\begin{itemize}
  \item \textbf{No double edge collapse:} The bound $b < \frac{a}{2}$ ensures that two edges in the same triangle are not eliminated simultaneously, preserving angular quality.
  
  \item \textbf{Atomic snapping/repulsion:} The constraint $c < \frac{a}{\sqrt{2}}$ guarantees that each vertex is acted upon by at most one segment, ensuring deterministic, order-invariant behavior.

  \item \textbf{Localized interactions:} The combined threshold design ensures each triangle interacts with at most one boundary segment at a time, simplifying the retriangulation logic and preserving minimum angle and area bounds.
\end{itemize}

\theoreticalproperty{Quasi-Delaunay Characteristics}

Although SBMT does not explicitly enforce Delaunay criteria, its rule-based operations—\\vertex snapping, repulsion, and short-edge elimination—collectively promote spatial regularity, suppress sliver triangles, and separate nearby features. Together with the minimal-angle and minimal-area safeguards discussed above, these mechanisms emulate several core geometric benefits of Delaunay refinement in a purely local and parallelizable fashion. 

Empirically, as shown by the angle and aspect-ratio statistics in
Section~\ref{sec:comparative-eval}, the resulting meshes exhibit Delaunay-like regularity in most regions (in the empirical sense captured by our angle/aspect-ratio statistics): a dominant population of near-equilateral elements and very few highly distorted triangles, while avoiding the cost of global retriangulation. This supports interpolation stability and geometric fidelity without requiring a fully Delaunay mesh.

\theoreticalproperty{Geometric Guarantees of SBMT}
SBMT ensures high-quality triangulation through strict geometric bounds. Two critical guarantees are the minimal angle and minimal area bounds of the generated sub-triangles.

\begin{theorem}[Minimal Angle Bound in SBMT]
\label{thm:min-angle-bound-main}
Assume that the geometric parameters $a,b,c$ and the base edge length $e$
satisfy the admissible range constraints~\eqref{eq:admissible-range}.
Then every sub-triangle generated through SBMT retriangulation has a minimal
interior angle bounded below by
\[
\theta_{\min} >
\min\left\{
  \arctan\left( \frac{b}{e + a - \sqrt{a^2 - b^2}} \right),\;
  \arctan\left( \frac{c}{e + a} \right)
\right\}.
\]
\end{theorem}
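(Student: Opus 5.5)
By Lemma~\ref{lem:segment-triangle-bound} and Theorem~\ref{thm:lookup_table_completeness}, every sub-triangle produced by SBMT comes from one of the finitely many canonical templates of Appendix~G. So I would prove the bound template by template. After applying the $D_3$ canonicalization of Algorithm~\ref{alg:sbmt-template}, I would argue that every emitted sub-triangle has one of two shapes. Either it has a vertex at a boundary point $p$ lying strictly inside the parent cell, or its vertices lie on $\partial K$ (original or snapped vertices, and edge intersection points). In both cases the smallest angle sits opposite a short side. That short side is either a vertex-to-segment offset or a point-to-edge offset, and the two adjacent sides have length at most a diameter-type quantity comparable to $e+a$. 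I would then use the elementary fact that in a triangle with a side of length $h$ opposite angle $\theta$ and a perpendicular height $h$ over a base of length at most $L$, one has $\tan\theta \ge h/L$. The whole proof reduces to lower bounds on the offset $h$ and upper bounds on the base $L$ after preprocessing.

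Second, I would extract these offsets from the three thresholds. After the foot-projection step, every surviving triangle vertex not snapped onto the boundary lies at distance at least $c$ from every boundary segment meeting the cell. Any sub-triangle with an edge carried by the boundary segment and an opposite vertex from the scaffold therefore has height at least $c$ over that edge. The edge lies inside the (possibly snapped) cell, so its length is at most $e+a$, because snapping moves a vertex by less than $a$. This gives the bound $\arctan\!\big(c/(e+a)\big)$. Similarly, after edge elimination, any boundary point retained as a sub-triangle vertex lies at distance at least $b$ from every surviving edge of $K$. I would bound the base length more sharply here: a boundary point that was not snapped lies outside the radius-$a$ disks around the cell's vertices. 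The foot of its perpendicular onto an edge therefore sits at horizontal distance at least $\sqrt{a^2-b^2}$ from the edge endpoint it is near, whenever the offset is near $b$. The relevant base is then at most $e + a - \sqrt{a^2-b^2}$, accounting for the $a$-perturbation of the far endpoint, which gives $\arctan\!\big(b/(e+a-\sqrt{a^2-b^2})\big)$. The admissible constraints $b<a/2$ and $c<a/\sqrt2$ ensure that at most one of these degeneracies is active per triangle, per the quality guarantees listed after~\eqref{eq:admissible-range}. So each sub-triangle's minimal angle is controlled by one of the two mechanisms, and the overall bound is the minimum of the two.

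Third, I would dispose of the remaining cases. Angles at original scaffold corners are $60^\circ$ minus a snapping perturbation, which is far above both bounds since $a<e/2$. Angles at interior boundary vertices are at least $90^\circ$ split into at most two parts, by the angular constraint in Section~\ref{sec:geometric-protocol}. Strictness of the inequality follows from the strict inequalities in the thresholds: the distances are at least $b$ or $c$, and lengths are strictly less than $e+a$. The main obstacle is the second step, namely verifying uniformly over all templates, and especially the $(2,2)$ and $(2,3)$ cases with snapped vertices, that the short side is always opposite a base of length at most $e+a$ or $e+a-\sqrt{a^2-b^2}$. I would handle this by parametrizing each canonical template with the point positions constrained to the excluded regions and checking the extremal configuration, which sits at the threshold boundary.
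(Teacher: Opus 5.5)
Your two mechanisms, a $c$-offset over a base of length at most $e+a$ and a $b$-offset over a base of length at most $e+a-\sqrt{a^2-b^2}$, are exactly the two worst-case configurations in the paper's proof (Appendix~D). That proof then asserts in a remark that no other pattern gives a smaller angle, and your plan inherits this exhaustiveness claim.

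The claim breaks at a concrete step. In your third step, angles of sub-triangles at scaffold corners are not ``$60^\circ$ minus a snapping perturbation'': the templates draw diagonals from corners to intersection points and so split the corner angle. Here is a case where this matters.

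\begin{itemize}
\item Take a single long segment whose endpoints are far from $K$, so neither snapping nor edge elimination acts.
\item Make it perpendicular to $CA$ at distance exactly $c$ from $C$, so repulsion leaves $C$ where it is.
\item It meets $CA$ at $Q_2$ with $|CQ_2|=c$, and $BC$ at $Q_1$ with $|CQ_1|=2c$. This is single-segment case (b), whose template emits $\triangle Q_1Q_2B$ with
\[
\tan\angle Q_1BQ_2=\frac{|CQ_2|\sin 60^\circ}{e-|CQ_2|\cos 60^\circ}=\frac{\sqrt{3}\,c}{2e-c}.
\]
\end{itemize}

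This small angle lies opposite the boundary-carried chord $Q_1Q_2$, not opposite an offset. The height that controls it, the distance from $Q_2$ to the line $BC$, is only $\tfrac{\sqrt{3}}{2}c$. So neither of your height estimates ($\ge b$ or $\ge c$) applies. A different template does not help either. Any triangulation of the quadrilateral $ABQ_1Q_2$ uses the diagonal $BQ_2$ or $AQ_1$, and you can make the segment perpendicular to whichever edge carries that diagonal's endpoint.

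Worse, this case violates the stated inequality for admissible parameters. Take the paper's own $e=\sqrt{0.45}$ and $(a,b,c)=(0.06,\,0.029,\,0.03)$. These satisfy $b<a/2$, $c<a/\sqrt{2}$, $e<1$, and even $a<e/2$. Then:
\begin{itemize}
\item the angle above is about $2.27^\circ$;
\item $\arctan\big(c/(e+a)\big)\approx 2.35^\circ$;
\item $\arctan\big(b/(e+a-\sqrt{a^2-b^2})\big)\approx 2.45^\circ$.
\end{itemize}

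In general, $\arctan\big(\sqrt{3}\,c/(2e-c)\big)<\arctan\big(c/(e+a)\big)$ whenever $\tfrac{\sqrt{3}}{2}(e+a)<e-\tfrac{c}{2}$, that is, when $a$ and $c$ are small compared with $e$. For the default thresholds this angle is about $15^\circ$, which is why the experiments never see it.

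So your template-by-template check is the right tool, but it cannot confirm the theorem as stated. The bound needs at least an extra term such as $\arctan\big(\sqrt{3}\,c/(2e-c)\big)$ in the minimum, or a parameter restriction under which that term is dominated. The remaining templates would also have to be checked for other corner-splitting mechanisms of this kind.
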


\begin{theorem}[Minimal Area Bound in SBMT]
\label{thm:min-area-bound-main}
Let \(b\) and \(c\) denote the edge elimination and foot projection thresholds, respectively. Then every sub-triangle generated through SBMT retriangulation satisfies the minimal area constraint:
\[
A_{\min} > \frac{1}{2} b \cdot c.
\]

\end{theorem}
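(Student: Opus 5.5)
The plan is to prove the bound triangle by triangle: every sub-triangle produced by the templates of Appendix~G is bounded below by $\tfrac12 bc$ via the formula $A=\tfrac12\,(\text{base})\cdot(\text{height})$, with the base taken along a segment-aligned or edge-aligned side. The preprocessing stage has already applied the three thresholds. After it, two facts hold. No boundary point lies within distance $b$ of a retained triangle edge unless it lies on that edge, since otherwise the edge-elimination rule would have collapsed the edge. No unsnapped mesh vertex lies within distance $c$ of a boundary segment, by the foot-projection rule, which displaces such a vertex to distance exactly $c$ on its original side. I would first record these as a ``post-preprocessing separation'' lemma, using the admissible range constraints~\eqref{eq:admissible-range} and Remark~\ref{rem:no-snapping-collision} to ensure the three operations do not undo one another. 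In particular, $b<a/2$ rules out double collapses and $c<a/\sqrt2$ gives atomic repulsion, as stated in the Quality Guarantees.

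Next I would classify the vertices of any emitted sub-triangle $\triangle uvw$ according to Algorithm~\ref{alg:sbmt-template}. Each vertex is either a (possibly displaced) base vertex or an embedded boundary point $q_j,p_i$ lying on an edge of $K$ or on a boundary segment. Using Lemma~\ref{lem:segment-triangle-bound} and Theorem~\ref{thm:lookup_table_completeness}, the list of configuration types is finite. By the $D_3$ canonicalization, the check therefore reduces to the canonical templates. For each such template I would show that every sub-triangle has one side $s$ contained either in a boundary segment or in an edge of $K$, and that the opposite vertex lies at distance $\ge c$ from the line of $s$ in the first case, or $\ge b$ in the second.

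The estimate $A\ge \tfrac12\,|s|\,h$ then needs $|s|\ge b$ paired with $h\ge c$, or $|s|\ge c$ paired with $h\ge b$. I would get the length bound on $s$ from the separation lemma. Two distinct intersection points on the same base edge, or an intersection point and a base vertex, are at least $b$ apart; otherwise edge elimination or snapping (recall $b<a/2<a$) would have merged them. Strictness of the inequality comes from the strict inequalities in the threshold rules, since points at exactly the threshold distance are treated as triggering the operation.

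The main obstacle is the uniform template-by-template verification, especially the $(2,2)$ and $(2,3)$ cases. There, sub-triangles may have all three vertices on boundary points, with none of them a base vertex displaced by foot projection. For these I would try using the angular constraint (interior angle $\ge 90^\circ$) together with the length constraint $e<\ell_{\min}^{\text{boundary}}$. The aim is to show that the apex of such a triangle, a segment endpoint $P_2$ inside $K$, is at distance $\ge c$ from the opposite sub-edge. If this fails directly, I would fall back on the fact that $c<a/\sqrt2$ and treat $P_2$ as though the repulsion rule applied to it.
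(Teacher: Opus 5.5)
The step that fails is your claim that every emitted sub-triangle has a side $s$ with $|s|\ge b$ and opposite height $\ge c$, or the reverse. Two templates in Appendix~G already violate it, and neither is a $(2,2)$ or $(2,3)$ case.

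\emph{Corner cut.} In the single-segment case of Figure~\ref{fig:single_segment_cases}(b) the segment cuts off corner $C$. Step~2 may leave $C$ at distance exactly $c$ from it. If the segment is perpendicular to the bisector at $C$, then $|Q_1Q_2|\approx 2c\tan 30^\circ$ and $\mathrm{area}(\triangle Q_2Q_1C)\approx c^2/\sqrt3$, a bit less once $C$ has been displaced. Nothing involving $b$ constrains this triangle, because snapping and edge elimination are triggered by polyline vertices, never by the crossing points $Q_1,Q_2$.

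\emph{Shallow notch.} In the $(1,1)$ case of Figure~\ref{fig:1_1}(c), $P_2$ may sit at distance exactly $b$ from the edge carrying $Q_1,Q_2$, with a symmetric right angle at $P_2$. Then $|Q_1Q_2|=2b$ and $\mathrm{area}(\triangle P_2Q_1Q_2)=b^2$. No vertex of this triangle is a mesh vertex, so $c$ never enters.

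These two bounds dominate $\tfrac12 bc$ only if roughly $\tfrac{\sqrt3}{2}b\lesssim c\le 2b$, and that relation does not follow from~\eqref{eq:admissible-range}. Concretely, take $e=\sqrt{0.45}$ and $a=0.26$. The admissible pairs $(b,c)=(0.12,0.05)$ and $(0.05,0.15)$ give areas of about $0.0014<0.003$ and $0.0025<0.00375$ in the two configurations. In both, no snapping or edge elimination fires and repulsion acts as specified.

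So the template-by-template route cannot close under the stated hypotheses; it needs an explicit two-sided relation between $b$ and $c$. The defaults $b=0.125$, $c=0.183$ and every triple in Appendix~I satisfy such a relation. The paper's proof never meets this obstacle because it examines only the right-triangle configuration $\triangle AQP_2$ and asserts, without checking other templates, that it is extremal.

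Three supporting steps also need repair.
\begin{itemize}
\item \textbf{Length lemma.} You misattribute the mechanism. Snapping moves mesh vertices onto polyline vertices, and edge elimination is triggered by polyline vertices lying off an edge. Neither separates a crossing point $Q$ from a vertex or from another crossing point. What is actually available is $|AQ|\ge\mathrm{dist}(A,s)\ge c$ for a non-snapped vertex $A$ and $Q\in s$, which is a $c$-bound, not a $b$-bound. Also $|Q_1Q_2|\ge 2\,\mathrm{dist}(P_2,\text{edge})\ge 2b$ for two crossings produced by a corner $P_2$ of angle at least $90^\circ$. With these, the paper's configuration follows cleanly from $\mathrm{area}(\triangle AQP_2)=\tfrac12|AQ|\,\mathrm{dist}(P_2,AC)\ge\tfrac12 cb$.
\item \textbf{Fallback for $P_2$.} Treating $P_2$ as if it were repelled is not available. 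Step~2 moves only mesh vertices not lying on $\mathcal{B}$, and $P_2$ is a polyline vertex.
\item \textbf{Strictness.} It cannot come from the threshold rules in the way you describe. Algorithm~\ref{alg:sbmt-preprocess} triggers only on strict inequalities, so a distance of exactly $b$ survives, and repulsion places vertices at exactly $c$. In $\triangle AQP_2$ strictness comes from snapping instead. Equality would force a right angle at $Q$ with legs $c$ and $b$, giving $|AP_2|^2=b^2+c^2<a^2$, so $A$ would have been snapped to $P_2$.
\end{itemize}
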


The detailed proofs of Theorem~\ref{thm:min-angle-bound-main} and Theorem~\ref{thm:min-area-bound-main} are provided in Appendix D of the Supplementary Material.

\subsection{Interpolation Fidelity for Analytic and Effectively Bandlimited Signals}
\label{sec:interpolation-fidelity}

Bitmap images are discretized observations of physical signals, which are often effectively bandlimited. Most physical acquisition systems, such as optical lenses or CCD sensors, introduce inherent low-pass filtering, smoothing high-frequency components before digital sampling. To ensure high-fidelity interpolation, we establish a sampling condition for piecewise interpolation over equilateral triangular meshes.

\begin{theorem}[Spectrally Accurate Interpolation over Equilateral Meshes]
\label{thm:spectral_interp}
Let \( f \) be a compactly supported function that is effectively bandlimited with cutoff frequency \( \Omega \), and let \( \mathcal{T}_h \) be a structured triangular mesh composed of equilateral triangles of edge length \( h \). Then, if
\[
h < \frac{1.86}{\Omega},
\]
any piecewise polynomial interpolation \( \tilde{f} \) over \( \mathcal{T}_h \) (e.g., linear or quadratic) satisfies
\[
\| \tilde{f} - f \|_{L^\infty(\Omega)} < \varepsilon,
\]
where \( \varepsilon \) depends on the high-frequency tail of \( \hat{f} \) beyond \( \Omega \), and vanishes as \( \Omega \to \infty \) or \( h \to 0 \).
\end{theorem}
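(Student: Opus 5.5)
The proof is a Fourier-domain error split. We write $f=f_{\le\Omega}+f_{>\Omega}$, where $f_{\le\Omega}$ is obtained by truncating $\hat f$ to the ball $|\xi|\le\Omega$. Since the interpolation operator $I_h$ is linear, the error splits as
\[
\tilde f - f = \bigl(I_h f_{\le\Omega}-f_{\le\Omega}\bigr) + \bigl(I_h f_{>\Omega}-f_{>\Omega}\bigr).
\]
Each term is handled separately. Throughout we read the $L^\infty(\Omega)$ norm as the sup norm over the (compact) mesh domain, not over the frequency ball.

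\textbf{Tail term.} Nodal Lagrange interpolation (linear or quadratic) on a shape-regular mesh is bounded in $L^\infty$ with a Lebesgue constant $\Lambda$. For equilateral elements $\Lambda=1$ in the linear case and $\Lambda=5/3$ in the quadratic case. Hence
\[
\|I_h f_{>\Omega}-f_{>\Omega}\|_\infty\le (1+\Lambda)\,\|f_{>\Omega}\|_\infty\le (1+\Lambda)(2\pi)^{-2}\int_{|\xi|>\Omega}|\hat f(\xi)|\,d\xi .
\]
This is exactly the ``high-frequency tail'' quantity in the statement.

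\textbf{Bandlimited part.} Apply the standard element-wise Taylor/Bramble--Hilbert estimate. For $P_k$ interpolation on an equilateral triangle of edge $h$,
\[
\|I_h g-g\|_{L^\infty(K)}\le C_k h^{k+1}\,|g|_{W^{k+1,\infty}(K)} .
\]
Bernstein's inequality for functions with spectrum in $|\xi|\le\Omega$ gives $|f_{\le\Omega}|_{W^{k+1,\infty}}\le \Omega^{k+1}\|\hat f\|_{L^1}(2\pi)^{-2}$. The bandlimited error is therefore bounded by $C_k (h\Omega)^{k+1}\|\hat f\|_{L^1}(2\pi)^{-2}$. It is small whenever $h\Omega$ is below a fixed constant, and it tends to zero as $h\to 0$.

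\textbf{Origin of the constant $1.86$.} The hypothesis $h<1.86/\Omega$ should be a Nyquist-type condition for the hexagonal lattice generated by the mesh vertices. The vertices form a lattice whose dual (reciprocal) lattice has shortest vector length $4\pi/(\sqrt3 h)$. Aliasing of frequencies in $|\xi|\le\Omega$ is avoided when the disc of radius $\Omega$ fits in the Brillouin zone, with inradius $2\pi/(\sqrt3 h)$. That would give $h<2\pi/(\sqrt3\Omega)\approx 3.63/\Omega$. Under this reading $1.86$ is a stricter constant, perhaps chosen so that $h\Omega$ also makes $C_k(h\Omega)^{k+1}$ at most of order one, or derived from the circumradius rather than the inradius.

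I would determine which derivation the authors intend and state the constant as a sufficient condition. The key point is that under any constant of this form, the bandlimited part is controlled by a fixed multiple of the tail-independent quantity $(h\Omega)^{k+1}$.

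\textbf{Main obstacle.} The statement itself is loose: the error $\varepsilon$ cannot vanish merely as $\Omega\to\infty$ with $h<1.86/\Omega$ unless the $h$-term is also controlled. I would therefore prove the precise version
\[
\|\tilde f-f\|_\infty\le C\bigl[(h\Omega)^{k+1}\|\hat f\|_{L^1}+\textstyle\int_{|\xi|>\Omega}|\hat f|\bigr].
\]
Here the first term vanishes as $h\to0$ for fixed $\Omega$, and the second vanishes as $\Omega\to\infty$ because $\hat f\in L^1$ by effective bandlimitedness. Taking $\Omega\to\infty$ and then $h\to 0$ with $h\Omega\to0$ then gives the claimed convergence, and I would present the theorem as this two-term bound.
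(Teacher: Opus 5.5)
Your route is genuinely different from the paper's, and it is the more defensible of the two. The paper's proof is a sampling-density heuristic. It assigns each vertex the area $A_\triangle/6=\tfrac{\sqrt3}{24}h^2$ and takes the spacing $\delta=\sqrt{A_\triangle/6}\approx 0.269\,h$. It then imposes $\delta\le 1/(2\Omega)$, which gives $h\le 1.86/\Omega$, and asserts that the interpolation error is dominated by the spectral tail. That is where $1.86$ comes from, not either of your guesses. The count is itself wrong: the triangular lattice has one vertex per $2A_\triangle$, not per $A_\triangle/6$. This is why no Brillouin-zone computation reproduces the constant, and you need not try to reconcile it. More importantly, the paper's last step treats alias-free sampling, which concerns ideal bandlimited reconstruction, as if it guaranteed the accuracy of a piecewise-polynomial interpolant. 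Your split $f=f_{\le\Omega}+f_{>\Omega}$ controls the tail with the Lebesgue constant and the in-band part with Bramble--Hilbert plus the derivative bound. This yields an actual inequality with an algebraic rate $(h\Omega)^{k+1}$. It also shows exactly what the hypothesis buys: under $h\Omega<1.86$ the in-band term is only bounded by $C_k(1.86)^{k+1}(2\pi)^{-2}\|\hat f\|_{L^1}$, a fixed multiple of $\|\hat f\|_{L^1}$ rather than a tail quantity.

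Three things to tighten.

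\emph{The in-band error is not small.} Delete the sentence saying it is ``small whenever $h\Omega$ is below a fixed constant''; it is bounded, not small. Then state plainly that the conclusion is false as written. Take $f(x)=\phi(x/L)\cos(\omega x_1+\varphi)$ with $\phi\in C_c^\infty$, $\phi(0)=1$, $x_1$ along an edge direction, $h\omega=1.8$ and $\Omega=\omega+0.05/h$. As $L\to\infty$ the tail $\int_{|\xi|>\Omega}|\hat f|$ tends to $0$. Yet if the phase puts a crest at an edge midpoint near the origin, linear interpolation errs there by about $1-\cos(0.9)\approx 0.38$.

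\emph{$\hat f\in L^1$ must be assumed.} The paper's notion of effective bandlimitedness is only a pointwise bound $|\hat f|<\delta$ outside the ball. That implies neither $\hat f\in L^1$ nor smallness of $\int_{|\xi|>\Omega}|\hat f|$, so you must add $\hat f\in L^1$ as a hypothesis rather than deduce it.

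\emph{Your ``main obstacle'' is slightly misplaced.} The splitting frequency in your two-term bound is a free parameter, independent of the hypothesis. Choosing $\Omega(h)\to\infty$ with $h\Omega(h)\to 0$ already gives uniform convergence as $h\to 0$ for every such $f$. What fails is not the limit but the claimed tail-only control when $h\Omega$ is close to $1.86$.

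With these changes, your two-term estimate is the correct replacement for the theorem.
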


\emph{The proof and additional details for Theorem~\ref{thm:spectral_interp} can be found in Appendix E of the Supplementary Material.
}

To illustrate a practical effect of boundary-aware triangular interpolation in an illustrative setting,
Figure~\ref{fig:reconstruction_comparison} presents a three-stage visualization:

\begin{figure}[htbp]
\centering
\renewcommand{\arraystretch}{1.0}
\setlength{\tabcolsep}{2pt}
\begin{tabular}{ccc}
\includegraphics[width=0.31\linewidth]{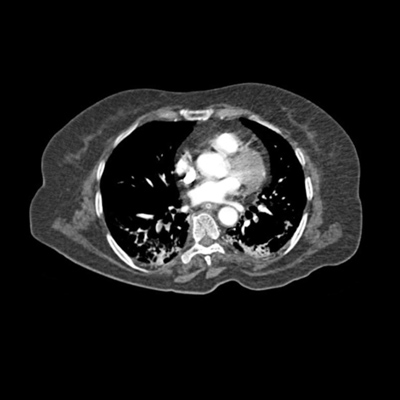} &
\includegraphics[width=0.31\linewidth]{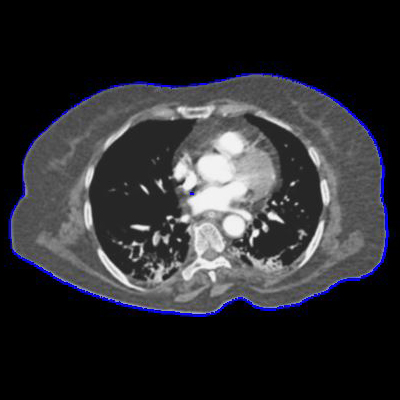} &
\includegraphics[width=0.31\linewidth]{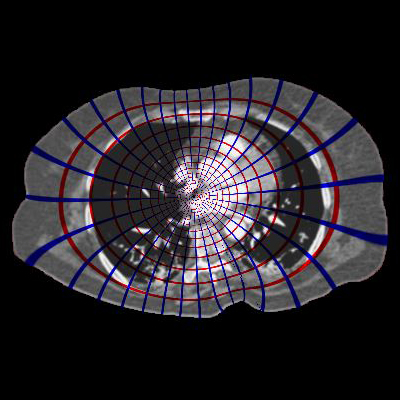} \\
\parbox{0.31\linewidth}{\centering\scriptsize (a) Original MR image} &
\parbox{0.31\linewidth}{\centering\scriptsize (b) Interpolated reconstruction} &
\parbox{0.31\linewidth}{\centering\scriptsize (c) Holomorphic chart ($u/v$ foliations)}
\end{tabular}
\caption{
Structure-aware interpolation on an SBMT mesh via a holomorphic chart.
(a) Original bitmap MR image.
(b) Interpolation on a globally uniform equilateral mesh that ignores anatomical boundaries.
(c) Gu--Yau holomorphic chart computed on the SBMT mesh: the harmonic potentials $(u,v)$ induce two orthogonal foliations, where red curves visualize equipotential lines ($u=\mathrm{const}$) and blue curves visualize trajectory/field lines ($v=\mathrm{const}$, integral curves aligned with $\nabla u$).
This panel visualizes the PDE-derived coordinate field (analogous to a 2D electrostatic potential/stream function), not a mesh-quality plot; curve crowding reflects conformal scaling in thin or high-curvature regions rather than remeshing artifacts.
All meshes use a fixed edge length $e=\sqrt{0.7}$ for a fair comparison. Rendering uses OpenGL; brightness variations are due to lighting.
}
\label{fig:reconstruction_comparison}
\end{figure}

\paragraph{Holomorphic 1-form chart for structure-aware interpolation}
\label{sec:holo-recon}
In the third panel of Figure~\ref{fig:reconstruction_comparison}, the embedded boundaries are aligned with gradient trajectories induced by a holomorphic $1$-form, following the conformal parameterization framework of Gu and Yau~\cite{Gu2002ComputingCS}.
Concretely, once the domain has been retriangulated by SBMT, we solve for a pair of harmonic conjugate potentials $(u,v)$ on the SBMT mesh via discrete Hodge decomposition (equivalently, a small number of sparse Laplace/Poisson systems).
The resulting holomorphic chart $(u,v)$ defines an orthogonal foliation whose level sets provide geometry-aware coordinates for interpolation and significantly reduce distortion near discontinuities.
Throughout this section, we use this holomorphic $1$-form chart as a generic elliptic PDE benchmark on SBMT-generated meshes and as a structure-aware interpolation tool, rather than as a core component of the SBMT algorithm itself.

\subsection{Parallel Execution Architecture}
\label{sec:parallel-execution}

Having established the structural and geometric guarantees of SBMT, 
We now formalize the parallel decomposition of SBMT’s geometric preprocessing and retriangulation stages, and demonstrate that the algorithm admits a race-free, scalable, and fully deterministic parallel implementation.

\begin{proposition}[Safe Decoupling of Preprocessing Steps]
\label{prop:decoupled-preprocessing}
Step~1 performs \emph{vertex snapping} based on a proximity threshold~\(a\); 
Step~2 enforces \emph{vertex repulsion} using a minimal distance threshold~\(c\); 
and Step~3 applies \emph{edge elimination} to remove short segments below a length 
threshold~\(b\) (see Figure~\ref{fig:b_threshold_split}). 
The execution dependency is summarized in Figure~\ref{fig:preprocess-dependency}.

Under the admissible range constraints~\eqref{eq:admissible-range}
(on $a,b,c,e$, and in particular $b < \tfrac{a}{2}$ and
$c < \tfrac{a}{\sqrt{2}}$), the following properties hold:
\begin{enumerate}
  \item[(i)] \textbf{Step 1 and Step 2 commute}: Both are local pointwise vertex updates. 
  The spatial threshold $c < \tfrac{a}{\sqrt{2}}$ ensures their influence zones are disjoint, 
  allowing concurrent execution without interference.
  
  \item[(ii)] \textbf{Step 3 is scheduled after vertex updates}: Edge elimination requires
  finalized vertex locations and must be executed after Steps~1--2
  (see Figure~\ref{fig:preprocess-dependency}).

  \item[(iii)] \textbf{Edge deletions are conflict-free}: The constraint
  $b < \tfrac{a}{2}$ ensures that no triangle has more than one edge
  flagged for deletion, preserving retriangulation consistency.
\end{enumerate}
\end{proposition}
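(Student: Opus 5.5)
The plan is to prove the three items separately. Each reduces to one metric fact about the equilateral scaffold combined with one of the inequalities in~\eqref{eq:admissible-range}. Throughout we use two facts. First, distinct vertices of the base mesh are at least $e$ apart. Second, by the length constraint of Section~\ref{sec:geometric-protocol}, distinct boundary endpoints are at least $\ell_{\min}^{\text{boundary}}>e$ apart. We also use the observation of Remark~\ref{rem:no-snapping-collision} that the experimental parameters satisfy $a<e/2$. We regard Steps~1 and~2 as maps $\sigma_1,\sigma_2$ acting on the vertex set. Each is \emph{pointwise}: it moves a vertex $v$ according to data attached to $v$ alone, namely the nearest boundary point within radius $a$, or the nearest segment within distance $c$.

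For (i), define the influence zone of Step~1 as $Z_1=\{v:\operatorname{dist}(v,\mathcal P)<a\}$, where $\mathcal P$ is the set of boundary points. Define the zone of Step~2 as those $v$ with $\operatorname{dist}(v,\mathcal S)<c$ that are not already captured by $Z_1$. The key step is to show that snapping and repulsion never act competitively on the same vertex. Suppose $v$ lies within $c$ of a segment $s$ with foot $f$.

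\begin{itemize}
\item If some boundary point $p$ on $s$ satisfies $|v-p|<a$, snapping places $v$ on $s$, which is the stronger resolution. We make the convention that Step~2 is void on snapped vertices.
\item Otherwise, we must show that $v$ is not pulled toward a different boundary point by Step~1. Here we use $c<a/\sqrt2$. If $v$ is at distance $<c$ from $s$ and $\ge a$ from every boundary point, then the right triangle $v f p$ gives $|f-p|^2\ge a^2-c^2>c^2$. The foot therefore lies well inside the segment, away from endpoints and corners.
\item Combining this with the $\ge 90^\circ$ angular constraint, no second segment $s'$ can come within $c$ of $v$. The point is that two segments meeting at an interior angle $\ge 90^\circ$ have $c$-tubes that overlap only inside the radius-$\sqrt2\,c<a$ disc around the corner, and that disc is in $Z_1$.
\end{itemize}

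Hence each vertex is acted on by at most one of $\sigma_1,\sigma_2$, and by at most one segment. The two maps then have disjoint supports and each depends only on the frozen input geometry, not on the other's output. This gives $\sigma_1\circ\sigma_2=\sigma_2\circ\sigma_1$, and the maps can be executed concurrently.

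Item (ii) is a dependency statement. The predicate for Step~3, "boundary point within $b$ of edge $\overline{uv}$", is evaluated on the positions of $u,v$, and those positions are outputs of $\sigma_1,\sigma_2$. So Step~3 reads data written by Steps~1--2, and a read-after-write dependence forces it to come after them (Figure~\ref{fig:preprocess-dependency}). To complete the argument, we check that Step~3 writes only connectivity, never vertex positions, so there is no reverse dependence and no iteration is needed.

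For (iii), suppose two edges $\overline{AB},\overline{AC}$ of a triangle are both flagged. Then there are boundary points $p,q$ with $\operatorname{dist}(p,AB)<b$ and $\operatorname{dist}(q,AC)<b$. We split into two cases.

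\begin{itemize}
\item If $p=q$ or the two points lie on a common segment near $A$, then the point is within $b/\sin 30^\circ=2b<a$ of the shared vertex $A$. This uses the $60^\circ$ angle at $A$, adjusted for the bounded vertex displacement. Step~1 has therefore already snapped $A$ onto the boundary, which makes the flagged configuration a vertex-incidence case rather than two edge eliminations. This is a contradiction.
\item If $p\ne q$ are far from $A$, then each edge has length about $e$. We use the length constraint $\ell_{\min}^{\text{boundary}}>e$ together with Lemma~\ref{lem:segment-triangle-bound} to rule out two distinct boundary points being that close to two different edges without also lying within $a$ of a vertex.
\end{itemize}

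The main obstacle is this last case analysis. Vertices have been displaced by up to $a$ (Step~1) or moved to distance $c$ (Step~2), so the triangle is no longer exactly equilateral. I would handle this by bounding the perturbed angles using $a<e/2$, which keeps all angles above roughly $30^\circ$. I would then redo the $2b<a$ estimate with a sine bound that still holds after the perturbation.
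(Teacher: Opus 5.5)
Your handling of (i) and (ii) follows the same route as the paper's sketch: spatially separated vertex updates commute, and Step~3 reads positions written by Steps~1--2. You are in fact more precise about what is needed. The raw zones $\{\mathrm{dist}(v,\mathcal{P})<a\}$ and $\{\mathrm{dist}(v,\mathcal{S})<c\}$ are not disjoint: a vertex close to a segment near one of its endpoints lies in both. So it is your priority convention (Step~2 void on $Z_1$) plus the frozen-input semantics that makes (i) true. Note what this establishes: the \emph{concurrent} update is well defined, not that the steps commute under re-evaluation. ``Repel, then snap'' can differ from ``snap, then repel'' across a narrow neck of width below $a+c$, where a repelled vertex lands within $a$ of a boundary sample on the opposite side. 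Similarly, the angular constraint controls only consecutive segments, so ``at most one segment'' fails near necks. Commutation does not need it, since Step~2 uses the closest segment.

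The genuine gap is in (iii), exactly where you located the obstacle. The paper's sketch only asserts that each boundary point is within $b$ of at most one edge. Your case $p=q$ supplies a mechanism ($|p-A|\le b/\sin 30^\circ=2b<a$), but only at an undeformed $60^\circ$ corner, whereas Step~3 acts on the mesh after Steps~1--2. Your proposed repair cannot close this:
\begin{itemize}
\item At a corner of angle $\theta$, points within $b$ of both edges reach distance $b/\sin(\theta/2)$ from the vertex. This is below $a$ only if $\theta>2\arcsin(b/a)$, a threshold that tends to $60^\circ$ as $b\to a/2$.
\item So $b<a/2$ has no slack beyond the exact equilateral angle; for $a=0.26$, $b=0.125$ you already need $\theta>57.5^\circ$. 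An angle floor of about $30^\circ$ gives only $b/\sin 15^\circ\approx 3.9\,b$, i.e.\ up to almost $2a$.
\item There is also a timing mismatch. The wedge estimate puts $p$ near the \emph{updated} vertex $A'$, while snapping was decided at the original $A$. If $A$ was repelled, $|A-p|$ is only bounded by $2b+c$, which exceeds $a$ for the default thresholds, so ``Step~1 has already snapped $A$'' does not follow.
\item If $A$ is snapped to $p$, the literal Step~3 rule flags both incident edges, since they are at distance $0$ from $p$. Excluding incident edges must be stated as a convention.
\end{itemize}

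The case $p\neq q$ is only asserted, and it has two problems.
\begin{itemize}
\item The spacing you use, that distinct boundary samples are more than $e$ apart, is not supplied by the protocol. The protocol only bounds segment lengths, i.e.\ consecutive endpoints. Samples across a neck or on another boundary component need an extra hypothesis, such as lattice-valued samples with $e<1$.
\item Even granting the spacing, the estimate is sharp rather than routine. In the undeformed lattice, take $p=B+a(\cos 210^\circ,\sin 210^\circ)$ and $q=C+a(\cos 210^\circ,\sin 210^\circ)$. They lie at distance $a/2$ from $AB$ and $AC$ respectively, at distance $\ge a$ from all vertices (for $a<e/2$), and $|p-q|=|B-C|=e$. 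So for $b$ close to $a/2$ there are non-snapping flagging pairs with $|p-q|$ arbitrarily close to $e$, and the contradiction survives only through the strict inequalities.
\item Once Steps~1--2 lengthen the triangle's edges (by up to $2a$), this argument no longer closes.
\end{itemize}
What (iii) needs is an argument that avoids angle and diameter bounds on the deformed triangle. One option is to exploit that snapped vertices are themselves boundary samples and that repulsion moves a vertex away from its own segment. Otherwise the claim must be restricted.
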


\begin{proof}[Sketch]
Steps 1 and 2 operate on spatially isolated vertices under the given thresholds. No shared influence implies commutativity and suitability for atomic parallel execution. Step 3 relies on updated vertex positions, and the geometry guarantees that each boundary point lies within the influence zone of at most one triangle edge. Thus, no triangle receives conflicting deletion signals, avoiding retriangulation conflicts.
\end{proof}

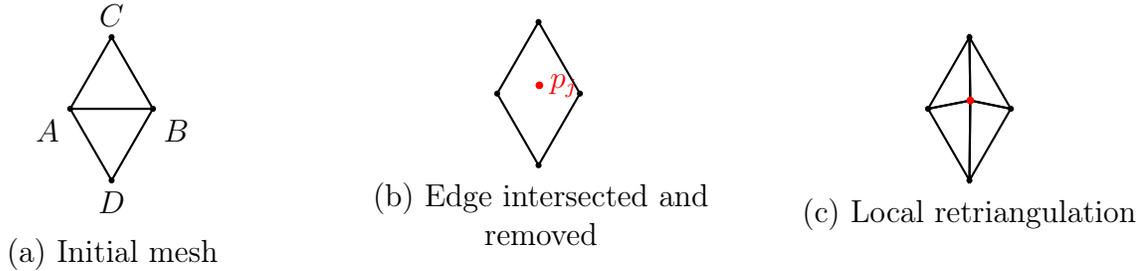
\begin{figure}[H]
\centering

% === 图 (a) 初始网格 ===
\begin{minipage}{0.31\linewidth}
\centering
\begin{tikzpicture}[scale=1.1]
\coordinate (A) at (0,0);
\coordinate (B) at (1,0);
\coordinate (C) at (0.5,0.866);
\coordinate (D) at (0.5,-0.866);
\draw[thick] (A) -- (B) -- (C) -- cycle;
\draw[thick] (A) -- (B) -- (D) -- cycle;
\draw[dashed] (A) -- (B);
\filldraw[black] (A) circle (0.03) node[below left] {$A$};
\filldraw[black] (B) circle (0.03) node[below right] {$B$};
\filldraw[black] (C) circle (0.03) node[above] {$C$};
\filldraw[black] (D) circle (0.03) node[below] {$D$};
\end{tikzpicture}
\\\centering (a) Initial mesh
\end{minipage}
\hfill
% === 图 (b) 删除边 + 插入点 ===
\begin{minipage}{0.31\linewidth}
\centering
\begin{tikzpicture}[scale=1.1]
\coordinate (A) at (0,0);
\coordinate (B) at (1,0);
\coordinate (C) at (0.5,0.866);
\coordinate (D) at (0.5,-0.866);
\coordinate (P) at (0.51,0.1);
\draw[thick] (A) -- (C) -- (B);
\draw[thick] (A) -- (D) -- (B);
\filldraw[black] (A) circle (0.03);
\filldraw[black] (B) circle (0.03);
\filldraw[black] (C) circle (0.03);
\filldraw[black] (D) circle (0.03);
\filldraw[red] (P) circle (0.04) node[right] {$p_j$};
\end{tikzpicture}
\\\centering (b) Edge intersected and removed
\end{minipage}
\hfill
% === 图 (c) 重剖分 ===
\begin{minipage}{0.31\linewidth}
\centering
\begin{tikzpicture}[scale=1.1]
\coordinate (A) at (0,0);
\coordinate (B) at (1,0);
\coordinate (C) at (0.5,0.866);
\coordinate (D) at (0.5,-0.866);
\coordinate (P) at (0.51,0.1);
\draw[thick] (A) -- (P) -- (C) -- cycle;
\draw[thick] (B) -- (P) -- (C) -- cycle;
\draw[thick] (A) -- (P) -- (D) -- cycle;
\draw[thick] (B) -- (P) -- (D) -- cycle;
\filldraw[black] (A) circle (0.03);
\filldraw[black] (B) circle (0.03);
\filldraw[black] (C) circle (0.03);
\filldraw[black] (D) circle (0.03);
\filldraw[red] (P) circle (0.04);
\end{tikzpicture}
\\\centering (c) Local retriangulation
\end{minipage}

\vspace{0.5em}
\caption{
Illustration of the threshold-$b$ retriangulation rule. 
(a) Two triangles share an edge $(A,B)$. 
(b) A boundary point $p_j$ lies within distance $b$ of the shared edge and triggers edge deletion.
(c) The local region is retriangulated using $p_j$ as an internal vertex, resulting in four sub-triangles.
}
\label{fig:b_threshold_split}
\end{figure}

\begin{algorithm}[htb]
\caption{SBMT geometric preprocessing pipeline (Steps 1--3)}
\label{alg:sbmt-preprocess}
\begin{algorithmic}[1]
\REQUIRE Base equilateral mesh $\mathcal{T}_0$; boundary segment set $\mathcal{B}$;
         thresholds $a,b,c$ with $b < a/2$ and $c < a/\sqrt{2}$
\ENSURE Preprocessed mesh $\mathcal{T}$

\STATE $\mathcal{T} \leftarrow \mathcal{T}_0$

\STATE \textbf{// Step 1: vertex snapping (threshold $a$)}
\FORALL{vertices $v \in \mathcal{T}$ \textbf{in parallel}}
  \STATE Find closest boundary point $p \in \mathcal{B}$
  \IF{$\|v-p\| < a$}
    \STATE Move $v$ to $p$
  \ENDIF
\ENDFOR

\STATE \textbf{// Step 2: vertex repulsion (threshold $c$)}
\FORALL{vertices $v \in \mathcal{T}$ not lying on $\mathcal{B}$ \textbf{in parallel}}
  \STATE Find closest boundary segment $s \in \mathcal{B}$ and signed distance $d(v,s)$
  \IF{$|d(v,s)| < c$}
    \STATE Move $v$ along the normal of $s$ so that $|d(v,s)| = c$
          and $v$ stays on the same side of $s$
  \ENDIF
\ENDFOR

\State \textbf{// Step 3: Edge Elimination and Retriangulation (threshold $b$)} \Comment{Order-dependent but locally scoped}
\FORALL{boundary points $p_j \in \mathcal{L}$}
  \STATE Find all triangle edges $e_k$ such that $\text{dist}(p_j, e_k) < b$
  \FORALL{such edges $e_k$}
    \STATE Remove $e_k$ and affected triangles from $\mathcal{T}$
    \STATE Insert $p_j$ and retriangulate the resulting cavity
  \ENDFOR
\ENDFOR

\STATE \textbf{return} $\mathcal{T}$

\end{algorithmic}
\end{algorithm}

% in preamble:
% \usetikzlibrary{positioning,calc,fit,arrows.meta,shapes.geometric}

\begin{figure}[t]
\centering
\begin{tikzpicture}[
  >=Latex,
  box/.style={
    draw, rounded corners=2pt, align=center,
    font=\small, inner sep=4pt, text width=3.7cm,
    minimum height=1.15cm
  },
  group/.style={draw, dashed, rounded corners=3pt, inner sep=6pt},
  merge/.style={diamond, draw, aspect=2.2, inner sep=0.8pt, minimum size=5.5mm},
  note/.style={font=\small, align=center}
]

% --- Step 1 & Step 2 (parallel block) ---
\node[box] (s1) {Step 1\\Vertex snapping\\(threshold $a$)};
\node[box, right=12mm of s1] (s2) {Step 2\\Foot-based repulsion\\(threshold $c$)};

% dashed grouping + label
\node[group, fit=(s1)(s2)] (g12) {};
\node[note, above=2.5mm of g12.north] {$\parallel$ \; commute / parallelizable};

% --- merge (join) node ---
\node[merge, below=9mm of g12.south] (join) {};

% --- Step 3 ---
\node[box, below=9mm of join] (s3)
  {Step 3\\Edge deletion + local cavity\\retriangulation (threshold $b$)};

% --- arrows: two branches -> join -> step3 ---
\draw[->] (s1.south) -- ++(0,-2.5mm) -| (join.west);
\draw[->] (s2.south) -- ++(0,-2.5mm) -| (join.east);
\draw[->] (join) -- (s3);

% note under step3
\node[note, below=3.5mm of s3.south]
{Step 3 scheduled after vertex updates; conflict-free under $b<\tfrac{a}{2}$.};

\end{tikzpicture}
\caption{Execution dependency of the three preprocessing steps in Proposition~\ref{prop:decoupled-preprocessing}.}
\label{fig:preprocess-dependency}
\end{figure}

\begin{corollary}[Pipelined Parallel Scheduling]
\label{cor:parallel-architecture}
Under the admissible range constraints~\eqref{eq:admissible-range}
(in particular $b < \tfrac{a}{2}$ and $c < \tfrac{a}{\sqrt{2}}$),
the SBMT preprocessing and retriangulation admit the following
multi-stage parallel execution regions:

\begin{itemize}
  \item \textbf{Region I (vertex updates)}: Steps~1 (vertex snapping) and~2
  (vertex repulsion) are executed concurrently as local pointwise vertex
  updates. No fine-grained synchronization is required within each step,
  and a single barrier finalizes the vertex locations.

  \item \textbf{Region II (edge elimination)}: Step~3 (edge elimination) is
  executed \emph{after} Steps~1--2, since it requires finalized vertex
  locations (cf.\ Proposition~\ref{prop:decoupled-preprocessing}(ii) and
  Figure~\ref{fig:preprocess-dependency}). Under $b < \tfrac{a}{2}$, edge
  deletions are conflict-free in the sense of
  Proposition~\ref{prop:decoupled-preprocessing}(iii).

  \item \textbf{Region III (template retriangulation)}: After preprocessing,
  each triangle is retriangulated independently using only its local
  intersection class and a lookup template (Algorithm~\ref{alg:sbmt-template}
  and Appendix~G), enabling triangle-wise parallel execution.
\end{itemize}

Consequently, Regions~I--III can be scheduled in a pipelined (multi-stage)
fashion with a barrier between Region~I and Region~II, and the overall
execution is race-free and globally consistent.
\end{corollary}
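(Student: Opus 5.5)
The corollary should follow from Proposition~\ref{prop:decoupled-preprocessing} combined with the per-triangle structure of Algorithm~\ref{alg:sbmt-template} and Assumption~\ref{assump:frozen-lookup}. I will treat the three regions one at a time. In each case I show that the work inside the region involves no write conflicts, and that the dependencies between regions are only the ones enforced by barriers. The conclusion of global consistency then comes from Theorem~\ref{thm:precision_conformity} and the uniqueness result, Lemma~\ref{lemma:unique_template}.

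\emph{Region~I.} Each vertex update in Step~1 or Step~2 reads only the frozen boundary data $\mathcal{B}$ and the original position of its own vertex, and it writes only that vertex's position. Two threads therefore never write to the same location. By Remark~\ref{rem:no-snapping-collision} (with $a<e/2$), no two vertices are snapped to the same boundary sample, so Step~1 is race-free as a map. Proposition~\ref{prop:decoupled-preprocessing}(i) says the influence zones of Steps~1 and~2 are disjoint under $c<a/\sqrt{2}$. Hence any vertex is modified by at most one of the two steps, their composition does not depend on order, and running them concurrently is equivalent to running them sequentially. A single barrier at the end of the region publishes the final vertex positions.

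\emph{Region~II.} By Proposition~\ref{prop:decoupled-preprocessing}(ii), Step~3 only reads the positions finalized by the barrier. By part~(iii), under $b<a/2$ each triangle has at most one edge flagged for deletion. Each flagged edge $e_k$ has a cavity made of its two incident triangles, and these cavities must be pairwise disjoint for distinct flagged edges. Once that holds, the cavity retriangulations can run in parallel, each writing only to its own cavity.

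\emph{Region~III.} Once preprocessing is complete, the registry $\mathcal{R}$ is frozen (Assumption~\ref{assump:frozen-lookup}). Each triangle $K$ computes its key $\kappa(K)$ from read-only data, and the output is written out-of-place as the patch $P_K$. So per-triangle execution has no shared writes. Shared-edge agreement comes from the shared registry records together with Theorem~\ref{thm:precision_conformity}.

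\textbf{Main obstacle.} The step I expect to be hardest is the disjointness of cavities in Region~II. Proposition~\ref{prop:decoupled-preprocessing}(iii) only rules out two flagged edges in the same triangle. It does not directly rule out two flagged edges that share a triangle across a neighbor and therefore overlapping cavities. My first attempt is a direct geometric argument. Two flagged edges of one triangle would both lie within $b$ of boundary points. If those points are distinct, the length constraint separates them by more than $e$. If they are the same point, it lies within $b<a/2$ of both edges, hence near their common vertex, and would then have been snapped in Step~1, which removes the flag. If this argument does not close, the fallback is to accept that Step~3 is ``order-dependent but locally scoped,'' as Algorithm~\ref{alg:sbmt-preprocess} already says, and claim only determinism under a fixed processing order of the $p_j$ rather than full concurrency.
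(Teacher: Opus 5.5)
Your plan is correct and is essentially the paper's own argument. The paper likewise chains Proposition~\ref{prop:decoupled-preprocessing}(i)--(iii) for Regions~I--II, with a barrier after Region~I, and notes that Region~III is a per-triangle, lookup-driven update.

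The obstacle you flag dissolves immediately, and neither your incomplete geometric re-derivation of (iii) nor your fallback is needed. Two cavities can overlap only in a triangle incident to both flagged edges, that is, a triangle with two flagged edges, which is exactly what part~(iii) excludes. Cavities that touch only along a boundary edge or vertex do not interact, because each local retriangulation keeps its cavity boundary. (Your re-derivation is incomplete because a separation greater than $e$ does not by itself stop two boundary points from lying within $b$ of different edges of one triangle.) Likewise, your appeal to $a<e/2$ is not among the hypotheses~\eqref{eq:admissible-range}, and it is unnecessary for race-freedom, since each thread writes only its own vertex.
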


\begin{proof}[Sketch]
Proposition~\ref{prop:decoupled-preprocessing}(i) implies that Steps~1 and~2
commute and may be executed concurrently as local pointwise vertex updates,
with a barrier to finalize vertex locations. Proposition~\ref{prop:decoupled-preprocessing}(ii)
requires Step~3 to be scheduled after Steps~1--2. Finally,
Proposition~\ref{prop:decoupled-preprocessing}(iii) guarantees conflict-free
edge deletions under $b < \tfrac{a}{2}$, and the subsequent retriangulation
is triangle-wise and lookup-driven given the local intersection class.
\end{proof}

\begin{corollary}[Structural Basis for Parallel Retriangulation (Lookup-driven)]
\label{thm:parallel-structure}
Assume the frozen-registry and single-valued lookup setting in
Assumption~\ref{assump:frozen-lookup}, and the edge-consistency guarantee of
Lemma~\ref{lemma:precision}. Let $\mathcal{T}$ be the preprocessed base mesh, and
for each $K\in\mathcal{T}$ let $P_K=\mathcal{L}\big(\kappa(K)\big)$ be the unique
lookup patch.

Then lookup-driven retriangulation is \emph{embarrassingly parallel}:
all patches $\{P_K\}_{K\in\mathcal{T}}$ can be generated out-of-place in arbitrary order
(or in parallel), followed by a single deterministic stitching stage. Consequently, the
assembled global mesh is independent of the execution schedule, up to the fixed
symmetry-breaking/tolerance conventions.
\end{corollary}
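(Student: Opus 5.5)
The plan is to show that the retriangulation is a pure function of the frozen data. Each patch $P_K$ depends only on inputs that are read-only during this phase. The stitching is then a deterministic, order-independent merge. Schedule independence follows from these two facts.

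\emph{Step 1: each patch is a function of frozen data.} By Assumption~\ref{assump:frozen-lookup}, the registry $\mathcal{R}$ is built once and then frozen. For each $K$, the key $\kappa(K)=(\mathrm{type}(K),\mathbf{q}_{\partial K},\sigma(K))$ is computed only from:
\begin{itemize}
  \item the vertices of $K$ in the preprocessed mesh $\mathcal{T}$, which is fixed after Region~II of Corollary~\ref{cor:parallel-architecture};
  \item the records of $\mathcal{R}$ attached to $\partial K$;
  \item the fixed $D_3$ canonicalization convention.
\end{itemize}
Algorithm~\ref{alg:sbmt-template} then instantiates $P_K=\mathcal{L}(\kappa(K))$ through the map $\pi_{K,\tau}$, which again reads only these inputs. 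Lemma~\ref{lemma:unique_template} makes $\mathcal{L}$ single-valued, so no runtime choice occurs. Patches are written out-of-place and no shared state is mutated, so the map $K\mapsto P_K$ has no read--write or write--write conflicts. Its output is therefore the same set $\{P_K\}$ under any interleaving, sequential or parallel.

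\emph{Step 2: the stitching stage is well defined.} I would define stitching as the quotient of the disjoint union $\bigsqcup_K P_K$ in which vertices are identified by their registry identifiers. Base-mesh vertices use their global index; boundary-induced points use their $\mathcal{R}$ hash after $\epsilon$-snapping. Edges are identified by the unordered pair of vertex identifiers. This quotient is defined set-theoretically, so it does not depend on the order in which patches are inserted.

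To see that the result is a conforming mesh, take two triangles $K_i,K_j$ sharing an edge $e$. Both read the same ordered list of intersection records on $e$ from the frozen registry, so they subdivide $e$ identically. By Theorem~\ref{thm:precision_conformity} and Lemma~\ref{lemma:precision}, the two subdivisions agree up to $\mathcal{O}(\epsilon)$. Snapping to shared registry identifiers turns this into exact combinatorial agreement, so no T-junctions arise.

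\emph{Step 3: the main obstacle.} The difficult point is the claim that adjacent patches induce identical subdivisions of shared edges. This requires the edge-attributed ordering in $\mathbf{q}_{\partial K}$ to depend only on $e$ and $\mathcal{R}$, not on which incident triangle reads it or on that triangle's symmetry tag $\sigma(K)$. I would first argue that the parametric order along an edge is fixed by a global orientation of each edge (for example, from lower to higher vertex index). The $D_3$ relabeling only permutes local labels and leaves this global order unchanged. With that in place, the assembled mesh is independent of the schedule, up to the fixed symmetry-breaking and tolerance conventions, as claimed.
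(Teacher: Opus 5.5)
Your proposal is correct and follows the paper's proof: each $P_K$ is a deterministic function of the frozen registry and the key $\kappa(K)$, and it is generated out-of-place, hence independent of the schedule; Lemma~\ref{lemma:precision} then gives identical traces on shared base edges, so stitching is deterministic. Your quotient-of-disjoint-union description of stitching and the global edge-orientation argument in your Step~3 spell out what the paper packages into the global registry ``indexed by edge identity and position'' in Lemma~\ref{lemma:precision}, which the corollary assumes as a hypothesis rather than leaving it to be proved.
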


\begin{proof}[Sketch]
By Assumption~\ref{assump:frozen-lookup}, each patch $P_K$ is a deterministic function
of the frozen registry $\mathcal{R}$ and the local key $\kappa(K)$, and is generated
out-of-place; hence any execution order produces the same collection of patches
$\{P_K\}_{K\in\mathcal{T}}$. By Lemma~\ref{lemma:precision}, adjacent triangles induce
identical edge traces on shared base edges, so stitching is deterministic. Therefore the
final assembled mesh is schedule-independent.
\end{proof}

\paragraph{Scalability Potential}
Proposition~\ref{prop:decoupled-preprocessing} and Theorem~\ref{thm:parallel-structure} together imply a scalable parallel schedule: preprocessing is stage-separated, and triangle-wise retriangulation admits independent (disjoint-support) updates. This aligns SBMT with per-element parallel triangulation paradigms~\cite{blelloch1999design, spielman2007parallel} while retaining a deterministic, lookup-driven pipeline.

\subsection{Runtime complexity}
\label{sec:complexity}

A formal analysis of the SBMT pipeline’s time complexity—covering preprocessing, template-based retriangulation, and signal interpolation—is provided in Appendix C of the Supplementary Material. The overall runtime scales as $\mathcal{O}(n \log N + nm + N_t + N)$, where \(n\) be the number of input boundary segments, \(N\) the number of base-mesh triangles, \(N_t\) the number of triangles intersected by boundaries, and \(m\) the average number of triangles influenced by a segment (typically constant due to local thresholds).

\section{Numerical Results}
\label{sec:experiments}

We evaluate Structured Bitmap-to-Mesh Triangulation (SBMT) on synthetic and
real-world binary images with rasterized domains and pixel-level boundaries. We
compare against standard constrained Delaunay triangulation (CDT) baselines,
including Shewchuk’s Triangle~\cite{Shewchuk1996Triangle} and the CDT module in
Gmsh, focusing on boundary-conforming straight-edge meshes suitable for downstream
PDE discretization.

Our evaluation addresses four aspects:
\begin{itemize}
  \item \textbf{Boundary conformity:} exact embedding of the prescribed polygonal boundary;
  \item \textbf{Geometric quality:} minimum-angle distributions, aspect-ratio statistics, and sliver counts (angle $<5^\circ$);
  \item \textbf{Robustness:} behavior on sharp corners, narrow necks, and near-singular junctions;
  \item \textbf{Compactness:} element counts and unnecessary refinement relative to CDT-style pipelines.
\end{itemize}

\paragraph{Parameter settings}
Unless otherwise noted, we use fixed preprocessing thresholds
$a=0.26$, $b=0.125$, and $c=0.183$, which satisfy the sufficient safety conditions
in Section~\ref{subsubsec:theoretical-properties} and yield stable behavior across
all tested domains. These parameters control the three preprocessing steps in
Algorithm~\ref{alg:sbmt-preprocess} (snapping, repulsion, and short-edge elimination)
and serve only to enforce admissible boundary--mesh configurations so that the
subsequent template application in Section~\ref{sec:main} remains deterministic and
conflict-free. A focused sensitivity study around $(a,b,c)$ is reported in Appendix~I
(Supplementary Material), and an ablation study removing individual rules is given
in Section~\ref{subsec:ablation}.

\paragraph{Remark (fixed vs.\ adaptive thresholds)}
We keep $(a,b,c)$ globally fixed to preserve SBMT’s core design goal: a fully
lookup-driven, stateless pipeline without runtime refinement or synchronization.
Adaptive choices (e.g., spatially varying $a(x),b(x),c(x)$ based on local boundary
complexity or image evidence) are a promising direction but would require additional
indicators and safeguards to preserve symbolic closure and conflict-free updates.

Accordingly, we first validate boundary embedding qualitatively (Section~\ref{sec:case-coverage}),
then quantify mesh quality (Section~\ref{subsec:geom-quality}), isolate the role of each preprocessing
rule via ablation (Section~\ref{subsec:ablation}), and finally report three-way CDT baselines on
benchmark domains (Section~\ref{sec:comparative-eval}).

\subsection{Case Coverage and Visual Validation}
\label{sec:case-coverage}

This section validates the structure-preserving capabilities of SBMT through visual and symbolic evidence. We test both anatomical and synthetic domains, aiming to:
\begin{enumerate}[label=(\roman*)]
    \item confirm accurate boundary embedding;
    \item demonstrate local retriangulation stability across intersection types;
    \item verify lookup table activation coverage in real cases;
    \item illustrate that SBMT meshes are compatible with elliptic (Hodge/Poisson) solvers.
\end{enumerate}

In all examples below, the reconstruction step reuses the holomorphic $1$-form pipeline described after Figure~\ref{fig:reconstruction_comparison} (Section~\ref{sec:holo-recon}): SBMT first produces a boundary-conforming equilateral mesh, and then a holomorphic chart is computed on this mesh via discrete Hodge decomposition~\cite{Gu2002ComputingCS} to guide structure-aware interpolation. The focus here is on how SBMT remeshing behaves under this generic PDE workload, rather than on the holomorphic solver itself.

\subsubsection*{(1) Real-world case: stomach anatomy}

To illustrate the visual significance of structure-aware retriangulation, we apply our pipeline combining SBMT and holomorphic reconstruction to a medical binary image of a stomach cross-section, shown in Figure~\ref{fig:structure_aware_reconstruction}.
This shape features intricate, high-curvature boundaries and narrow lobes, posing significant challenges to boundary-conforming remeshing and PDE-based reconstruction.
Importantly, panel~(c) visualizes the \emph{holomorphic-chart foliations} $(u,v)$ computed on the SBMT mesh via the Gu--Yau pipeline: the red curves are equipotential lines $u=\mathrm{const}$ and the blue curves are the orthogonal trajectory/field lines $v=\mathrm{const}$ (analogous to a 2D electrostatic potential and its field lines).
Apparent bending or crowding of trajectories in thin, high-curvature distal regions reflects the chart's conformal scaling under the chosen boundary conditions, rather than remeshing error.
Mesh quality near constrained interfaces is evaluated separately via the local boundary ROI triangulation comparisons (Figure~\ref{fig:local_triangulation_comparison}) and the minimum-angle distribution plots (Figure~\ref{fig:minAngle_hist}).

\begin{figure}[htbp]
\centering
\renewcommand{\arraystretch}{1.0}
\setlength{\tabcolsep}{2pt}
\begin{tabular}{ccc}
\includegraphics[width=0.31\linewidth]{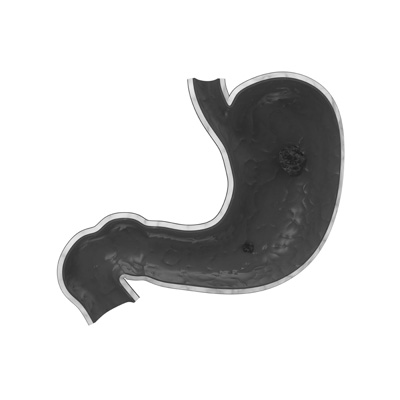} &
\includegraphics[width=0.31\linewidth]{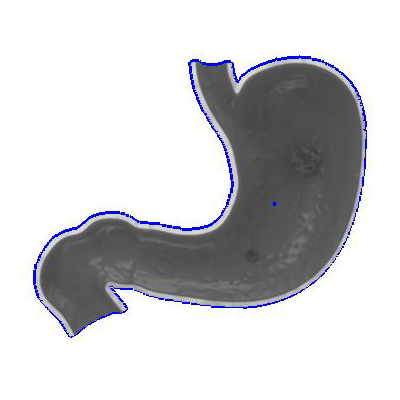} &
\includegraphics[width=0.31\linewidth]{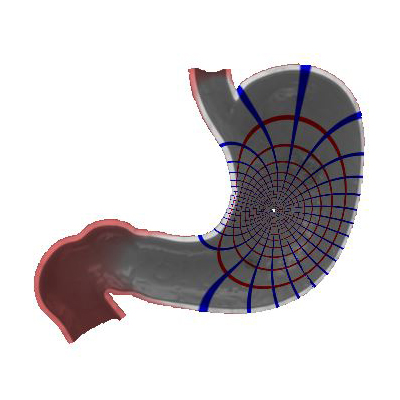} \\
\parbox{0.31\linewidth}{\centering\scriptsize (a) Original binary image} &
\parbox{0.31\linewidth}{\centering\scriptsize (b) Uniform mesh interpolation} &
\parbox{0.31\linewidth}{\centering\scriptsize (c) Holomorphic chart ($u/v$ foliations)}
\end{tabular}
\caption{
Structure-aware processing of a stomach cross-section.
(a) Original binary image.
(b) Interpolation on a uniform mesh without respecting anatomical boundaries.
(c) Holomorphic chart on the SBMT mesh (Gu--Yau pipeline; Section~\ref{sec:holo-recon}): we solve a discrete Poisson/Hodge system to obtain a holomorphic $1$-form and integrate it to produce harmonic conjugate potentials $(u,v)$.
Red curves show equipotential lines ($u=\mathrm{const}$) and blue curves show trajectory/field lines ($v=\mathrm{const}$), analogous to a 2D electrostatic potential and its orthogonal field lines.
This visualization reflects the conformal coordinate field; apparent bending/crowding near the distal thin regions is an expected conformal scale effect under the chosen boundary conditions, and should not be interpreted as remeshing error.
Mesh edges are omitted for clarity.
}
\label{fig:structure_aware_reconstruction}
\end{figure}

To further validate the completeness and practical activation of the SBMT lookup table, 
we visualize the retriangulation output on a complex boundary region of the stomach image, 
using color coding to indicate the local intersection type of each triangle. 
As shown in Figure~\ref{fig:zoomed_region}, 
triangles intersected by a single boundary segment are filled in green, 
those matched to the $(1,1)$ type in pink.

\begin{figure}[ht]
\centering
\begin{tikzpicture}

% ==== 左侧：原图 ====
\node[anchor=south west, inner sep=0] (full) at (0,0) {\includegraphics[width=0.45\linewidth]{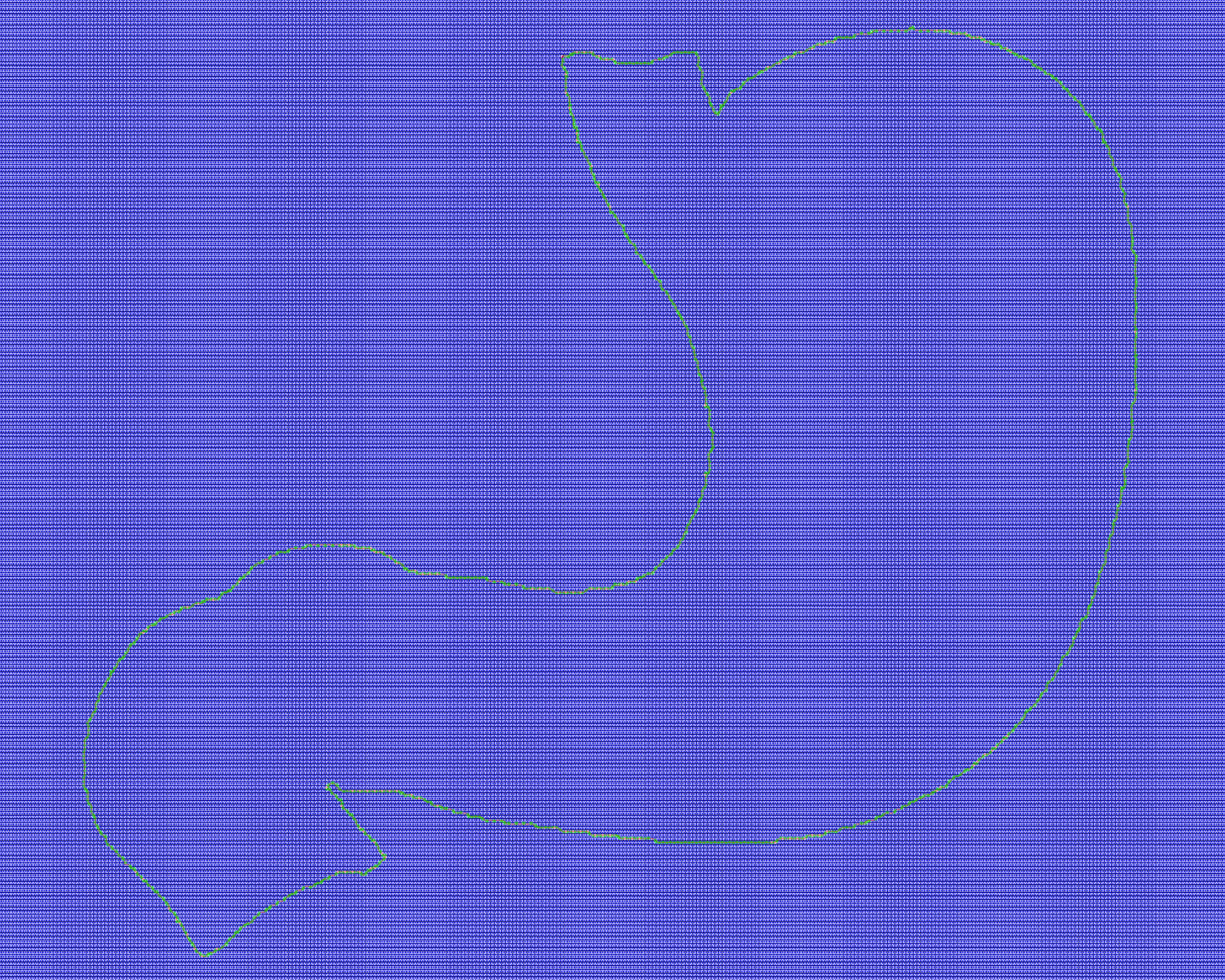}};

% ==== 右侧：放大图 ====
\node[anchor=south west, inner sep=0] (zoom) at (8,0) {\includegraphics[width=0.35\linewidth]{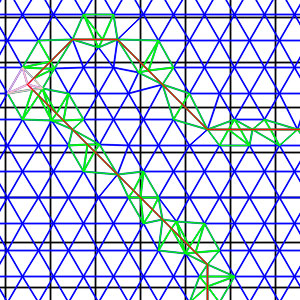}};

% ==== 红色矩形框（选区）====
\draw[red, very thick] (1.95,1.0) rectangle (2.25,1.3);  

% ==== 连线：从框角连到放大图 ====
\draw[red, thick] (2.10,1.3) -- (8,5.7);  
\draw[red, thick] (2.10,1.0) -- (8,0);  

% ==== 子图标题 ====
\node at (3.25,-0.5) {(a) Global mesh};
\node at (11.25,-0.5) {(b) Zoomed region};

\end{tikzpicture}

\begin{tikzpicture}[scale=0.9]

% 第一列 - 颜色块 + 说明
\draw[fill=green] (0,0) rectangle (0.4,0.4);
\node[right] at (0.5,0.2) {Single-edge intersection};

\draw[fill=pink] (0,-0.6) rectangle (0.4,-0.2);
\node[right] at (0.5,-0.4) {$(1,1)$ type};

\draw[fill=purple] (0,-1.2) rectangle (0.4,-0.8);
\node[right] at (0.5,-1.0) {$(2,2)$ type};

\draw[fill=yellow] (0,-1.8) rectangle (0.4,-1.4);
\node[right] at (0.5,-1.6) {$(1,2)$ type};

\draw[fill=blue!40!purple] (0,-2.4) rectangle (0.4,-2.0);
\node[right] at (0.5,-2.2) {$(1,3)$ type};

\draw[fill=indigo] (0,-3.0) rectangle (0.4,-2.6);
\node[right] at (0.5,-2.8) {$(2,3)$ type};

% 第二列 - 线条示意
\draw[red, thick] (0,-3.6) -- (0.4,-3.6);
\node[right] at (0.5,-3.6) {Embedded boundary};

\draw[black] (0,-4.2) rectangle (0.4,-3.8);
\node[right] at (0.5,-4.0) {Pixel cell};

\draw[blue, thick] (0,-4.8) -- (0.4,-4.8);
\node[right] at (0.5,-4.8) {Base triangle edge};

\end{tikzpicture}

\caption{
SBMT lookup table coverage on a complex anatomical boundary.
Dominant cases such as single-edge and $(1,1)$ types are highlighted, confirming the sufficiency of lookup table coverage in practice.
The clean triangulation and absence of conflicts illustrate the spatial coherence of the local retriangulation strategy.
}
\label{fig:zoomed_region}
\end{figure}

These two categories are the most frequently encountered in real-world data, 
accounting for the majority of retriangulation cases. 
More exotic types, such as $(2,2)$, $(2,3)$ and $(3,3)$, are rare and do not appear in this zoomed region, 
but are supported in the implementation and handled correctly when they occur 
(see Appendix~G of the Supplementary Material).

To aid visual interpretation, the background shows the underlying bitmap structure.
Each black square corresponds to a unit pixel, with its center representing the exact pixel coordinate. 
The blue triangles indicate the overlaid regular triangular mesh that defines the initial domain tiling.
Red lines mark the embedded polygonal boundary, which intersects the mesh and triggers local retriangulation.

This visualization confirms not only the coverage of key lookup table entries 
but also the spatial consistency of the local retriangulation strategy of SBMT. 
Each triangle is processed independently based on its type, 
resulting in a clean and well-organized triangulation despite the underlying geometric complexity.

\begin{figure}[htbp]
\centering
\renewcommand{\arraystretch}{1.0}
\setlength{\tabcolsep}{2pt}
\begin{tabular}{ccc}
\includegraphics[width=0.31\linewidth]{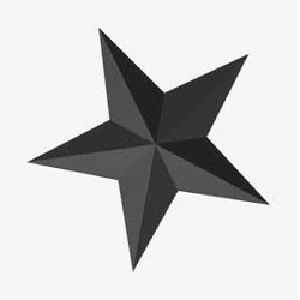} &
\includegraphics[width=0.31\linewidth]{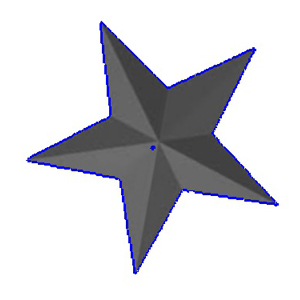} &
\includegraphics[width=0.31\linewidth]{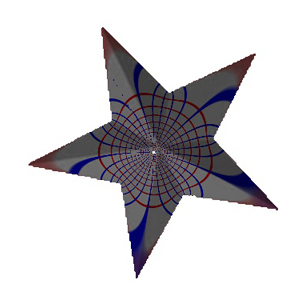} \\
\small (a) Original binary image &
\small (b) Uniform-mesh interpolation &
\small (c) Holomorphic chart
\end{tabular}
\caption{
Structure-aware representation of a synthetic star-shaped domain.
(a) Original binary image with sharp nonconvex boundaries.
(b) Bilinear interpolation on a uniform triangular mesh, which smears the star tips and narrow lobes.
(c) Holomorphic $1$-form foliation computed on the SBMT mesh using the Gu--Yau algorithm~\cite{Gu2002ComputingCS}: the conjugate harmonic potentials $(u,v)$ define an orthogonal chart whose level sets (blue curves) align with the geometry and serve as structure-aware coordinates for subsequent reconstruction (Section~\ref{sec:holo-recon}).
}
\label{fig:structure_aware_star}
\end{figure}

\subsubsection*{(2) Synthetic case: star domain}

To complement the anatomical example, we also test the same SBMT+holomorphic reconstruction pipeline on a synthetic star-shaped domain with sharp inward corners. This example is designed to trigger more extreme geometric configurations, including acute angles and narrow junctions, which are rare in natural anatomical boundaries but crucial for verifying the completeness and
robustness of the remeshing system.

Figure~\ref{fig:structure_aware_star} illustrates the overall setup: panel~(a) shows the original binary star image, panel~(b) demonstrates that bilinear interpolation on a uniform triangular mesh smears the star tips and narrow lobes, and panel~(c) visualizes the holomorphic $1$-form foliation computed on the SBMT mesh. The conjugate harmonic potentials $(u,v)$ define an
orthogonal chart whose level sets provide structure-aware coordinates for the subsequent reconstruction (see Section~\ref{sec:holo-recon}).

To better illustrate the geometric fidelity of the underlying retriangulation, we visualize the full triangulation and a selected high-curvature region of the synthetic star in Figure~\ref{fig:zoomed_star}. In the global view, the five-pointed shape and its sharp internal corners are accurately preserved by the structure-aware remeshing. The zoomed-in region shows clear boundary alignment, well-shaped triangles, and no sliver elements, despite the presence of acute features.

Different triangle types are color-coded using the same scheme as in Figure~\ref{fig:zoomed_region}. In this region, only $(1,1)$, $(2,2)$, and single-edge intersection types are activated, demonstrating that even in complex synthetic domains, most retriangulation cases fall into a few dominant categories. This supports the practicality of the lookup table in the SBMT
framework and highlights the local stability of its retriangulation rules.

\begin{figure}[ht]
\centering
\begin{tikzpicture}

% ==== 左侧：原图 ====
\node[anchor=south west, inner sep=0] (full) at (0,0) {\includegraphics[width=0.45\linewidth]{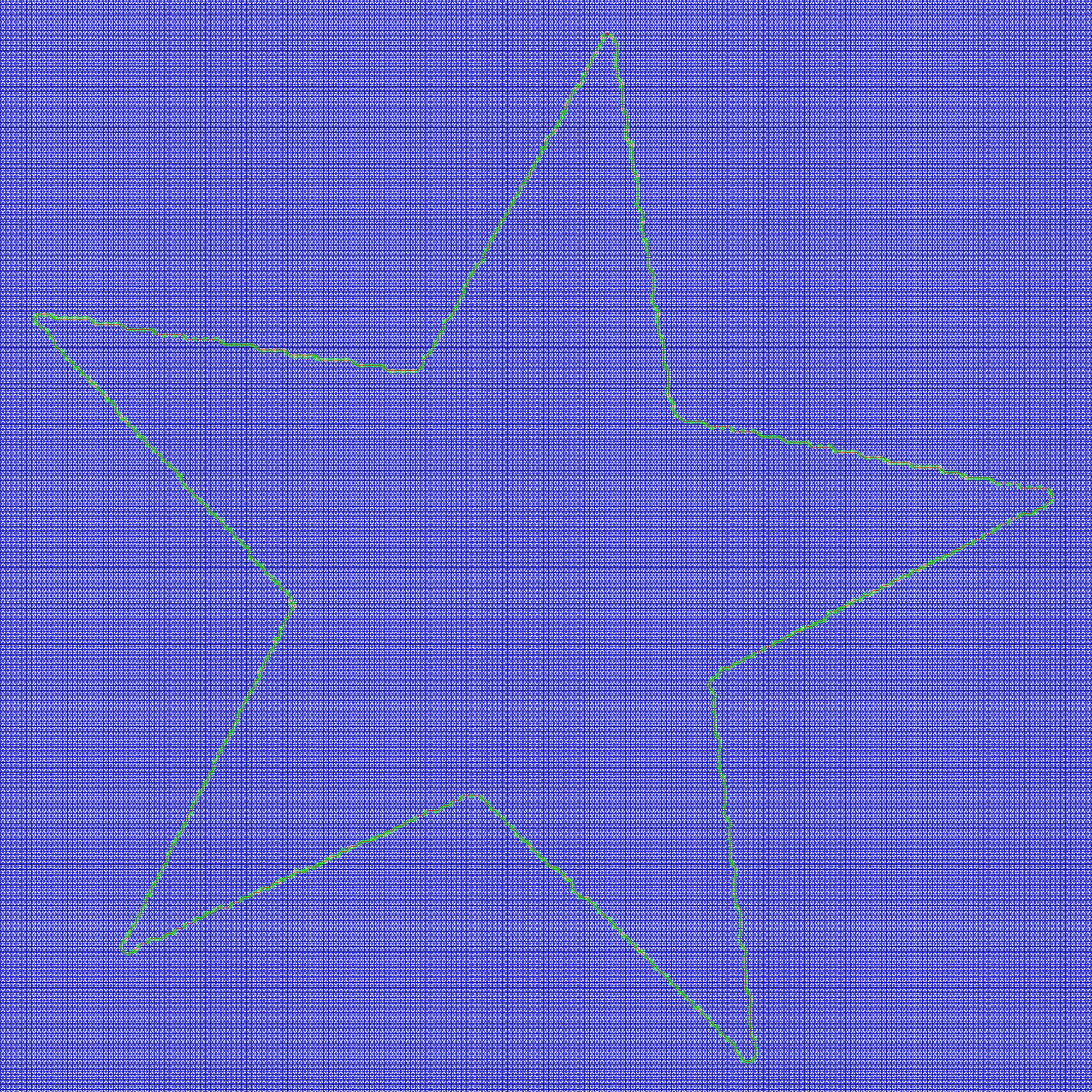}};

% ==== 右侧：放大图 ====
\node[anchor=south west, inner sep=0] (zoom) at (8,0) {\includegraphics[width=0.35\linewidth]{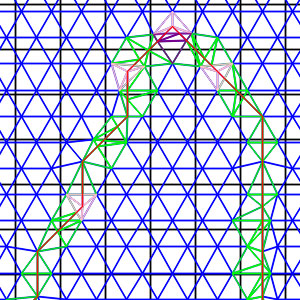}};

% ==== 红色矩形框（选区）====
\draw[red, very thick] (4.0,7.0) rectangle (4.3,7.3);  

% ==== 连线：从框角连到放大图 ====
\draw[red, thick] (4.3,7.3) -- (8,5.7);  
\draw[red, thick] (4.0,7.0) -- (8,0);  

% ==== 子图标题 ====
\node at (3.25,-0.5) {(a) Global mesh};
\node at (11.25,-0.5) {(b) Zoomed region};

\end{tikzpicture}

\caption{
Structure-aware remeshing of a synthetic star-shaped domain.
(a) Global view of the remeshed binary image. The red rectangle marks a high-curvature region near a sharp inward corner.
(b) Zoomed-in visualization of the selected region. Triangle elements remain well-shaped and align precisely with embedded boundaries.
Color coding indicates the local intersection type of each triangle, consistent with the legend in Figure~\ref{fig:zoomed_region}.
Black squares represent the underlying pixel grid, each of unit size and centered at the pixel coordinate. 
Blue lines indicate the initial regular triangular mesh, and red polylines denote embedded boundary segments.
Only $(1,1)$, $(2,2)$, and single-edge intersection types appear in this zoomed region; rarer types did not occur here.
}
\label{fig:zoomed_star}
\end{figure}

\subsection{Geometric Quality Evaluation}
\label{subsec:geom-quality}

We first assess the intrinsic geometric quality of SBMT meshes, independently of any comparison baseline. Two standard indicators are used: the minimum interior angle over all elements and the minimum triangle area. These metrics diagnose sliver formation and near-degenerate cells, which are critical for the stability of downstream PDE solvers and interpolation schemes.

Table~\ref{tab:mesh_quality_sbmt} summarizes the mesh quality of SBMT on two representative domains: an anatomical stomach cross-section and a synthetic star-shaped region. Angles are reported in degrees; areas are in pixel units.

Both domains exhibit a strictly positive safety margin above the sliver threshold of $5^\circ$, and no triangle violates this criterion. Moreover, the vast majority of elements remain essentially equilateral. This is a direct consequence of SBMT’s design: the base grid is regular and isotropic, and the local retriangulation templates are constructed to minimally perturb the equilateral pattern while embedding the input boundary segments.

\begin{figure}[htbp]
\centering
\setlength{\tabcolsep}{4pt}
\begin{tabular}{cc}
\includegraphics[width=0.47\linewidth]{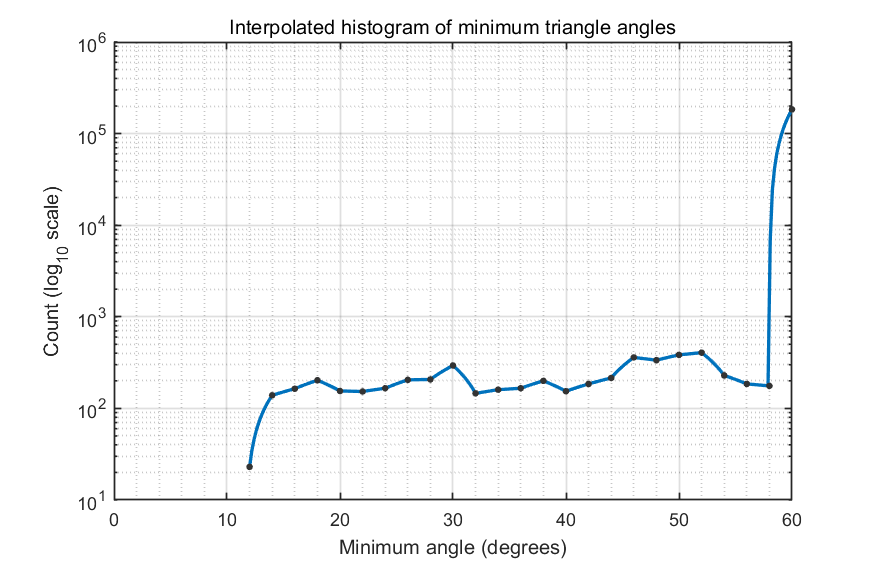} &
\includegraphics[width=0.47\linewidth]{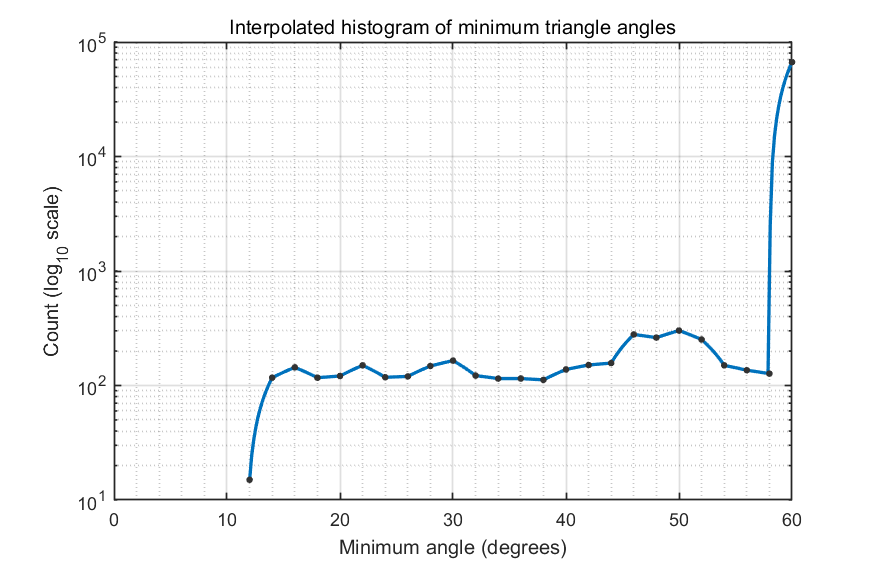} \\
\small (a) Stomach domain & \small (b) Star domain
\end{tabular}
\caption{
Smoothed histograms of minimum triangle angles for the stomach and star meshes.
Bin centers are spaced at $2^\circ$ intervals, and the vertical axis uses a logarithmic scale to reveal the low-probability tail.
In both cases, the distributions are sharply peaked near $60^\circ$, reflecting the dominance of equilateral elements. The low-angle tail remains well above the sliver threshold, with observed minima of $11.4^\circ$ for the stomach and $10.32^\circ$ for the star, in agreement with the analytic lower bound in Theorem~\ref{thm:min-angle-bound-main}.}
\label{fig:min_angle_hist}
\end{figure}

\paragraph{Theoretical comparison}
For the default thresholds $a = 0.26$, $b = 0.125$, $c = 0.183$, and triangle edge length $e = \sqrt{0.45}$, the geometric bounds in Property~\ref{prop:triangle-quality} give
\[
\begin{aligned}
\theta_{\min}
&> \min\Bigg\{
\arctan\!\left(\frac{b}{\,e + a - \sqrt{a^2 - b^2}\,}\right),
\;\arctan\!\left(\frac{c}{\,e + a\,}\right)
\Bigg\}
\;\approx\; 10.1^\circ,\\[0.3em]
A_{\min}
&>\; \tfrac{1}{2} b c
\;\approx\; 1.14\times 10^{-2}.
\end{aligned}
\]
The experimentally observed minima in Table~\ref{tab:mesh_quality_sbmt} are strictly above these thresholds, confirming that SBMT consistently produces meshes with strong geometric guarantees even in domains with high curvature or narrow features.

\begin{table}[htbp]
\centering
\caption{Geometric quality of SBMT meshes on two domains. 
Slivers are triangles with minimum angle $<5^\circ$.
Equilateral ratio is the fraction of triangles classified as equilateral by the edge-length test.}
\label{tab:mesh_quality_sbmt}
\begin{tabular}{lcccc}
\toprule
\textbf{Domain} & \textbf{Min angle} & \textbf{Min area} & \textbf{Slivers} & \textbf{Equilateral ratio} \\
\midrule
Stomach & $11.4^\circ$ & $0.0134$ & $0$ & $183{,}063 / 188{,}239 \approx 97.25\%$ \\
Star    & $10.32^\circ$ & $0.0133$ & $0$ & $66{,}540 / 70{,}292 \approx 94.66\%$ \\
\bottomrule
\end{tabular}
\end{table}

To complement these scalar indicators, Figure~\ref{fig:min_angle_hist} visualizes the full distributions of per-triangle minimum angles on both domains.

\paragraph{Resolution remark.}
SBMT is intentionally formulated in a \emph{bitmap-native} regime: the raster resolution is treated as the prescribed discretization of the domain, and---in the absence of any task-specific error indicator or weighting---pixels are regarded as an equal-weight sampling grid for downstream geometric and PDE
operators. Accordingly, throughout this section we fix a single-resolution equilateral base mesh to isolate the effect of the symbolic retriangulation and to maintain stable, uniform discrete differential stencils near boundaries.
This choice may lead to dense meshes even on simple shapes; however, SBMT targets \emph{geometry-faithful, PDE-ready discretization} rather than feature-adaptive compression or minimal-element approximation, trading special-case optimality for
generality, determinism, and predictable solver behavior.

A higher or lower sampling density can be obtained by changing the base edge length $e$ (or, equivalently, the upstream raster resolution) prior to meshing. Designing a fully adaptive, multi-level variant of SBMT (with refinement or coarsening away from the boundary) is orthogonal to our focus here and left as future work.

\subsection{Ablation Study on Threshold Rules}
\label{subsec:ablation}

To assess the contribution of each geometric preprocessing rule in SBMT, 
we ablate the three thresholds introduced in
Section~\ref{subsubsec:theoretical-properties} and
Algorithm~\ref{alg:sbmt-preprocess}, namely
\emph{vertex snapping} (threshold $a$), \emph{foot-based projection}
(threshold $c$), and \emph{edge elimination} (threshold $b$).

We then selectively disable individual rules while keeping the others active,
leading to the following five configurations:
\[
\textbf{E1:}~\text{Full (all rules on)},\quad
\textbf{E2:}~\text{No snapping},\quad
\textbf{E3:}~\text{No foot-based projection},
\]
\[
\textbf{E4:}~\text{No edge elimination},\quad
\textbf{E5:}~\text{All rules off}.
\]

Each configuration is evaluated on the stomach and star domains using three geometric indicators:
(i) minimum interior angle (in degrees),
(ii) minimum triangle area,
and (iii) the number of \emph{slivers}, defined as triangles whose smallest angle is below \(5^\circ\).

\paragraph{Findings}
The full pipeline (E1) produces sliver-free meshes with comfortable safety margins above the theoretical lower bounds in Theorems~\ref{thm:min-angle-bound-main} and~\ref{thm:min-area-bound-main}. 
Disabling snapping (E2) causes only mild degradation, mainly near tightly curved boundaries where vertices no longer align exactly with intersection points.

In contrast, removing repulsion (E3) or edge deletion (E4) leads to a sharp collapse in geometric quality. 
Minimum angles approach zero, the minimal areas drop by several orders of magnitude, and hundreds of sliver elements appear. 
This behavior is consistent with the geometric role of these rules:
repulsion prevents vertices from collapsing toward boundary segments, 
while edge deletion removes short, intersected edges that would otherwise generate highly obtuse or needle-like triangles.

\begin{table}[htbp]
\centering
\scriptsize
\setlength{\tabcolsep}{7pt}
\caption{Ablation of geometric preprocessing rules. 
E1 (full pipeline) consistently yields the highest mesh quality.}
\label{tab:ablation}
\renewcommand{\arraystretch}{1.15}
\setlength{\tabcolsep}{5pt}
\begin{tabular}{lccc|ccc}
\toprule
& \multicolumn{3}{c|}{\textbf{Stomach}} & \multicolumn{3}{c}{\textbf{Star}} \\
\textbf{Config} 
& Min angle & Min area & Slivers 
& Min angle & Min area & Slivers \\
\midrule
E1 – Full        
& \textbf{11.42} & \textbf{0.01338} & \textbf{0}
& \textbf{10.32} & \textbf{0.0133} & \textbf{0} \\
E2 – No snapping     
& 2.52  & 0.00636  & 1 
& 2.19  & 0.00104  & 4 \\
E3 – No repulsion  
& 0.0272  & $2\times 10^{-8}$ & 215 
& 0.05   & $1\times 10^{-7}$ & 190 \\
E4 – No deletion   
& 0.0331  & $4\times 10^{-5}$ & 87 
& 0.16    & $1.6\times 10^{-4}$ & 50 \\
E5 – All off      
& 0.0199  & $2\times 10^{-8}$ & 609 
& 0.001   & $1\times 10^{-7}$ & 508 \\
\bottomrule
\end{tabular}
\end{table}

When all preprocessing is disabled (E5), the mesh degenerates severely in both domains, 
with extremely small angles and nearly zero-area elements.
This confirms that the geometric filters are not cosmetic, but are essential to maintaining the integrity of the retriangulation, especially in regions with sharp or densely sampled boundaries.

Overall, the ablation study shows that the three threshold-based rules work in a strongly complementary fashion. 
Snapping improves alignment and robustness, while repulsion and edge deletion are critical for preventing pathological configurations and guaranteeing usable element quality for downstream numerical solvers.

\vspace{0.8em}
\subsection{Quantitative Comparison with Constrained Delaunay Triangulation}
\label{sec:comparative-eval}

We compare SBMT against two widely used constrained triangulation tools:
Shewchuk's Triangle~\cite{Shewchuk1996Triangle} and Gmsh.
For all three methods, the \emph{same prescribed polygonal boundary}
(PSLG extracted from the raster domain) is used as input.
Triangle is invoked with the option string
pq20a0.1948, i.e., a constrained Delaunay mesh with a
$20^\circ$ minimum-angle constraint and a maximum-area bound chosen to
keep the element scale comparable to that of SBMT.
For the Gmsh baseline, we use Gmsh~4.15.0 with the same closed PSLG
boundary and a uniform characteristic length
$l_c = e = \sqrt{0.45}$, matching the SBMT base-mesh scale.
The mesh is generated as a straight-edge planar surface using the
2D Delaunay algorithm (Mesh.Algorithm = 5); to avoid hidden
size adaptation, we fix the global size bounds to
Mesh.CharacteristicLengthMin = Mesh.CharacteristicLengthMax = $l_c$
and disable curvature- and boundary-extension-based size-field effects.

Three benchmark domains are used: a five-pointed star (sharp, nonconvex
corners), a bulged droplet (smooth, high-curvature boundary), and a
handwritten ``Y'' (branching junctions).
For each domain and method we report: minimum interior angle, minimum
area, variance of triangle areas, the total number of triangles, the
number of slivers (angle $<5^\circ$), and the number of triangles
classified as equilateral by the edge-length test.

\begin{table}[htbp]
\centering
\scriptsize
\setlength{\tabcolsep}{7pt}
\caption{Comparison between SBMT, Triangle (CDT), and Gmsh on three domains.
Sliver: triangle with minimum angle $<5^\circ$.
Equilateral: triangles classified as equilateral by the edge-length test.}
\label{tab:comparison-combined}
\begin{tabular}{llcccccc}
\toprule
\textbf{Domain} & \textbf{Method} & \textbf{Min Angle} & \textbf{Min Area} & \textbf{Area Var.} & \textbf{Count} & \textbf{Slivers} & \textbf{Equilateral} \\
\midrule
\multirow{3}{*}{Star} 
& SBMT     & $10.32^\circ$ & $0.0133$ & $0.000645$ & $70{,}292$  & $0$ & $66{,}540$ \\
& Triangle & $20.07^\circ$ & $0.0615$ & $0.00084$ & $107{,}097$ & $0$ & $3{,}438$ \\
& Gmsh     & $31.29^\circ$ & $0.0590$ & $0.00153$ & $82{,}315$ & $0$ & $0$ \\
\midrule
\multirow{3}{*}{Droplet} 
& SBMT     & $11.63^\circ$ & $0.0131$ & $0.00039$ & $84{,}271$  & $0$ & $81{,}553$ \\
& Triangle & $20.12^\circ$ & $0.0583$ & $0.00084$ & $129{,}581$ & $0$ & $4{,}249$ \\
& Gmsh     & $32.08^\circ$ & $0.0621$ & $0.00147$ & $96{,}613$ & $0$ & $0$ \\
\midrule
\multirow{3}{*}{Y-shape} 
& SBMT     & $10.76^\circ$ & $0.0132$ & $0.00124$ & $35{,}300$  & $0$ & $31{,}341$ \\
& Triangle & $20.23^\circ$ & $0.0579$ & $0.00085$ & $52{,}692$  & $0$ & $1{,}602$ \\
& Gmsh     & $31.69^\circ$ & $0.0523$ & $0.00158$ & $40{,}623$  & $0$ & $0$ \\
\bottomrule
\end{tabular}
\end{table}

\paragraph{Global statistics}
All three methods produce sliver-free meshes under the $5^\circ$
criterion.
As expected, Triangle and Gmsh achieve substantially larger
\emph{minimum} angles, since they are mature global meshing pipelines
explicitly designed to optimize worst-case element quality.
SBMT is not intended to compete with them in that regime.
Instead, it targets a different objective: exact local boundary embedding
on a fixed regular scaffold, with deterministic lookup-driven updates and
without global connectivity changes.

This distinction is reflected in the statistics of
Table~\ref{tab:comparison-combined}.
SBMT preserves a highly regular equilateral interior and perturbs it only
in a narrow boundary band required to embed the prescribed PSLG exactly.
Its smaller minimum-area elements are therefore localized boundary-band
elements introduced by the template rules, rather than evidence of a
global loss of mesh quality.
Triangle and Gmsh, by contrast, redistribute refinement and connectivity
more globally, leading to larger minimum angles and larger minimum areas,
but also to a less structured interior tessellation.

For example, in the droplet domain SBMT generates $84{,}271$ triangles,
of which $81{,}553$ are classified as equilateral, while Gmsh produces
$96{,}613$ triangles and Triangle $129{,}581$ triangles.
Thus, the distinguishing advantage of SBMT is not merely that it starts
from an equilateral background grid, but that it \emph{retains} this
regular interior structure after exact local boundary embedding through a
finite template system, without global remeshing or nonlocal refinement
cascades.
This property is particularly attractive for raster-derived domains,
where structured interiors, limited modification zones, and predictable
locality are often as important as classical mesh-quality metrics.

This behavior is further quantified by the aspect-ratio statistics in
Table~\ref{tab:ar_stats}.
We use the standard quality metric
\[
\mathrm{AR} := \frac{\text{longest edge}}{\text{shortest altitude}},
\]
for which an equilateral triangle satisfies
$\mathrm{AR} = 2/\sqrt{3} \approx 1.1547$.
We report the median, 95th percentile, and maximum aspect ratio for each
domain and method.

\begin{table}[htbp]
\centering
\scriptsize
\setlength{\tabcolsep}{7pt}
\caption{Aspect-ratio statistics ($\mathrm{AR} = \text{longest edge} / \text{shortest altitude}$)
for three domains and three meshing methods. 
$\mathrm{AR}_\mathrm{med}$: median; $\mathrm{AR}_{95}$: 95th percentile; 
$\mathrm{AR}_\mathrm{max}$: maximum.}
\label{tab:ar_stats}
\begin{tabular}{llccc}
\toprule
\textbf{Domain} & \textbf{Method} 
& $\mathbf{AR_\mathrm{med}}$ 
& $\mathbf{AR_{95}}$ 
& $\mathbf{AR_\mathrm{max}}$ \\
\midrule
\multirow{3}{*}{Star}
& SBMT     & $1.1547$ & $1.2434$ & $7.7902$ \\
& Triangle & $1.6449$ & $2.4672$ & $5.4555$ \\
& Gmsh     & $1.2586$ & $1.4980$ & $1.8879$ \\
\midrule
\multirow{3}{*}{Droplet}
& SBMT     & $1.1547$ & $1.1547$ & $7.7916$ \\
& Triangle & $1.6469$ & $2.4693$ & $5.4595$ \\
& Gmsh     & $1.2570$ & $1.4947$ & $1.8802$ \\
\midrule
\multirow{3}{*}{Y-shape}
& SBMT     & $1.1547$ & $2.0063$ & $7.7905$ \\
& Triangle & $1.6523$ & $2.4864$ & $5.4261$ \\
& Gmsh     & $1.2583$ & $1.4929$ & $1.8677$ \\
\bottomrule
\end{tabular}
\end{table}

For SBMT, the median aspect ratio is exactly $2/\sqrt{3}$ for all three
domains, and the 95th percentile remains close to this ideal value
(especially for the star and droplet cases).
This confirms the design goal of SBMT: keep the overwhelming majority of
elements near-equilateral, while confining geometric distortion to a thin
boundary strip.
Triangle and Gmsh, in contrast, achieve stronger global worst-case shape
control, but do so through broader connectivity changes and a more
heterogeneous distribution of element shapes.

From the perspective of bitmap-derived domains, this trade-off is
well motivated.
The raster image assigns equal geometric status to each pixel cell, and
many PDE-based image operators (e.g., diffusion, Poisson reconstruction,
and Hodge/Laplace solves) benefit from isotropic interior sampling and
from stable, predictable local stencils.
Maintaining an almost uniform equilateral interior is therefore not a
cosmetic preference: it is a structural property of the discretization
that is advantageous for repeated numerical processing on image-derived
domains.
Equally important, the lookup-driven SBMT updates are boundary-local and
highly parallelizable, which makes the method especially appealing for
large raster domains and for repeated geometry-aware analysis where full
global remeshing would be unnecessarily expensive.

\paragraph{Downstream elliptic and parabolic workloads}
Beyond mesh statistics, we also evaluated the three meshing strategies in
downstream PDE settings.
The holomorphic-chart reconstructions below demonstrate that SBMT meshes
support stable elliptic computations on all three benchmark domains.
In addition, Supplementary Appendix~H reports a three-way parabolic
heat-diffusion comparison under matched finite-element settings,
confirming that SBMT produces geometry-aligned diffusion fronts while
remaining numerically stable.

\paragraph{Holomorphic reconstructions}
Figure~\ref{fig:comparison_holo} shows holomorphic 1-form
reconstructions on the three benchmark domains.
For each case, we compute a conformal parameterization using the
Gu--Yau algorithm~\cite{Gu2002ComputingCS}, obtain a holomorphic
1-form on the SBMT mesh (via discrete Hodge decomposition), and map the
resulting scalar potential back to the image plane as a texture.
These experiments illustrate that SBMT meshes are not only geometrically
well behaved, but also compatible with elliptic PDE solvers used in
structure-aware image processing.

\begin{figure}[htbp]
\centering
\renewcommand{\arraystretch}{1.0}
\setlength{\tabcolsep}{2pt}
\begin{tabular}{ccc}
\includegraphics[width=0.30\linewidth]{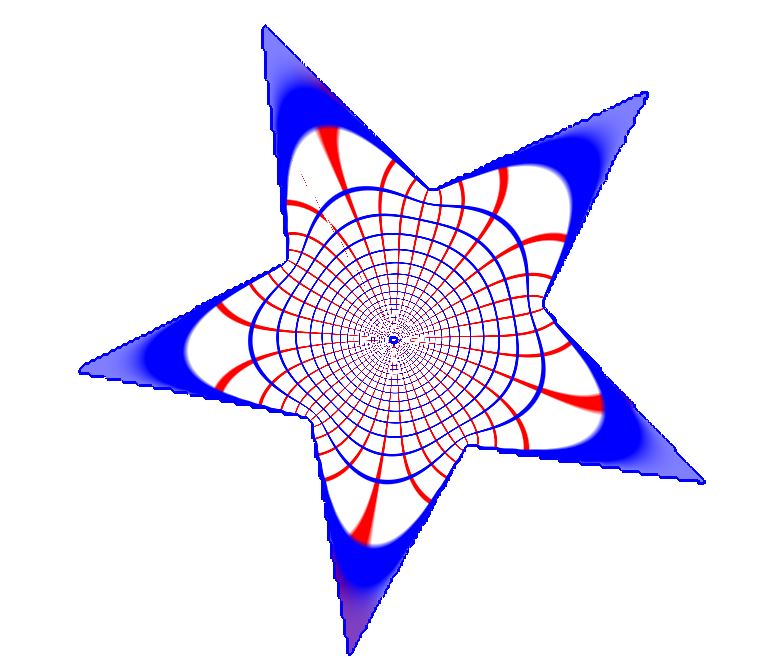} &
\includegraphics[width=0.30\linewidth]{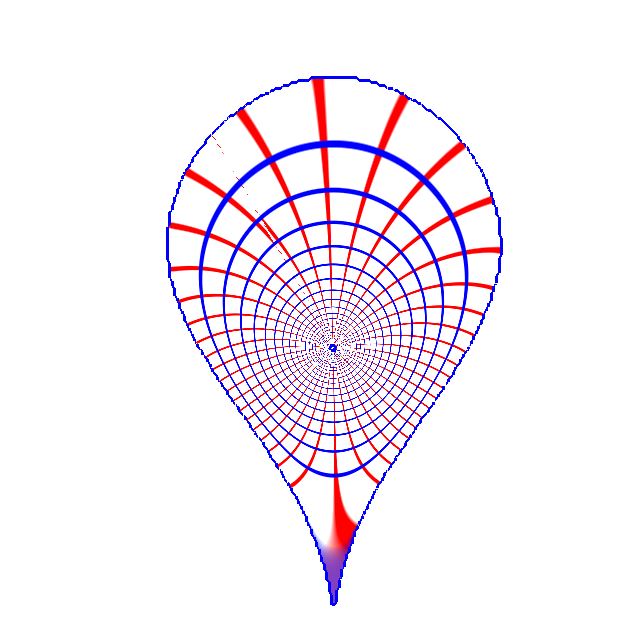} &
\includegraphics[width=0.21\linewidth]{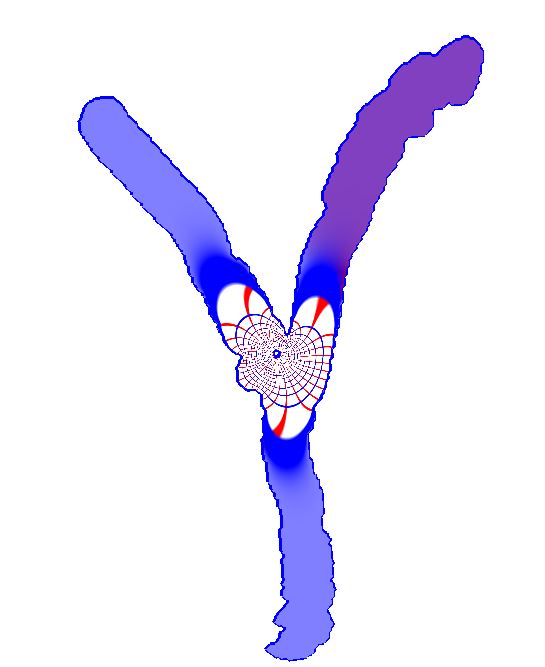} \\
\small (a) Star-shaped domain &
\small (b) Bulged droplet (blob) &
\small (c) Handwritten ``Y''
\end{tabular}
\caption{
Holomorphic 1-form reconstructions on SBMT meshes for the three benchmark domains.
The embedded boundary curves are faithfully aligned with the triangulated grid, 
and the induced scalar fields remain smooth up to the boundary, illustrating the suitability of SBMT for PDE-based image analysis.
}
\label{fig:comparison_holo}
\end{figure}

\paragraph{Local boundary behavior}
To visualize how each method behaves in the most challenging regions,
Figure~\ref{fig:local_triangulation_comparison} zooms into high-curvature
boundary tips on the star and Y-shaped domains.
In both examples, SBMT preserves an almost equilateral pattern up to the
boundary: the interior grid remains regular, and only a few triangles in
a thin boundary strip deviate from the ideal shape in order to embed the
polygonal contour exactly.
Triangle and Gmsh also satisfy the boundary constraints, but their local
meshes display more irregular connectivity and a wider range of element
shapes near the constrained edges, reflecting their global optimization
objectives.

From a geometric standpoint, the mild deviation from equilateral shapes
in this narrow boundary band is acceptable: it is spatially localized,
and the resulting small-area cells capture the fine-scale variation
induced by sharp corners or tight curvature.
For SBMT in particular, these boundary-band elements still satisfy the
theoretical quality bounds in Section~\ref{subsec:geom-quality}, and they
do not propagate into the interior.

\begin{figure}[t]
  \centering
  \includegraphics[width=\linewidth]{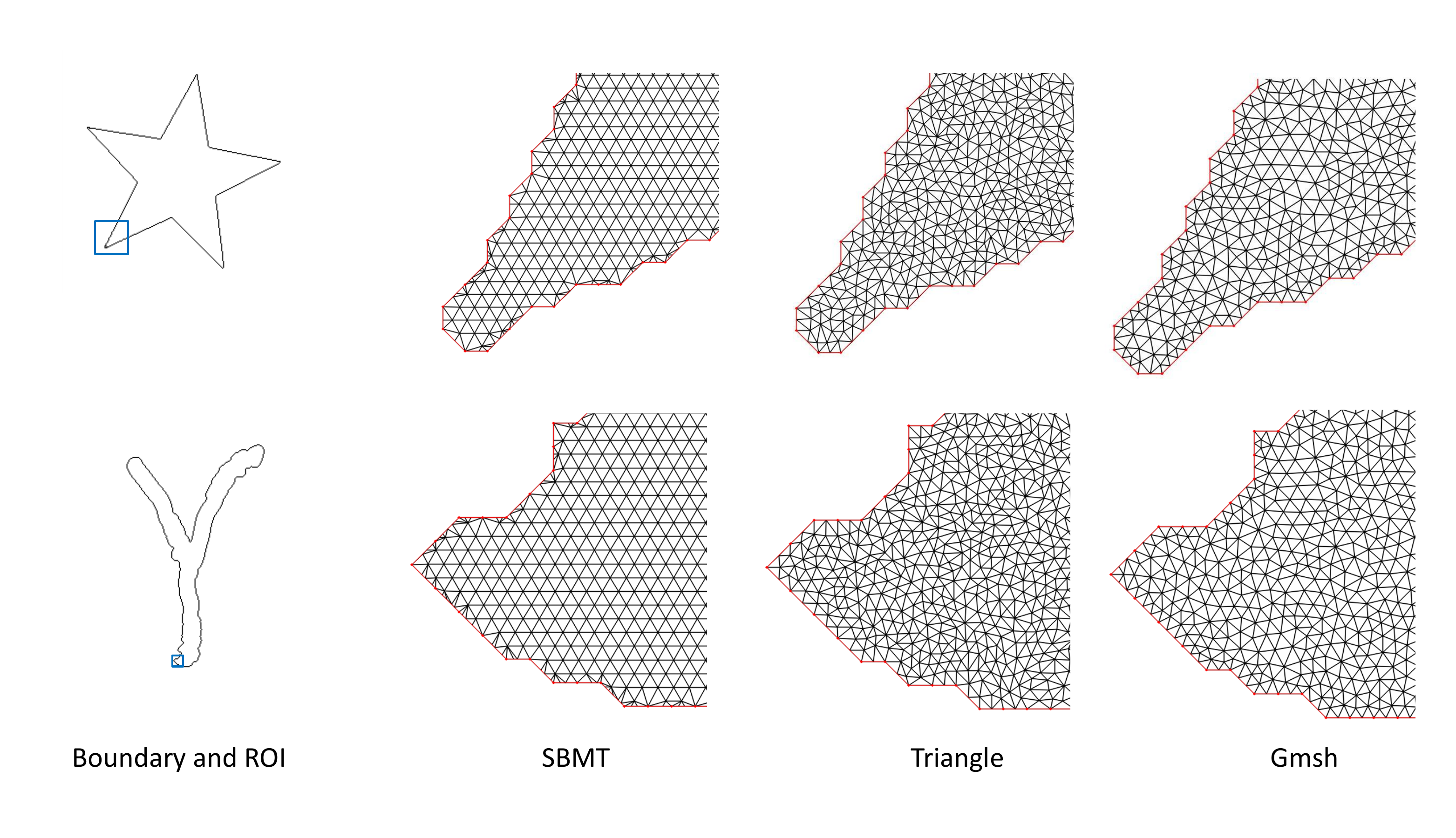}
  \caption{
  Local triangulation near high-curvature boundary tips on two canonical shapes.
  Top row: star-shaped domain; bottom row: Y-shaped domain.
  In each row, the left panel shows the binary boundary (black) and a region of interest (blue box).
  The three panels to the right display zoomed-in triangulations produced by SBMT, Triangle (constrained Delaunay), and Gmsh, respectively, with the raster boundary overlaid in red.
  SBMT preserves an almost equilateral pattern up to the boundary band, while Triangle and Gmsh introduce more irregular elements in order to satisfy their global angle and size objectives.
  }
  \label{fig:local_triangulation_comparison}
\end{figure}

\paragraph{Minimum-angle distributions}
Finally, we compare the full distributions of minimum triangle angles for
the three methods.
For each mesh, we compute the minimum interior angle of every triangle,
quantize it into bins of width $2^\circ$ over $[0^\circ,180^\circ)$, and
accumulate a histogram.
The histogram is then represented at the bin centers
($1^\circ, 3^\circ, 5^\circ, \dots$) and smoothed with a short Gaussian
kernel along the angle axis to produce a continuous curve.
Counts are displayed on a logarithmic scale to reveal both the bulk
behavior and the tails.

The resulting distributions for the three benchmark domains (star,
droplet, and Y-shape) are summarized in Figure~\ref{fig:minAngle_hist}.
In each panel, the horizontal axis is the minimum interior angle
(degrees) and the vertical axis is the triangle count (log scale), with
separate curves for SBMT, Triangle, and Gmsh.
SBMT exhibits a sharp peak near the equilateral angle ($60^\circ$),
reflecting its nearly uniform interior tiling, whereas Triangle and Gmsh
produce broader distributions with larger guaranteed minima but more
variation across the mesh.

\begin{figure}[htbp]
  \centering
  \setlength{\tabcolsep}{3pt}
  \renewcommand{\arraystretch}{1.0}
  \begin{tabular}{ccc}
    \includegraphics[width=0.32\linewidth]{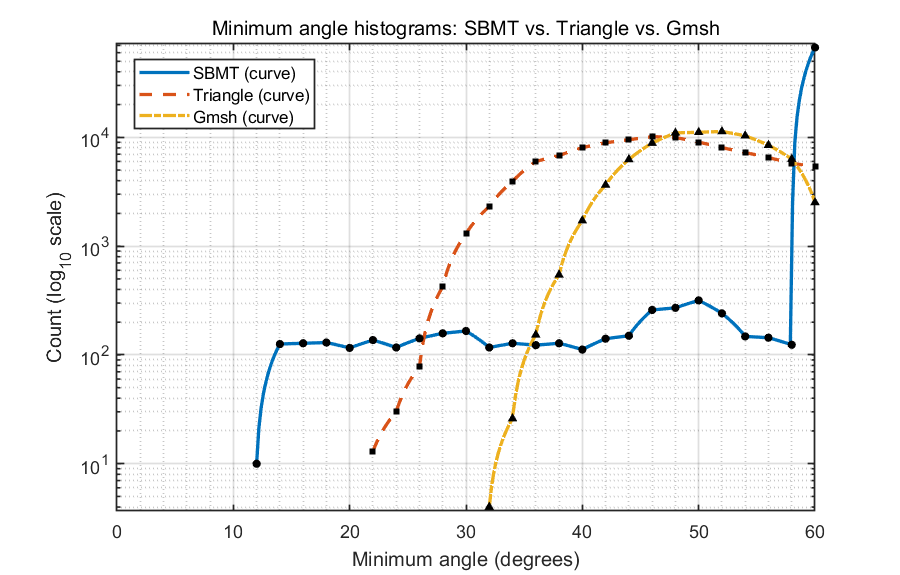} &
    \includegraphics[width=0.32\linewidth]{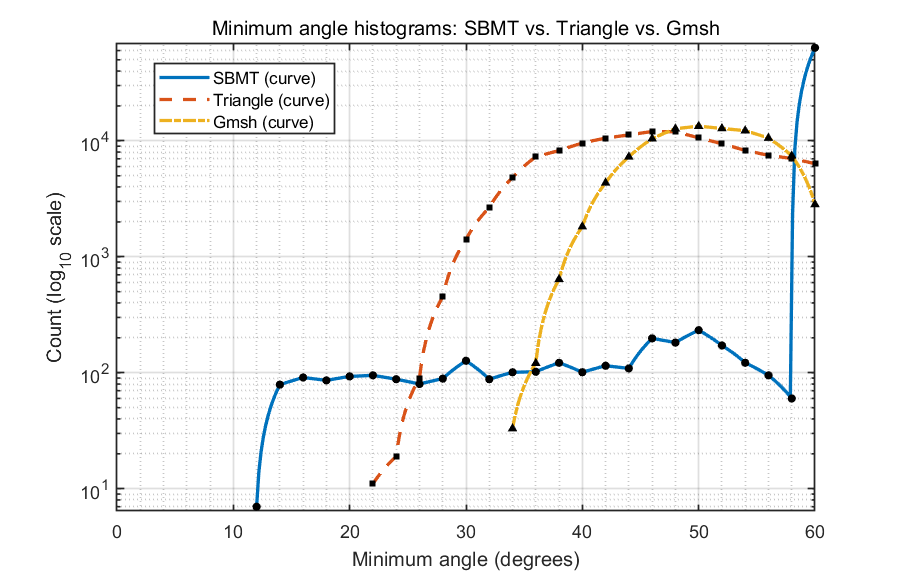} &
    \includegraphics[width=0.32\linewidth]{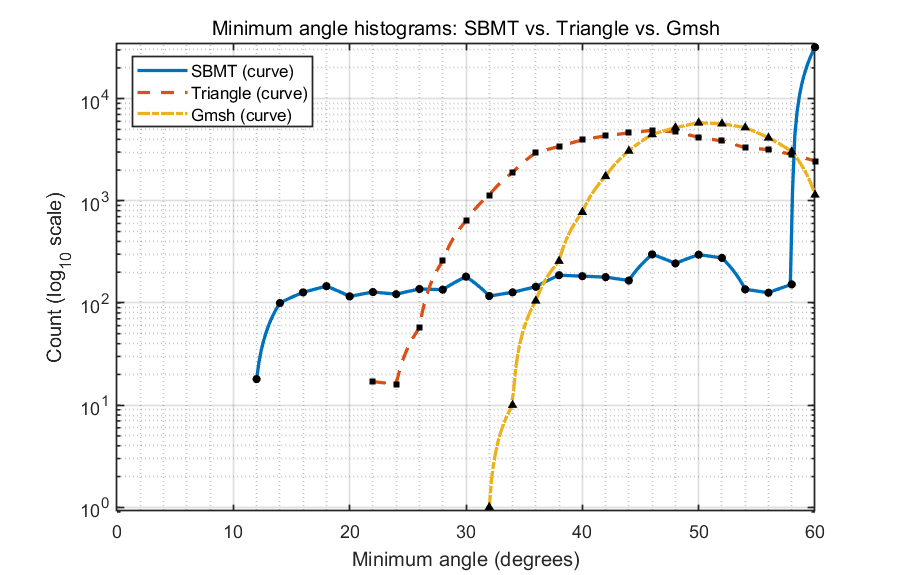} \\
    \small (a) Star shape &
    \small (b) Droplet shape &
    \small (c) Y-shape
  \end{tabular}
  \caption{
  Smoothed histograms of minimum triangle angles for the three benchmark domains:
  (a) star, (b) droplet, and (c) Y-shaped region.
  In each panel, the horizontal axis is the minimum interior angle (degrees) and the vertical axis is the triangle count (log scale). 
  Curves correspond to SBMT, Triangle, and Gmsh, as indicated in the legend.
  SBMT exhibits a sharp peak near the equilateral angle ($60^\circ$), reflecting its nearly uniform grid,
  whereas Triangle and Gmsh produce broader distributions with larger guaranteed minima but more variation across the mesh.
  }
  \label{fig:minAngle_hist}
\end{figure}

\paragraph{Summary}
Triangle and Gmsh remain strong constrained triangulation baselines, with
excellent minimum-angle guarantees and well-developed global optimization
machinery.
SBMT does not aim to outperform them on worst-case angle optimization.
Instead, it offers a complementary regime: a bitmap-native,
template-driven remeshing strategy that
(i) embeds the prescribed boundary exactly,
(ii) preserves an overwhelmingly equilateral and uniform interior
tessellation, and
(iii) achieves sliver-free meshes with explicit lower bounds on local
quality.
Its contribution is therefore not the use of an equilateral background
grid by itself, but the ability to preserve this structured scaffold
under exact local boundary embedding through a finite, deterministic, and
parallel-ready lookup system.
This makes SBMT a natural front-end for subpixel geometry and PDE-based
processing on raster images, and a suitable structural module for the
HoloRay framework developed in our companion work.
\subsection{Reproducibility}

We provide a C++ implementation of SBMT with all core modules and scripts to replicate Section~\ref{sec:experiments}, available at: \url{https://github.com/monge-ampere/SBMT}. The code is a single-threaded prototype for validating the method.

\section{Conclusions}
\label{sec:conclusions}
We introduced Structured Bitmap-to-Mesh Triangulation (SBMT), a template-driven, topology-preserving remeshing framework that converts raster-derived domains into high-quality triangular meshes with exact boundary embedding. By combining an equilateral base mesh with a finite symbolic retriangulation table, SBMT ensures bounded angles, local composability, and stateless execution without global connectivity updates.

Unlike CDT-based methods, SBMT resolves boundary intersections through conflict-free local templates, enhancing mesh regularity and geometric fidelity. Experiments on synthetic, anatomical, and real-world domains demonstrate that SBMT produces sliver-free, boundary-aligned meshes and more coherent isotherms in heat diffusion, with slower yet more geometry-respecting propagation compared to Triangle.

SBMT’s symbolic, parallel-ready architecture enables scalable implementations, including GPU acceleration. It naturally supports multiply connected and open-boundary domains, and extends to level sets and polygonal regions.

Future directions include coupling SBMT with elliptic and transport PDE solvers, introducing adaptive refinement under geometric constraints, and formalizing the template space algebraically. SBMT thus offers a robust, structure-aware meshing strategy for image-based simulations, with clear benefits in accuracy, efficiency, and scalability.

\appendix
\section{Categorical and Algebraic Structure of SBMT}
\label{appendix:category}

This appendix formalizes the template-driven structure of SBMT using categorical language and algebraic constructions~\cite{maclane1998}. These abstractions
clarify the uniqueness and composability of lookup templates, and formalize a qualified order-independence (path independence under independent reordering), thereby positioning SBMT as a symbolic, deterministic, and conflict-free remeshing system.

We organize this section into five parts:
\begin{enumerate}
  \item Defining a symbolic configuration system (objects and schedules) induced by atomic intersection types;
  \item Interpreting the deterministic lookup mapping (from configuration keys to patches) in categorical terms when convenient;
  \item Proving uniqueness/injectivity of the lookup mapping up to triangle symmetries;
  \item Introducing the induced word algebra of template actions (a free monoid/semigroup), with commutation relations from disjoint supports;
  \item Proving qualified path independence as trace equivalence under independent reorderings, under the fixed registry/tolerance conventions.
\end{enumerate}

\subsection{Symbolic Configuration Category}

Let \(\mathcal{C}\) be the finite set of all atomic intersection types between segment paths and triangle interiors. Each element \(c \in \mathcal{C}\) represents a local configuration (e.g., which triangle edges are intersected, their order and orientation). We define a small category \(\mathsf{Config}\) as follows:

\begin{itemize}
  \item \textbf{Objects:} Finite ordered subsets of \(\mathcal{C}\), representing interaction sequences along a boundary path;
  \item \textbf{Morphisms:} A morphism \(f: C_1 \to C_2\) corresponds to a symbolic transition between configuration sequences, typically reducing complexity (e.g., moving toward a canonical empty configuration).
\end{itemize}

The symbolic configuration flow of SBMT is fully captured within this category, with compositional morphisms reflecting rewrites or reductions in the symbolic intersection domain.

\subsection{Functorial Lookup Mechanism}

Let \(\mathsf{Template}\) be the category of retriangulation templates, where:

\begin{itemize}
  \item \textbf{Objects:} Finite ordered subsets of retriangulation templates, each assigned to a configuration in \(\mathcal{C}\);
  \item \textbf{Morphisms:} Symbolic operations such as template concatenation, commutation, and simplification.
\end{itemize}

Then, the SBMT symbolic lookup defines a functor:
\[
F: \mathsf{Config} \to \mathsf{Template}
\]
This functor assigns to each atomic configuration \(c \in \mathcal{C}\) a deterministic retriangulation template \(F(c)\). For any morphism \(f: C_1 \to C_2\), the functor maps it to a composite morphism \(F(f): F(C_1) \to F(C_2)\), preserving the structure of template-driven remeshing.

\subsection{Injectivity and Consistency Properties}

\begin{lemma}[Injectivity up to Symmetry]
\label{lemma:injectivity}
Let \(\sim\) denote an equivalence relation over \(\mathcal{C}\) induced by triangle symmetry group actions (e.g., edge permutations). Then \(F\) is injective on \(\mathcal{C}/\sim\):
\[
c_1 \not\sim c_2 \quad \Rightarrow \quad F(c_1) \neq F(c_2)
\]
\end{lemma}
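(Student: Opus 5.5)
I would prove the injectivity statement by making explicit how $F$ acts on atomic configurations. It does so through the canonical key $\kappa$ of Section~\ref{subsubsec:template-retriangulation} and the single-valued lookup $\mathcal{L}$ of Assumption~\ref{assump:frozen-lookup}. The plan is to define an invariant that can be read off any template and show that it recovers the $\sim$-class of the configuration that produced it. The contrapositive, $F(c_1)=F(c_2)\Rightarrow c_1\sim c_2$, is then immediate.

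\textbf{Step 1: a canonical representative.} Fix the symmetry group $D_3$ of the equilateral cell, acting on edge labels and on ordered, edge-attributed intersection records. For $c\in\mathcal{C}$, let $\bar c$ be its canonical representative under the fixed symmetry-breaking rule, so that $c_1\sim c_2$ if and only if $\bar c_1=\bar c_2$. The counting convention of Algorithm~\ref{alg:intersection-type} and the bitmask encoding (AB $=2$, BC $=4$, CA $=8$) are $D_3$-equivariant once the edge labels are permuted, so $\bar c$ is well defined.

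\textbf{Step 2: a decoding map.} Construct a map $D$ from template objects to $\mathcal{C}/\sim$. For a canonical template $\mathcal{P}_\tau=(\bar V_\tau,\bar F_\tau)$, read off:
\begin{enumerate}
\item which boundary-edge labels carry inserted points $\bar q_j$, in their parametric order along each edge;
\item which vertex labels $\bar A,\bar B,\bar C$ are marked as lying on a segment;
\item which interior points $\bar p_i$ are present, and which template edges are flagged as embedded boundary edges, grouped by segment in the chain order used to define $(m,n)$.
\end{enumerate}
This data is exactly the key $\kappa$ of $\bar c$. The reason is that every template embeds its segments exactly, with vertices placed at all intersection events (the exact-embedding theorem and Lemma~\ref{lem:segment-triangle-bound}(ii)). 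Once this is established, we have $D(F(c))=[c]$, and $F$ is injective on $\mathcal{C}/\sim$.

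\textbf{Main obstacle.} The hard step is showing that two inequivalent configurations in the same coarse class, for instance the several $(1,1)$ or $(2,2)$ subcases of Figure~\ref{fig:1_1} and Appendix~F, never share a template. This can fail when the triangulation of a cavity admits the same face list for different boundary placements. I would handle it by treating the template as a labelled patch rather than as an unlabelled face list: the embedded-edge flags and the attribution of $\bar q_j$ to edges form part of the template object. Then it suffices to check, over the finite list $(0,1),\dots,(2,3)$ of Theorem~\ref{thm:lookup_table_completeness}, that distinct canonical keys carry distinct edge-attribution data. That check is a finite case comparison and would be done by table inspection in Appendix~G.

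A subtle point remains about configurations with a nontrivial $D_3$ stabilizer, such as the symmetric vertex cases. For these I would verify that the canonicalization tag $\sigma(K)$ picks a single representative, so that the decoding in Step 2 is well defined and not multi-valued.
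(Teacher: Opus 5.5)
Your argument is correct, and it is more explicit than the paper's. The paper simply asserts that SBMT enumerates the classes of $\mathcal{C}/\sim$ and attaches to each a unique symbolic identifier and template, so injectivity holds by bookkeeping. You instead build a left inverse $D$ with $D(F(c))=[c]$. This exposes something the paper's proof glosses over: read as bare face lists, the Appendix~G templates are \emph{not} injective. For instance, $(1,2)$(d1), $(2,2)$(d1), $(2,3)$(a1) and the $(1,3)$ case all emit $\{\triangle ABP_2,\triangle BCP_2\}$, and $(1,2)$(c2) and $(2,2)$(b2) both emit $\{\triangle P_2AQ,\triangle QBP_2,\triangle BCP_2\}$. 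These configurations carry different $(m,n)$ labels, so they cannot be $\sim$-equivalent. Your move to labelled patches is therefore necessary, not cosmetic. Once it is made, $D$ is essentially a projection onto the key, and the lemma is as much true by construction as in the paper. What your version buys is a precise statement of what a template object must carry, plus a finite, checkable residue. The paper's version buys brevity, at the price of hiding that the geometry of a template does not determine its class.

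Two small repairs are needed.
\begin{itemize}
\item Do not derive ``the read-off data is exactly $\kappa$'' from the exact-embedding theorem. Its proof goes through Lemma~\ref{lemma:unique_template}, whose proof in turn invokes the very injectivity you are proving. Instead, check the per-template fact directly in Appendix~G, as you already plan to: every intersection event is a template vertex, and every embedded piece is a union of template edges.
\item The colinear-overlap priority in Algorithm~\ref{alg:intersection-type} (vertex $A$, then $B$, \dots) is label-dependent, so the counting is not literally $D_3$-equivariant. Define $\bar c$ as the minimal record over the six relabellings instead; this needs no equivariance. It also makes your stabilizer concern moot, since $D$ only has to land in $\mathcal{C}/\sim$.
\end{itemize}
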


\begin{proof}
SBMT explicitly enumerates all geometrically distinct atomic configurations in \(\mathcal{C}/\sim\), assigning to each a unique symbolic identifier and retriangulation template. Because symmetry-equivalent classes are collapsed and all remaining distinctions preserved, the mapping is injective on the quotient.
\end{proof}

\begin{lemma}[Floating-Point Consistency of Template Mapping]
\label{lemma:precision}
Assume all geometric predicates in the lookup procedure are evaluated using a fixed global tolerance \(\epsilon > 0\). Then for any configuration \(c \in \mathcal{C}\), its image \(F(c)\) produces numerically conforming template subdivisions across adjacent triangles up to \(\mathcal{O}(\epsilon)\).
\end{lemma}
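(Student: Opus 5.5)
To prove Lemma~\ref{lemma:precision}, I would reduce global conformity to a statement about a single shared base edge. The key structural fact is that every vertex a template places on a base edge $e$ is one of three things: a vertex of $e$ (possibly moved by Steps~1--2 of Algorithm~\ref{alg:sbmt-preprocess}), an intersection point $q_j$ of a boundary segment with $e$, or a segment endpoint $p_i$ lying on $e$. Interior template vertices never touch $\partial K$. So it is enough to show that the two triangles $T_i,T_j$ sharing $e$ produce the same ordered list of edge vertices, with coordinates that agree within $\mathcal{O}(\epsilon)$.

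\textbf{Step 1: both triangles see the same events on $e$.} I would first argue that the set of events on $e$ comes from a single source. By Assumption~\ref{assump:frozen-lookup}, the records $\mathbf{q}_{\partial K}$ are read from the frozen global registry $\mathcal{R}$ rather than recomputed per triangle. Each event on $e$ is computed once from the pair (segment, $e$), in a canonical edge-traversal orientation such as lexicographic order of the edge's endpoints. Both $T_i$ and $T_j$ then read the same record.

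\textbf{Step 2: the classification is stable, so both triangles make the same topological decision.} The one place the two sides could disagree is a predicate near its threshold: a point that is crossing versus touching a vertex, or colinear versus not. Here I would use the $\epsilon$-snapping convention of the Remark after Theorem~\ref{thm:precision_conformity}. Any two candidate points within $\epsilon$ of each other are merged into one registry entry. Any point within $\epsilon$ of an edge vertex is identified with that vertex. Ties are broken by the fixed priority of Algorithm~\ref{alg:intersection-type}.

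Since each predicate is evaluated once per edge-level event, and not once per triangle, the resulting edge-attributed records are identical on both sides. The only remaining freedom is the choice of representative within an $\epsilon$-ball. That choice changes coordinates by at most $\epsilon$ and leaves the combinatorics unchanged.

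\textbf{Step 3: templates do not disturb the edge subdivision.} Let $\pi_{K,\tau}$ denote the map of Algorithm~\ref{alg:sbmt-template} that sends template vertex labels to registry points. For a fixed template $\mathcal{P}_\tau$, the faces incident to the edge label $\overline{AB}$ subdivide it exactly at the labeled edge points, in their parametric order. Lemma~\ref{lemma:unique_template} and Lemma~\ref{lemma:injectivity} make $F(c)$ single-valued. The symmetry tag $s\in D_3$ only relabels edges, so it does not change which registry points are placed on $e$. Hence the traces that $F(c_i)$ and $F(c_j)$ leave on $e$ are the same sequence of registry points.

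The coordinate error is controlled as follows. Moving a vertex by at most $\epsilon$ while snapping perturbs the line through $e$ by $\mathcal{O}(\epsilon)$. Because the admissible range constraints~\eqref{eq:admissible-range} bound the crossing angles away from zero, the recomputed intersection moves by at most $C\epsilon$, where $C$ depends on $\theta_{\min}$ from Theorem~\ref{thm:min-angle-bound-main}.

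\textbf{Main obstacle.} I expect Step~2 to be the hard part: ruling out a case where the two triangles could classify the same near-degenerate event differently. This can happen when a segment passes within $\epsilon$ of a shared vertex, which is also incident to other triangles. The fact to lean on first is that the registry is global and keyed by base-mesh vertices and edges, not by triangles. If that is not enough, I would use the bounds $a<e/2$ and $b<a/2$ (Remark~\ref{rem:no-snapping-collision}) to show that $\epsilon$-neighbourhoods of distinct events stay disjoint whenever $\epsilon<\min\{b,c\}/2$. Under that condition, merging is unambiguous.
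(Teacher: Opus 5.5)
Your proposal is correct and follows the same route as the paper. The paper's proof likewise fixes a shared edge $e=T_i\cap T_j$ and argues that deterministic edge traversal, together with a global registry of segment--edge intersections keyed by edge identity and compared up to $\epsilon$, forces both triangles to read the same intersection records and so to subdivide $e$ consistently within $\mathcal{O}(\epsilon)$; your additional ingredients (explicit $\epsilon$-merging and tie-breaking, the role of the $D_3$ tag, the crossing-angle constant via $\theta_{\min}$) only sharpen that argument, though single-valuedness of $F$ comes from Assumption~\ref{assump:frozen-lookup} rather than from the injectivity in Lemma~\ref{lemma:injectivity}.
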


\begin{proof}
Let \(T_i\) and \(T_j\) be adjacent triangles sharing edge \(e\), intersected by a boundary segment. To avoid inconsistencies due to floating-point deviation, SBMT enforces:

\begin{enumerate}
  \item Deterministic traversal and evaluation order of triangle edges;
  \item Global registry of segment--edge intersections indexed by edge identity and position, with numerical equality defined up to \(\epsilon\).
\end{enumerate}

This ensures that both \(T_i\) and \(T_j\) identify and interpret the same logical intersection configuration, invoking identical retriangulation templates. Resulting subdivisions align within a numerical envelope \(\mathcal{O}(\epsilon)\), maintaining mesh watertightness.
\end{proof}

\subsection{Word Algebra of Template Actions (Free Monoid Envelope)}
\label{sec:free-group}

To capture the compositional semantics of SBMT retriangulation, we construct a free group over symbolic morphisms in the retriangulation template category \(\mathsf{Template}\). While objects in \(\mathsf{Template}\) represent finite sequences of local triangulation states, it is the morphisms---symbolic transitions between these states---that encode the remeshing logic.

Let \(\mathrm{Mor}(\mathsf{Template})\) denote the set of symbolic morphisms between template sequences, including operations such as:
\begin{itemize}
  \item Concatenation of adjacent template actions;
  \item Commutation of non-overlapping substructures;
  \item Cancellation of inverse template pairs;
  \item Reduction via geometric symmetry.
\end{itemize}

We then define the free group:
\[
\mathbb{F}_{\mathsf{Template}} := \mathrm{Free}\left(\mathrm{Mor}(\mathsf{Template})\right)
\]
Each element in \(\mathbb{F}_{\mathsf{Template}}\) corresponds to a symbolic composition of template actions along a segment--triangle interaction path. These symbolic group words track the remeshing process globally while preserving reversibility and statelessness.

\paragraph{Remark (current prototype vs.\ group envelope)}
While the above free-group construction provides a convenient symbolic envelope to articulate reversible and coarsening extensions, the current SBMT prototype operates primarily in the forward (refinement/reshape) direction and is therefore more faithfully modeled by a free semigroup/monoid of template words (a word algebra) generated by atomic template actions, with commutation relations induced only by disjoint supports.

\subsection{Path-Independence Theorem (Sketch)}
\label{thm:path-independence-simplified}

\begin{theorem}[Path Independence of Lookup-Driven SBMT Retriangulation]
\label{thm:path-independence}
Assume the frozen-registry and single-valued lookup setting in
Assumption~\ref{assump:frozen-lookup}, and the edge-consistency guarantee of
Lemma~\ref{lemma:precision}. Let $\Gamma_1,\Gamma_2$ be two schedules that differ
only in the execution order of per-triangle patch-generation actions (with the
registry $\mathcal{R}$ kept fixed). Then
\[
F(\Gamma_1)\equiv F(\Gamma_2),
\]
i.e., they generate the same patch family $\{P_K\}_{K\in\mathcal{T}}$ and yield the
same assembled global mesh after deterministic stitching, up to the fixed
symmetry-breaking/tolerance conventions.
\end{theorem}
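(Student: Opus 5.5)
The plan is to reduce path independence to two facts: each patch is a pure function of frozen data, and the stitching stage is deterministic.

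\emph{Step 1 (per-patch determinism).} Fix a preprocessed mesh $\mathcal{T}$ and the frozen registry $\mathcal{R}$. For each $K\in\mathcal{T}$, the key $\kappa(K)=(\mathrm{type}(K),\mathbf{q}_{\partial K},\sigma(K))$ is computed from three ingredients:
\begin{enumerate}
\item the geometry of $K$;
\item the read-only records of $\mathcal{R}$ on $\partial K$;
\item the fixed $D_3$ canonicalization convention.
\end{enumerate}
None of these depends on the state of any other triangle. By Assumption~\ref{assump:frozen-lookup}, $P_K=\mathcal{L}(\kappa(K))$ is single-valued and generated out-of-place. So under any schedule $\Gamma$, the action executed for $K$ returns the same $P_K$. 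Lemma~\ref{lemma:unique_template} and Lemma~\ref{lemma:injectivity} confirm that no runtime choice among templates is ever made.

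\emph{Step 2 (commutation in the word algebra).} A schedule is a word in the free monoid of Section~\ref{sec:free-group}, with one letter $g_K$ per triangle. Each $g_K$ reads only frozen data and writes only into its own output slot. Hence any two letters $g_K$ and $g_{K'}$ have disjoint write supports and no read-after-write dependency, so $g_Kg_{K'}=g_{K'}g_K$ in the trace monoid.

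Two schedules $\Gamma_1,\Gamma_2$ that differ only in the order of these actions are permutations of the same multiset of letters. Since all letters pairwise commute, $\Gamma_1$ and $\Gamma_2$ are trace-equivalent. Consequently the output patch family $\{P_K\}_{K\in\mathcal{T}}$, viewed as a map indexed by $K$, is the same for both schedules. I would prove this by induction on the number of adjacent transpositions needed to turn $\Gamma_1$ into $\Gamma_2$.

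\emph{Step 3 (stitching).} The stitching map $\Phi$ sends a patch family to a global mesh. It identifies vertices on each shared base edge $e$ through the registry entries attached to $e$, using the same edge identity and parametric order from both sides. By Lemma~\ref{lemma:precision}, the two incident triangles induce the same edge trace up to the fixed tolerance $\epsilon$. Therefore $\Phi$ depends only on the patch family as an indexed set, not on the order in which patches were produced. It follows that $\Phi(\{P_K\})$ is the same for $\Gamma_1$ and $\Gamma_2$, which gives $F(\Gamma_1)\equiv F(\Gamma_2)$.

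\emph{Main obstacle.} The step I expect to be hardest is justifying that stitching is a well-defined function of an unordered family. $\epsilon$-snapping is not transitive in general, so a chain of near-coincident points could be merged differently depending on the order in which they are visited. I would handle this by making all merges go through the frozen registry: every embedded point receives a canonical identifier (edge id plus sorted parameter) at registry construction time, before any schedule runs. Stitching then becomes pure identifier matching, with no tolerance comparisons made at stitch time. This is also why the theorem is stated only \emph{up to the fixed symmetry-breaking/tolerance conventions}.
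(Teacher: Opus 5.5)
Your proposal is correct and follows essentially the same route as the paper: each $g_K$ is a deterministic, out-of-place function of the frozen registry and $\kappa(K)$, so the actions pairwise commute and every reordering yields the same patch family, after which Lemma~\ref{lemma:precision} makes stitching deterministic. Your induction on adjacent transpositions and your identifier-based handling of non-transitive $\epsilon$-merging only spell out what the paper's sketch leaves implicit; the injectivity lemma is not needed here, since the single-valuedness of $\mathcal{L}$ in Assumption~\ref{assump:frozen-lookup} already rules out any runtime choice.
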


\begin{proof}[Sketch]
For each $K\in\mathcal{T}$, let $g_K$ denote the out-of-place action that produces the
unique patch $P_K=\mathcal{L}(\kappa(K))$.
By Assumption~\ref{assump:frozen-lookup}, $g_K$ is a deterministic function of the
frozen registry $\mathcal{R}$ and the local key $\kappa(K)$, and writes only to a
private buffer; hence for any $K\neq K'$ the patch-generation actions commute:
\[
g_K\circ g_{K'} \;=\; g_{K'}\circ g_K,
\]
so any reordering yields the same set of generated patches.

By Lemma~\ref{lemma:precision}, adjacent triangles share the same canonicalized
edge--intersection records along each base edge, so the induced edge traces of their
patches coincide and the stitching stage is deterministic. Therefore the assembled
mesh is independent of the schedule order, proving $F(\Gamma_1)\equiv F(\Gamma_2)$.
\end{proof}

\paragraph{Remark (discrete harmonicity / trivial-holonomy intuition).}
It is often useful to interpret the frozen registry $\mathcal{R}$ as prescribing a \emph{single-valued edge trace} (a discrete ``connection'') on the dual mesh: for every shared base edge $e=K\cap K'$, the two incident triangles read the same ordered
intersection records and thus induce the same subdivision of $e$.
Define the \emph{edge holonomy defect} (a discrete curl surrogate)
\[
\mathrm{hol}(e)\;:=\;\mathrm{trace}(P_K|_e)\;-\;\mathrm{trace}(P_{K'}|_e).
\]
Lemma~\ref{lemma:precision} gives $\mathrm{hol}(e)\equiv 0$ for all shared edges, so the accumulated defect along any dual cycle vanishes (trivial holonomy). This provides the geometric/PDE intuition behind schedule-independence: the lookup-driven remeshing field is curl-free in precisely the sense needed for deterministic patch assembly.

\subsection{Implications}
Together, these constructions imply that SBMT is (i) functorial and deterministic (stateless, with reversible composition in the symbolic envelope), (ii) schedule-independent, and (iii) amenable to algebraic verification and extension.

\subsection*{Outlook: Adaptive and Invertible Template Families}

The categorical viewpoint clarifies the status of the current SBMT prototype and its natural extensions. At the implementation level, SBMT uses only forward, refinement-type local templates on a fixed-resolution equilateral scaffold: each atomic configuration $c\in\mathcal{C}$ is mapped to a deterministic action $F(c)$ that refines or reshapes a bounded patch, without attempting to invert or coarsen previous updates.

Symbolically, the current prototype operates in the forward template-word model clarified above. Within the same framework, one can adjoin additional local generators---coarsening templates, multiresolution transfers between nested grids,
or quality-driven updates (e.g., Delaunay-like edge flips)---provided they remain local and respect the symmetry and numerical-consistency constraints encoded by the lookup functor $F:\mathsf{Config}\to\mathsf{Template}$.

Under these assumptions, the disjoint-support commutativity and the qualified order-independence results (path independence under independent reordering) extend to the enlarged generator set. Thus, the categorical formalization is not merely a reinterpretation of the current lookup table, but a structural basis for future adaptive and multiresolution SBMT variants that retain locality, parallel composability, and deterministic behavior.

\section*{Acknowledgments}
This work was supported in part by the National Natural Science Foundation of China (Grant Nos. 62571503 and 62171421) and in part by the TaiShan Scholar Youth Expert Program of Shandong Province (Grant No. tsqn202306096).

The authors would like to express their sincere gratitude to Prof.~Xianfeng Gu for his publicly available lectures and course materials on computational conformal geometry. Portions of the visualizations and code related to holomorphic 1-forms in this work were adapted from his instructional resources.

\bibliographystyle{elsarticle-num}  % or elsarticle-harv / elsarticle-num-names
\bibliography{references}

@article{shannon1949,
  author  = {Claude E. Shannon},
  title   = {Communication in the Presence of Noise},
  journal = {Proc. IRE},
  volume  = {37},
  number  = {1},
  pages   = {10--21},
  year    = {1949}
}

@book{paley1934,
  author    = {R. E. A. C. Paley and N. Wiener},
  title     = {Fourier Transforms in the Complex Domain},
  series    = {Amer. Math. Soc. Colloq. Publ.},
  volume    = {19},
  publisher = {American Mathematical Society},
  address   = {New York},
  year      = {1934}
}

@article{Gu2002ComputingCS,
  author  = {Xianfeng Gu and Shing-Tung Yau},
  title   = {Computing Conformal Structure of Surfaces},
  journal = {Commun. Inf. Syst.},
  volume  = {2},
  number  = {2},
  pages   = {121--146},
  year    = {2002}
}

@article{blelloch1999design,
  author  = {G.~E. Blelloch and G.~L. Miller and J.~C. Hardwick and D. Talmor},
  title   = {Design and implementation of a practical parallel {D}elaunay algorithm},
  journal = {Algorithmica},
  volume  = {24},
  pages   = {243--269},
  year    = {1999}
}

@inproceedings{parallelrefinement,
  author    = {Andrey Chernikov and Nikos Chrisochoides},
  title     = {Practical and Efficient Point Insertion Scheduling Method for Parallel Guaranteed Quality Delaunay Refinement},
  booktitle = {Proc. Int. Conf. Supercomputing (ICS)},
  pages     = {48--57},
  year      = {2004},
  doi       = {10.1145/1006209.1006217}
}

@article{chernikov2006parallel,
  author  = {Andrey Chernikov and Nikos Chrisochoides},
  title   = {Parallel Guaranteed Quality Delaunay Uniform Mesh Refinement},
  journal = {SIAM J. Sci. Comput.},
  volume  = {28},
  pages   = {1907--1926},
  year    = {2006}
}

@article{algo872,
  author  = {Andrey Chernikov and Nikos Chrisochoides},
  title   = {Algorithm 872: Parallel 2D Constrained Delaunay Mesh Generation},
  journal = {ACM Trans. Math. Softw.},
  volume  = {34},
  number  = {1},
  year    = {2008}
}

@article{DDM,
  author  = {Leonidas Linardakis and Nikos Chrisochoides},
  title   = {Delaunay Decoupling Method for Parallel Guaranteed Quality Planar Mesh Refinement},
  journal = {SIAM J. Sci. Comput.},
  volume  = {27},
  number  = {4},
  pages   = {1394--1423},
  year    = {2006}
}

@article{held2001fist,
  author  = {Martin Held},
  title   = {FIST: Fast Industrial-Strength Triangulation of Polygons},
  journal = {Algorithmica},
  volume  = {30},
  pages   = {563--596},
  year    = {2001}
}

@inproceedings{Shewchuk1996Triangle,
  author    = {Jonathan Richard Shewchuk},
  title     = {Triangle: Engineering a 2D Quality Mesh Generator and Delaunay Triangulator},
  booktitle = {Proc. Workshop Appl. Comput. Geom.},
  pages     = {203--222},
  year      = {1996}
}

@article{shewchuk2002delaunay,
  author  = {Jonathan Richard Shewchuk},
  title   = {Delaunay Refinement Algorithms for Triangular Mesh Generation},
  journal = {Comput. Geom.},
  volume  = {22},
  number  = {1--3},
  pages   = {21--74},
  year    = {2002}
}

@book{gonzales1987digital,
  author    = {Rafael C. Gonzales and Paul Wintz},
  title     = {Digital Image Processing},
  publisher = {Addison--Wesley},
  year      = {1987}
}

@inproceedings{shewchuk2002good,
  author    = {Jonathan Richard Shewchuk},
  title     = {What is a Good Linear Element? Interpolation, Conditioning, and Quality Measures},
  booktitle = {Proc. 11th Int. Meshing Roundtable (IMR)},
  year      = {2002}
}

@article{goodchild1992geographical,
  author  = {Michael F. Goodchild},
  title   = {Geographical Data Modeling},
  journal = {Comput. Geosci.},
  volume  = {18},
  number  = {4},
  pages   = {401--408},
  year    = {1992}
}

@book{leveque2007finite,
  author    = {Randall J. LeVeque},
  title     = {Finite Difference Methods for Ordinary and Partial Differential Equations: Steady-State and Time-Dependent Problems},
  publisher = {SIAM},
  year      = {2007}
}

@article{ruppert1995delaunay,
  author  = {J. Ruppert},
  title   = {A Delaunay Refinement Algorithm for Quality 2-Dimensional Mesh Generation},
  journal = {J. Algorithms},
  volume  = {18},
  number  = {3},
  pages   = {548--585},
  year    = {1995}
}

@inproceedings{chew1987constrained,
  author    = {L. Paul Chew},
  title     = {Constrained Delaunay Triangulations},
  booktitle = {Proc. 3rd Annu. Symp. Comput. Geom.},
  pages     = {215--222},
  year      = {1987}
}

@inproceedings{meyer2003discrete,
  author    = {Mark Meyer and Mathieu Desbrun and Peter Schr{\"o}der and Alan H. Barr},
  title     = {Discrete Differential-Geometry Operators for Triangulated 2-Manifolds},
  booktitle = {Vis. Math. III},
  pages     = {35--57},
  year      = {2003}
}

@book{li2004digital,
  author    = {Zhilin Li and Christopher Zhu and Chris Gold},
  title     = {Digital Terrain Modeling: Principles and Methodology},
  publisher = {CRC Press},
  year      = {2004}
}

@book{bankman2008handbook,
  author    = {Isaac Bankman},
  title     = {Handbook of Medical Image Processing and Analysis},
  publisher = {Elsevier},
  year      = {2008}
}

@article{chazelle1991triangulating,
  author  = {Bernard Chazelle},
  title   = {Triangulating a Simple Polygon in Linear Time},
  journal = {Discrete Comput. Geom.},
  volume  = {6},
  number  = {3},
  pages   = {485--524},
  year    = {1991}
}

@article{de1992line,
  author  = {Leila De Floriani and Enrico Puppo},
  title   = {An On-line Algorithm for Constrained Delaunay Triangulation},
  journal = {CVGIP Graph. Models Image Process.},
  volume  = {54},
  number  = {4},
  pages   = {290--300},
  year    = {1992}
}

@incollection{bern1995mesh,
  author    = {Marshall Bern and David Eppstein},
  title     = {Mesh Generation and Optimal Triangulation},
  booktitle = {Comput. Euclidean Geom.},
  pages     = {47--123},
  publisher = {World Scientific},
  year      = {1995}
}

@article{xiao2022image,
  author  = {Yanyang Xiao and Juan Cao and Zhonggui Chen},
  title   = {Image Representation on Curved Optimal Triangulation},
  journal = {Comput. Graph. Forum},
  volume  = {41},
  number  = {6},
  pages   = {23--36},
  year    = {2022}
}

@incollection{lorensen1998marching,
  author    = {William E. Lorensen and Harvey E. Cline},
  title     = {Marching Cubes: A High Resolution 3D Surface Construction Algorithm},
  booktitle = {Seminal Graph.},
  pages     = {347--353},
  publisher = {ACM},
  year      = {1998}
}

@article{treece1999regularised,
  author  = {Graham M. Treece and Richard W. Prager and Andrew H. Gee},
  title   = {Regularised Marching Tetrahedra: Improved Iso-Surface Extraction},
  journal = {Comput. Graph.},
  volume  = {23},
  number  = {4},
  pages   = {583--598},
  year    = {1999}
}

@inproceedings{hilton1996marching,
  author    = {Adrian Hilton and Andrew J. Stoddart and John Illingworth and Terry Windeatt},
  title     = {Marching Triangles: Range Image Fusion for Complex Object Modelling},
  booktitle = {Proc. 3rd IEEE Int. Conf. Image Process. (ICIP)},
  volume    = {2},
  pages     = {381--384},
  year      = {1996}
}

@article{leyva2023satellite,
  author  = {Israel Leyva-Mayorga and Marc Martinez-Gost and Marco Moretti and Ana Pérez-Neira and Miguel Ángel Vázquez and Petar Popovski and Beatriz Soret},
  title   = {Satellite Edge Computing for Real-Time and Very-High Resolution Earth Observation},
  journal = {IEEE Trans. Commun.},
  volume  = {71},
  number  = {10},
  pages   = {6180--6194},
  year    = {2023}
}

@article{hu2019triwild,
  author  = {Yixin Hu and Teseo Schneider and Xifeng Gao and Qingnan Zhou and Alec Jacobson and Denis Zorin and Daniele Panozzo},
  title   = {TriWild: Robust Triangulation with Curve Constraints},
  journal = {ACM Trans. Graph.},
  volume  = {38},
  number  = {4},
  pages   = {1--15},
  year    = {2019}
}

@book{demmel1997applied,
  author    = {James W. Demmel},
  title     = {Applied Numerical Linear Algebra},
  publisher = {SIAM},
  year      = {1997}
}

@phdthesis{kadow2004parallel,
  author       = {Clemens Martin Joachim Kadow},
  title        = {Parallel Delaunay Refinement Mesh Generation},
  school       = {Carnegie Mellon University},
  year         = {2004},
  type         = {Ph.D. dissertation}
}

@article{spielman2007parallel,
  author  = {Daniel A. Spielman and Shang-Hua Teng and Alper Üngör},
  title   = {Parallel Delaunay Refinement: Algorithms and Analyses},
  journal = {Int. J. Comput. Geom. Appl.},
  volume  = {17},
  number  = {1},
  pages   = {1--30},
  year    = {2007}
}

@article{KOHOUT2005491,
  author  = {Josef Kohout and Ivana Kolingerová and Jiří Žára},
  title   = {Parallel Delaunay Triangulation in $\mathbb{E}^2$ and $\mathbb{E}^3$ for Computers with Shared Memory},
  journal = {Parallel Comput.},
  volume  = {31},
  number  = {5},
  pages   = {491--522},
  year    = {2005}
}

@inproceedings{hoppe1993mesh,
  author    = {Hugues Hoppe and Tony DeRose and Tom Duchamp and John McDonald and Werner Stuetzle},
  title     = {Mesh Optimization},
  booktitle = {Proc. 20th Annu. Conf. Comput. Graph. Interact. Tech. (SIGGRAPH)},
  pages     = {19--26},
  year      = {1993}
}

@inproceedings{nealen2006laplacian,
  author    = {Andrew Nealen and Takeo Igarashi and Olga Sorkine and Marc Alexa},
  title     = {Laplacian Mesh Optimization},
  booktitle = {Proc. 4th Int. Conf. Comput. Graph. Interact. Tech. Australasia Southeast Asia},
  pages     = {381--389},
  year      = {2006}
}

@incollection{lawson1977software,
  author    = {Charles L. Lawson},
  title     = {Software for {C1} Surface Interpolation},
  booktitle = {Mathematical Software III},
  editor    = {John R. Rice},
  publisher = {Academic Press},
  year      = {1977},
  pages     = {161--194}
}

@book{maclane1998,
  author       = {Saunders Mac Lane},
  title        = {Categories for the Working Mathematician},
  series       = {Graduate Texts in Mathematics},
  volume       = {5},
  publisher    = {Springer},
  year         = {1998}
}

@article{idelsohn2006mesh,
  author  = {S.R. Idelsohn and E. O{\~n}ate},
  title   = {To mesh or not to mesh. That is the question{\dots}},
  journal = {Comput. Methods Appl. Mech. Engrg.},
  volume  = {195},
  number  = {37--40},
  pages   = {4681--4696},
  year    = {2006},
  publisher = {Elsevier}
}

@article{Ortiz1991,
  author    = {M. Ortiz and J. J. Quigley},
  title     = {Adaptive mesh refinement in strain localization problems},
  journal   = {Comput. Methods Appl. Mech. Engrg.},
  year      = {1991},
  volume    = {90},
  number    = {1--3},
  pages     = {781--804},
  publisher = {Elsevier}
}

@article{li2013tuned,
  author  = {P. Li and M. D. Adams},
  title   = {A tuned mesh-generation strategy for image representation based on data-dependent triangulation},
  journal = {IEEE Trans. Image Process.},
  volume  = {22},
  number  = {5},
  pages   = {2004--2018},
  year    = {2013}
}

@article{peyre2011review,
  author  = {G. Peyr{\'e}},
  title   = {A Review of Adaptive Image Representations},
  journal = {IEEE J. Sel. Top. Signal Process.},
  volume  = {5},
  number  = {5},
  pages   = {896--911},
  year    = {2011}
}

@article{lawonn2019stylized,
  author  = {K. Lawonn and T. G{\"u}nther},
  title   = {Stylized Image Triangulation},
  journal = {Comput. Graph. Forum},
  volume  = {38},
  number  = {1},
  pages   = {221--234},
  year    = {2019}
}

@article{Rippa92,
  author  = {S. Rippa},
  title   = {Adaptive approximation by piecewise linear polynomials on triangulations of subsets of scattered data},
  journal = {SIAM J. Sci. Stat. Comput.},
  volume  = {13},
  number  = {5},
  pages   = {1123--1141},
  year    = {1992},
  doi     = {10.1137/0913065}
}

\clearpage
\appendix
\section*{Appendix / Supplementary Material}
\addcontentsline{toc}{section}{Appendix / Supplementary Material}

\section*{Appendix B\\Proof of Lookup Table Completeness}
\label{app:lookup-completeness}

This appendix provides the proof of
Theorem~3.2 on the completeness of the
SBMT retriangulation lookup table.

\begin{proof}[Proof of Theorem~3.2]
By Lemma~3.1, each base triangle $T$ is
intersected by at most two consecutive boundary segments, say
$S_1$ and $S_2$, and the geometric protocol of
Section~3.2 bounds the number of embedded
intersection points on $\partial T$.

Under the intersection-counting convention of
Algorithm~1, each segment contributes an
integer count between $0$ and $3$: it may be disjoint from $T$ (count
$0$), meet one or two edges (counts $1$ or $2$, with vertex-on-segment
events counted twice), or intersect all three edges (count $3$).
Thus, in the $(k,\ell)$ notation, we necessarily have
$k,\ell \in \{0,1,2,3\}$ after merging coincident events induced by the
snapping and preprocessing rules.

The pair $(0,0)$ corresponds to the trivial case with no intersection
and therefore does not trigger any retriangulation.
Conversely, a configuration of type $(3,3)$ cannot occur: if both
$S_1$ and $S_2$ generated three intersections on $\partial T$, then
either there would be at least four distinct intersection points on the
three edges of $T$, contradicting Lemma~3.1,
or the two segments would share three distinct points and hence coincide
locally, which is incompatible with the boundary-chain model.

It follows that any nontrivial, remeshing-relevant configuration must
have $(k,\ell)$ among
\[
(0,1), (0,2), (0,3),\;
(1,1), (1,2), (2,2), (1,3),\;
\text{or } (2,3).
\]
Appendix~F gives an explicit enumeration (up to symmetry) of all
segment--triangle patterns consistent with
Lemma~3.1,
Algorithm~1, and the preprocessing rules, and
shows that each such pattern is represented by exactly one of the
canonical types listed above.
For each canonical type, Appendix~G provides at least one retriangulation
template, and the lookup table selects a template solely from the
symbolic label.

Hence every admissible, nontrivial interaction of boundary segments
with a triangle $T$ is covered by at least one entry of the lookup
table, establishing the claimed completeness.
\end{proof}

\section*{Appendix C\\Time Complexity Analysis}
\label{subsec:time-complexity}

We conclude the method section by analyzing the runtime complexity of the SBMT pipeline, which consists of three stages: (1) geometric preprocessing, (2) localized retriangulation, and (3) image-based interpolation.

Let \(n\) be the number of input boundary segments, \(N\) the number of base-mesh triangles, \(N_t\) the number of triangles intersected by boundaries, and \(m\) the average number of triangles influenced by a segment (typically constant due to local thresholds).

\begin{lemma}[Boundary-Conforming Preprocessing]
The runtime of geometric preprocessing, including vertex snapping, vertex repulsion, and edge elimination, is bounded by \(\mathcal{O}(n \log N + nm)\).
\end{lemma}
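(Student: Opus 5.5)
The plan is to bound the three preprocessing steps of Algorithm~\ref{alg:sbmt-preprocess} separately and add the costs. The cost splits into two parts. The first is the cost of \emph{locating} the base-mesh elements near each boundary segment, which gives the $n\log N$ term. The second is the cost of \emph{processing} the elements found, which gives the $nm$ term. For location, I would build once a point-location structure over the base mesh: a balanced hierarchy or a search tree over the regular grid, with $O(\log N)$ query time. Because the equilateral scaffold is regular, the construction can be charged to the base mesh itself, which the paper treats as given input. Each boundary segment endpoint is then located in $O(\log N)$, so locating all of them costs $O(n\log N)$.

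Next, starting from the located triangle, I would walk along each segment through adjacent triangles. By the length constraint of Section~\ref{sec:geometric-protocol} and Lemma~\ref{lem:segment-triangle-bound}, the triangles met by a segment, together with their $a$-, $b$-, $c$-neighbourhoods, form a connected set of size $O(m)$. This set consists of the triangles crossed plus a constant-width halo, because $a,b,c<e$ by~\eqref{eq:admissible-range}. The walk therefore enumerates exactly the candidate vertices and edges for Steps~1--3 in $O(m)$ time per segment. Over all segments this is $O(nm)$.

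Within these candidate sets, each step does constant work per element:
\begin{itemize}
  \item Step~1 (snapping) and Step~2 (repulsion) compute a distance and, if needed, one vertex move, each $O(1)$.
  \item Step~3 (edge elimination) tests the $O(1)$ edges near each boundary point. By Proposition~\ref{prop:decoupled-preprocessing}(iii), at most one edge per triangle is deleted, so each cavity retriangulation is a constant-size update, as in Figure~\ref{fig:b_threshold_split}.
\end{itemize}
Vertices touched by several segments are charged once per segment, which stays within $O(nm)$. Vertices far from every boundary are never visited, since Steps~1--2 act only within distance $a$ of $\mathcal{B}$.

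The main obstacle is justifying that the ``find closest boundary point/segment'' queries in Steps~1--2 do not cost a global search per vertex, which would give $O(N\log n)$ rather than the claimed bound. I would resolve this by inverting the loop: iterate over segments, and let each segment push its candidate vertices into the walk set. The thresholds guarantee that only vertices within distance $a$ of some segment can be affected. Each such vertex keeps its nearest-segment record by taking the minimum over the $O(1)$ segments that reach it; by the at-most-two-segments property of Lemma~\ref{lem:segment-triangle-bound}, there are only $O(1)$ of them. I would also state explicitly that $m$ is treated as a parameter, and is constant under the length constraint. Summing the location and processing costs gives $\mathcal{O}(n\log N + nm)$.
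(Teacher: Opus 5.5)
Your proposal is correct and follows the paper's own argument. An $\mathcal{O}(\log N)$ spatial-index query initiated by each boundary segment gives the $n\log N$ term, and constant-time snapping/repulsion/deletion work on each of the $m$ triangles a segment influences gives the $nm$ term; your segment-driven walk and loop inversion just make explicit what the paper's ``each boundary segment initiates a proximity query'' leaves implicit.

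A few side justifications are misattributed, though none is needed:
\begin{itemize}
\item $a,b,c<e$ is not part of \eqref{eq:admissible-range}; only Remark~\ref{rem:no-snapping-collision} assumes $a<e/2$.
\item The length constraint is a lower bound on segment length, so it neither bounds the walk nor makes $m$ constant.
\item The $\mathcal{O}(nm)$ total follows directly from defining $m$ as the average number of triangles a segment influences.
\item Your once-per-segment charging already covers vertices reached by several segments, so you do not need Lemma~\ref{lem:segment-triangle-bound} for this.
\end{itemize}
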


\begin{proof}
Each boundary segment initiates a proximity query to locate nearby vertices or edges using spatial thresholds \(a\), \(b\), and \(c\). These queries are executed via an auxiliary spatial index (e.g., a KD-tree) built on the \(N\) mesh triangles. The query cost is \(\mathcal{O}(\log N)\) per segment, totaling \(\mathcal{O}(n \log N)\).

Post-query, each segment affects \(m\) nearby triangles on average. Each triangle triggers a constant-time geometric update (snap, repel, or mark for deletion), costing \(\mathcal{O}(nm)\) overall.

Thus, the total preprocessing cost is \(\mathcal{O}(n \log N + nm)\).
\end{proof}

\begin{lemma}[Boundary-Driven Triangulation]
\label{lem:boundary-triangulation}
The full meshing process—including preprocessing and retriangulation—runs in
\[
\mathcal{O}(n \log N + nm + N_t),
\]
where \(N_t\) is the number of triangles intersected by the boundary.
\end{lemma}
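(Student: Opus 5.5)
The bound follows by adding the cost of the retriangulation stage to the preprocessing bound from the preceding lemma, which already gives $\mathcal{O}(n\log N + nm)$ for vertex snapping, vertex repulsion and edge elimination. So the only new work is to show that template-driven retriangulation costs $\mathcal{O}(N_t)$. The pipeline is a sequence of three stages (Regions I--III of Corollary~\ref{cor:parallel-architecture}), executed sequentially in the single-threaded prototype, so total cost is additive.

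\textbf{Step 1: constant work per affected triangle.} For a triangle $K$ intersected by the boundary, Algorithm~\ref{alg:sbmt-template} proceeds in four stages.
\begin{itemize}
\item It classifies $K$ using Algorithm~\ref{alg:intersection-type}. By Lemma~\ref{lem:segment-triangle-bound}, at most two segments and three edges are involved, so this is $\mathcal{O}(1)$ predicate evaluations.
\item It reads the key $\kappa(K)$ from the frozen registry $\mathcal{R}$, which is hashed by edge identity, in $\mathcal{O}(1)$ expected time.
\item It fetches the template from the finite table guaranteed by Theorem~\ref{thm:lookup_table_completeness}. The table has constant size, so this is $\mathcal{O}(1)$.
\item It instantiates and emits the template's faces. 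Every template has at most five embedded points plus three vertices, hence a bounded number of faces, so this is also $\mathcal{O}(1)$.
\end{itemize}
Summing over the intersected triangles gives $\mathcal{O}(N_t)$.

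\textbf{Step 2: identifying the affected triangles.} The set of intersected triangles must be found without scanning all $N$ triangles. I would argue that these triangles are produced during preprocessing: each segment's spatial query, already charged at $\mathcal{O}(\log N + m)$, enumerates the triangles it touches. Registering these triangles and their intersection records in $\mathcal{R}$ adds $\mathcal{O}(nm)$ work, which is absorbed in the preprocessing term. Untouched triangles are kept by reference, with no per-triangle cost, since the base mesh is preserved in place. Stitching along shared edges (Corollary~\ref{thm:parallel-structure}) touches only edges of affected triangles, at $\mathcal{O}(1)$ per edge, hence $\mathcal{O}(N_t)$.

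\textbf{Main obstacle.} The hard part is Step 2: showing that no hidden $\mathcal{O}(N)$ traversal is needed. This requires the registry and stitching to be indexed by edge identity so that lookups are constant-time. It also requires $N_t \le nm$ or a comparable relation, so that enumerating the affected triangles is dominated by the stated terms. I would make the hashing assumption explicit and note that $N_t=\mathcal{O}(nm)$ follows from the length constraint, under which each segment crosses $\mathcal{O}(\ell/e)$ triangles. The expression is written with $N_t$ separately so that it stays valid either way.
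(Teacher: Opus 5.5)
Your proposal is correct and follows the paper's proof: the $\mathcal{O}(n\log N + nm)$ preprocessing cost is taken from the preceding lemma, and each of the $N_t$ intersected triangles then needs only a constant-time table lookup and template emission, which gives the additional $\mathcal{O}(N_t)$ term. Your Step~2 (finding the affected triangles through the preprocessing queries instead of an $\mathcal{O}(N)$ scan) makes explicit what the paper leaves implicit. One small correction: $N_t = \mathcal{O}(nm)$ comes from the definition of $m$, not from the length constraint, which only bounds segment lengths from below; this point is harmless anyway, since you keep $N_t$ as a separate term.
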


\begin{proof}
Preprocessing requires \(\mathcal{O}(n \log N + nm)\) (Lemma 1). Then, for each of the triangles interposed with the boundaries \(N_t\), a pre-computed retriangulation template is applied via table lookup and constant-time updates. Thus, retriangulation costs \(\mathcal{O}(N_t)\), yielding the total complexity.
\end{proof}

\begin{theorem}[Overall Time Complexity of SBMT]
The total runtime of the SBMT pipeline, including mesh generation and interpolation, is
\[
\mathcal{O}(n \log N + nm + N_t + N).
\]
\end{theorem}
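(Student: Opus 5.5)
The total runtime is the sum of the mesh-generation cost and the interpolation cost. For the first, I will take Lemma~\ref{lem:boundary-triangulation} as given: preprocessing and template retriangulation together cost $\mathcal{O}(n\log N + nm + N_t)$. The only new work is to bound the interpolation stage (3) by $\mathcal{O}(N)$ and then add the two bounds.

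For the interpolation stage, I will account for it cell by cell on the final mesh. First I will bound the size of the output mesh. Every untouched base triangle survives unchanged. Every one of the $N_t$ intersected triangles is replaced by a template patch from Appendix~G. By Theorem~\ref{thm:lookup_table_completeness}, there are finitely many canonical types, each with a fixed template, so each patch has at most a constant number $c_0$ of sub-triangles. The edge-elimination step merges at most two neighbouring triangles into a cavity with one inserted point, which gives a constant-size patch, as in Figure~\ref{fig:b_threshold_split}. Since $N_t\le N$, the final mesh therefore has at most $N + c_0 N_t = \mathcal{O}(N)$ triangles and $\mathcal{O}(N)$ vertices.

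Interpolating the bitmap field then means evaluating or projecting a piecewise polynomial of fixed degree (linear or quadratic, as in Theorem~\ref{thm:spectral_interp}) on each element. This uses a constant number of pixel samples per element. The reason is the edge length $e<1$ pixel: each element has bounded diameter and so touches $\mathcal{O}(1)$ unit pixels, which can be addressed directly by integer rounding of coordinates, without a search. Each element therefore costs $\mathcal{O}(1)$, and the stage costs $\mathcal{O}(N)$ in total. Adding this to Lemma~\ref{lem:boundary-triangulation} gives $\mathcal{O}(n\log N + nm + N_t + N)$. The bound $N_t\le N$ means the $N_t$ term could be absorbed, but I will keep it to match the statement.

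I expect the main obstacle to be justifying the constant per-element interpolation cost near the boundary. The boundary-band elements produced by snapping, repulsion and templates may be irregular, so I must rule out any single element touching unboundedly many pixels. I will handle this by noting that every sub-triangle lies inside the union of at most two original base triangles, whose diameter is at most $2e<2$. Such an element meets at most a constant number of pixels (e.g.\ $\le 16$). The minimal-area bound of Theorem~\ref{thm:min-area-bound-main} is not needed for this count, which relies only on the diameter bound.
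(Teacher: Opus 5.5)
Your proposal is correct and takes the same route as the paper: you take the meshing bound from the boundary-driven triangulation lemma, charge $\mathcal{O}(1)$ interpolation work per element, and add the two; your explicit count of at most $N + c_0 N_t = \mathcal{O}(N)$ output triangles makes precise a step the paper leaves implicit when it equates per-triangle interpolation work with $\mathcal{O}(N)$. One minor imprecision is that snapping and repulsion can move vertices by up to $a$, so sub-triangles need not lie inside the union of two original base triangles; their diameter is still bounded by a constant (e.g.\ $\sqrt{3}\,e + 2a$), so your pixel-count argument is unaffected.
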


\begin{proof}
From Lemma~\ref{lem:boundary-triangulation}, the meshing stage runs in \(\mathcal{O}(n \log N + nm + N_t)\).

After meshing, signal interpolation (e.g., intensity or scalar field) is applied per triangle, typically via linear or barycentric schemes. Since each triangle requires constant time, the interpolation cost is \(\mathcal{O}(N)\).

Summing both stages yields:
\[
\mathcal{O}(n \log N + nm + N_t + N),
\]
reflecting the combined cost of spatial filtering, retriangulation, and numerical post-processing.
\end{proof}

\section*{Appendix D\\Additional Theorems: Minimal Angle and Area Bounds}

\begin{theorem}[Minimal Angle Bound in SBMT]
\label{thm:min-angle-bound}
Given geometric thresholds $a$, $b$, and $c$ satisfying:
\[
b < \frac{a}{2}, \qquad c < \frac{a}{\sqrt{2}},
\]
every sub-triangle generated through SBMT retriangulation has a minimal interior angle bounded below by:
\[
\theta_{\min} > \min\left\{
\arctan\left( \frac{b}{e + a - \sqrt{a^2 - b^2}} \right),\; 
\arctan\left( \frac{c}{e + a} \right)
\right\},
\]
where $e$ denotes the edge length of the original equilateral triangle in the base mesh.
\end{theorem}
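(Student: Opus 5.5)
The proof is a local case analysis over the finite template family. By Theorem~\ref{thm:lookup_table_completeness} and Lemma~\ref{lem:segment-triangle-bound}, every sub-triangle produced by SBMT comes from a finite list of canonical templates instantiated on one base triangle $K$ (or, for the threshold-$b$ rule, on the quadrilateral cavity of Figure~\ref{fig:b_threshold_split}). Each sub-triangle therefore has vertices drawn from three sources: the preprocessed vertices of $K$, intersection points on $\partial K$, and boundary points $p_j$ inserted in the cavity. The plan is to bound the smallest angle of an arbitrary sub-triangle $\triangle uvw$ by writing it as $\arctan(h/L)$ or larger. Here $h$ is a lower bound on the distance from the vertex opposite the shortest side to the line through that side, or more precisely on the relevant height. $L$ is an upper bound on the length of the adjacent side.

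First I would record the invariants left by preprocessing (Algorithm~\ref{alg:sbmt-preprocess}), using Proposition~\ref{prop:decoupled-preprocessing} to treat Steps~1--3 as non-interfering.
\begin{enumerate}
\item After Step~1, any base vertex $v$ not on the boundary lies at distance $\ge a$ from every boundary point. Any displaced vertex has moved by less than $a$, so every edge of a preprocessed triangle has length at most $e+a$.
\item After Step~2, any unsnapped vertex lies at distance $\ge c$ from every boundary segment.
\item After Step~3, any surviving edge is at distance $\ge b$ from every boundary point.
\end{enumerate}
The conditions $b<a/2$ and $c<a/\sqrt2$ ensure that at most one of these operations acts on a given triangle edge or vertex. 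This lets each sub-triangle be charged to a single degenerating mechanism.

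Next I would split sub-triangles into two classes according to which feature could make them thin.

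\emph{Type I: the small height is measured from a base vertex $v$ to a boundary-aligned side.} This side lies on an embedded segment, so its supporting line is a boundary segment. By invariant~(2) the height is at least $c$. Every side has length less than $e+a$, so the angle opposite the height is at least $\arctan\big(c/(e+a)\big)$.

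\emph{Type II: the small height is measured from a boundary point $p$ to a surviving mesh edge $uv$.} By invariant~(3), $\operatorname{dist}(p,uv)\ge b$. The adjacent side can be longer than the base, up to $e+a$, but the foot of $p$ may fall outside $uv$. Suppose $p$ lies near the endpoint $u$, with $u$ itself a snapped vertex. Then $|pu|\ge a$ if $p$ was not snapped, since otherwise snapping would have merged them. The projection of $p$ onto the line $uv$ then lies at distance at least $\sqrt{a^2-b^2}$ from $u$ along the line. The angle at $v$ is therefore at least $\arctan\big(b/(e+a-\sqrt{a^2-b^2})\big)$, because the horizontal leg is at most $(e+a)-\sqrt{a^2-b^2}$ and the vertical leg is at least $b$.

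Strictness of both bounds comes from the strict inequalities in the thresholds. Taking the minimum over the two classes gives the stated bound.

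The main obstacle is exhaustiveness: I must check that every template in Appendix~G, including the rare $(2,3)$ and $(1,3)$ cases with five points, yields only Type I or Type II sub-triangles. In particular, no template may contain a triangle whose three vertices are all boundary points. Such a triangle would be degenerate when its vertices are collinear, or would have its angles governed only by the segment angle $\ge 90^\circ$ and the length constraint $e<\ell_{\min}$. I would handle this family separately. Consecutive segments meet at interior angle at least $90^\circ$, and each is longer than $e$, so a triangle with two boundary vertices on different segments has angles bounded well above the claimed minimum. I would then go through the canonical types in order, $(0,k)$ first, then $(1,1)$, $(1,2)$, $(2,2)$, $(1,3)$, $(2,3)$. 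For each, I would identify the shortest side of every emitted face and assign it to Type I or Type II. I expect the $(2,3)$ vertex-contact case to need the most care, because there a snapped vertex coincides with a boundary point.
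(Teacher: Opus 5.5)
Your Type~II and Type~I are exactly the paper's Case~1 (edge elimination beside a vertex displaced by $a$, giving $\arctan(b/(e+a-\sqrt{a^2-b^2}))$) and Case~2 (repulsion to height $c$ against a side of length at most $e+a$). The paper offers nothing beyond these two computations and only asserts exhaustiveness in the remark after its proof, so the template-by-template pass you defer is missing from both arguments. Beyond that, your invariant~(1) does not follow. Each displaced vertex moves by less than $a$, but nothing in the constraints stops \emph{both} endpoints of a base edge from moving apart. For example, two adjacent vertices can be snapped to two different samples, which need only be $\ell_{\min}$ apart, and $\ell_{\min}<e+2a$ in the paper's setting; or one endpoint can be snapped and the other repelled. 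So you only get sides shorter than $e+2a$ (or $e+a+c$), and with that denominator your computations give weaker constants than claimed. You need a separate argument that the side adjacent to the small angle never has both endpoints pushed outward; the paper's figures simply posit a single displaced endpoint. Also, the offset $\sqrt{a^2-b^2}$ in Type~II is valid only when the height is exactly $b$. For height $d>b$ it is $\sqrt{a^2-d^2}$, and you must check that $d\mapsto d/(e+a-\sqrt{a^2-d^2})$ is smallest at $d=b$ (true, e.g., when $a\le e$, using $b<a/2$).

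Second, your separate handling of faces whose vertices all lie on the boundary rests on facts that give no bound. A lower bound on the junction angle at $P_2$ constrains only the \emph{sum} of the other two angles, and the segment-length constraint says nothing about the chord a segment cuts from $K$, which can be arbitrarily short. In the $(1,1)$ template~(c), the face $\triangle P_2Q_1Q_2$ has its angle at $Q_2$ equal to the crossing angle of $P_2P_3$ with $AC$. That angle is controlled only by $\mathrm{dist}(P_2,AC)\ge b$ and by how far the foot of $P_2$ can lie from $Q_2$. This is your Type~II estimate, but it must be run on the full edge $AC$ and its mesh-vertex endpoints, since $Q_1,Q_2$ are intersection points for which $|pu|\ge a$ is unavailable. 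In the $(2,2)$ template~(b1), the face $\triangle Q_1Q_4Q_2$ has only intersection-point vertices. The sample $P_2$ that keeps $Q_2Q_4$ from collapsing lies outside $K$. The height of $Q_1$ above $CA$ is controlled by the repulsion of $A$ from $P_1P_2$, and $A$ is not a vertex of the face. So Types~I and~II, as you formulate them (a height between a face's own vertex and its own side, backed by invariants~(2)--(3)), do not cover all emitted faces. The exhaustive pass must allow charging a face to samples and base vertices outside it.
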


\begin{figure}[htbp]
\centering
\begin{tikzpicture}[scale=1.5]

% ------- 图 (a) -------
\node at (0.5, 2.8) {(a)};
\coordinate (A) at (1,0);
\coordinate (B) at (3,0);
\coordinate (C) at (2,1.732);
\draw[thick] (A) -- (B) -- (C) -- cycle;
\node[below left] at (A) {$A$};
\node[below right] at (B) {$B$};
\node[above] at (C) {$C$};

\coordinate (P1) at (0,-0.5);
\coordinate (P2) at (2,0.5);
\coordinate (P3) at (1.3,2.3); % 更靠近三角形的 P3
\draw[red, thick] (P1) -- (P2);
\draw[red, thick] (P2) -- (P3);
\filldraw[red] (P1) circle (0.03) node[left] {$P_1$};
\filldraw[red] (P2) circle (0.03) node[below right] {$P_2$};
\filldraw[red] (P3) circle (0.03) node[right] {$P_3$};

\coordinate (Q) at (1.709, 1.229); % 交点 Q = P2P3 与 AC
\filldraw[blue] (Q) circle (0.03) node[left] {$Q$};

% 简洁圆弧表示 \theta_1，在点 A 附近
\draw (1,0) ++(10:0.2) arc[start angle=10, end angle=70, radius=0.2];
\node at (1.18,0.15) {\scriptsize$\theta_1$};

% ------- 图 (b) -------
\begin{scope}[xshift=4.5cm]
\node at (0.5, 2.8) {(b)};
\coordinate (A2) at (1,0);
\coordinate (B2) at (3,0);
\coordinate (C2) at (2,1.732);
\draw[thick] (A2) -- (B2) -- (C2) -- cycle;
\node[below left] at (A2) {$A$};
\node[below right] at (B2) {$B$};
\node[above] at (C2) {$C$};

\coordinate (P21) at (0,1.0);
\coordinate (P22) at (3,1.0);
\draw[red, thick] (P21) -- (P22);
\filldraw[red] (P21) circle (0.03) node[left] {$P_1$};
\filldraw[red] (P22) circle (0.03) node[below right] {$P_2$};

\coordinate (Q1) at (2.423, 1.0);  % 与 BC 相交点
\coordinate (Q2) at (1.578, 1.0);  % 与 AB 相交点
\filldraw[blue] (Q1) circle (0.03) node[above right] {$Q_1$};
\filldraw[blue] (Q2) circle (0.03) node[above left] {$Q_2$};

% 添加 Q1 -- A 的虚线
\draw[dashed, gray] (Q1) -- (A2);

% 简洁圆弧表示 θ₂，在点 A 附近
\draw (1,0) ++(5:0.2) arc[start angle=5, end angle=35, radius=0.2];
\node at (1.15,0.12) {\scriptsize$\theta_2$};

\end{scope}

\end{tikzpicture}

\caption{
Worst-case angle configurations used to derive lower bounds for triangle angles. 
(a)~Angle $\theta_1 = \angle QAP_2$ is bounded below by $\arctan\left(\frac{b}{e + a - \sqrt{a^2 - b^2}}\right)$, where $b$ is the edge threshold. 
(b)~Angle $\theta_2 = \angle Q_1AB$ is bounded below by $\arctan\left(\frac{c}{e + a}\right)$, where $c$ is the projection threshold from a vertex to a boundary segment.
}
\label{fig:angle_bounds}
\end{figure}
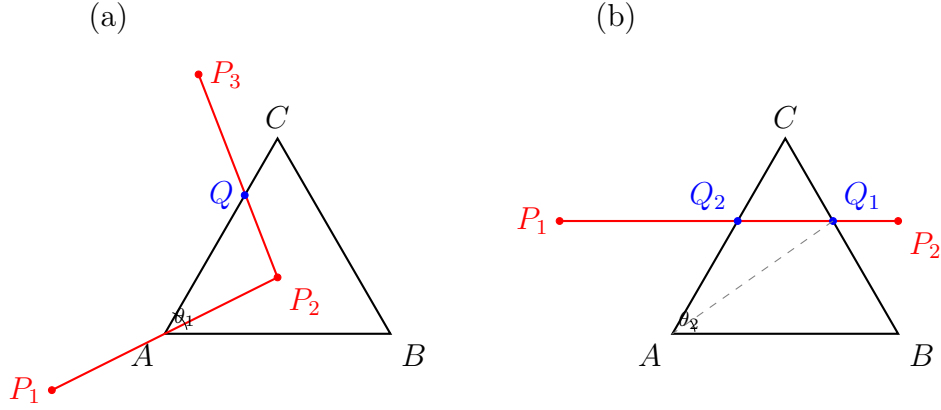

\begin{proof}
We analyze two canonical worst-case configurations illustrated in Figure~\ref{fig:angle_bounds}.

\textbf{Case 1 (Edge elimination near a snapped vertex, as illustrated in the left diagram of Figure~\ref{fig:angle_bounds}):}

Consider a triangle with base edge $(C,A)$ of original length $e$. 
Suppose vertex $A$ is snapped toward a boundary point $s$ located along the extension of edge $CA$, 
such that $\|A - s\| = a$. After snapping, the edge $AC$ becomes length $e + a$.

Assume a boundary segment intersects this extended edge at a point $Q$ within perpendicular distance $b$ of the original edge direction. 
To maintain snapping eligibility, the intersection must lie within a circle of radius $a$ centered at $A$. 
The minimal possible distance from $Q$ to vertex $C$ is then lower-bounded by:
\[
CQ \geq \sqrt{a^2 - b^2}.
\]
This implies the maximal distance from $A$ to $Q$ is:
\[
AQ \leq e + a - \sqrt{a^2 - b^2}.
\]

The triangle $\triangle QAP_2$ thus formed has its minimal angle at vertex $A$, bounded below by:
\[
\theta_1 > \arctan\left( \frac{b}{e + a - \sqrt{a^2 - b^2}} \right).
\]

\textbf{Case 2 (Vertex repulsion from a boundary segment, as illustrated in the right diagram of Figure~\ref{fig:angle_bounds}):}

Consider triangle $\triangle Q_1AB$ formed after vertex $A$ is repelled from a nearby boundary segment $P_1P_2$ along its normal direction. 
The repulsion places $A$ at a perpendicular distance of $c$ from the boundary segment, with $Q_1$ denoting the foot of the perpendicular.

To establish the worst-case scenario, we consider $A$ to have been previously snapped toward a boundary point $s$ located along the extension of edge $BA$, 
with $\|A - s\| = a$. In this case, the effective length of edge $AB$ becomes:
\[
AB \leq e + a.
\]
Since the vertical offset (height) of $A$ from the segment is exactly $c$, the resulting angle at vertex $B$ satisfies:
\[
\theta_2 > \arctan\left( \frac{c}{e + a} \right).
\]
\end{proof}

\begin{remark}
The two worst-case configurations shown in Figure~\ref{fig:angle_bounds} fully capture the minimal-angle-generating scenarios under the SBMT framework. 
Any other intersection pattern either reduces to one of these two via symmetry or fails to produce an angle as small due to geometric constraints.

Specifically, all minimal angles in retriangulated sub-triangles arise from:
\begin{itemize}
    \item \textbf{Angle slicing:} a boundary segment intersects a triangle vertex and creates two acute angles;
    \item \textbf{Edge flattening:} a boundary segment compresses the triangle from two edges, displacing the opposite vertex via foot projection.
\end{itemize}

These two classes represent the only configurations where two adjacent edges can be sufficiently elongated and aligned (within thresholds $a$, $b$, $c$) to produce a sharp internal angle. Configurations involving segment penetration through triangle interiors cannot achieve comparable stretching and thus yield larger angles.

Hence, the lower bounds in Theorem~\ref{thm:min-angle-bound} are tight within the design space of SBMT.
\end{remark}

\begin{theorem}[Minimal Area Bound in SBMT]
\label{thm:min-area-bound}
Let $b$ and $c$ denote the edge elimination and foot projection thresholds, respectively. Then every sub-triangle generated through SBMT retriangulation satisfies the minimal area constraint:
\[
A_{\min} > \frac{1}{2} b \cdot c.
\]
\end{theorem}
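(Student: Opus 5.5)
The plan is to bound the area of an arbitrary sub-triangle $\tau$ by the formula $\mathrm{Area}(\tau)=\tfrac12\,|\ell|\,h$. Here $\ell$ is a suitably chosen edge of $\tau$ and $h$ is the distance from the opposite vertex to the line carrying $\ell$. The task then splits into two parts: a lower bound on $|\ell|$ in terms of $b$, and a lower bound on $h$ in terms of $c$. We classify $\tau$ by the geometric origin of its three vertices. Each vertex is either an original (possibly snapped or repelled) base-mesh vertex, or an embedded boundary point: an intersection point $q_j$ or a segment endpoint $p_i$. Because the lookup templates of Appendix~G only use these vertex classes, and Theorem~3.2 guarantees that every admissible configuration is covered by some template, it suffices to examine the finitely many vertex-type patterns $(\text{mesh},\text{mesh},\text{boundary})$, $(\text{mesh},\text{boundary},\text{boundary})$, and the single-segment and two-segment variants.

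\textbf{Height bound.} In every pattern, at least one vertex $v$ of $\tau$ is a base-mesh vertex not lying on $\mathcal{B}$, and the opposite edge lies along a boundary segment or passes through a boundary point. After Step~2 of Algorithm~\ref{alg:sbmt-preprocess}, the foot-projection rule guarantees $\mathrm{dist}(v,s)\ge c$ for every boundary segment $s$ near $v$. Snapped vertices lie on $\mathcal{B}$ and are instead treated as boundary-type vertices. So whenever $\ell$ is carried by a boundary segment, we get $h\ge c$ directly. If $\tau$ has no edge on the boundary, we choose $\ell$ to be the edge through the boundary point. We then use the fact that the triangle edge opposite $v$ is contained in the sub-region cut off by the segment, so that $h$ is still at least the distance from $v$ to the segment.

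\textbf{Edge-length bound.} For the base edge $\ell$ we aim to show $|\ell|\ge b$. This edge joins a boundary point to a mesh vertex or to another boundary point. By Step~3, any boundary point within distance $b$ of a triangle edge triggers elimination of that edge, so surviving boundary points are at distance at least $b$ from non-incident mesh edges, and hence from their endpoints. By Remark~\ref{rem:no-snapping-collision} and $b<a/2$, a boundary point within distance $b<a$ of a vertex would already have been snapped onto it. Two distinct boundary points on one triangle are separated by at least $b$: either they lie on different edges, and the elimination rule applies, or they lie on the same segment and the length constraint $e<\ell_{\min}^{\text{boundary}}$ applies. Combining the two bounds gives $\mathrm{Area}(\tau)\ge \tfrac12 bc$. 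Strictness follows because the thresholds are triggered by strict inequalities in the algorithm, and equality would require $\ell$ to be perpendicular to the height direction in a configuration that the rules exclude.

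\textbf{Main obstacle.} The delicate step is the height bound in the mixed two-segment templates of types $(1,2)$, $(2,2)$ and $(2,3)$, where $\tau$ may have two boundary-point vertices on different segments and a mesh vertex near the corner $P_2$ where the two segments meet. There the opposite edge is not carried by a single segment. I would first try to pair each such $\tau$ with a vertex that is repelled from the segment spanning the longest edge. If that fails, I would use the angular constraint (interior angle $\ge 90^\circ$) to show that the distance from the vertex to the chord joining the two boundary points is at least its distance to the nearer segment, and hence at least $c$.
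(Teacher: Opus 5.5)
Your plan makes the paper's worst-case picture systematic. The paper's proof only exhibits a right triangle with one leg governed by the elimination threshold $b$ and the other by the repulsion threshold $c$, and asserts that every other sub-triangle is larger. You instead try to verify the pairing ``base $\ge b$, height $\ge c$'' element by element. That verification fails at both steps, and the failure is real, not just missing detail.

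\textbf{The height step.} It assumes every sub-triangle has an unsnapped base-mesh vertex opposite an edge on or through the boundary. Appendix~G contains elements whose three vertices are all boundary points:
$\triangle P_2Q_1Q_2$ in $(1,1)$ case (c);
$\triangle Q_1Q_4Q_2$ and $\triangle Q_1Q_3Q_4$ in $(2,2)$ case (b1);
and the inner triangles of $(2,2)$ cases (d3) and (a4).
Repulsion says nothing about these. In $\triangle P_2Q_1Q_2$, with $Q_1,Q_2$ on a common edge $E$, the only controls are $\mathrm{dist}(P_2,E)\ge b$ and the turn angle at $P_2$. Take the symmetric $90^\circ$ spike (both segments at $45^\circ$ to $E$) with $\mathrm{dist}(P_2,E)=b$; this is allowed because elimination fires only strictly below $b$. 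It has $|Q_1Q_2|=2b$ and area exactly $b^2$. Any triangulation that embeds $P_2Q_1$ and $P_2Q_2$ must contain this element, so no template choice avoids it. Hence the bound fails as soon as $c\ge 2b$, since then $b^2\le\tfrac{1}{2}bc$. Concretely, $(a,b,c)=(0.26,0.06,0.15)$ with $e=\sqrt{0.45}$ satisfies $b<a/2$, $c<a/\sqrt{2}$ and $a<e/2$. Placing the spike over the midpoint of a horizontal edge leaves that triangle untouched by Steps~1--3, and its area is $0.0036<0.0045=\tfrac{1}{2}bc$.

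\textbf{The edge-length step.} Here the thresholds are attached to the wrong points.
\begin{itemize}
\item The elimination rule, as specified in Algorithm~\ref{alg:sbmt-preprocess} and Figure~\ref{fig:b_threshold_split}, concerns a boundary sample $p_j$ (a polyline vertex such as $P_2$) lying within $b$ of an edge and then inserted into the cavity. Crossing points $q_j$ lie on base edges by definition, so this rule does not separate them.
\item The snapping rule likewise targets samples, not crossings.
\item The segment-length constraint bounds whole segments, not the distance between two crossings of one segment with $\partial T$.
\end{itemize}
What keeps two crossings on adjacent edges apart, or a crossing away from an unsnapped vertex, is repulsion. In single-segment case (b), $\triangle Q_2Q_1C$ only has $|Q_1Q_2|\ge 2c/\sqrt{3}$. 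A horizontal cut just beyond distance $c$ from $C$ gives area close to $c^2/\sqrt{3}$, which is below $\tfrac{1}{2}bc$ once $c<\tfrac{\sqrt{3}}{2}b$.

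Your strictness argument also runs backwards. Strict triggers leave $\mathrm{dist}=b$ admissible, and repulsion places vertices at distance exactly $c$.

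\textbf{What a repair needs.} The roles of $b$ and $c$ must be assigned element by element; for example, $\triangle Q_2AP_2$ in $(1,1)$ case (c) has base $|AQ_2|\ge c$ and height $\mathrm{dist}(P_2,AC)\ge b$. The all-boundary-vertex elements need separate treatment. A hypothesis relating $c$ to $b$ is unavoidable: the two examples force at least $\tfrac{\sqrt{3}}{2}b<c<2b$. The default thresholds $b=0.125$, $c=0.183$ satisfy this, but neither your proposal nor the paper's worst-case sketch states or uses such a condition.
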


\begin{proof}
The worst-case minimal-area configuration arises when:
\begin{itemize}
  \item A boundary segment intersects a triangle edge at a perpendicular distance $b$ from a vertex;
  \item A nearby vertex is repelled orthogonally from another boundary segment at distance $c$.
\end{itemize}

In this configuration, a right-angled triangle is formed where the base and height are bounded above by $b$ and $c$, respectively. Thus, its area satisfies:
\[
A = \frac{1}{2} b \cdot c.
\]

Since SBMT enforces local non-degeneracy through snapping and repulsion thresholds, and avoids overlapping retriangulation zones, all actual triangle areas strictly exceed this lower bound:
\[
A_{\min} > \frac{1}{2} b \cdot c.
\]

A geometric illustration of this worst-case construction is provided in Figure~\ref{fig:area_bound_correct}.
\end{proof}

\begin{remark}
Unlike the minimal-angle bound, which involves elongated triangle sides and acute angle formations due to alignment or slicing, the minimal-area configuration arises when both the triangle's base and height are simultaneously constrained by thresholds.

In particular, the worst-case scenario corresponds to a right-angled triangle with:
\begin{itemize}
  \item Base $\leq b$: induced by edge proximity under the edge-elimination threshold;
  \item Height $\leq c$: induced by orthogonal vertex repulsion from a nearby boundary segment.
\end{itemize}

This configuration uniquely minimizes area among all local retriangulation patterns permitted by SBMT. Any deviation from right-angle geometry or increase in either base or height yields strictly larger area, hence the bound is tight.

See Figure~\ref{fig:area_bound_correct} for a schematic illustration.
\end{remark}

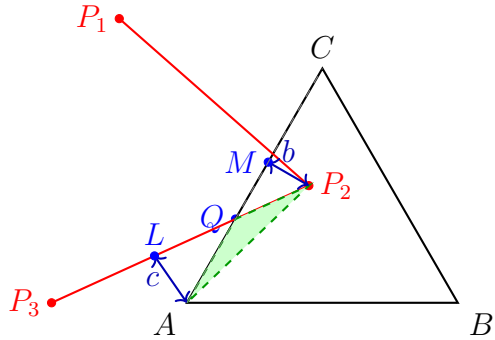
\begin{figure}[htbp]
\centering
\begin{tikzpicture}[scale=1.8]

% 点坐标
\coordinate (A) at (1,0);
\coordinate (C) at (2,1.732);
\coordinate (P2) at (1.9,0.866);
\coordinate (P3) at (0,0);

% 辅助：AC直线参数
% 向量 v1 = C - A = (1, 1.732)
% 向量 v2 = P2 - A = (0.9, 0.866)
% 投影系数：dot(v1,v2)/|v1|^2 ≈ (0.9*1 + 0.866*1.732)/(1^2 + 1.732^2)
% => ≈ (0.9 + 1.5)/4 ≈ 0.6
\coordinate (Bfoot) at ($(A)!0.6!(C)$);  % 垂足：P2 到 AC 的垂线

% 垂足：A 到 P2P3 的垂足
% 向量 w1 = P3 - P2 = (-1.9, -0.866)
% 向量 w2 = A - P2 = (-0.9, -0.866)
% 投影系数 = dot(w1,w2)/|w1|^2 ≈ (1.71 + 0.75)/4.2 ≈ 0.6
\coordinate (Q2) at ($(P2)!0.6!(P3)$);

\coordinate (Q) at (1.358, 0.619);
\filldraw[blue] (Q) circle (0.03) node[left] {$Q$};

% 原始三角形 ABC
\coordinate (B) at (3,0);
\draw[thick] (A) -- (B) -- (C) -- cycle;
\node[below left] at (A) {$A$};
\node[below right] at (B) {$B$};
\node[above] at (C) {$C$};

% 红折线段 P1-P2-P3
\coordinate (P1) at (0.5,2.1);
\draw[red, thick] (P1) -- (P2);
\draw[red, thick] (P2) -- (P3);
\filldraw[red] (P1) circle (0.03) node[left] {$P_1$};
\filldraw[red] (P2) circle (0.03) node[right] {$P_2$};
\filldraw[red] (P3) circle (0.03) node[left] {$P_3$};

% 三角形 AQ2P2 着色
\fill[green!20, opacity=1.0] (A) -- (Q) -- (P2) -- cycle;
\draw[green!60!black, thick, dashed] (A) -- (Q) -- (P2) -- cycle;

% c：A 到 P2P3 的垂线
\draw[<->, thick, blue!70!black] (A) -- node[left] {$c$} (Q2);

% b：P2 到 AC 的垂线
\draw[<->, thick, blue!70!black] (P2) -- node[above] {$b$} (Bfoot);

% 垂足点标注
\filldraw[blue] (Q2) circle (0.03) node[above]{$L$};
\filldraw[blue] (Bfoot) circle (0.03) node[left]{$M$};

% 辅助线
\draw[dashed, gray] (A) -- (C);

\end{tikzpicture}
\caption{
A worst-case triangle $\triangle AQP_2$ is formed when vertex $A$ is projected orthogonally onto the boundary segment $P_2P_3$, producing foot $L$. Meanwhile, the point $P_2$ is projected onto edge $AC$ to form foot $M$. If $|AL| > c$ and $|P_2M| > b$, then the resulting triangle satisfies $\text{Area}(AQP_2) > \frac{1}{2} b \cdot c$, guaranteeing a lower bound for sub-triangle area during retriangulation.
}
\label{fig:area_bound_correct}
\end{figure}

\section*{Appendix E\\Spectrally Accurate Interpolation over Equilateral Meshes}

Bitmap images are discretized observations of physical signals, which are rarely strictly bandlimited~\cite{shannon1949} due to edges, discontinuities, and noise. By the Paley--Wiener theorem~\cite{paley1934}, any strictly bandlimited function must be entire (analytic across the complex plane), and thus cannot exhibit sharp transitions. In contrast, real-world signals are typically \emph{effectively bandlimited}: their Fourier spectra decay rapidly enough that the energy beyond a certain frequency becomes negligible.

Moreover, most physical acquisition systems—such as optical lenses or CCD sensors—introduce inherent low-pass filtering, smoothing high-frequency components before digital sampling.

Let \( f(x, y) \) be a compactly supported image function whose Fourier transform \( \hat{f}(\xi, \eta) \) satisfies
\[
|\hat{f}(\xi, \eta)| < \delta, \quad \text{for } \sqrt{\xi^2 + \eta^2} > \Omega,
\]
for some effective cutoff frequency \( \Omega > 0 \) and small tail amplitude \( \delta \). That is, \( f \) is effectively bandlimited to the disk \( \|\boldsymbol{\xi}\| \leq \Omega \).

We now establish a sufficient sampling condition for piecewise interpolation over equilateral triangular meshes to approximate \( f \) with spectral accuracy.

\begin{theorem}[Spectrally Accurate Interpolation over Equilateral Meshes]
\label{thm:spectral_interp}
Let \( f \) be a compactly supported function that is effectively bandlimited with cutoff frequency \( \Omega \), and let \( \mathcal{T}_h \) be a structured triangular mesh composed of equilateral triangles of edge length \( h \). Then, if
\[
h < \frac{1.86}{\Omega},
\]
any piecewise polynomial interpolation \( \tilde{f} \) over \( \mathcal{T}_h \) (e.g., linear or quadratic) satisfies
\[
\| \tilde{f} - f \|_{L^\infty(\Omega)} < \varepsilon,
\]
where \( \varepsilon \) depends on the high-frequency tail of \( \hat{f} \) beyond \( \Omega \), and vanishes as \( \Omega \to \infty \) or \( h \to 0 \).
\end{theorem}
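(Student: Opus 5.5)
The plan is to split $f$ into a bandlimited part and a spectral tail, and to treat the interpolation error of each part separately. Write $f = f_{\le\Omega} + f_{>\Omega}$, where $\widehat{f_{\le\Omega}} = \hat f\,\mathbf 1_{\{\|\boldsymbol\xi\|\le\Omega\}}$. Let $I_h$ denote the nodal (Lagrange) interpolation operator of degree $p$ on $\mathcal T_h$, with $p=1$ or $p=2$. Then
\[
f - I_h f = \bigl(f_{\le\Omega} - I_h f_{\le\Omega}\bigr) + \bigl(f_{>\Omega} - I_h f_{>\Omega}\bigr).
\]

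\textbf{Tail term.} I will use only the $L^\infty$-stability of $I_h$: $\|I_h g\|_\infty \le \Lambda_p \|g\|_\infty$, with Lebesgue constant $\Lambda_1 = 1$ and $\Lambda_2 = 5/3$ on a triangle. Combined with Fourier inversion this gives
\[
\|f_{>\Omega} - I_h f_{>\Omega}\|_\infty \le (1+\Lambda_p)(2\pi)^{-2}\int_{\|\boldsymbol\xi\|>\Omega}|\hat f|\,d\boldsymbol\xi .
\]
The pointwise hypothesis $|\hat f|<\delta$ stated before the theorem is not enough to make this integral finite. I will therefore sharpen "effectively bandlimited" to an $L^1$ tail condition $\int_{\|\boldsymbol\xi\|>\Omega}|\hat f| =: \tau(\Omega)$. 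This is the quantity on which $\varepsilon$ depends, and $\tau(\Omega)\to0$ as $\Omega\to\infty$ whenever $\hat f\in L^1$.

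\textbf{Bandlimited term.} Here $f_{\le\Omega}$ is entire, and differentiating under the inverse Fourier integral gives $|D^{\alpha} f_{\le\Omega}|\le \Omega^{|\alpha|}(2\pi)^{-2}\|\hat f\|_{L^1}$. Since $\mathcal T_h$ consists of equilateral triangles, the shape-regularity constant in the classical interpolation estimate is the best possible. The Bramble--Hilbert / Taylor-remainder bound on each element then gives
\[
\|f_{\le\Omega} - I_h f_{\le\Omega}\|_\infty \le C_p\, h^{p+1}\,|f_{\le\Omega}|_{W^{p+1,\infty}} \le C_p (h\Omega)^{p+1}(2\pi)^{-2}\|\hat f\|_{L^1},
\]
with $C_p$ explicit for equilateral elements; for $p=1$, $C_1 = 1/6$ times the squared circumradius factor.

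\textbf{Role of $h<1.86/\Omega$.} This condition fixes $h\Omega$ below a universal constant, so the bandlimited term is at most a fixed multiple of $\|\hat f\|_{L^1}$ and tends to zero like $(h\Omega)^{p+1}$ as $h\to0$. To justify the specific number I will compare with the sampling lattice. The triangular vertex lattice has reciprocal lattice whose shortest nonzero vector has length $4\pi/(\sqrt3 h)$. Hence the spectral replicas of $f_{\le\Omega}$ do not overlap (no aliasing) as long as $2\Omega < 4\pi/(\sqrt 3 h)$, that is, $h\Omega < 2\pi/\sqrt3$. The stated threshold $1.86$ lies safely inside this alias-free regime, and I would present it as a sufficient margin rather than a sharp constant.

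\textbf{Main obstacle.} I expect the hardest step to be making "$\varepsilon$ vanishes" precise. The tail term does not depend on $h$, so letting $h\to0$ at fixed $\Omega$ cannot remove it, and letting $\Omega\to\infty$ forces $h\to0$ through the sampling condition. I would therefore state the conclusion as
\[
\varepsilon = C_p (h\Omega)^{p+1}\|\hat f\|_1 + (1+\Lambda_p)\,\tau(\Omega),
\]
up to the common factor $(2\pi)^{-2}$. This $\varepsilon$ tends to zero when $\Omega\to\infty$ and $h\Omega\to0$ simultaneously, and this is the rigorous content behind the informal statement. I would also interpret the norm $L^\infty(\Omega)$ in the statement as the sup norm over the meshed domain, not over the frequency cutoff $\Omega$.
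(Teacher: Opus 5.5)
Your argument is sound under the $L^1$-tail sharpening you introduce, and that sharpening is genuinely needed. It is, however, a different route from the paper's, which never estimates the interpolation error at all. The paper assigns each vertex an area $A_\triangle/6$, converts this into an equivalent spacing $\delta=\sqrt{\sqrt3/24}\,h\approx0.269\,h$, and imposes the Nyquist condition $\delta\le1/(2\Omega)$ to get $h\lesssim1.86/\Omega$. It then simply asserts that the error is ``dominated by the spectral tail''. What that buys is a heuristic origin for the constant, and even that is shaky. A triangular tiling has half as many vertices as triangles, so the area per vertex is $2A_\triangle$, not $A_\triangle/6$.

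Your splitting $f=f_{\le\Omega}+f_{>\Omega}$ treats the two parts separately:
\begin{itemize}
\item the tail is controlled by the $L^\infty$-stability of $I_h$;
\item the bandlimited part is controlled by a Bramble--Hilbert bound driven by $|D^\alpha f_{\le\Omega}|\le\Omega^{|\alpha|}(2\pi)^{-2}\|\hat f\|_{L^1}$.
\end{itemize}
This yields an actual $L^\infty$ estimate, and it exposes what the heuristic hides: the term $C_p(h\Omega)^{p+1}\|\hat f\|_{L^1}$ is not small merely because $h\Omega<1.86$. Consider a widely windowed plane wave $\cos(\omega x_1)$ with $\omega$ slightly below $\Omega$. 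It has $\|f\|_\infty\approx1$ and an arbitrarily small tail $\tau(\Omega)$. Yet linear interpolation along mesh edges parallel to $x_1$ errs by about $1-\cos(h\Omega/2)\approx0.4$ at edge midpoints when $h\Omega\approx1.86$. So the error cannot be controlled by the tail alone, and your reformulated $\varepsilon$ is the correct content of the statement.

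Two refinements are worth making.

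First, your ``main obstacle'' is partly an artifact of tying the splitting frequency to the nominal cutoff. The split is a free parameter, so in fact
$\|f-I_hf\|_\infty\le(2\pi)^{-2}\inf_{\Omega'>0}\bigl[C_p(h\Omega')^{p+1}\|\hat f\|_{L^1}+(1+\Lambda_p)\tau(\Omega')\bigr]$.
Taking $\Omega'=h^{-1/2}$ shows the error tends to zero as $h\to0$ for fixed $f$. Since $h<1.86/\Omega$ forces $h\to0$ when $\Omega\to\infty$, this recovers both ``vanishing'' clauses of the statement rigorously.

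Second, your reading of $1.86$ as lying inside the alias-free range $h\Omega<2\pi/\sqrt3$ depends on the angular-frequency convention you use, which is consistent with your $(2\pi)^{-2}$ and $\Omega^{|\alpha|}$. The paper's criterion $\delta\le1/(2\Omega)$ is the cycles-per-length form, under which the hexagonal alias-free bound is $h\Omega<1/\sqrt3$. Your estimate never uses alias-freeness, so this affects only the interpretation of the constant, not the proof.
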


\begin{proof}
Each equilateral triangle in the mesh has area \( A_\triangle = \frac{\sqrt{3}}{4} h^2 \). In a regular tiling, each vertex is shared by six such triangles, so the average sampling density corresponds to one sample per
\[
A_\text{vertex} = \frac{1}{6} A_\triangle = \frac{\sqrt{3}}{24} h^2.
\]
This is equivalent to a point spacing of
\[
\delta = \sqrt{A_\text{vertex}} = \sqrt{ \frac{\sqrt{3}}{24} }\, h \approx 0.269\, h.
\]
By the Nyquist sampling criterion, accurate interpolation of a bandlimited function requires a spacing \( \delta \leq \frac{1}{2\Omega} \). Solving for \( h \) yields:
\[
h \leq \frac{1}{2\Omega \cdot 0.269} \approx \frac{1.86}{\Omega}.
\]

Under this condition, the mesh resolution is sufficient to capture all meaningful spectral components of \( f \), and the interpolation error is dominated by the spectral tail \( \hat{f}(\xi, \eta) \) for \( \|\boldsymbol{\xi}\| > \Omega \). Since \( f \) is assumed effectively bandlimited, this tail is negligible, and interpolation converges uniformly:
\[
\tilde{f} \to f \quad \text{as} \quad h \to 0.
\]
\end{proof}

\begin{remark}
This theorem provides a rigorous foundation for using SBMT-generated equilateral meshes in interpolation-based image reconstruction. Because SBMT preserves the regularity of the base mesh even after boundary-aware retriangulation, it enables spatially consistent interpolation that respects both signal geometry and sampling theory.

In practical terms, this guarantees that for sufficiently smooth or low-pass signals, the SBMT mesh supports near-exact interpolation without ringing artifacts, aliasing, or degradation near boundaries—critical for applications in simulation, restoration, and geometry-aware filtering.
\end{remark}

\begin{lemma}[Error Bound for Triangular Interpolation]
Let $f : \mathbb{R}^2 \to \mathbb{R}$ be a continuous function with rapidly decaying Fourier transform $\hat{f}(\xi, \eta)$, and suppose there exists a cutoff frequency $\Omega > 0$ such that
\[
|\hat{f}(\xi, \eta)| < \varepsilon \quad \text{for all } \sqrt{\xi^2 + \eta^2} > \Omega.
\]
Let $\mathcal{T}_h$ be a structured equilateral triangular mesh with edge length $h < \frac{1.86}{\Omega}$. Then any piecewise polynomial interpolation $\tilde{f}$ over $\mathcal{T}_h$ satisfies:
\[
\| \tilde{f} - f \|_{L^\infty} < \varepsilon.
\]
\end{lemma}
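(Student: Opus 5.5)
The plan is to obtain the lemma as a restatement of Theorem~\ref{thm:spectral_interp}, with the effective-bandlimit hypothesis written out explicitly. The hypothesis $|\hat f(\xi,\eta)|<\varepsilon$ for $\sqrt{\xi^2+\eta^2}>\Omega$ is exactly the tail condition introduced before that theorem, with tail amplitude $\delta=\varepsilon$. The mesh condition $h<1.86/\Omega$ is the same sampling condition. So the first step is to check that $f$ and $\mathcal{T}_h$ satisfy the hypotheses of Theorem~\ref{thm:spectral_interp}. The only point needing a remark is that $f$ is now only continuous with rapidly decaying $\hat f$, not compactly supported. Rapid decay makes $\hat f\in L^1$, so Fourier inversion holds pointwise and everything used in that proof still makes sense.

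Second, I will split $f=f_{\le\Omega}+f_{>\Omega}$, where $f_{\le\Omega}$ is obtained by restricting $\hat f$ to the disk $\|\boldsymbol{\xi}\|\le\Omega$. Interpolation is linear, so
\[
\tilde f-f=(\tilde f_{\le\Omega}-f_{\le\Omega})+(\tilde f_{>\Omega}-f_{>\Omega}).
\]
For the band-limited part I reuse the density computation from the proof of Theorem~\ref{thm:spectral_interp}: the effective vertex spacing $\approx0.269\,h$ meets the Nyquist bound $\le 1/(2\Omega)$ precisely when $h<1.86/\Omega$. I then treat the contribution of $f_{\le\Omega}$ as resolved by the sampling, exactly as in that proof. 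For the tail part I use the interpolation operator's bound $\|\tilde g\|_\infty\le C\|g\|_\infty$. Here $C=1$ for piecewise-linear interpolation, since barycentric weights are a convex combination. Together with $\|f_{>\Omega}\|_\infty\le\int_{\|\boldsymbol{\xi}\|>\Omega}|\hat f|$, this reduces the error to the spectral tail.

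The main obstacle is the last step: turning ``the error is controlled by the tail'' into the clean inequality $<\varepsilon$ with the same $\varepsilon$ as in the hypothesis. A pointwise bound $|\hat f|<\varepsilon$ on an unbounded region does not by itself bound $\int|\hat f|$ by $\varepsilon$. Likewise, the low-frequency part is resolved by the sampling only in the same heuristic Nyquist sense the paper uses, not exactly.

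To handle this I will use the paper's own reading from Theorem~\ref{thm:spectral_interp}: the $\varepsilon$ in the conclusion is the tail-determined quantity, so that ``$|\hat f|<\varepsilon$ beyond $\Omega$'' is read as the tail being negligible at level $\varepsilon$. Under that reading, the lemma follows by citing the theorem's conclusion. I would state this identification explicitly rather than hide it, because it is where the argument leaves rigorous sampling theory. A genuinely quantitative version would need an integrable tail hypothesis, which the lemma as worded does not assume.
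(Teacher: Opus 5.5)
You have correctly located the weak step, but it cannot be repaired by interpretation, because the lemma as stated is false. Your route is the paper's own: its proof cites Theorem~\ref{thm:spectral_interp} and asserts that a spectral tail bounded by $\varepsilon$ gives an interpolation error bounded by $\varepsilon$. That is exactly the identification you propose to make explicit, and it is where the argument breaks.

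Here is a counterexample. Let $\hat f$ be a nonzero, real, radial $C_c^\infty$ bump supported in $\{\|\boldsymbol{\xi}\|\le\Omega\}$. Then $f$ is real and continuous, its transform decays rapidly, and $|\hat f|<\varepsilon$ beyond $\Omega$ holds for \emph{every} $\varepsilon>0$. But $f$ is real-analytic and tends to $0$ at infinity. So it cannot agree with a polynomial on the interior of any triangle: it would then be a polynomial on all of $\mathbb{R}^2$, hence $\equiv 0$. Therefore $\delta_0:=\|\tilde f-f\|_{L^\infty}>0$ for nodal interpolation of any fixed degree and any $h$, and taking $\varepsilon\le\delta_0$ violates the conclusion.

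There is also an obstruction that does not depend on the support of $\hat f$. For $g(\boldsymbol{x})=f(\boldsymbol{x}/\lambda)$ on the dilated mesh $\lambda\mathcal{T}_h$, one has $\hat g(\boldsymbol{\xi})=\lambda^2\hat f(\lambda\boldsymbol{\xi})$. So the hypotheses hold with $(\Omega/\lambda,\lambda^2\varepsilon,\lambda h)$, while the interpolation error is unchanged. Letting $\lambda\to 0$, the lemma would force $\tilde f=f$ for every admissible $f$: hypothesis and conclusion do not scale alike.

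Reading the conclusion's $\varepsilon$ as some unspecified tail-determined quantity, as you propose, does not help, because your low-frequency step fails on its own. Shannon--Nyquist sampling concerns reconstruction from lattice samples with band-limited (sinc-type) kernels. A local piecewise-polynomial interpolant never reproduces a nonzero $f_{\le\Omega}$, at any mesh size. For nodal $P_1$ interpolation, combine three facts, with $u$ a unit vector: the estimate $\|g-\tilde g\|_{L^\infty}\le \tfrac12 h^2\max_{u}\|\partial_{uu}g\|_{L^\infty}$, Bernstein's inequality $\|\partial_{uu}f_{\le\Omega}\|_{L^\infty}\le\Omega^2\|f_{\le\Omega}\|_{L^\infty}$, and the stability $\|\tilde g\|_{L^\infty}\le\|g\|_{L^\infty}$ that you noted. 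What your splitting then honestly yields is
\[
\|\tilde f-f\|_{L^\infty}\;\le\;\tfrac12\,(h\Omega)^2\,\|f_{\le\Omega}\|_{L^\infty}\;+\;2\,\|f_{>\Omega}\|_{L^\infty},
\qquad
\|f_{>\Omega}\|_{L^\infty}\lesssim\int_{\|\boldsymbol{\xi}\|>\Omega}|\hat f(\boldsymbol{\xi})|\,d\boldsymbol{\xi}.
\]
At $h\approx 1.86/\Omega$ the first term is not small. For a wave packet with carrier frequency just below $\Omega$, the $P_1$ error along the best-aligned edge direction is a sizable fixed fraction of the amplitude. The second term needs an integrable tail rather than a pointwise bound. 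This estimate is correct and is the most your decomposition can give, but it is not the lemma.

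Finally, the density count you intend to reuse is itself wrong. In the regular triangular lattice each vertex lies in six triangles and each triangle has three vertices, so there are two triangles per vertex. The area per sample is therefore $2A_\triangle=\tfrac{\sqrt3}{2}h^2$, not $A_\triangle/6$. The constant $1.86$ thus has no sampling-theoretic justification even heuristically.
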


\begin{proof}[Proof Sketch]
This follows directly from Theorem~\ref{thm:spectral_interp}, which shows that interpolation over equilateral triangular meshes achieves spectral accuracy when the resolution satisfies $h < \frac{1.86}{\Omega}$. Since the spectral tail of $f$ beyond $\Omega$ is bounded by $\varepsilon$, the resulting interpolation error is also bounded by $\varepsilon$.
\end{proof}

\begin{corollary}[Exponential Accuracy for Analytic Functions]
If $f$ is real-analytic—common in physical simulations and solutions to elliptic PDEs—then its Fourier spectrum decays exponentially. Consequently, for sufficiently small $h$, the interpolation error becomes exponentially small, rendering it negligible in practical applications.
\end{corollary}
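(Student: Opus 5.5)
The plan is to combine a Paley--Wiener type decay estimate for analytic functions with the Lemma (Error Bound for Triangular Interpolation), choosing the cutoff frequency $\Omega$ as large as the mesh allows. I read the statement quantitatively as follows: if $f$ extends holomorphically to a complex strip $\{z\in\mathbb{C}^2 : |\operatorname{Im} z|<\sigma\}$ and is integrable on each horizontal slice with a uniform bound, then $|\hat f(\boldsymbol{\xi})|\le C e^{-\sigma\|\boldsymbol{\xi}\|}$. This follows by shifting the Fourier integration contour by $\pm i\sigma'$ for any $\sigma'<\sigma$ in the direction of $\boldsymbol{\xi}$ and applying Cauchy's theorem. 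I will take this as the meaning of ``its Fourier spectrum decays exponentially.''

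Next, for a given $h$, I set $\Omega := 1.86/h$. This is the largest cutoff for which the hypothesis $h<1.86/\Omega$ of the Lemma and of Theorem~\ref{thm:spectral_interp} still holds, up to an arbitrarily small margin. The tail amplitude beyond this cutoff is then bounded by $\varepsilon(h)=C e^{-1.86\,\sigma/h}$.

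One caveat: the Lemma is phrased with a pointwise tail amplitude, but an honest $L^\infty$ control needs the integrated tail $\int_{\|\boldsymbol{\xi}\|>\Omega}|\hat f|\,d\boldsymbol{\xi}$. I will therefore also bound this integral in polar coordinates, which gives $2\pi C\,\sigma^{-2}(1+\sigma\Omega)e^{-\sigma\Omega}$. This is still of the form $C'' h^{-1}e^{-1.86\sigma/h}$, so it is exponentially small as $h\to0$. Invoking the Lemma, or Theorem~\ref{thm:spectral_interp} directly, with this $\varepsilon(h)$ then yields $\|\tilde f-f\|_{L^\infty}\lesssim h^{-1}e^{-c/h}$, which is the claimed exponential accuracy.

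The main obstacle is a tension in the hypotheses. Theorem~\ref{thm:spectral_interp} assumes $f$ is compactly supported, and a nonzero real-analytic function on $\mathbb{R}^2$ cannot have compact support. I would resolve this by working only on the image domain $D$. Assuming $f$ is analytic on a neighbourhood of $\overline D$, I would multiply it by a smooth cutoff $\chi$ equal to $1$ on $D$ and apply the argument to $\chi f$; the interpolation error on $D$ is unchanged. However, $\chi f$ is only Gevrey-class, so its Fourier transform decays like $e^{-c\|\boldsymbol{\xi}\|^{s}}$ with $s<1$, and this route gives only sub-exponential rates. To get a genuinely exponential rate, I would try first to state the corollary for $f$ analytic in a uniform strip (no compact support, e.g.\ periodic or decaying signals), and record the windowed, compactly supported case as the weaker sub-exponential variant.
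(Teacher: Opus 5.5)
Your argument follows the paper's own justification almost step for step: analyticity gives exponential spectral decay, you let $\Omega$ grow like $1/h$, and you feed the exponentially small tail into the Error-Bound Lemma or the spectral-interpolation theorem. Your additions are correct: the contour-shift decay estimate, the integrated-tail bound $2\pi C\sigma^{-2}(1+\sigma\Omega)e^{-\sigma\Omega}$, and the observation that a nonzero real-analytic function cannot be compactly supported.

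The gap is the last step, ``invoking the Lemma with this $\varepsilon(h)$,'' and it is fatal. The paper's sketch makes exactly the same leap, so following it faithfully does not help.

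That Lemma is false. Let $\hat f$ be a smooth radial bump supported in the unit disk. Then the Lemma's hypothesis holds with $\Omega=1$ for every $\varepsilon>0$, so it would force $\tilde f=f$ exactly for all $h<1.86$. That is impossible for a non-affine analytic $f$ and a piecewise linear $\tilde f$.

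Your conclusion is also false. Take $f(\mathbf{x})=e^{-|\mathbf{x}|^2}$, which is entire, has a Gaussian spectrum, and satisfies your strip hypothesis for every $\sigma$. Take a mesh triangle containing the origin, with vertices $v_i$ and centroid $c$. Since $f=1-|\mathbf{x}|^2+O(|\mathbf{x}|^4)$ and $\tilde f(c)=\frac{1}{3}\sum_i f(v_i)$ for P1 interpolation,
\[
f(c)-\tilde f(c)=\frac{1}{3}\sum_{i=1}^{3}|v_i|^2-|c|^2+O(h^4)=\frac{1}{3}\sum_{i=1}^{3}|v_i-c|^2+O(h^4)=\frac{h^2}{3}+O(h^4).
\]
So the error is of order $h^2$, not $O(h^{-1}e^{-c/h})$. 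More generally, degree-$k$ piecewise interpolation saturates at order $h^{k+1}$ unless the $(k+1)$-st derivatives of $f$ vanish identically. Read as a claim about fixed-degree piecewise polynomial interpolation, as both you and the paper read it, the corollary is false, and no argument can close this step.

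The missing ingredient is the in-band (consistency) error, which no tail bound controls. Split $f=f_{\le\Omega}+f_{>\Omega}$ spectrally.

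\textbf{High-frequency part.} Nodal interpolation is $L^\infty$-stable (Lebesgue constant $1$ for P1), so this part contributes at most $2\|f_{>\Omega}\|_\infty\lesssim\int_{\|\boldsymbol{\xi}\|>\Omega}|\hat f|$. Your estimate does make this exponentially small, so that part of your work is sound.

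\textbf{Low-frequency part.} This part contributes $\|f_{\le\Omega}-I_hf_{\le\Omega}\|_\infty\lesssim h^2\|D^2f_{\le\Omega}\|_\infty\lesssim(h\Omega)^2\|\hat f\|_{L^1}$. That is merely $O(1)$ when $\Omega=1.86/h$, and only algebraic in $h$ for any fixed $\Omega$.

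The Nyquist-density argument behind the theorem only says that lattice samples determine the band-limited part under an ideal, sinc-type reconstruction. It says nothing about how accurately a fixed-degree local polynomial reproduces that part. What you can actually prove is one of two things:
\begin{enumerate}
\item the aliasing/tail \emph{contribution} is $O(h^{-1}e^{-c/h})$, while the total P$_k$ error remains $O(h^{k+1})$; or
\item genuine exponential accuracy for a spectral reconstruction on the lattice (sinc or trigonometric interpolation), where the entire error is an aliasing sum controlled by precisely your strip-analyticity estimate.
\end{enumerate}
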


\begin{proof}[Sketch of Justification]
Real-analytic functions have Fourier transforms that decay faster than any inverse polynomial, often exponentially as $|\hat{f}(\xi,\eta)| \lesssim e^{-\alpha \sqrt{\xi^2 + \eta^2}}$ for some $\alpha > 0$. As $h$ decreases, the cutoff frequency $\Omega$ effectively increases, pushing the spectral tail beyond $\Omega$ to be exponentially small. Thus, the interpolation error decays exponentially in $1/h$.
\end{proof}

\begin{proposition}[Recommended Edge Lengths for Structured Sampling]
Let $\mathcal{T}_h$ be a structured equilateral triangular mesh used to interpolate an effectively bandlimited signal with cutoff frequency $\Omega$. The following practical bounds on $h$ ensure spectral interpolation accuracy for typical signal types.
\end{proposition}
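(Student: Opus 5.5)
The plan is to treat the proposition as a corollary of Theorem~\ref{thm:spectral_interp}, which already gives the sufficient condition $h < 1.86/\Omega$. The remaining work is to attach a concrete effective cutoff $\Omega$ to each ``typical signal type'' and then intersect the resulting bound with the geometric constraints SBMT already imposes on the base edge length.

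\textbf{Step 1: fix conventions and signal classes.} I will measure $\Omega$ in cycles per pixel, which is the normalization used in the Nyquist step of the proof of Theorem~\ref{thm:spectral_interp} (spacing $\delta \le 1/(2\Omega)$). I then state the classes explicitly:
\begin{enumerate}[label=(\roman*)]
\item raw raster data sampled at unit pixel spacing, whose representable band is limited by the pixel Nyquist frequency, so $\Omega \le 1/2$;
\item optically acquired images whose point-spread function acts as a low-pass filter, giving an effective cutoff $\Omega_{\mathrm{opt}} \le 1/2$, typically strictly smaller;
\item smooth fields such as solutions of elliptic PDEs (Poisson, harmonic potentials). For these, the corollary on analytic functions provides exponential spectral decay, and $\Omega$ can be chosen as the frequency at which the tail drops below a prescribed tolerance $\varepsilon$.
\end{enumerate}

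\textbf{Step 2: substitute into the theorem.} Inserting each $\Omega$ into $h < 1.86/\Omega$ gives the class-specific bound. For class (i) this is $h < 3.72$ pixels. For class (ii) it is $h < 1.86/\Omega_{\mathrm{opt}}$, which is weaker still. For class (iii) it is $h < 1.86/\Omega(\varepsilon)$ with $\Omega(\varepsilon) = O(\log(1/\varepsilon)/\alpha)$, using the decay rate $e^{-\alpha\|\xi\|}$ from the analytic-function corollary.

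\textbf{Step 3: intersect with SBMT's structural requirements.} These are $h = e < 1$ (Section~\ref{sec:geometric-protocol}), $e < \ell_{\min}^{\text{boundary}}$ from \eqref{eq:admissible-range}, and $a < e/2$ from Remark~\ref{rem:no-snapping-collision}. For classes (i) and (ii) the spectral bound is then automatically slack, and the recommended range becomes roughly $2a < e < 1$. As a consistency check, the experimental choices $e = \sqrt{0.45}$ and $e = \sqrt{0.7}$ with $a = 0.26$ satisfy this range. Only for class (iii), when $\varepsilon$ is taken very small, can the spectral bound become the binding one.

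\textbf{Main obstacle.} The statement does not say precisely what a ``typical signal type'' is, so the real work is in Step~1: fixing definitions that make the cutoff values defensible. Unit conventions are a related trap. If $\Omega$ is instead read as an angular frequency, the constant changes by a factor of $2\pi$, so I will state the cycles-per-pixel normalization explicitly and carry it through every substitution.
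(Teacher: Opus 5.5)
Your mechanism (attach a cutoff to each signal class and substitute it into $h<1.86/\Omega$ from Theorem~\ref{thm:spectral_interp}) is exactly what the paper does. However, your plan never arrives at the bounds the proposition is about. Those are the entries of Table~\ref{tab:h-guidelines}: $h\lesssim 0.59$ for sharp transitions ($\Omega=\pi$), $h\lesssim 1.18$ for antialiased natural images ($\Omega=\pi/2$), and $h\le 0.5$ as a conservative margin. The paper's whole argument is $1.86/\pi\approx 0.59$ and $1.86/(\pi/2)\approx 1.18$, with $0.5$ chosen below the worst case. Your cycles-per-pixel normalization instead gives $h<3.72$ for pixel-limited data, exactly $2\pi$ times the paper's $0.59$. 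Your Step~3 then replaces the recommendation by $2a<e<1$, which is a different statement; at the default $a=0.26$ it even excludes the tabulated $h\le 0.5$. As planned, you would prove a different proposition, not this one.

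Your normalization is in fact the one consistent with the theorem's own Nyquist step $\delta\le 1/(2\Omega)$. The paper's cutoffs $\pi$ and $\pi/2$ are radian-per-pixel values, so the table mixes conventions. You can still close the proof, because the claim is one of sufficiency. Show that each tabulated value lies below your class threshold: $0.5<0.59<1.18<1.86/\Omega$ for every $\Omega\le\tfrac12$ cycle per pixel. Then the table's bounds satisfy the hypothesis of Theorem~\ref{thm:spectral_interp} a fortiori. Alternatively, adopt the paper's reading of $\Omega$ and do the one-line substitution while flagging the unit mismatch.

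Two further corrections. First, in Step~1, the theorem's hypothesis is isotropic: a small spectral tail for $\sqrt{\xi^2+\eta^2}>\Omega$. A unit-pixel raster, however, is band-limited to the square $[-\tfrac12,\tfrac12]^2$, whose corners lie at radius $1/\sqrt2$. The correct cutoff for your class~(i) is therefore $\Omega=1/\sqrt2$, giving $h<1.86\sqrt2\approx 2.63$ rather than $3.72$. This is still above every tabulated value, so the comparison survives. Second, drop Step~3 from the proof. The SBMT admissibility constraints $a<e/2$ and $e<\ell_{\min}^{\text{boundary}}$ are not part of the spectral claim. If folded into the ``recommended'' range, they conflict with the table rather than establish it.
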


\begin{table}[H]
\centering
\caption{Recommended edge-length bounds $h$ for SBMT triangulation under typical signal conditions.}
\label{tab:h-guidelines}
\begin{tabular}{@{}lcc@{}}
\toprule
\textbf{Signal Condition} & \textbf{Cutoff Frequency $\Omega$} & \textbf{Recommended $h$} \\
\midrule
Sharp transitions (e.g., checkerboard, alias-prone) & $\pi$       & $h \lesssim 0.59$ \\
Natural images (antialiased or smoothed)            & $\pi/2$     & $h \lesssim 1.18$ \\
Conservative safety margin                           & N/A         & $h \le 0.5$ \\
\bottomrule
\end{tabular}
\end{table}

\begin{remark}[Interpretation]
The theoretical bound $h < \frac{1.86}{\Omega}$ ensures that the average sampling density over equilateral triangular meshes meets the Nyquist condition. 

For worst-case high-frequency inputs with discontinuities (e.g., a black-and-white checkerboard), the effective $\Omega \approx \pi$ requires $h \lesssim 0.59$. However, most real-world signals—natural images, sensor data, physical fields—are smooth or have been prefiltered, resulting in lower effective $\Omega$.

Thus, setting $h \le 0.5$ is a conservative yet robust choice, ensuring compatibility with a broad range of input signals. This makes the SBMT framework especially suitable for subpixel-accurate interpolation of bitmap-derived domains while retaining both geometric alignment and sampling-theoretic fidelity.
\end{remark}

\vspace{1em}

\section*{Appendix F\\Lookup Tables in SBMT}
\label{app:lookup_table}

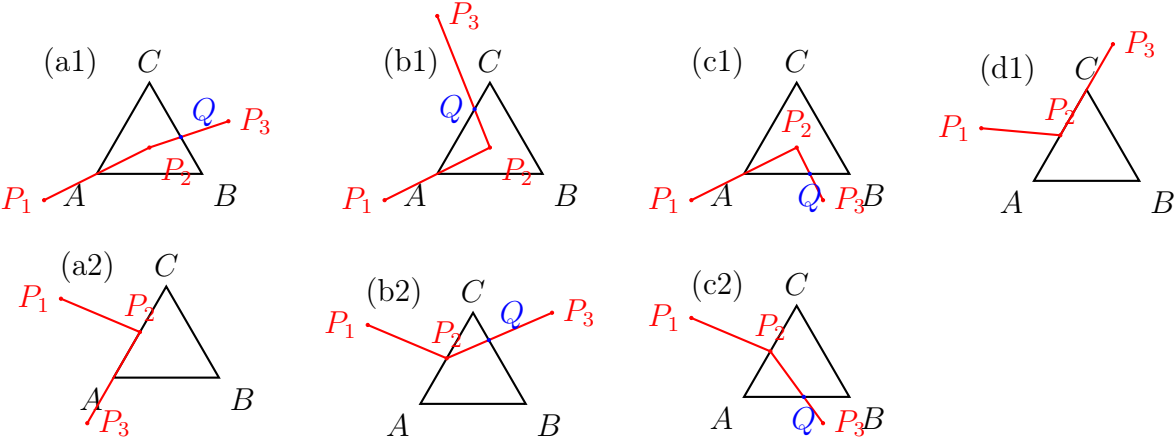
\begin{figure}[H]
\begin{center}
\renewcommand{\arraystretch}{1.5}
\begin{tabular}{cccc}
% 图 (a)
\begin{tikzpicture}[scale=0.7]
\node at (0.5, 2.1) {(a1)};  % 编号
\coordinate (A) at (1,0);
\coordinate (B) at (3,0);
\coordinate (C) at (2,1.732);
\draw[thick] (A) -- (B) -- (C) -- cycle;
\node[below left] at (A) {$A$};
\node[below right] at (B) {$B$};
\node[above] at (C) {$C$};

% 正确的 P1，使 P1P2 经过 A
\coordinate (P1) at (0,-0.5);
\coordinate (P2) at (2,0.5);
\coordinate (P3) at (3.5,1);
\draw[red, thick] (P1) -- (P2);
\draw[red, thick] (P2) -- (P3);
\filldraw[red] (P1) circle (0.03) node[left] {$P_1$};
\filldraw[red] (P2) circle (0.03) node[below right] {$P_2$};
\filldraw[red] (P3) circle (0.03) node[right] {$P_3$};

% 交点（根据原折线位置保持不变）
\coordinate (Q) at (2.597, 0.699);
\filldraw[blue] (Q) circle (0.03) node[above right] {$Q$};
\end{tikzpicture}
&
% 图 (b)
\begin{tikzpicture}[scale=0.7]
\node at (0.5, 2.1) {(b1)};  % 编号
\coordinate (A) at (1,0);
\coordinate (B) at (3,0);
\coordinate (C) at (2,1.732);
\draw[thick] (A) -- (B) -- (C) -- cycle;
\node[below left] at (A) {$A$};
\node[below right] at (B) {$B$};
\node[above] at (C) {$C$};

% 红折线段
\coordinate (P1) at (0,-0.5);
\coordinate (P2) at (2,0.5);
\coordinate (P3) at (1,3);
\draw[red, thick] (P1) -- (P2);
\draw[red, thick] (P2) -- (P3);
\filldraw[red] (P1) circle (0.03) node[left] {$P_1$};
\filldraw[red] (P2) circle (0.03) node[below right] {$P_2$};
\filldraw[red] (P3) circle (0.03) node[right] {$P_3$};

% 标注交点 Q
\coordinate (Q) at (1.709, 1.229);
\filldraw[blue] (Q) circle (0.03) node[left] {$Q$};
\end{tikzpicture}

&
% 图 (c)
\begin{tikzpicture}[scale=0.7]
\node at (0.5, 2.1) {(c1)};  % 编号
\coordinate (A) at (1,0);
\coordinate (B) at (3,0);
\coordinate (C) at (2,1.732);
\draw[thick] (A) -- (B) -- (C) -- cycle;
\node[below left] at (A) {$A$};
\node[below right] at (B) {$B$};
\node[above] at (C) {$C$};

% 红折线段
\coordinate (P1) at (0,-0.5);
\coordinate (P2) at (2,0.5);
\coordinate (P3) at (2.5,-0.5);
\draw[red, thick] (P1) -- (P2);
\draw[red, thick] (P2) -- (P3);
\filldraw[red] (P1) circle (0.03) node[left] {$P_1$};
\filldraw[red] (P2) circle (0.03) node[above] {$P_2$};
\filldraw[red] (P3) circle (0.03) node[right] {$P_3$};

% 交点 Q
\coordinate (Q) at (2.25, 0);
\filldraw[blue] (Q) circle (0.03) node[below] {$Q$};
\end{tikzpicture}
&
\begin{tikzpicture}[scale=0.7]
\node at (0.5, 2.1) {(d1)};  % 编号
\coordinate (A) at (1,0);
\coordinate (B) at (3,0);
\coordinate (C) at (2,1.732);
\draw[thick] (A) -- (B) -- (C) -- cycle;
\node[below left] at (A) {$A$};
\node[below right] at (B) {$B$};
\node[above] at (C) {$C$};

% 红折线段
\coordinate (P1) at (0,1);
\coordinate (P2) at (1.5,0.866);
\coordinate (P3) at (2.5,2.598);
\draw[red, thick] (P1) -- (P2);
\draw[red, thick] (P2) -- (P3);
\filldraw[red] (P1) circle (0.03) node[left] {$P_1$};
\filldraw[red] (P2) circle (0.03) node[above] {$P_2$};
\filldraw[red] (P3) circle (0.03) node[right] {$P_3$};

\end{tikzpicture}

\\
\begin{tikzpicture}[scale=0.7]
\node at (0.5, 2.1) {(a2)};  % 编号
\coordinate (A) at (1,0);
\coordinate (B) at (3,0);
\coordinate (C) at (2,1.732);
\draw[thick] (A) -- (B) -- (C) -- cycle;
\node[below left] at (A) {$A$};
\node[below right] at (B) {$B$};
\node[above] at (C) {$C$};

% 红折线段
\coordinate (P1) at (0,1.5);
\coordinate (P2) at (1.5,0.866);
\coordinate (P3) at (0.5,-0.866);
\draw[red, thick] (P1) -- (P2);
\draw[red, thick] (P2) -- (P3);
\filldraw[red] (P1) circle (0.03) node[left] {$P_1$};
\filldraw[red] (P2) circle (0.03) node[above] {$P_2$};
\filldraw[red] (P3) circle (0.03) node[right] {$P_3$};

\end{tikzpicture}
&
% 图 (b)
\begin{tikzpicture}[scale=0.7]
\node at (0.5, 2.1) {(b2)};  % 编号
\coordinate (A) at (1,0);
\coordinate (B) at (3,0);
\coordinate (C) at (2,1.732);
\draw[thick] (A) -- (B) -- (C) -- cycle;
\node[below left] at (A) {$A$};
\node[below right] at (B) {$B$};
\node[above] at (C) {$C$};

% 红折线段
\coordinate (P1) at (0,1.5);
\coordinate (P2) at (1.5,0.866);
\coordinate (P3) at (3.5,1.732);
\draw[red, thick] (P1) -- (P2);
\draw[red, thick] (P2) -- (P3);
\filldraw[red] (P1) circle (0.03) node[left] {$P_1$};
\filldraw[red] (P2) circle (0.03) node[above] {$P_2$};
\filldraw[red] (P3) circle (0.03) node[right] {$P_3$};

% 交点 Q
\coordinate (Q) at (2.3, 1.212);
\filldraw[blue] (Q) circle (0.03) node[above right] {$Q$};
\end{tikzpicture}

&
% 图 (c)
\begin{tikzpicture}[scale=0.7]
\node at (0.5, 2.1) {(c2)};  % 编号
\coordinate (A) at (1,0);
\coordinate (B) at (3,0);
\coordinate (C) at (2,1.732);
\draw[thick] (A) -- (B) -- (C) -- cycle;
\node[below left] at (A) {$A$};
\node[below right] at (B) {$B$};
\node[above] at (C) {$C$};

% 红折线段
\coordinate (P1) at (0,1.5);
\coordinate (P2) at (1.5,0.866);
\coordinate (P3) at (2.5,-0.5);
\draw[red, thick] (P1) -- (P2);
\draw[red, thick] (P2) -- (P3);
\filldraw[red] (P1) circle (0.03) node[left] {$P_1$};
\filldraw[red] (P2) circle (0.03) node[above] {$P_2$};
\filldraw[red] (P3) circle (0.03) node[right] {$P_3$};

% 交点 Q
\coordinate (Q) at (2.134, 0);
\filldraw[blue] (Q) circle (0.03) node[below] {$Q$};
\end{tikzpicture}

&

\end{tabular}
\end{center}

\caption{
Configurations of type \textbf{(1,2)}, where two boundary segments intersect a triangle with one and two points respectively. These patterns feature three embedded intersection points and support unified retriangulation rules across different edge locations.
}
\label{fig:1_2}
\end{figure}

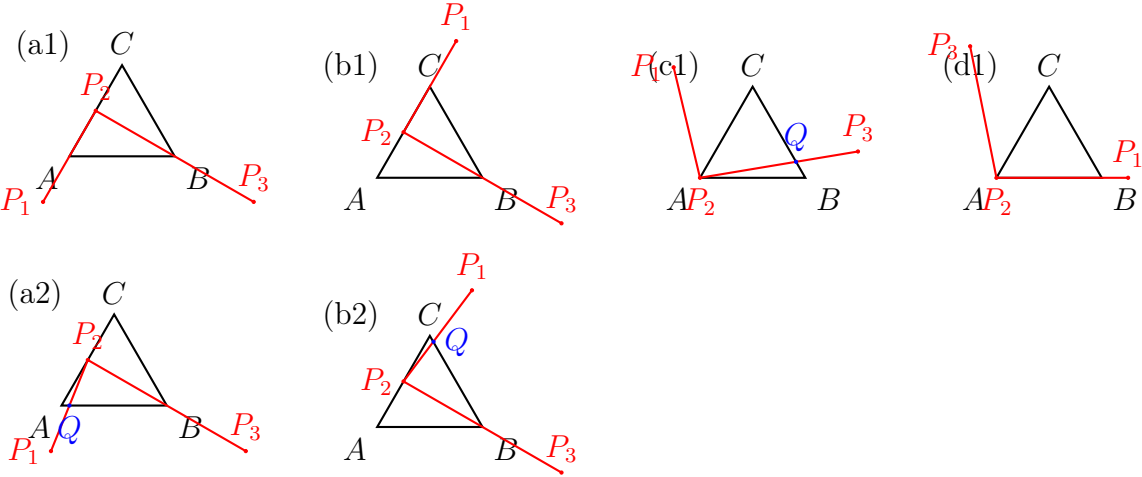
\begin{figure}[H]
\begin{center}
\renewcommand{\arraystretch}{1.5}
\begin{tabular}{cccc}
% 图 (a)
\begin{tikzpicture}[scale=0.7]
\node at (0.5, 2.1) {(a1)};  % 编号
\coordinate (A) at (1,0);
\coordinate (B) at (3,0);
\coordinate (C) at (2,1.732);
\draw[thick] (A) -- (B) -- (C) -- cycle;
\node[below left] at (A) {$A$};
\node[below right] at (B) {$B$};
\node[above] at (C) {$C$};

% 红折线段（调整后的）
\coordinate (P1) at (0.5,-0.866);  % 使 P1P2 经过 A
\coordinate (P2) at (1.5,0.866);
\coordinate (P3) at (4.5,-0.866);  % 使 P2P3 经过 B
\draw[red, thick] (P1) -- (P2);
\draw[red, thick] (P2) -- (P3);
\filldraw[red] (P1) circle (0.03) node[left] {$P_1$};
\filldraw[red] (P2) circle (0.03) node[above] {$P_2$};
\filldraw[red] (P3) circle (0.03) node[above] {$P_3$};
\end{tikzpicture}
&
% 图 (b)
\begin{tikzpicture}[scale=0.7]
\node at (0.5, 2.1) {(b1)};  % 编号
\coordinate (A) at (1,0);
\coordinate (B) at (3,0);
\coordinate (C) at (2,1.732);
\draw[thick] (A) -- (B) -- (C) -- cycle;
\node[below left] at (A) {$A$};
\node[below right] at (B) {$B$};
\node[above] at (C) {$C$};

% 红折线段：C 在 P1 和 P2 之间
\coordinate (P1) at (2.5,2.598);     % 调整后
\coordinate (P2) at (1.5,0.866);     % 保持不变
\coordinate (P3) at (4.5,-0.866);    % 保持不变
\draw[red, thick] (P1) -- (P2);
\draw[red, thick] (P2) -- (P3);
\filldraw[red] (P1) circle (0.03) node[above] {$P_1$};
\filldraw[red] (P2) circle (0.03) node[left] {$P_2$};
\filldraw[red] (P3) circle (0.03) node[above] {$P_3$};
\end{tikzpicture}
&
% 图 (c)
\begin{tikzpicture}[scale=0.7]
\node at (0.5, 2.1) {(c1)};  % 编号
\coordinate (A) at (1,0);
\coordinate (B) at (3,0);
\coordinate (C) at (2,1.732);
\draw[thick] (A) -- (B) -- (C) -- cycle;
\node[below left] at (A) {$A$};
\node[below right] at (B) {$B$};
\node[above] at (C) {$C$};

% 红线段
\coordinate (P1) at (0.5,2.1);
\coordinate (P2) at (1.0,0.0);
\coordinate (P3) at (4.0,0.5);
\draw[red, thick] (P1) -- (P2);
\draw[red, thick] (P3) -- (P2);
\filldraw[red] (P1) circle (0.03) node[left] {$P_1$};
\filldraw[red] (P2) circle (0.03) node[below] {$P_2$};
\filldraw[red] (P3) circle (0.03) node[above] {$P_3$};

% 标注交点 Q
\coordinate (Q) at (2.824, 0.304);
\filldraw[blue] (Q) circle (0.03) node[above] {$Q$};
\end{tikzpicture}
&
\begin{tikzpicture}[scale=0.7]
\node at (0.5, 2.1) {(d1)};  % 编号
\coordinate (A) at (1,0);
\coordinate (B) at (3,0);
\coordinate (C) at (2,1.732);
\draw[thick] (A) -- (B) -- (C) -- cycle;
\node[below left] at (A) {$A$};
\node[below right] at (B) {$B$};
\node[above] at (C) {$C$};

% 红线段
\coordinate (P1) at (3.5,0);
\coordinate (P2) at (1.0,0.0);
\coordinate (P3) at (0.5,2.5);
\draw[red, thick] (P1) -- (P2);
\draw[red, thick] (P3) -- (P2);
\filldraw[red] (P1) circle (0.03) node[above] {$P_1$};
\filldraw[red] (P2) circle (0.03) node[below] {$P_2$};
\filldraw[red] (P3) circle (0.03) node[left] {$P_3$};

\end{tikzpicture}

\\

\begin{tikzpicture}[scale=0.7]
\node at (0.5, 2.1) {(a2)};  % 编号
\coordinate (A) at (1,0);
\coordinate (B) at (3,0);
\coordinate (C) at (2,1.732);
\draw[thick] (A) -- (B) -- (C) -- cycle;
\node[below left] at (A) {$A$};
\node[below right] at (B) {$B$};
\node[above] at (C) {$C$};

% 红折线段（调整后的）
\coordinate (P1) at (0.8,-0.866);  % 使 P1P2 经过 A
\coordinate (P2) at (1.5,0.866);
\coordinate (P3) at (4.5,-0.866);  % 使 P2P3 经过 B
\draw[red, thick] (P1) -- (P2);
\draw[red, thick] (P2) -- (P3);
\filldraw[red] (P1) circle (0.03) node[left] {$P_1$};
\filldraw[red] (P2) circle (0.03) node[above] {$P_2$};
\filldraw[red] (P3) circle (0.03) node[above] {$P_3$};

% 添加交点标注
\coordinate (Q) at (1.15, 0);
\filldraw[blue] (Q) circle (0.03) node[below] {$Q$};
\end{tikzpicture}
&
% 图 (b)
\begin{tikzpicture}[scale=0.7]
\node at (0.5, 2.1) {(b2)};  % 编号
\coordinate (A) at (1,0);
\coordinate (B) at (3,0);
\coordinate (C) at (2,1.732);
\draw[thick] (A) -- (B) -- (C) -- cycle;
\node[below left] at (A) {$A$};
\node[below right] at (B) {$B$};
\node[above] at (C) {$C$};

% 红折线段：C 在 P1 和 P2 之间
\coordinate (P1) at (2.8,2.598);     % 调整后
\coordinate (P2) at (1.5,0.866);     % 保持不变
\coordinate (P3) at (4.5,-0.866);    % 保持不变
\draw[red, thick] (P1) -- (P2);
\draw[red, thick] (P2) -- (P3);
\filldraw[red] (P1) circle (0.03) node[above] {$P_1$};
\filldraw[red] (P2) circle (0.03) node[left] {$P_2$};
\filldraw[red] (P3) circle (0.03) node[above] {$P_3$};

% 添加交点标注
\coordinate (Q) at (2.07, 1.62);
\filldraw[blue] (Q) circle (0.03) node[right] {$Q$};
\end{tikzpicture}

\end{tabular}
\end{center}

\caption{
Typical (2,3) intersection patterns involving two boundary segments and their interaction with a single triangle. Classification is based on geometry to support complete, template-driven retriangulation.
}

\label{fig:2_3}
\end{figure}

\begin{figure}[H]
\begin{center}
\renewcommand{\arraystretch}{1.5}
\begin{tabular}{cccc}
% 图 (a)
\begin{tikzpicture}[scale=0.7]
\node at (0.5, 2.1) {(a1)};  % 编号
\coordinate (A) at (1,0);
\coordinate (B) at (3,0);
\coordinate (C) at (2,1.732);
\draw[thick] (A) -- (B) -- (C) -- cycle;
\node[below left] at (A) {$A$};
\node[below right] at (B) {$B$};
\node[above] at (C) {$C$};

% 红折线段，调整后的 P3，使 P2P3 经过 B
\coordinate (P1) at (0,-0.5);
\coordinate (P2) at (2,0.5);
\coordinate (P3) at (4,-0.5);
\draw[red, thick] (P1) -- (P2);
\draw[red, thick] (P2) -- (P3);
\filldraw[red] (P1) circle (0.03) node[left] {$P_1$};
\filldraw[red] (P2) circle (0.03) node[above] {$P_2$};
\filldraw[red] (P3) circle (0.03) node[right] {$P_3$};

\end{tikzpicture}
&
% 图 (b)
\begin{tikzpicture}[scale=0.7]
\node at (0.5, 2.1) {(b1)};  % 编号
\coordinate (A) at (1,0);
\coordinate (B) at (3,0);
\coordinate (C) at (2,1.732);
\draw[thick] (A) -- (B) -- (C) -- cycle;
\node[below left] at (A) {$A$};
\node[below right] at (B) {$B$};
\node[above] at (C) {$C$};

% 红折线段
\coordinate (P1) at (2,-0.5);
\coordinate (P2) at (1,1.0);
\coordinate (P3) at (3,1.0);
\draw[red, thick] (P1) -- (P2);
\draw[red, thick] (P2) -- (P3);
\filldraw[red] (P1) circle (0.03) node[below] {$P_1$};
\filldraw[red] (P2) circle (0.03) node[left] {$P_2$};
\filldraw[red] (P3) circle (0.03) node[right] {$P_3$};

% 四个交点
\coordinate (Q1) at (1.6667, 0);
\coordinate (Q2) at (1.309, 0.537);
\coordinate (Q3) at (2.423, 1.0);
\coordinate (Q4) at (1.577, 1.0);
\filldraw[blue] (Q1) circle (0.03) node[below] {$Q_1$};
\filldraw[blue] (Q2) circle (0.03) node[left] {$Q_2$};
\filldraw[blue] (Q3) circle (0.03) node[above right] {$Q_3$};
\filldraw[blue] (Q4) circle (0.03) node[above] {$Q_4$};
\end{tikzpicture}

&
% 图 (c)
\begin{tikzpicture}[scale=0.7]
\node at (0.5, 2.1) {(c1)};  % 编号
\coordinate (A) at (1,0);
\coordinate (B) at (3,0);
\coordinate (C) at (2,1.732);
\draw[thick] (A) -- (B) -- (C) -- cycle;
\node[below left] at (A) {$A$};
\node[below right] at (B) {$B$};
\node[above] at (C) {$C$};

% 红折线段
\coordinate (P1) at (2,-1);
\coordinate (P2) at (1.5,0.866);
\coordinate (P3) at (3.5,1);
\draw[red, thick] (P1) -- (P2);
\draw[red, thick] (P2) -- (P3);
\filldraw[red] (P1) circle (0.03) node[below] {$P_1$};
\filldraw[red] (P2) circle (0.03) node[above] {$P_2$};
\filldraw[red] (P3) circle (0.03) node[right] {$P_3$};

% 交点 Q1 与 AB
\coordinate (Q1) at (1.732, 0);
\filldraw[blue] (Q1) circle (0.03) node[below] {$Q_1$};

% 交点 Q2 与 BC
\coordinate (Q2) at (2.462, 0.930);
\filldraw[blue] (Q2) circle (0.03) node[above right] {$Q_2$};
\end{tikzpicture}
&
\begin{tikzpicture}[scale=0.7]
\node at (0.5, 2.1) {(d1)};  % 编号
\coordinate (A) at (1,0);
\coordinate (B) at (3,0);
\coordinate (C) at (2,1.732);
\draw[thick] (A) -- (B) -- (C) -- cycle;
\node[below left] at (A) {$A$};
\node[below right] at (B) {$B$};
\node[above] at (C) {$C$};

% 红折线段
\coordinate (P1) at (0.5,-0.866);
\coordinate (P2) at (1.5,0.866);
\coordinate (P3) at (2.5,2.598);
\draw[red, thick] (P1) -- (P2);
\draw[red, thick] (P2) -- (P3);
\filldraw[red] (P1) circle (0.03) node[left] {$P_1$};
\filldraw[red] (P2) circle (0.03) node[above] {$P_2$};
\filldraw[red] (P3) circle (0.03) node[right] {$P_3$};

\end{tikzpicture}

\\
\begin{tikzpicture}[scale=0.7]
\node at (0.5, 2.1) {(a2)};  % 编号
\coordinate (A) at (1,0);
\coordinate (B) at (3,0);
\coordinate (C) at (2,1.732);
\draw[thick] (A) -- (B) -- (C) -- cycle;
\node[below left] at (A) {$A$};
\node[below right] at (B) {$B$};
\node[above] at (C) {$C$};

% 红折线段
\coordinate (P1) at (3,1.5);
\coordinate (P2) at (1.5,0.866);
\coordinate (P3) at (0.5,-0.866);
\draw[red, thick] (P1) -- (P2);
\draw[red, thick] (P2) -- (P3);
\filldraw[red] (P1) circle (0.03) node[right] {$P_1$};
\filldraw[red] (P2) circle (0.03) node[above] {$P_2$};
\filldraw[red] (P3) circle (0.03) node[right] {$P_3$};

% 交点 Q (P1P2 ∩ BC)
\coordinate (Q) at (2.304, 1.205);
\filldraw[blue] (Q) circle (0.03) node[below] {$Q$};
\end{tikzpicture}
&
% 图 (b)
\begin{tikzpicture}[scale=0.7]
\node at (0.5, 2.1) {(b2)};  % 编号
\coordinate (A) at (1,0);
\coordinate (B) at (3,0);
\coordinate (C) at (2,1.732);
\draw[thick] (A) -- (B) -- (C) -- cycle;
\node[below left] at (A) {$A$};
\node[below right] at (B) {$B$};
\node[above] at (C) {$C$};

% 红折线段
\coordinate (P1) at (2.5,-0.5);
\coordinate (P2) at (1.5,0.866);
\coordinate (P3) at (0.5,-0.866);
\draw[red, thick] (P1) -- (P2);
\draw[red, thick] (P2) -- (P3);
\filldraw[red] (P1) circle (0.03) node[right] {$P_1$};
\filldraw[red] (P2) circle (0.03) node[above] {$P_2$};
\filldraw[red] (P3) circle (0.03) node[right] {$P_3$};

% 交点 Q
\coordinate (Q) at (2.134, 0);
\filldraw[blue] (Q) circle (0.03) node[below] {$Q$};
\end{tikzpicture}

&
% 图 (c)
\begin{tikzpicture}[scale=0.7]
\node at (0.5, 2.1) {(c2)};  % 编号
\coordinate (A) at (1,0);
\coordinate (B) at (3,0);
\coordinate (C) at (2,1.732);
\draw[thick] (A) -- (B) -- (C) -- cycle;
\node[below left] at (A) {$A$};
\node[below right] at (B) {$B$};
\node[above] at (C) {$C$};

% 红折线段
\coordinate (P1) at (0,1.5);
\coordinate (P2) at (1.5,0.866);
\coordinate (P3) at (2.5,-0.5);
\draw[red, thick] (P1) -- (P2);
\draw[red, thick] (P2) -- (P3);
\filldraw[red] (P1) circle (0.03) node[left] {$P_1$};
\filldraw[red] (P2) circle (0.03) node[above] {$P_2$};
\filldraw[red] (P3) circle (0.03) node[right] {$P_3$};

% 交点 Q
\coordinate (Q) at (2.134, 0);
\filldraw[blue] (Q) circle (0.03) node[below] {$Q$};
\end{tikzpicture}

&

\begin{tikzpicture}[scale=0.7]
\node at (0.5, 2.1) {(d2)};  % 编号
\coordinate (A) at (1,0);
\coordinate (B) at (3,0);
\coordinate (C) at (2,1.732);
\draw[thick] (A) -- (B) -- (C) -- cycle;
\node[below left] at (A) {$A$};
\node[below right] at (B) {$B$};
\node[above] at (C) {$C$};

% 红折线段
\coordinate (P1) at (3.0,2.598);
\coordinate (P2) at (1.0,0.866);
\coordinate (P3) at (2.5,-0.5);
\draw[red, thick] (P1) -- (P2);
\draw[red, thick] (P2) -- (P3);
\filldraw[red] (P1) circle (0.03) node[above right] {$P_1$};
\filldraw[red] (P2) circle (0.03) node[above] {$P_2$};
\filldraw[red] (P3) circle (0.03) node[right] {$P_3$};

% 交点 Q1 (P2P3 ∩ AC)
\coordinate (Q1) at (1.328, 0.568);
\filldraw[blue] (Q1) circle (0.03) node[left] {$Q_1$};

% 交点 Q2 (P2P3 ∩ AB)
\coordinate (Q2) at (1.951, 0);
\filldraw[blue] (Q2) circle (0.03) node[below] {$Q_2$};
\end{tikzpicture}

\\
\begin{tikzpicture}[scale=0.7]
\node at (0.5, 2.1) {(a3)};  % 编号
\coordinate (A) at (1,0);
\coordinate (B) at (3,0);
\coordinate (C) at (2,1.732);
\draw[thick] (A) -- (B) -- (C) -- cycle;
\node[below left] at (A) {$A$};
\node[below right] at (B) {$B$};
\node[above] at (C) {$C$};

% 红折线段，P2P3 经过 A，但 P3 不与 A 重合
\coordinate (P1) at (3.0,2.598);
\coordinate (P2) at (1.0,0.866);
\coordinate (P3) at (1.0,-0.866);  % 向 A 方向延长
\draw[red, thick] (P1) -- (P2);
\draw[red, thick] (P2) -- (P3);
\filldraw[red] (P1) circle (0.03) node[above right] {$P_1$};
\filldraw[red] (P2) circle (0.03) node[above] {$P_2$};
\filldraw[red] (P3) circle (0.03) node[below right] {$P_3$};
\end{tikzpicture}
&
% 图 (b)
\begin{tikzpicture}[scale=0.7]
\node at (0.5, 2.1) {(b3)};  % 编号
\coordinate (A) at (1,0);
\coordinate (B) at (3,0);
\coordinate (C) at (2,1.732);
\draw[thick] (A) -- (B) -- (C) -- cycle;
\node[below left] at (A) {$A$};
\node[below right] at (B) {$B$};
\node[above] at (C) {$C$};

% 红折线段
\coordinate (P1) at (3,1.7);
\coordinate (P2) at (1.5,0.866);
\coordinate (P3) at (3.5,0.5);
\draw[red, thick] (P1) -- (P2);
\draw[red, thick] (P2) -- (P3);
\filldraw[red] (P1) circle (0.03) node[right] {$P_1$};
\filldraw[red] (P2) circle (0.03) node[left] {$P_2$};
\filldraw[red] (P3) circle (0.03) node[right] {$P_3$};

% 交点 Q1 与 BC
\coordinate (Q1) at (2.256, 1.286);
\filldraw[blue] (Q1) circle (0.03) node[above right] {$Q_1$};

% 交点 Q2 与 BC
\coordinate (Q2) at (2.618, 0.661);
\filldraw[blue] (Q2) circle (0.03) node[below right] {$Q_2$};
\end{tikzpicture}

&
% 图 (c)
\begin{tikzpicture}[scale=0.7]
\node at (0.5, 2.1) {(c3)};  % 编号
\coordinate (A) at (1,0);
\coordinate (B) at (3,0);
\coordinate (C) at (2,1.732);
\draw[thick] (A) -- (B) -- (C) -- cycle;
\node[below left] at (A) {$A$};
\node[below right] at (B) {$B$};
\node[above] at (C) {$C$};

% 红折线段
\coordinate (P1) at (1,-0.5);
\coordinate (P2) at (1.5,0.866);
\coordinate (P3) at (3.0,-0.5);
\draw[red, thick] (P1) -- (P2);
\draw[red, thick] (P2) -- (P3);
\filldraw[red] (P1) circle (0.03) node[right] {$P_1$};
\filldraw[red] (P2) circle (0.03) node[left] {$P_2$};
\filldraw[red] (P3) circle (0.03) node[right] {$P_3$};

% 交点 Q1 与 AB
\coordinate (Q1) at (1.183, 0);
\filldraw[blue] (Q1) circle (0.03) node[left] {$Q_1$};

% 交点 Q2 与 AB
\coordinate (Q2) at (2.451, 0);
\filldraw[blue] (Q2) circle (0.03) node[right] {$Q_2$};
\end{tikzpicture}
&

\begin{tikzpicture}[scale=0.7]
\node at (0.5, 2.1) {(d3)};  % 编号
\coordinate (A) at (1,0);
\coordinate (B) at (3,0);
\coordinate (C) at (2,1.732);
\draw[thick] (A) -- (B) -- (C) -- cycle;
\node[below left] at (A) {$A$};
\node[below right] at (B) {$B$};
\node[above] at (C) {$C$};

% 红折线段
\coordinate (P1) at (3,1.7);
\coordinate (P2) at (1.3,0.866);
\coordinate (P3) at (3.5,0.5);
\draw[red, thick] (P1) -- (P2);
\draw[red, thick] (P2) -- (P3);
\filldraw[red] (P1) circle (0.03) node[right] {$P_1$};
\filldraw[red] (P2) circle (0.03) node[left] {$P_2$};
\filldraw[red] (P3) circle (0.03) node[right] {$P_3$};

% Q1 = P1P2 ∩ BC
\coordinate (Q1) at (2.234, 1.327);
\filldraw[blue] (Q1) circle (0.03) node[above right] {$Q_3$};

% Q2 = P2P3 ∩ BC
\coordinate (Q2) at (2.629, 0.645);
\filldraw[blue] (Q2) circle (0.03) node[below right] {$Q_4$};

% 交点 Q3 与 AC
\coordinate (Q3) at (1.579, 1.007);
\filldraw[blue] (Q3) circle (0.03) node[above left] {$Q_1$};

% 交点 Q4 与 AC
\coordinate (Q4) at (1.482, 0.835);
\filldraw[blue] (Q4) circle (0.03) node[below left] {$Q_2$};
\end{tikzpicture}

\\
\begin{tikzpicture}[scale=0.7]
\node at (0.5, 2.1) {(a4)};  % 编号
\coordinate (A) at (1,0);
\coordinate (B) at (3,0);
\coordinate (C) at (2,1.732);
\draw[thick] (A) -- (B) -- (C) -- cycle;
\node[below left] at (A) {$A$};
\node[below right] at (B) {$B$};
\node[above] at (C) {$C$};

% 红折线段
\coordinate (P1) at (1,-0.5);
\coordinate (P2) at (1.3,0.866);
\coordinate (P3) at (3.0,-0.5);
\draw[red, thick] (P1) -- (P2);
\draw[red, thick] (P2) -- (P3);
\filldraw[red] (P1) circle (0.03) node[right] {$P_1$};
\filldraw[red] (P2) circle (0.03) node[left] {$P_2$};
\filldraw[red] (P3) circle (0.03) node[right] {$P_3$};

% 交点 Q1 与 AB
\coordinate (Q1) at (1.110, 0);
\filldraw[blue] (Q1) circle (0.03) node[left] {$Q_3$};

% 交点 Q2 与 AB
\coordinate (Q2) at (2.376, 0);
\filldraw[blue] (Q2) circle (0.03) node[right] {$Q_4$};

% 交点 Q3 与 AC
\coordinate (Q3) at (1.177, 0.308);
\filldraw[blue] (Q3) circle (0.03) node[right] {$Q_1$};

% 交点 Q4 与 AC
\coordinate (Q4) at (1.437, 0.757);
\filldraw[blue] (Q4) circle (0.03) node[right] {$Q_2$};
\end{tikzpicture}
&
% 图 (b)

&
% 图 (c)
&

\end{tabular}
\end{center}

\caption{
Representative configurations from the $(2,2)$ interaction class, where two consecutive boundary segments intersect a triangle with two intersection points each. These high-complexity cases demonstrate key subtypes including double edge crossings, vertex piercing, and corner-touching patterns. The illustrated cases are sampled from the complete lookup set, and each corresponds to a unique template-driven retriangulation rule. Full tabulation and rule assignments are provided in Appendix F of the Supplementary Material.
}

\label{fig:2_2}
\end{figure}
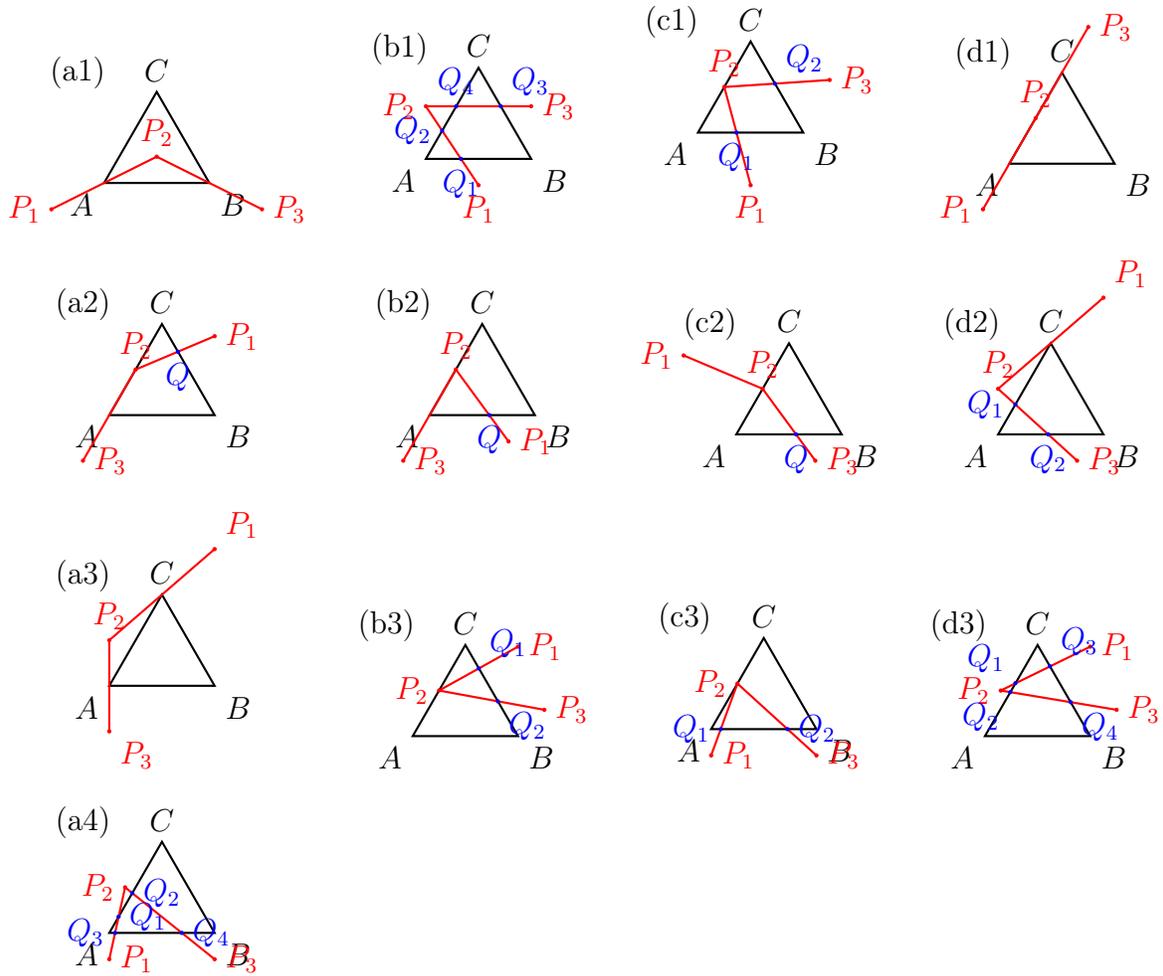

\begin{figure}[H]
\begin{center}
\renewcommand{\arraystretch}{1.5}
\begin{tabular}{cccc}
% 图 (a)
\begin{tikzpicture}[scale=0.7]
\coordinate (A) at (1,0);
\coordinate (B) at (3,0);
\coordinate (C) at (2,1.732);
\draw[thick] (A) -- (B) -- (C) -- cycle;
\node[below left] at (A) {$A$};
\node[below right] at (B) {$B$};
\node[above] at (C) {$C$};

% 红折线段（调整后的）
\coordinate (P1) at (0.5,1.866);  % 使 P1P2 经过 A
\coordinate (P2) at (1.5,0.866);
\coordinate (P3) at (4.5,-0.866);  % 使 P2P3 经过 B
\draw[red, thick] (P1) -- (P2);
\draw[red, thick] (P2) -- (P3);
\filldraw[red] (P1) circle (0.03) node[left] {$P_1$};
\filldraw[red] (P2) circle (0.03) node[above] {$P_2$};
\filldraw[red] (P3) circle (0.03) node[right] {$P_3$};
\end{tikzpicture}

\end{tabular}
\end{center}

\caption{
Representative configuration of type \textbf{(1,3)}, involving one and three intersections from two adjacent boundary segments. This case has a unique retriangulation and requires no special handling.
}
\label{fig:1_3}
\end{figure}
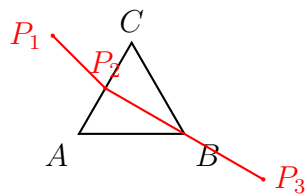

\begin{figure}[H]
\begin{center}
\renewcommand{\arraystretch}{1.5}
\begin{tabular}{cccc}

% 图 (a)
\begin{tikzpicture}[scale=0.7]
\node at (0.5, 2.1) {(a)};  % 编号
\coordinate (A) at (1,0);
\coordinate (B) at (3,0);
\coordinate (C) at (2,1.732);
\draw[thick] (A) -- (B) -- (C) -- cycle;
\node[below left] at (A) {$A$};
\node[below right] at (B) {$B$};
\node[above] at (C) {$C$};

% 红折线段（P1 在 CA 延长线 A 的外侧，P3 在 CB 延长线 B 的外侧）
\coordinate (P1) at (0, -1.732);   % 延长 CA 出 A
\coordinate (P2) at (2, 1.732);    % C 点
\coordinate (P3) at (4, -1.732);   % 延长 CB 出 B
\draw[red, thick] (P1) -- (P2);
\draw[red, thick] (P2) -- (P3);
\filldraw[red] (P1) circle (0.03) node[left] {$P_1$};
\filldraw[red] (P2) circle (0.03) node[below] {$P_2$};
\filldraw[red] (P3) circle (0.03) node[right] {$P_3$};
\end{tikzpicture}
&
\begin{tikzpicture}[scale=0.7]
\node at (0.5, 2.1) {(b)};  % 编号
\coordinate (A) at (1,0);
\coordinate (B) at (3,0);
\coordinate (C) at (2,1.732);
\draw[thick] (A) -- (B) -- (C) -- cycle;
\node[below left] at (A) {$A$};
\node[below right] at (B) {$B$};
\node[above] at (C) {$C$};

% 红折线段
\coordinate (P1) at (4.5,1.866);  
\coordinate (P2) at (1,0);
\coordinate (P3) at (4.5,0.866);  
\draw[red, thick] (P1) -- (P2);
\draw[red, thick] (P2) -- (P3);
\filldraw[red] (P1) circle (0.03) node[left] {$P_1$};
\filldraw[red] (P2) circle (0.03) node[above] {$P_2$};
\filldraw[red] (P3) circle (0.03) node[right] {$P_3$};

% 交点标注
\coordinate (Q1) at (2.53, 0.815);
\coordinate (Q2) at (2.75, 0.433);
\filldraw[blue] (Q1) circle (0.03) node[above right] {$Q_1$};
\filldraw[blue] (Q2) circle (0.03) node[below right] {$Q_2$};

\end{tikzpicture}
&
\begin{tikzpicture}[scale=0.7]
\node at (0.5, 2.1) {(c)};  % 编号
\coordinate (A) at (1,0);
\coordinate (B) at (3,0);
\coordinate (C) at (2,1.732);
\draw[thick] (A) -- (B) -- (C) -- cycle;
\node[below left] at (A) {$A$};
\node[below right] at (B) {$B$};
\node[above] at (C) {$C$};

% 红折线段
\coordinate (P1) at (4.5,1.866);  
\coordinate (P2) at (1,0);
\coordinate (P3) at (4.5,0);  
\draw[red, thick] (P1) -- (P2);
\draw[red, thick] (P2) -- (P3);
\filldraw[red] (P1) circle (0.03) node[left] {$P_1$};
\filldraw[red] (P2) circle (0.03) node[above] {$P_2$};
\filldraw[red] (P3) circle (0.03) node[right] {$P_3$};

% 交点标注
\coordinate (Q) at (2.53, 0.815);
\filldraw[blue] (Q) circle (0.03) node[above right] {$Q$};

\end{tikzpicture}

\\
\begin{tikzpicture}[scale=0.7]
\node at (0.5, 2.1) {(d)};  % 编号
\coordinate (A) at (1,0);
\coordinate (B) at (3,0);
\coordinate (C) at (2,1.732);
\draw[thick] (A) -- (B) -- (C) -- cycle;
\node[below left] at (A) {$A$};
\node[below right] at (B) {$B$};
\node[above] at (C) {$C$};

% 红折线段
\coordinate (P1) at (3,3.464);  % 修改后的P1，位于AC延长线外
\coordinate (P2) at (1,0);
\coordinate (P3) at (4.5,0.866);  
\draw[red, thick] (P1) -- (P2);
\draw[red, thick] (P2) -- (P3);
\filldraw[red] (P1) circle (0.03) node[left] {$P_1$};
\filldraw[red] (P2) circle (0.03) node[above] {$P_2$};
\filldraw[red] (P3) circle (0.03) node[right] {$P_3$};

% 交点标注
\coordinate (Q) at (2.75, 0.433);
\filldraw[blue] (Q) circle (0.03) node[below right] {$Q$};

\end{tikzpicture}

\end{tabular}
\end{center}

\caption{
Illustrative (3,3) cases, where two segments intersect a triangle at three points each.
Though excluded by SBMT’s $90^\circ$ angle constraint, these edge cases are shown for theoretical completeness.
They verify that remeshing remains valid under sharp junctions, as long as no triangle is intersected by more than two segments.
}
\label{fig:3_3}
\end{figure}
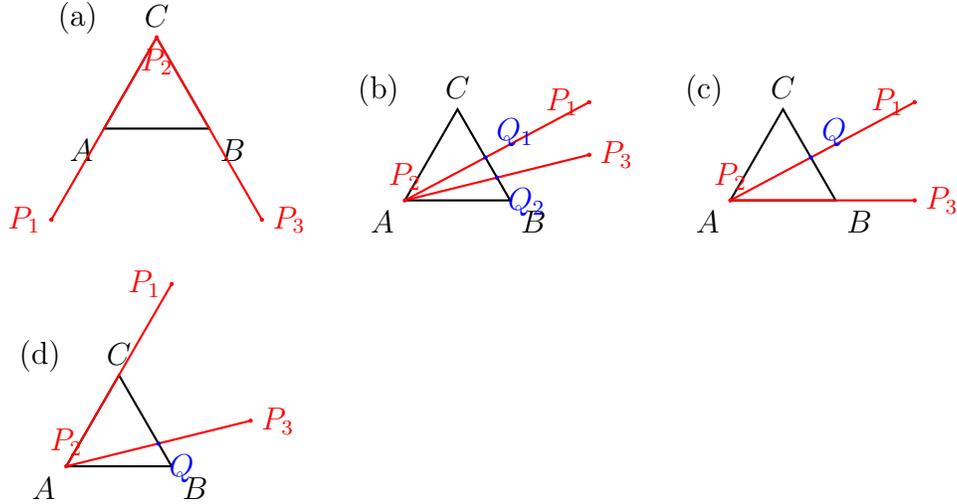

\section*{Appendix G\\Retriangulation Template Rules in SBMT}
\label{app:retriangulation_templates}

This section documents the specific retriangulation rules for each atomic intersection configuration described in Section~3.3.2 of the main paper. Each configuration, defined by its intersection count and geometric layout (e.g., type (1,2), (2,3), etc.), corresponds to a unique sub-triangle subdivision pattern.

\noindent\textbf{Terminology note.}
Throughout this appendix, ``configuration'' refers to an \emph{atomic canonical lookup key} $\kappa(K)$ (including edge attribution, ordered intersection records, and $D_3$ canonicalization), not merely the coarse counting class such
as $(1,1)$ or $(2,2)$.

\paragraph{Embedding Consistency}
Across all intersection configurations discussed in this section and the Appendix, the retriangulation rule for each triangle cell is constructed to explicitly embed the two boundary segments $P_1P_2$ and $P_2P_3$ that traverse it. This embedding preserves the exact geometry of the input polygonal boundary, ensuring that the reconstructed domain faithfully reflects both the shape and topology of the original signal contours.

\subsection*{Single-Segment Intersection Configurations}

Figure~3 of the main paper enumerates all possible configurations where a single boundary segment intersects a triangle cell. Depending on the number and location of intersection points, the retriangulation strategies are summarized as follows:

\paragraph{Cases (a) and (d): Two or three intersections, but no retriangulation required}
\begin{itemize}
  \item \textbf{Description:} These configurations involve no actual entry into the triangle or only a glancing contact.
  \item \textbf{Strategy:} The triangle remains unchanged. The segment lies entirely outside the triangle or intersects only a vertex or a single edge without penetrating the interior.
\end{itemize}

\paragraph{Case (b)}
\begin{itemize}
  \item \textbf{Intersection points:} $Q_1$ (on edge $BC$), $Q_2$ (on edge $CA$)
  \item \textbf{Strategy:} Introduce $Q_1$ and $Q_2$ as new vertices. Subdivide the triangle into three sub-triangles:
  \begin{enumerate}
    \item $\triangle Q_2 Q_1 C$
    \item $\triangle Q_2 A B$
    \item $\triangle Q_1 Q_2 B$
  \end{enumerate}
\end{itemize}

\paragraph{Case (c)}
\begin{itemize}
  \item \textbf{Intersection point:} $Q$ (on edge $BC$)
  \item \textbf{Strategy:} Introduce $Q$ as a new vertex. Subdivide the triangle into two sub-triangles:
  \begin{enumerate}
    \item $\triangle A Q C$
    \item $\triangle A B Q$
  \end{enumerate}
\end{itemize}

\subsection*{(1,1) Type Configurations}

Figure~4 illustrates three representative configurations of type \textbf{(1,1)}, where two boundary segments each intersect the triangle at a single point. In all cases, local retriangulation is required to ensure boundary conformity and maintain topological consistency.

\paragraph{Case (a)}
\begin{itemize}
  \item \textbf{Intersection points:} $Q_1$ (on edge $AC$), $Q_2$ (on edge $BC$)
  \item \textbf{Strategy:} Introduce $Q_1$ and $Q_2$ as new vertices. Subdivide the triangle into five sub-triangles:
  \begin{enumerate}
    \item $\triangle Q_1 P_2 C$
    \item $\triangle P_2 Q_2 C$
    \item $\triangle Q_1 A P_2$
    \item $\triangle A B P_2$
    \item $\triangle Q_2 P_2 B$
  \end{enumerate}
\end{itemize}

\paragraph{Case (b)}
\begin{itemize}
  \item \textbf{Strategy:} Subdivide the triangle into two sub-triangles:
  \begin{enumerate}
    \item $\triangle C P_2 B$
    \item $\triangle P_2 A B$
  \end{enumerate}
\end{itemize}

\paragraph{Case (c)}
\begin{itemize}
  \item \textbf{Intersection points:} $Q_1$ (on edge $AC$), $Q_2$ (on edge $AC$)
  \item \textbf{Strategy:} Introduce $Q_1$ and $Q_2$ as new vertices. Subdivide the triangle into five sub-triangles:
  \begin{enumerate}
    \item $\triangle Q_1 P_2 C$
    \item $\triangle P_2 Q_1 Q_2$
    \item $\triangle Q_2 A P_2$
    \item $\triangle B C P_2$
    \item $\triangle A B P_2$
  \end{enumerate}
\end{itemize}

\subsection*{(1,2) Type Configurations}

Figure~5 of the main paper shows several (1,2)-type configurations. We provide below the retriangulation strategies for selected cases.

\paragraph{Case (a1)}
\begin{itemize}
  \item \textbf{Intersection point:} $Q$ (on edge $BC$)
  \item \textbf{Strategy:} Introduce $Q$ as a new vertex. Subdivide the triangle into four sub-triangles:
  \begin{enumerate}
    \item $\triangle C A P_2$
    \item $\triangle A B P_2$
    \item $\triangle C P_2 Q$
    \item $\triangle P_2 B Q$
  \end{enumerate}
\end{itemize}

\paragraph{Case (b1)}
\begin{itemize}
  \item \textbf{Intersection point:} $Q$ (on edge $BC$)
  \item \textbf{Strategy:} Introduce $Q$ as a new vertex. Subdivide the triangle into four sub-triangles:
  \begin{enumerate}
    \item $\triangle A P_2 Q$
    \item $\triangle C Q P_2$
    \item $\triangle C P_2 B$
    \item $\triangle A B P_2$
  \end{enumerate}
\end{itemize}

\paragraph{Case (c1)}
\begin{itemize}
  \item \textbf{Intersection point:} $Q$ (on edge $BC$)
  \item \textbf{Strategy:} Introduce $Q$ as a new vertex. Subdivide the triangle into four sub-triangles:
  \begin{enumerate}
    \item $\triangle A P_2 C$
    \item $\triangle A Q P_2$
    \item $\triangle B P_2 Q$
    \item $\triangle B C P_2$
  \end{enumerate}
\end{itemize}

\paragraph{Case (d1)}
\begin{itemize}
  \item \textbf{Strategy:} Subdivide the triangle into two sub-triangles:
  \begin{enumerate}
    \item $\triangle A B P_2$
    \item $\triangle B C P_2$
  \end{enumerate}
\end{itemize}

\paragraph{Case (a2)}
\begin{itemize}
  \item \textbf{Strategy:} Subdivide the triangle into two sub-triangles:
  \begin{enumerate}
    \item $\triangle A B P_2$
    \item $\triangle B C P_2$
  \end{enumerate}
\end{itemize}

\paragraph{Case (b2)}
\begin{itemize}
  \item \textbf{Intersection point:} $Q$ (on edge $BC$)
  \item \textbf{Strategy:} Introduce $Q$ as a new vertex. Subdivide the triangle into three sub-triangles:
  \begin{enumerate}
    \item $\triangle P_2 Q C$
    \item $\triangle B Q P_2$
    \item $\triangle A B P_2$
  \end{enumerate}
\end{itemize}

\paragraph{Case (c2)}
\begin{itemize}
  \item \textbf{Intersection point:} $Q$ (on edge $AB$)
  \item \textbf{Strategy:} Introduce $Q$ as a new vertex. Subdivide the triangle into three sub-triangles:
  \begin{enumerate}
    \item $\triangle P_2 A Q$
    \item $\triangle Q B P_2$
    \item $\triangle B C P_2$
  \end{enumerate}
\end{itemize}

\subsection*{(2,2) Type Configurations}

Figure~6 of the main paper presents representative configurations of type \textbf{(2,2)}, where two boundary segments each intersect a triangle at two distinct points. These configurations involve four embedded intersection points and represent the most complex local scenarios addressed in our remeshing framework. The retriangulation strategy for each case is described below.

\paragraph{Case (a1)}
\begin{itemize}
  \item \textbf{Strategy:} Subdivide the triangle into three sub-triangles:
  \begin{enumerate}
    \item $\triangle C A P_2$
    \item $\triangle A B P_2$
    \item $\triangle C P_2 B$
  \end{enumerate}
\end{itemize}

\paragraph{Case (b1)}\label{app:case-b1}
\begin{itemize}
  \item \textbf{Intersection points:} $Q_1$ (on edge $AB$), $Q_2$ (on edge $CA$), $Q_3$ (on edge $BC$), $Q_4$ (on edge $CA$)
  \item \textbf{Strategy:} Introduce $Q_1$, $Q_2$, $Q_3$, and $Q_4$ as new vertices. Subdivide the triangle into five sub-triangles:
  \begin{enumerate}
    \item $\triangle Q_1 Q_2 A$
    \item $\triangle Q_1 Q_4 Q_2$
    \item $\triangle Q_1 Q_3 Q_4$
    \item $\triangle Q_1 B Q_3$
    \item $\triangle Q_4 Q_3 C$
  \end{enumerate}
\end{itemize}

\paragraph{Case (c1)}
\begin{itemize}
  \item \textbf{Intersection points:} $Q_1$ (on edge $AB$), $Q_2$ (on edge $BC$)
  \item \textbf{Strategy:} Introduce $Q_1$ and $Q_2$ as new vertices. Subdivide the triangle into four sub-triangles:
  \begin{enumerate}
    \item $\triangle A Q_1 P_2$
    \item $\triangle P_2 Q_2 C$
    \item $\triangle P_2 Q_1 Q_2$
    \item $\triangle Q_1 B Q_2$
  \end{enumerate}
\end{itemize}

\paragraph{Case (d1)}
\begin{itemize}
  \item \textbf{Strategy:} Subdivide the triangle into two sub-triangles:
  \begin{enumerate}
    \item $\triangle B C P_2$
    \item $\triangle A B P_2$
  \end{enumerate}
\end{itemize}

\paragraph{Case (a2)}
\begin{itemize}
  \item \textbf{Intersection point:} $Q$ (on edge $BC$)
  \item \textbf{Strategy:} Introduce $Q$ as a new vertex. Subdivide the triangle into three sub-triangles:
  \begin{enumerate}
    \item $\triangle P_2 Q C$
    \item $\triangle B Q P_2$
    \item $\triangle A B P_2$
  \end{enumerate}
\end{itemize}

\paragraph{Case (b2)}
\begin{itemize}
  \item \textbf{Intersection point:} $Q$ (on edge $AB$)
  \item \textbf{Strategy:} Introduce $Q$ as a new vertex. Subdivide the triangle into three sub-triangles:
  \begin{enumerate}
    \item $\triangle P_2 A Q$
    \item $\triangle Q B P_2$
    \item $\triangle B C P_2$
  \end{enumerate}
\end{itemize}

\paragraph{Case (c2)}
\begin{itemize}
  \item \textbf{Intersection point:} $Q$ (on edge $AB$)
  \item \textbf{Strategy:} Introduce $Q$ as a new vertex. Subdivide the triangle into three sub-triangles:
  \begin{enumerate}
    \item $\triangle P_2 A Q$
    \item $\triangle Q B P_2$
    \item $\triangle B C P_2$
  \end{enumerate}
\end{itemize}

\paragraph{Case (d2)}
\begin{itemize}
  \item \textbf{Intersection points:} $Q_1$ (on edge $CA$), $Q_2$ (on edge $AB$)
  \item \textbf{Strategy:} Introduce $Q_1$ and $Q_2$ as new vertices. Subdivide the triangle into three sub-triangles:
  \begin{enumerate}
    \item $\triangle A Q_2 Q_1$
    \item $\triangle Q_2 B Q_1$
    \item $\triangle Q_1 B C$
  \end{enumerate}
\end{itemize}

\paragraph{Case (a3)}
\begin{itemize}
  \item \textbf{Strategy:} This configuration permits the boundary segments to be embedded within the original triangle without modifying its mesh topology. No retriangulation is required, as the existing triangle already accommodates the segment trajectories.
\end{itemize}

\paragraph{Case (b3)}
\begin{itemize}
  \item \textbf{Intersection points:} $Q_1$ (on edge $BC$), $Q_2$ (on edge $BC$)
  \item \textbf{Strategy:} Introduce $Q_1$ and $Q_2$ as new vertices. Subdivide the triangle into four sub-triangles:
  \begin{enumerate}
    \item $\triangle P_2 Q_1 C$
    \item $\triangle P_2 Q_2 Q_1$
    \item $\triangle P_2 A B$
    \item $\triangle P_2 B Q_2$
  \end{enumerate}
\end{itemize}

\paragraph{Case (c3)}
\begin{itemize}
  \item \textbf{Intersection points:} $Q$ (on edge $AB$)
  \item \textbf{Strategy:} Introduce $Q$ as new vertices. Subdivide the triangle into four sub-triangles:
  \begin{enumerate}
    \item $\triangle Q_1 P_1 P_2$
    \item $\triangle P_1 Q_2 P_2$
    \item $\triangle P_2 Q_2 B$
    \item $\triangle P_2 B C$
  \end{enumerate}
\end{itemize}

\paragraph{Case (d3)}
\begin{itemize}
  \item \textbf{Intersection points:} $Q_1$ (on edge $CA$), $Q_2$ (on edge $CA$), $Q_3$ (on edge $BC$), $Q_4$ (on edge $BC$)
  \item \textbf{Strategy:} Introduce $Q_1$, $Q_2$, $Q_3$, and $Q_4$ as new vertices. Subdivide the triangle into five sub-triangles:
  \begin{enumerate}
    \item $\triangle A B Q_2$
    \item $\triangle Q_2 B Q_4$
    \item $\triangle Q_1 Q_2 Q_4$
    \item $\triangle Q_1 Q_4 Q_3$
    \item $\triangle Q_1 Q_3 C$
  \end{enumerate}
\end{itemize}

\paragraph{Case (a4)}
\begin{itemize}
  \item \textbf{Intersection points:} $Q_1$ (on edge $CA$), $Q_2$ (on edge $CA$), $Q_3$ (on edge $AB$), $Q_4$ (on edge $AB$)
  \item \textbf{Strategy:} Introduce $Q_1$, $Q_2$, $Q_3$, and $Q_4$ as new vertices. Subdivide the triangle into five sub-triangles:
  \begin{enumerate}
    \item $\triangle Q_3 Q_1 A$
    \item $\triangle Q_4 Q_1 Q_3$
    \item $\triangle Q_1 Q_4 Q_2$
    \item $\triangle Q_2 Q_4 B$
    \item $\triangle Q_2 B C$
  \end{enumerate}
\end{itemize}

\subsection*{(2,3) Type Configurations}

Figure~7 of the main paper presents several representative (2,3)-type configurations. These scenarios arise when two boundary segments intersect a triangle with two and three distinct points respectively. The following enumerates retriangulation strategies for selected cases.

\paragraph{Case (a1)}
\begin{itemize}
  \item \textbf{Strategy:} Subdivide the triangle into two sub-triangles:
  \begin{enumerate}
    \item $\triangle A B P_2$
    \item $\triangle P_2 B C$
  \end{enumerate}
\end{itemize}

\paragraph{Case (b1)}
\begin{itemize}
  \item \textbf{Strategy:} Subdivide into two sub-triangles:
  \begin{enumerate}
    \item $\triangle A B P_2$
    \item $\triangle P_2 B C$
  \end{enumerate}
\end{itemize}

\paragraph{Case (c1)}
\begin{itemize}
  \item \textbf{Intersection points:} $Q$ (on $BC$), $P_2$ coincides with vertex $A$.
  \item \textbf{Strategy:} Introduce $Q$ as new vertices. Subdivide the triangle into two sub-triangles:
  \begin{enumerate}
    \item $\triangle A B Q$
    \item $\triangle A Q C$
  \end{enumerate}
\end{itemize}

\paragraph{Case (d1)}
\begin{itemize}
  \item \textbf{Intersection points:} $P_2$ coincides with vertex $A$
  \item \textbf{Strategy:} No retriangulation is required. The five intersection points lie exactly along the triangle's boundary and do not introduce any ambiguity in region partitioning.
\end{itemize}

\paragraph{Case (a2)}
\begin{itemize}
  \item \textbf{Strategy:} Subdivide the triangle into three sub-triangles:
  \begin{enumerate}
    \item $\triangle A Q P_2$
    \item $\triangle P_2 B C$
    \item $\triangle P_2 Q B$
  \end{enumerate}
\end{itemize}

\paragraph{Case (b2)}
\begin{itemize}
  \item \textbf{Strategy:} Subdivide into three sub-triangles:
  \begin{enumerate}
    \item $\triangle A B P_2$
    \item $\triangle P_2 B Q$
    \item $\triangle P_2 Q C$
  \end{enumerate}
\end{itemize}

\subsection*{(1,3) Type Configurations}

Figure~8 of the main paper illustrates the sole (1,3)-type configuration observed in our remeshing framework. One boundary segment intersects the triangle at a single point, while the other intersects it at three distinct locations.

\paragraph{Only one case}
\begin{itemize}
  \item \textbf{Strategy:} Subdivide the triangle into two sub-triangles:
  \begin{enumerate}
    \item $\triangle A B P_2$
    \item $\triangle P_2 B C$
  \end{enumerate}
\end{itemize}

\paragraph{Orientation Handling}
All retriangulation rules respect the counterclockwise (CCW) orientation of the base triangle.
To maintain consistent mesh orientation and avoid topological artifacts, configurations that are mirror reflections of each other—such as those differing only by a left–right flip—are treated as distinct atomic cases in the lookup table. This explicit distinction ensures orientation consistency even under complex boundary embeddings.

\paragraph{Overview of Strategy (Canonical template choice).}
Geometrically, for some configurations (e.g., certain $(2,2)$ patterns), the same set of embedded intersection points and boundary-embedding constraints may admit more than one valid straight-line triangulation of the induced local patch.
SBMT removes this non-uniqueness \emph{by construction}: for each canonicalized lookup key $\kappa(K)$ (intersection class plus the ordered boundary intersection records), the lookup table stores \emph{one} fixed, canonical retriangulation template. Hence the mapping $\kappa(K)\mapsto P_K$ is single-valued and the runtime update performs no choice among multiple templates.

More aggressive local quality optimization (e.g., Delaunay-like flips restricted to the patch interior) could be incorporated as an \emph{extension} of the template family, but we intentionally omit it in the current prototype to preserve the purely local, rule-based semantics and the parallel-ready execution model. The preprocessing thresholds $(a,b,c)$ are chosen so that the canonical templates already provide sufficient element quality for the numerical tasks considered in this paper. The modular lookup-table design permits replacing the canonical template family by an enhanced family in future work, while retaining
the same canonicalization and consistency conventions.

\section*{Appendix H\\Parabolic Diffusion Behavior on SBMT Meshes}
\label{app:heat-diffusion}

We evaluate the numerical response of SBMT-generated meshes under transient heat diffusion governed by:
\[
\frac{\partial u}{\partial t} = \alpha \nabla^2 u, \quad u|_{\partial \Omega} = 0,
\]
with $\alpha = 500$ and a centered Gaussian initial pulse:
\[
u(x, y, 0) = 100 \exp\left( -\frac{(x - x_0)^2 + (y - y_0)^2}{2\sigma^2} \right).
\]
The domain is extracted from a medical cross-section (Figure~\ref{fig:sbmt-error-illustration}a), and meshing parameters match those in Section~4.4.
The Gmsh results in this appendix are generated with the same baseline
configuration as in Section~4.4: Gmsh~4.15.0,
the identical prescribed PSLG boundary, uniform characteristic length
$l_c=e=\sqrt{0.45}$, 2D Delaunay meshing (\texttt{Mesh.Algorithm=5}),
and fixed global size bounds with curvature- and boundary-extension-based
size adaptation disabled.

\begin{figure}[htbp]
    \centering
    \begin{tabular}{cc}
        \includegraphics[width=0.3\linewidth]{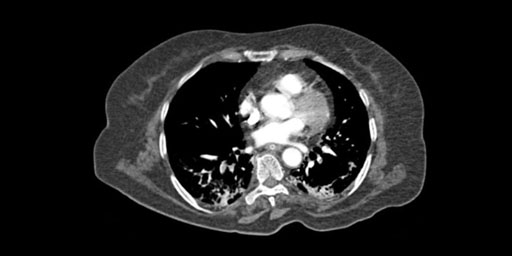} &
        \includegraphics[width=0.3\linewidth]{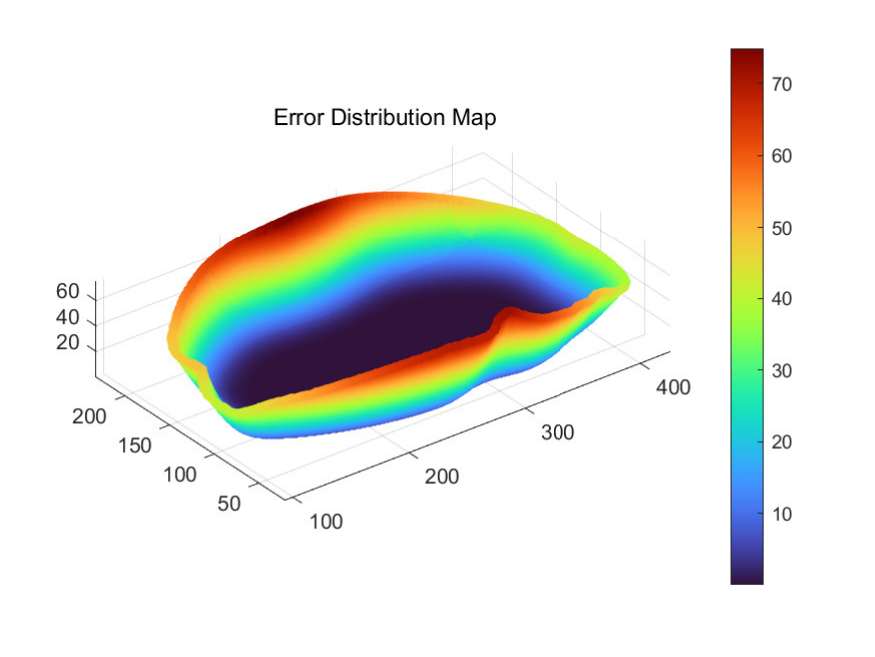} \\
        (a) Domain scan &
        (b) Pointwise error (SBMT vs. analytic)
    \end{tabular}
    \caption{Heat diffusion using SBMT mesh and its deviation from analytic reference.}
    \label{fig:sbmt-error-illustration}
\end{figure}

\paragraph{Observations.}
SBMT preserves boundary geometry during diffusion, maintaining coherent isothermal contours and sharper gradients near narrow lobes. Triangle yields smoother global diffusion but degrades at concave boundaries due to Steiner insertions.

\paragraph{Temperature Evolution.}
At $t = 0.5$ s, the peak temperature on SBMT mesh is $96.96^\circ$C, compared to $93.27^\circ$C on Triangle and Gmsh. SBMT exhibits slower heat spread due to higher geometric regularity and boundary refinement, while Triangle and Gmsh meshes, with smoother and more isotropic element distributions, promote slightly faster front propagation.

\paragraph{Error interpretation.}
Although SBMT yields a higher global $L_2$ error in this example
(e.g., $0.289$ at $t=0.3$ s, compared with $0.195$ for Triangle and
$0.200$ for Gmsh), the visual solution patterns suggest a different
trade-off: SBMT better preserves source-localized heat concentration and
boundary-aligned front geometry, with the most visible differences
occurring near thin or strongly curved boundary regions.
In this setting, a single global scalar error does not fully capture
differences in front location and boundary adherence, and therefore may
understate the structural advantage of SBMT.

\begin{figure}[htbp]
    \centering
    \begin{tabular}{ccc}
        \includegraphics[width=0.30\linewidth]{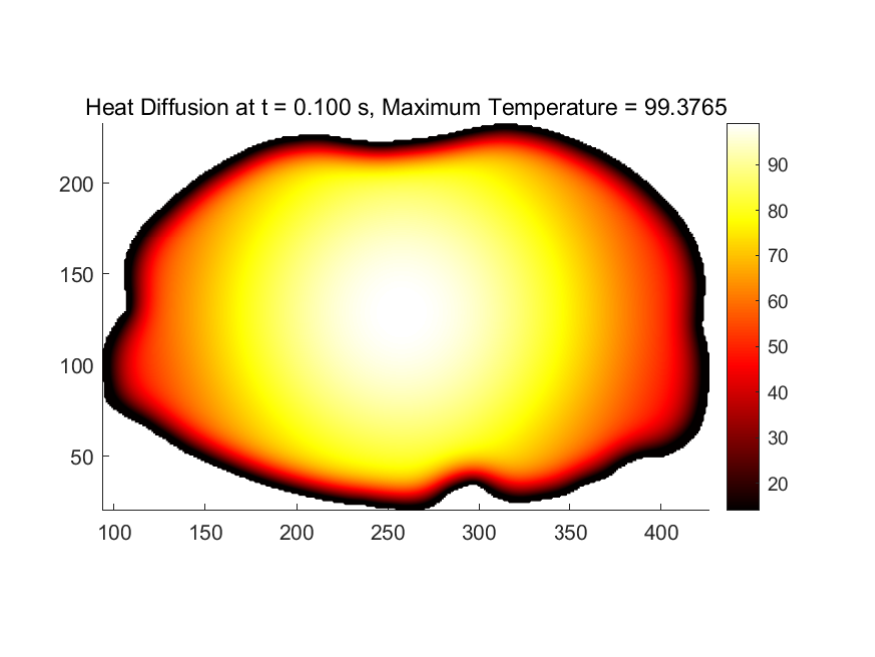} &
        \includegraphics[width=0.30\linewidth]{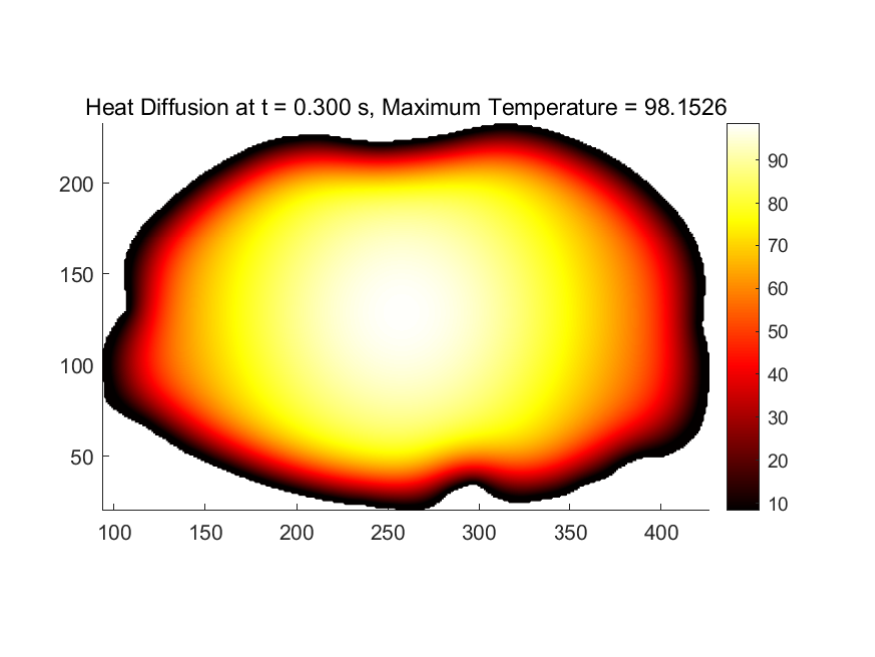} &
        \includegraphics[width=0.30\linewidth]{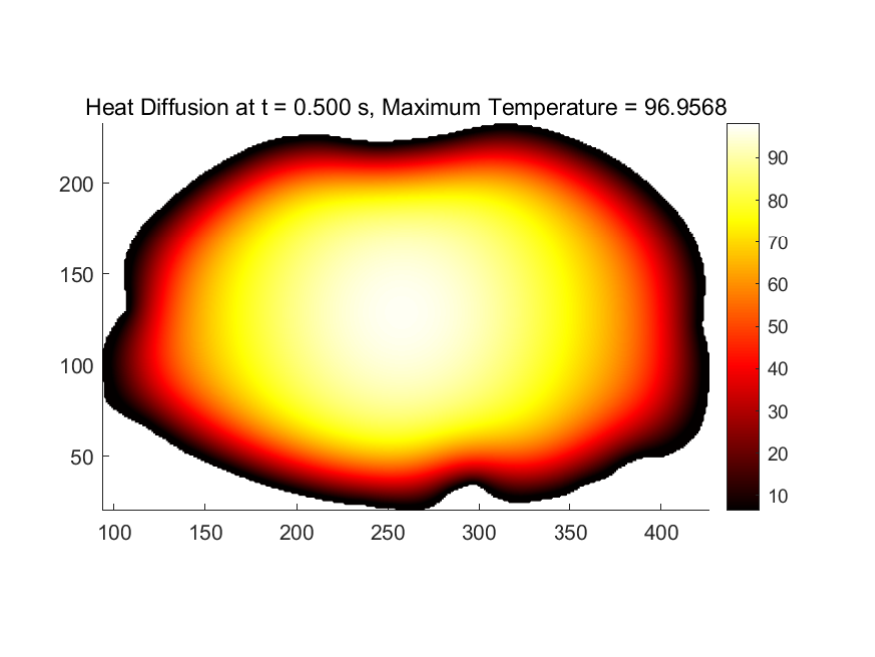} \\
        \includegraphics[width=0.30\linewidth]{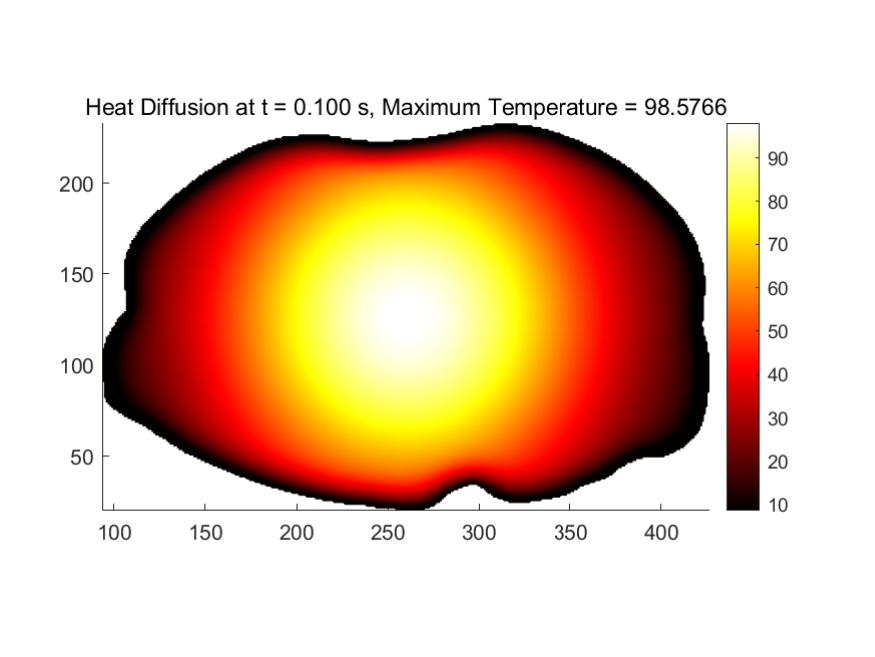} &
        \includegraphics[width=0.30\linewidth]{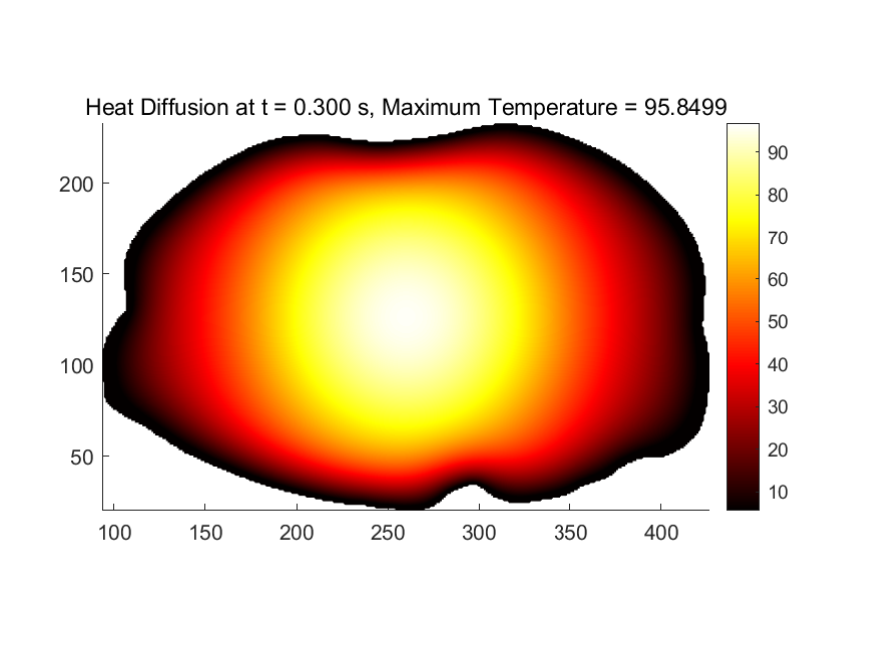} &
        \includegraphics[width=0.30\linewidth]{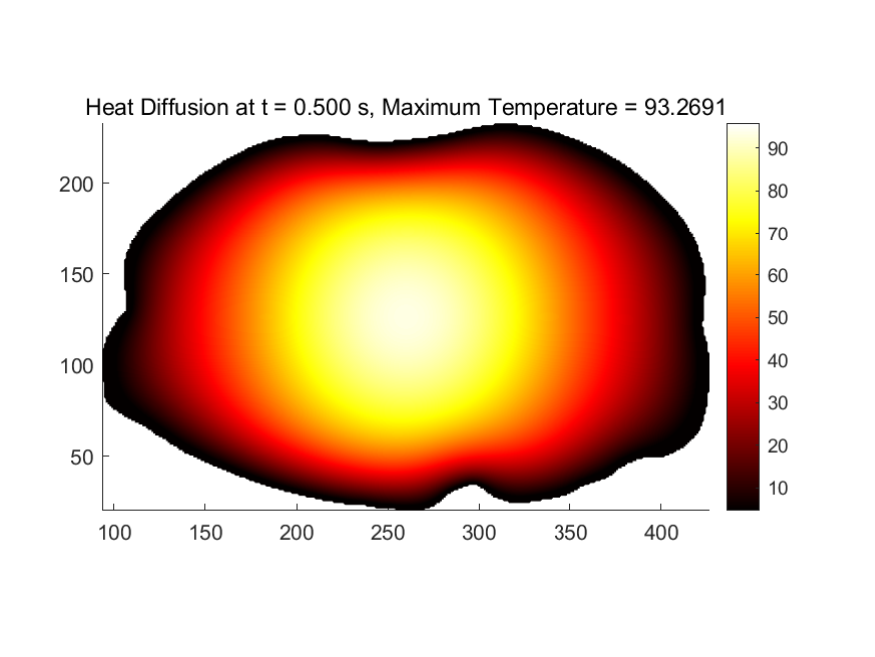} \\
        \includegraphics[width=0.30\linewidth]{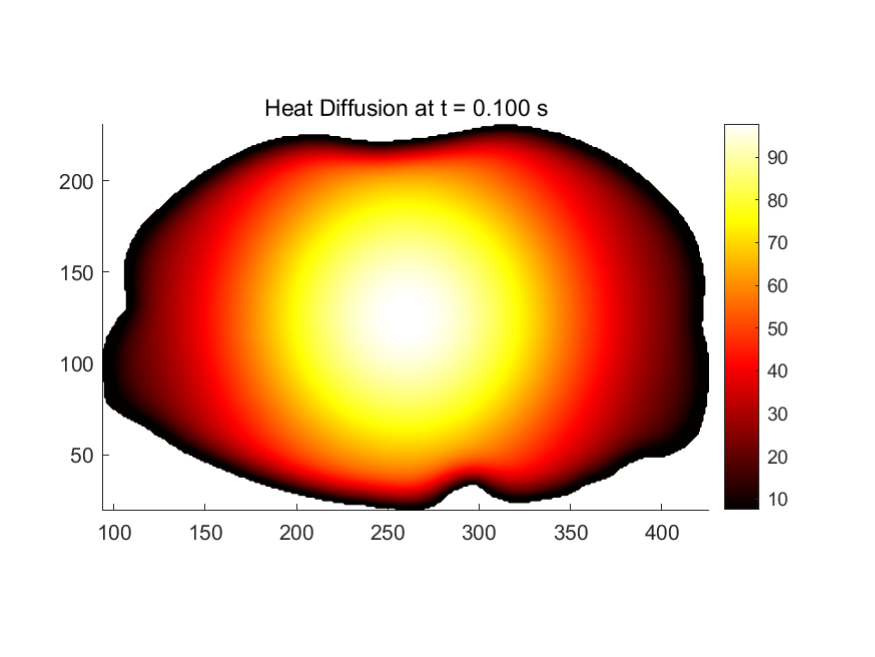} &
        \includegraphics[width=0.30\linewidth]{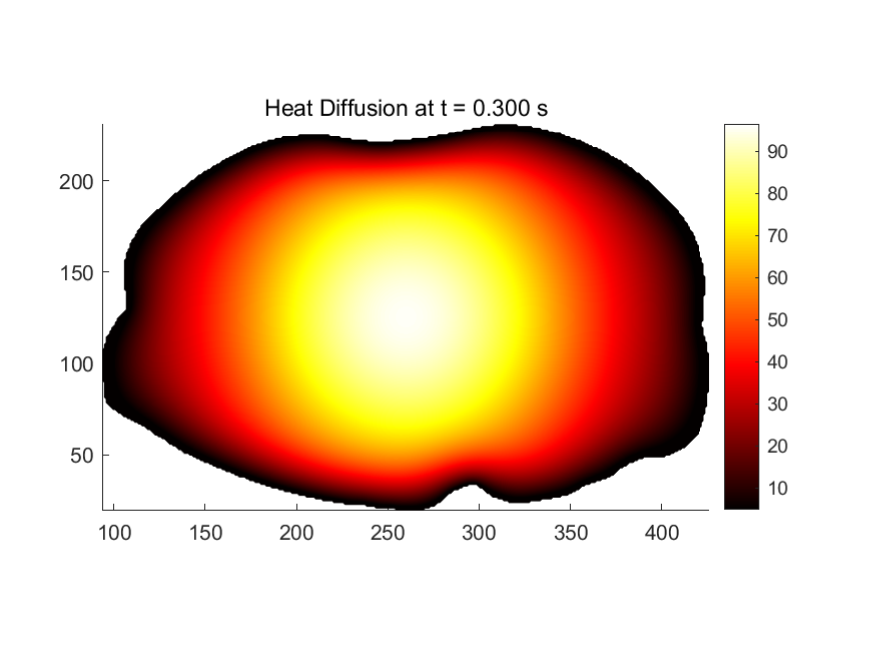} &
        \includegraphics[width=0.30\linewidth]{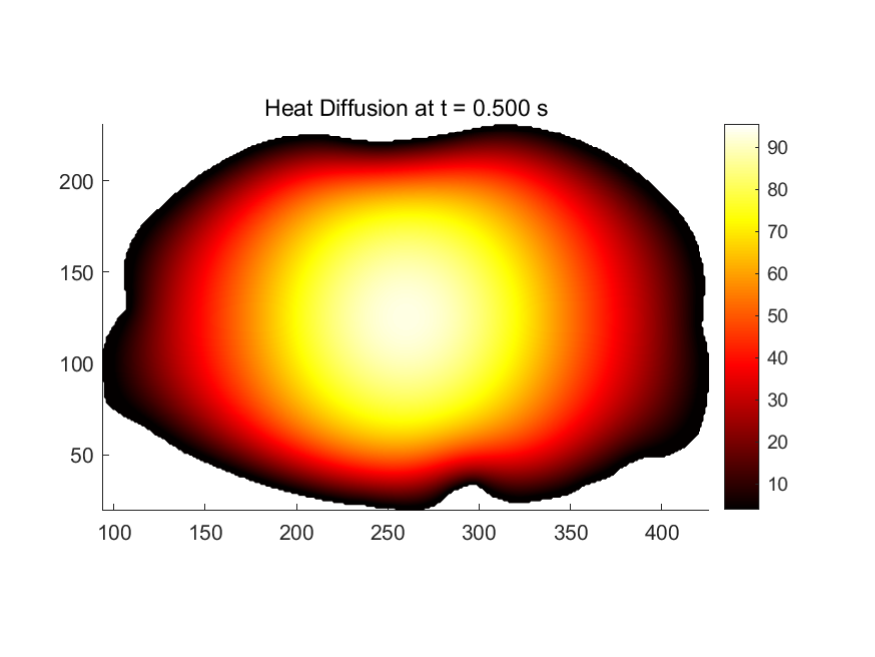} \\
    \end{tabular}
    \caption{Transient heat fields at $t = 0.1$, $0.3$, and $0.5$ s
    (left to right) on the same bitmap-derived domain. 
    Rows (top to bottom) show solutions on SBMT, Triangle (constrained
    Delaunay), and Gmsh meshes, respectively.
    SBMT yields boundary-aligned isotherms and sharper localization
    near thin lobes; Triangle and Gmsh produce smoother, slightly
    faster-spreading fronts.}
    \label{fig:heat-comparison}
\end{figure}

\paragraph{Conclusion.}
SBMT favors geometry-conforming numerical behavior, ideal for applications requiring fidelity to anatomical or structural features. Triangle and Gmsh may yield faster and slightly more accurate diffusion under smooth domains, but both lack SBMT’s bitmap-native topological consistency and local adaptivity along complex, thin anatomical structures.

\section*{Appendix I\\Sensitivity to Thresholds $(a,b,c)$}
\label{app:sensitivity_abc}

This section evaluates the sensitivity of SBMT to the three geometric preprocessing thresholds: snapping $a$, edge-elimination trigger $b$, and repulsion distance $c$ (cf.\ Section~3.3.4).
We report results on the same three benchmark domains used in the
comparative study (star, droplet, and handwritten ``Y''). All other settings are kept identical to the main experiments, including the fixed equilateral background resolution.

To address robustness over a broader admissible range, we now test five threshold triplets spanning both lower and higher settings around the nominal regime used in the paper, rather than only two nearby perturbations. The tested range explicitly includes the default threshold triplet $(0.26,0.125,0.183)$ used in the main text. All tested triplets satisfy the SBMT admissibility constraints, so the lookup-based remeshing logic remains unchanged and only the strength of the local preprocessing actions is varied.
For each setting, we measure:
(i) minimum interior angle $\theta_{\min}$,
(ii) minimum triangle area $A_{\min}$,
(iii) equilateral ratio (percentage of triangles classified as
equilateral by the edge-length test),
(iv) area variance $\mathrm{Var}(A)$, and
(v) aspect-ratio statistics
$\mathrm{AR}=\text{longest edge}/\text{shortest altitude}$. An equilateral element satisfies $\mathrm{AR}=2/\sqrt{3}\approx 1.1547$.

\begin{table}[htbp]
\centering
\scriptsize
\setlength{\tabcolsep}{6pt}
\renewcommand{\arraystretch}{1.08}
\caption{Sensitivity of SBMT mesh statistics over an expanded admissible
range of threshold triplets $(a,b,c)$.}
\label{tab:abc_sensitivity}
\begin{tabular}{llccccccc}
\toprule
\textbf{Domain} & $(a,b,c)$
& $\theta_{\min}$ ($^\circ$) & $A_{\min}$
& Eq.\ (\%) & $\mathrm{Var}(A)$
& $\mathrm{AR}_{95}$ & $\mathrm{AR}_{\max}$ & \#Tris \\
\midrule
Star
& $(0.22,0.10,0.15)$ & $9.140$ & $0.00867$ & $94.46$ & $0.000781$
& $1.29476$ & $9.20532$ & $70650$ \\
& $(0.24,0.115,0.17)$ & $9.847$ & $0.01162867$ & $94.52$ & $0.000712$
& $1.2793$ & $8.2771$ & $70475$ \\
& $(0.26,0.125,0.183)$ & $10.32$ & $0.01330030$ & $94.66$ & $0.000645$
& $1.24342$ & $7.79015$ & $70292$ \\
& $(0.27,0.125,0.185)$ & $10.383$ & $0.01347564$ & $94.72$ & $0.000623$
& $1.22759$ & $7.72205$ & $70222$ \\
& $(0.30,0.145,0.205)$ & $12.53$ & $0.01716856$ & $94.86$ & $0.000547$
& $1.19911$ & $7.11804$ & $69995$ \\
\midrule
Droplet
& $(0.22,0.10,0.15)$ & $10.079$ & $0.009722$ & $96.61$ & $0.000479$
& $1.1547$ & $9.20598$ & $84550$ \\
& $(0.24,0.115,0.17)$ & $11.18$ & $0.0118$ & $96.69$ & $0.000434$
& $1.1547$ & $8.27637$ & $84424$ \\
& $(0.26,0.125,0.183)$ & $11.629$ & $0.01312699$ & $96.77$ & $0.000386$
& $1.1547$ & $7.79155$ & $84271$ \\
& $(0.27,0.125,0.185)$ & $11.753$ & $0.01330139$ & $96.84$ & $0.000364$
& $1.1547$ & $7.72346$ & $84195$ \\
& $(0.30,0.145,0.205)$ & $12.968$ & $0.01597713$ & $96.96$ & $0.000313$
& $1.1547$ & $7.12546$ & $84016$ \\
\midrule
Y-shape
& $(0.22,0.10,0.15)$ & $9.494$ & $0.009431$ & $88.29$ & $0.001528$
& $2.12979$ & $9.2027$ & $35713$ \\
& $(0.24,0.115,0.17)$ & $10.299$ & $0.01204253$ & $88.44$ & $0.001381$
& $2.06321$ & $8.27742$ & $35517$ \\
& $(0.26,0.125,0.183)$ & $10.755$ & $0.01315683$ & $88.78$ & $0.001235$
& $2.00631$ & $7.7905$ & $35300$ \\
& $(0.27,0.125,0.185)$ & $10.821$ & $0.01332603$ & $88.92$ & $0.001195$
& $2.00047$ & $7.72239$ & $35242$ \\
& $(0.30,0.145,0.205)$ & $11.410$ & $0.0156$ & $89.38$ & $0.001014$
& $1.83636$ & $7.11733$ & $34967$ \\
\bottomrule
\end{tabular}
\end{table}

\paragraph{Observed trends.}
Across this expanded admissible range, the mesh statistics vary
smoothly and in a consistent direction.
As $(a,b,c)$ increase, the worst-case quality improves for all three domains: $\theta_{\min}$ increases substantially (from $9.14^\circ$ to $12.53^\circ$ on the star, from $10.08^\circ$ to $12.97^\circ$ on the droplet, and from $9.49^\circ$ to $11.41^\circ$ on the Y-shape), while $\mathrm{AR}_{\max}$ decreases from about $9.2$ to about $7.1$ in all three cases.
At the same time, the area variance decreases steadily, indicating a slightly more regular element-size distribution as the thresholds become more permissive.
The minimum triangle area also increases monotonically, reflecting the fact that stronger snapping/repulsion/elimination thresholds suppress the most degenerate boundary-band configurations more aggressively.

Equally importantly, the \emph{bulk} of the mesh remains stable across the tested range.
The equilateral ratio changes only mildly (about $0.4$ percentage point for the star, $0.35$ percentage point for the droplet, and about $1.1$ percentage points for the Y-shape), while the total triangle count varies by less than about $1\%$ for the star and droplet and by about $2\%$ for the Y-shape.
This indicates that, within the admissible regime, changing
$(a,b,c)$ primarily affects a thin boundary band and the most distorted local templates, rather than the structured interior scaffold that dominates the mesh.

Among the three domains, the Y-shaped example remains the most demanding, as reflected by its heavier aspect-ratio tail. This is expected: branching junctions and narrow channels necessarily activate a small subset of more elongated boundary-conforming elements. Nevertheless, even in this case the equilateral ratio stays close to $89\%$, confirming that the mesh bulk remains highly regular.

Overall, the expanded sensitivity study supports the same conclusion as the main paper, but now over a substantially broader perturbation range: SBMT is not hypersensitive to moderate threshold variation within its admissible parameter regime, and the default settings used in the paper are representative rather than specially tuned.

\section*{Appendix J\\Proof of Lemma~3.1}
\label{app:proof-segment-triangle-bound}

We prove statements (i)--(iii) in Lemma~3.1.

\paragraph{Preliminaries.}
Let $\gamma$ be a boundary chain represented as a polyline composed of
grid-aligned segments satisfying the angular and length constraints in
Section~3.2. Let $T$ be any base triangle in the
equilateral background mesh.

\paragraph{(ii) One edge-level event per edge (per segment).}
Fix an edge $e$ of $T$ and a boundary segment $S$.
Since both $e$ and $S$ are straight line segments, their intersection set
$e\cap S$ is one of the following: empty, a single point, or a colinear overlap
segment. Algorithm~1 assigns to the pair $(e,S)$ a
single, well-defined edge-level event in each case (crossing / endpoint hit /
colinear overlap with deterministic attribution priority), hence it produces at
most one edge-level intersection event per edge of $T$ for each segment. Note
that a geometric endpoint hit at a triangle vertex may be attributed to two
incident edges, but still yields \emph{no more than one} event per edge.

\paragraph{(i) At most two consecutive segments intersect a base triangle.}
We use the SBMT geometric protocol (Section~3.2), which
imposes (a) bounded turning along a chain (successive directions lie in a finite
set and turns are not sharper than the prescribed bound), and (b) a minimum
length separation / non-oscillation condition on the digital boundary so that a
chain cannot ``wiggle'' inside a single cell-scale neighborhood.

Because $T$ is convex, each individual segment can intersect $T$ only through a
connected portion (it cannot enter and exit $T$ multiple times). Suppose for
contradiction that three consecutive segments $S_{i-1},S_i,S_{i+1}$ along the
same chain all intersect $T$. Then the chain would have to realize two turns
while remaining within the local neighborhood of $T$; under the protocol's
turning and length constraints, such a double-turn would force either (a) an
inadmissible oscillation that revisits the same cell-scale neighborhood, or (b)
multiple distinct entry/exit episodes across $\partial T$, contradicting the
convexity-based connectivity of segment--triangle intersection and the
non-oscillation constraint. Therefore $T$ can be intersected by at most two
consecutive segments along any boundary chain.

\paragraph{(iii) Finiteness of admissible patterns and correspondence to templates.}
By (i), at most two segments contribute to the interaction with $T$. By (ii),
each contributing segment induces at most one edge-level event per edge, hence
each segment affects at most the three edges of $T$ with a bounded symbolic
signature. Therefore the set of possible edge-event patterns on $T$ is finite,
up to the symmetries of an equilateral triangle. Appendix~F enumerates all such
admissible patterns (up to symmetry) consistent with the protocol and
Algorithm~1, and Appendix~G assigns to each pattern a
static retriangulation template. This establishes (iii) and completes the proof.

\end{document}